\title{Fibre bundle framework\\
	for unitary quantum fault tolerance}
\author{Daniel Gottesman}
\address{Perimeter Institute for Theoretical Physics, Waterloo, Canada}
\email{dgottesman@perimeterinstitute.ca}
\author{Lucy Liuxuan Zhang}
\address{Department of Mathematics, University of Toronto, Toronto, Canada \\
\and Perimeter Institute for Theoretical Physics, Waterloo, Canada}
\email{lzhang@math.utoronto.ca}
\date{\today}
\def\flexsquare#1#2#3#4#5#6#7#8{
\tikzset{arrowlabel/.style={font=\scriptsize}}
  \node (top-left)     	at (-1.5, 1.5)	{$#1$};
  \node (top-right)	at ( 1.5, 1.5)	{$#2$};
  \node (bottom-left)  	at (-1.5,-1.5) 	{$#3$};
  \node (bottom-right) at ( 1.5,-1.5)	{$#4$};

  \draw[->] (top-left)		-- node[arrowlabel,above] {$#5$} 	(top-right);
  \draw[->] (bottom-left)	-- node[arrowlabel,above] {$#6$} 	(bottom-right);
  \draw[->] (top-left) 		-- node[arrowlabel,left] {$#7$} 		(bottom-left);
  \draw[->] (top-right) 	-- node[arrowlabel,right] {$#8$} 	(bottom-right);
}
\newtheorem{Theorem}{Theorem}[section]
\newtheorem{Proposition}[Theorem]{Proposition}
\newtheorem{Lemma}[Theorem]{Lemma}
\newtheorem{Corollary}[Theorem]{Corollary}
\newtheorem{Criterion}[Theorem]{Criterion}
\newtheorem{Conjecture}{Conjecture}
\theoremstyle{definition}
\newtheorem{Definition}{Definition}[section]
\newtheorem{Remark}{Remark}
\newcommand{\rmnum}[1]{\romannumeral #1}
\newcommand{\quotient}[2]{{\raisebox{.2em}{$#1$}\left/\raisebox{-.2em}{$#2$}\right.}}
\newcommand{\flatquotient}[2]{{#1}/{#2}}
\DeclareMathOperator{\tr}{Tr}
\newcommand{\Cvect}[1]{\mathbb{C}^{#1}}
\newcommand{\nqudits}{(\Cvect{d})^{\otimes n}}
\newcommand{\nqubits}{(\Cvect{2})^{\otimes n}}
\newcommand{\mqudits}{(\Cvect{d})^{\otimes m}}
\newcommand{\Hilbphy}{\mathcal{H}_{\text{physical}}}
\newcommand{\Hilblog}{\mathcal{H}_{\text{logical}}}
\newcommand{\Hilb}[1]{\mathcal{H}_{#1}}
\newcommand{\ordinal}[1]{
  \ifx#1=1 {#1}^{\text{st}}
  \else\ifx#1=2 {#1}^{\text{nd}}
       \else {#1}^{\text{th}}
       \fi
  \fi
}
\newcommand{\idmatrix}{\mathbb{1}}
\newcommand{\GL}[1]{\mathrm{GL}(#1,\mathbb{C})}
\newcommand{\U}[1]{{\mathcal U}(#1)}
\newcommand{\SU}[1]{{\mathcal{SU}}(#1)}
\newcommand{\Ub}[1]{{\mathcal U}[#1]}
\newcommand{\fE}{f^{*}E}
\newcommand{\TE}[1]{T_{#1}E}
\newcommand{\VE}[1]{V_{#1}E}
\newcommand{\HE}[1]{H_{#1}E}
\newcommand{\lgamma}{\tilde{\gamma}}
\newcommand{\projectiveE}{\tilde{E}}
\newcommand{\Cp}[1]{C_P(#1)} 
\newcommand{\Diff}[1]{\mathrm{Diff}(#1)}
\newcommand{\Hol}{\mathrm{Hol}}
\newcommand{\Grass}{\mathrm{Gr}(K,N)}
\newcommand{\subGrass}[2]{\mathrm{Gr}({#1},{#2})}
\newcommand{\bigvecbund}{\xi(K,N)}
\newcommand{\Mvecbund}{\xi(K,N) |_{\mathcal{M}}}
\newcommand{\bigPbund}{P(K,N)}
\newcommand{\MPbund}{P(K,N) |_{\mathcal{M}}}
\newcommand{\Stiefel}{V_{K}(\mathbb{C}^N)}
\newcommand{\Vperp}{\Ub{V^{\bot}}}
\newcommand{\VVperp}{\Ub{V} \oplus \Ub{V^{\bot}}}
\newcommand{\NmodVVperp}{\quotient{\U{N}}{\Ub{V} \oplus \Ub{V^{\bot}}}}
\newcommand{\flatNmodVVperp}{\flatquotient{\U{N}}{( \Ub{V} \oplus \Ub{V^{\bot}} )}}
\newcommand{\NmodVperp}{\quotient{\U{N}}{\Ub{V^{\bot}}}}
\newcommand{\flatNmodVperp}{\flatquotient{\U{N}}{\Ub{V^{\bot}}}}
\newcommand{\cosetVperp}[1]{{#1} \, \Ub{V^{\bot}}}
\newcommand{\rhoV}{\rho_{\xi}}
\newcommand{\rhoP}{\rho_{P}}
\newcommand{\GV}{\rhoV(\pi_1(\M, C))}
\newcommand{\GP}{\rhoP(\pi_1(\M, C))}
\newcommand{\A}{\mathcal{A}}
\newcommand{\AL}{\mathcal{A}_{\mathcal{L}}}
\newcommand{\Logical}{\mathcal{L}}
\newcommand{\Logicalperp}{\Logical(C^{\bot})}
\newcommand{\effLogical}{\overline{\Logical}}
\newcommand{\C}{\mathcal{C}}
\newcommand{\I}{\mathcal{I}}
\newcommand{\M}{\mathcal{M}}
\newcommand{\N}{\mathcal{N}}
\newcommand{\iotaM}{\iota_{\M}}
\newcommand{\fieldC}{\mathbb{C}}
\newcommand{\Cstar}{\mathbb{C}^{\times}}
\newcommand{\Errors}{\mathcal{E}}
\newcommand{\CorErrors}{\mathcal{E}'}
\newcommand{\Recovery}{\mathcal{R}}
\newcommand{\ket}[1]{|{#1}\rangle}
\newcommand{\bra}[1]{\langle{#1}|}
\newcommand{\supp}[1]{\mathrm{supp}(#1)}
\newcommand\restr[2]{{
  \left.\kern-\nulldelimiterspace 
  #1 
  \vphantom{\big|} 
  \right|_{#2} 
  }}
\newcommand{\timeorder}{\mathcal{T}}
\newcommand{\phases}{\{1, i, -1, -i\}}
\newcommand{\Paulis}{\{\idmatrix, X, Y, Z\}}
\newcommand{\locations}{(S_v, S_f)}
\newcommand{\locationsend}{(S'_v, S'_f)}
\newcommand{\Htoric}{H\locations}
\newcommand{\Ctoricfix}{C_{K}\locations}
\newcommand{\Ctoricfixend}{C_{K}\locationsend}
\newcommand{\Ctorict}[1]{C_{K}(S_v, S_f; S'_v, S'_f)({#1})}
\newcommand{\Ctorichardcore}{C_{K}^{HC, (n_v, n_f)}}
\newcommand{\Ctoricall}{C_{K}}
\newcommand{\Mgraph}{\mathring{\M}}
\newcommand{\toricvecbundgraph}{\bigvecbund |_{\Mgraph}}
\newcommand{\toricvecbundext}{\bigvecbund |_{\M}}
\newcommand{\toricPbundgraph}{\bigPbund |_{\Mgraph}}
\newcommand{\toricPbundext}{\bigPbund |_{\M}}
\newcommand{\conngraphtoric}{\nabla_{\mathrm{graph}}^{\mathrm{K}}}
\newcommand{\connexttoric}{\nabla_{\mathrm{ext}}^{\mathrm{K}}}
\newcommand{\sep}{s}
\newcommand{\vspan}[1]{\mathrm{span}(#1)}
\newcommand{\ord}[1]{{#1}^{\mathrm{th}}}
\newcommand{\F}{\mathcal{F}}
\newcommand{\Fsmall}{\mathring{\F}}
\newcommand{\Fpath}{\Fsmall^{p}}
\newcommand{\FsmallL}{\Fsmall_{\mathcal{L}}}
\newcommand{\Floop}{{\Fsmall}^{l}}
\newcommand{\Fbig}{\F}
\newcommand{\Fbigpaths}{\Fbig^{p}}
\newcommand{\Ftranspath}{\F^{p}}
\newcommand{\FL}{\F_{\mathcal{L}}}
\newcommand{\FLperp}{\FL(C^{\bot})}
\newcommand{\effFL}{\overline{\FL}}
\newcommand{\Fdiscrete}{\F_{\mathrm{discr}}}
\newcommand{\FdiscreteBeginend}{\Fdiscrete^{p}(S_v, S_f; S'_v, S'_f)}
\newcommand{\Fdiscretepaths}{\Fdiscrete^{p}}
\newcommand{\Fgraph}{\F_{\mathrm{graph}}}
\newcommand{\Fgraphpaths}{\Fgraph^{p}}
\newcommand{\Fext}{\F_{\mathrm{ext}}}
\newcommand{\Fextpaths}{\Fext^{p}}
\newcommand{\ver}[1]{\overline{{#1}}}
\begin{document}

\begin{abstract}
We introduce a differential geometric framework for describing families of quantum error-correcting codes and for understanding quantum fault tolerance.  In this paper we show that topological fault tolerance can be discussed in the same language used for fault tolerance in other kinds of quantum error-correcting codes.  In particular, we use fibre bundles with a flat projective connection to study the transformation of codewords under unitary fault-tolerant evolutions.  We explore in detail two examples of fault-tolerant families of operations, the qudit transversal gates and string operators in the toric code and other two-dimensional anyonic models. For these examples, we show that the fault-tolerant logical operations are given by the monodromy group for either of two bundles, both of which have flat projective connections.  We also outline a program which aims to ultimately unify all fault-tolerant protocols into a single framework generalizing that used in the examples.

\end{abstract}

\maketitle
\tableofcontents

\newpage

\section{Introduction}


In order to build a scalable quantum computer, some form of fault tolerance will almost certainly be needed.  Decoherence and other forms of error are likely to be much more serious for a quantum computer than for a classical computer, in part because quantum computers are built out of smaller components, and hence are more fragile, and in part because quantum states are simply inherently more delicate than classical states.

A fault-tolerant protocol~\cite{QECCintro} consists of a quantum error-correcting code (QECC) together with a set of techniques, called \emph{gadgets}, which allow us to perform a universal set of gates without compromising the protection of the code.  In particular, there are two main families of fault-tolerant protocols known, which we study here using our framework.  One uses a stabilizer code\footnote{However, the results in Section \ref{sec:transversal} of this work apply not only to stabilizer codes, but to arbitrary codes.} and performs logical gates largely using \emph{transversal operations} (gates which can be decomposed into a tensor product over the qudits in the code), supplementing with a few gadgets based on a modified form of teleportation (``magic state'' constructions) in order to produce a universal set of fault-tolerant gates.  The other family uses a topological code and performs gates using topologically robust gates, for instance by braiding defects in the lattice of qubits making up the code.  In some models, braiding is sufficient to produce a universal gate set, while other models can be made universal by augmenting the set of operations with magic state constructions.  There are a few other fault-tolerant protocols which do not fit neatly into either of these categories.  We do not study these miscellaneous protocols in this paper, but we expect similar principles to apply to them.

On the face of it, the techniques used in the two main families of fault-tolerant protocols seem very different.  We argue to the contrary that there is a natural picture which unifies topological fault tolerance with fault tolerance based on transversal gates.  In particular, both can be understood as fundamentally a topological effect.  By choosing the correct geometric space to study, we can view transversal gates as the result of moving along a topologically non-trivial path in that space, much as gates for a topological code result from moving along a path in configuration space.

We view this paper as the first step in a program which could ultimately unify all notions of fault-tolerant quantum computation.  Fault-tolerant constructions have a deserved reputation as being rather complicated and utilizing an ad hoc assortment of different tricks, and a conceptual simplification and unification of the principles of fault tolerance would be very welcome.  The program could lead to improved fault-tolerant protocols or general results about fault tolerance because of deeper insights into the structure of fault tolerance.  The work also helps put work on fault-tolerant quantum computation into a broader mathematical context which could suggest interesting future avenues of mathematical development.  

In addition, some of the techniques we develop along the way may be helpful in describing other quantum systems.  In particular, our construction for the toric code involves using superposition to simulate a continuous geometric space, allowing us to interpret a family of finite-dimensional subspaces as a continuous configuration space.  That is, the continuous nature of quantum amplitudes substitutes for continuous variation in position.

For much of this paper, we shall focus on building the mathematical framework and the geometrical pictures needed for this program.  This picture gives us an alternative, geometric and quite global way of looking at quantum error-correcting codes and quantum fault tolerance.


Roughly speaking, the main point of the paper is:

\begin{Conjecture}
Fault-tolerant logical gates can always be expressed as arising from monodromies of an appropriate fibre bundle with a flat projective connection.
\label{conj:generalFTtopo}
\end{Conjecture}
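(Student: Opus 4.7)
Because Conjecture \ref{conj:generalFTtopo} is structural rather than a single computation, a plausible proof strategy splits into two halves: first isolate a geometric template that any candidate notion of ``fault-tolerant protocol'' ought to fit into, and then verify that the template is realized in each family. The plan is to abstract the structural features the paper will later work out explicitly for transversal gates and for toric-code string operators, and argue that any protocol satisfying a precise locality/covariance axiom for fault tolerance must supply these features.

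First I would pin down the moduli space. Given a fault-tolerant family of operations $\F$ acting on some class of codes, define $\M \subset \Grass$ to be the orbit of a fiducial code $C$ under $\F$ inside the Grassmannian of $K$-dimensional subspaces of the physical Hilbert space $\Cvect{N}$. Fault-tolerant operations themselves furnish the loops in $\M$: an element of $\F$ that returns the code to itself gives a loop based at $C$, and equivalent implementations of the same logical gate should give homotopic loops. The restrictions $\Mvecbund$ and $\MPbund$ of the tautological vector bundle and its associated principal bundle over $\Grass$ are then the natural candidates for the bundles in the conjecture, with fibres over $C' \in \M$ equal to the codespace of $C'$ (respectively, its frame).

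Second I would construct the flat projective connection. The canonical $\U{N}$-invariant connection on $\bigvecbund$, coming from the orthogonal splitting $\Cvect{N} = V \oplus V^{\bot}$ at each point of $\Grass$, is not flat on $\Grass$ itself, but the key claim to isolate is that its pullback to $\M$ becomes projectively flat once the fault-tolerance axiom holds. The axiom to formulate should say, roughly: two fault-tolerant implementations of the same logical operation whose induced paths in $\M$ are homotopic agree on the codespace up to a global phase. This is exactly what forces the monodromy representations $\rhoV \colon \fundgrp{\M, C} \to \PGL{K}$ and $\rhoP$ to be well defined, and it is what projective flatness of the connection encodes geometrically. The logical gate group is then identified with $\GV$, or equivalently $\GP$, which is the content of the conjecture.

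The hard part, I expect, is formulating that fault-tolerance axiom in a setting-independent way, because transversal gates, braiding of topological defects, and magic-state-type constructions differ sharply at the operational level. Transversal operations in particular form a discrete family, so the continuous moduli space $\M$ has to be built by interpolating through intermediate subspaces that are not themselves codes, and one must show that the resulting monodromy depends only on the endpoint data; topological codes by contrast come with a configuration space that is already continuous. A secondary obstacle is the ``miscellaneous'' protocols flagged in the introduction, for which the conjecture may need refinement rather than outright proof; the explicit constructions in the later sections, for the qudit transversal gates and the toric-code string operators, should then be read as evidence that the template is flexible enough to cover the main examples rather than as a demonstration of full universality.
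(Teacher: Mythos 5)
Your proposal correctly identifies the scaffolding that the paper uses: $\M$ is the $\F$-orbit of the base code $C$ inside $\Grass$, the relevant bundles are the restrictions $\Mvecbund$ and $\MPbund$, the logical gate group is to be recovered as the monodromy $\GV \cong \GP$ of a flat projective connection, and the conjecture is verified only for the transversal and toric-code families rather than proved in general. Your acknowledgment that the hard part is a setting-independent fault-tolerance axiom, and that the two worked examples are evidence rather than proof, also matches the paper's own stance — the paper never claims Conjecture~\ref{conj:generalFTtopo} in full, only the two special cases via Theorems~\ref{thm:flatconnection.transversal} and~\ref{thm:toricflat}.

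The genuine gap is in the choice of connection, and it is not merely cosmetic. You propose pulling back ``the canonical $\U{N}$-invariant connection on $\bigvecbund$, coming from the orthogonal splitting $\Cvect{N} = V \oplus V^{\bot}$'' — that is, the Berry/tautological connection — and arguing that fault tolerance forces the pullback to be projectively flat. The paper does not use this connection at all. It instead defines a \emph{pre}-connection directly from the fault-tolerant unitary dynamics: given $\gamma(t) = U(t)(C)$ with $U(t) \in \Fpath$, parallel transport is by $\eta^{\{U\}}(t) = f_{U(t)}(C, w) = (U(t)(C), U(t)(w))$, and likewise $\lambda^{\{U\}}(t) = h_{U(t)}(\beta)$ for the principal bundle (Sections~\ref{sec:connectionV}--\ref{sec:connectionP}). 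The crucial structural fact — which your proposal misses — is that this pre-connection is \emph{not a priori well defined}: different unitary paths $U(t)$ inducing the same base path $\gamma(t)$ can transport the fibre differently (the paper's counterexample in Section~\ref{sec:connection_failure} is the trivial path realized by $\idmatrix \otimes U(t)$). The fault-tolerance restriction is what promotes the pre-connection to an actual connection, which moreover turns out to be projectively flat. That sequence — well-definedness first, flatness second — is encoded in Criterion~\ref{prop:flat_condition} and is the logical core of both worked examples.

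Using the Berry connection instead would invert this structure: the Berry connection is always well defined on $\Grass$ and on every $\M$, so there would be no well-definedness question to be rescued by the fault-tolerance hypothesis, and the only claim left to prove would be flatness of the restriction. More importantly, the Berry connection's holonomy over an $\F$-loop is generically not the logical gate $U(1)|_C$ applied by the protocol — these agree only up to a dynamical correction — so its monodromy representation need not equal the group $\GV$ of fault-tolerant logical gates, which is precisely what the conjecture demands. The paper's connection is \emph{defined} to reproduce the logical action of the physical unitaries; that is what makes the identification of monodromy with logical gate group (Corollary~\ref{cor:FTlogicalgates}) a tautology once the connection exists, and what makes the nontrivial content the well-definedness argument you would have to supply.
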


For the bulk of this paper, we will focus on \emph{unitary} fault-tolerant gates, for which the above conjecture specializes to and is stated more precisely as Conjecture~\ref{conj:flatconnection}.  We have so far only proven Conjecture \ref{conj:generalFTtopo} or Conjecture \ref{conj:flatconnection} in the cases of transversal gates and a family of topological codes which includes surface codes.  The result for transversal gates is a consequence of Theorem~\ref{thm:flatconnection.transversal}.  For toric codes with a fixed number of quasiparticles or defects, it follows from Theorem~\ref{thm:toricflat}.  For the standard toric code, few interesting operations can be done by braiding defects.  We show how to generalize the construction for the toric code to a variety of more general topological codes and anyon models, for which braiding of quasiparticles is more powerful.

\subsection{The basic idea}

For those already possessing the requisite mathematical background, we give a quick sketch of the conceptual aspects of our picture.  The Grassmannian $\Grass$ is the set of subspaces of a given size of a Hilbert space, where $K$ is the dimension of the subspace and $N$ the dimension of the Hilbert space.  $\Grass$ can be thought of as the manifold of quantum error-correcting codes.  Performing a continuous-time unitary evolution corresponds to a path in $\Grass$.  There are two natural fibre bundles over $\Grass$, with the fibre over a point $C$ equal to either the set of vectors in the subspace $C$ or the set of orthonormal $K$-frames for $C$.  Unitary evolution naturally transforms elements of the fibre, but unfortunately, does not in general give us a well-defined notion of parallel transport in the bundle over $\Grass$, because different unitary evolutions may give rise to the same path in $\Grass$, potentially leading to different paths or ``lifts'' in the bundle.  However, by restricting to a subset $\M$ of $\Grass$ (consisting of certain codes with a given quantum error correction property) and to a set of \emph{fault-tolerant} unitary evolutions, a notion of parallel transport \emph{is} well-defined.  Furthermore, in such cases, we get a natural flat projective connection on the fibre bundle being considered.  Thus, due to the projective flatness, fault-tolerant gates performing a non-trivial logical operation correspond to homotopically non-trivial loops on $\M$.

For fault-tolerant protocols based on transversal gates, we let $\M$ be the set of subspaces which can be reached from a given code $C$ by performing a tensor product of single-qudit unitary gates.  For the toric code with a fixed number of defects, we define an $\M$ that is isomorphic to the configuration space of the defects with a hard-core condition that prevents the defects from getting too close to each other.  The standard description of a toric code only allows defects at a discrete set of locations (vertices of a lattice or a dual lattice), but we demonstrate how to interpolate between such codes, and eventually define a continuous topological space $\M$ of defect configurations.  The construction for the toric code can be generalized to other models with localized anyonic quasiparticles, including, we believe, the Kitaev quantum double models~\cite{Kitaev1} and Levin-Wen string net models~\cite{LW}.

\subsection{Comparison with prior work}

While there has been quite a lot of work on fault-tolerance with transversal gates (see, e.g., references in \cite{QECCintro}) and on topological codes (for instance, \cite{Dennis,Kitaev1} and many subsequent papers), we are not aware of any previous work that attempts to unify the two approaches into a single picture.  

The approach we use for transversal gates breaks up the Grassmannian into submanifolds which are orbits of local unitaries (which, in this context, means tensor products of unitaries on the individual qudits).  The idea of classifying quantum error-correcting codes up to local unitary equivalence has appeared before, in the notion of \emph{invariants} of quantum codes.  (See, for instance, \cite{Rains_invariants}.)  The use of local unitary invariants has also become a standard approach for analyzing multi-particle entanglement~\cite{GRB}.  Both of these lines of work certainly have some connection to our approach, but the focus in prior work has been on identifying invariants, frequently polynomial invariants, that distinguish different classes of states or codes, whereas we are more interested in specific topological properties of a single class.  Polynomial invariants and the associated techniques are thus not useful for our purposes.  Our main theorem for transversal gates makes use of the results of \cite{EastinKnill}.

For topological codes, we work by considering a space which is isomorphic to the space of configurations of particles on a torus.  When the particles are located on lattice sites (or dual lattice sites for the dual defects), the codes we get are completely standard~\cite{Dennis,Kitaev1}.  The novelty in our construction is to give a description of the configuration space as sitting inside $\Grass$, by assigning a topological quantum error-correcting code to every configuration of defects on a torus, including those configurations which have defects at locations other than lattice sites (or dual lattice sites for dual defects), subject only to a hard-core condition that pairs of defects are not too close together.  This trick may have other applications in the study of topological codes and topologically ordered systems.

\subsection{Plan of the paper}

We have included a good deal of introductory material in order to make the paper somewhat self-contained for people from different backgrounds.  Sections \ref{sec:introFT}--\ref{sec:geoUnitary} provide background material and definitions we will use later, as well as providing partial motivation for some of the choices we make later in the paper.  We begin with a brief introduction to quantum fault tolerance in Section \ref{sec:introFT}, portraying the interplay between its essential ingredients.  
In Section \ref{sec:geometry}, we introduce the abstract geometric notions we shall use in this paper. We establish a geometric picture for unitary evolutions of QECCs (of any fixed dimension) in Section \ref{sec:geoUnitary}.  Section~\ref{sec:geoUnitary} does not contain any original mathematical content, but does start to illustrate the connections between our picture and existing concepts; in particular, it explains how we can understand the Grassmannian as the manifold of quantum error-correcting codes of a given size.

The new material begins in Section~\ref{sec:geoFT}, where we summarize our program to unify the principles of unitary fault tolerance.  Sections \ref{sec:transversal} and \ref{sec:topologicalFT} present two detailed worked-out examples of the framework (transversal gates and topological codes), proving special cases of Conjecture \ref{conj:flatconnection}.  In Section \ref{sec:fullFT}, we sketch how to push the present framework further to include non-unitary fault-tolerant operations, such as those which would change the dimension of the physical Hilbert space.  We conclude in Section~\ref{sec:conclusion} with a discussion of some possible future directions and open problems.

\section{Quantum information preliminaries: quantum fault tolerance}
\label{sec:introFT}

This section contains some of the concepts and lemmas from quantum information which we shall need to build or understand our geometric picture.

In short, quantum fault tolerance refers to a framework for designing quantum computation procedures which are robust against physical errors.  There are three main ingredients that go into building a fault-tolerant protocol: error models, quantum error-correcting codes (QECCs), and fault-tolerant gadgets.  The three components, as pictured in Figure \ref{fig:FTaspects}, must be chosen to work together properly for the protocol to be fault tolerant.  These three ingredients are addressed in Sections~\ref{sec:errormodels} -- \ref{sec:FToperations}, and then we give two examples in Sections~\ref{sec:squdit_errors} and \ref{sec:local_errors} to illustrate the desired compatibilities.

\begin{figure}
\begin{tikzpicture}
\path (-2,2) coordinate (EM);
\draw (EM) circle (1);
\node [anchor=south] at (EM) {Error};
\node [anchor=north] at (EM) {Model};

\path (2,2) coordinate (QECC);
\draw (QECC) circle (1);
\node at (QECC) {QECC};

\path (-2,2) ++(300:4) coordinate (FT);
\draw (FT) circle (1);
\node [anchor=south] at (FT) {FT};
\node [anchor=north] at (FT) {Operations};

\draw [thick,red] (EM) ++(300:1) -- +(300:2);
\draw [thick,blue] (EM) ++(0:1) -- +(0:2);
\draw [thick,green] (QECC) ++(240:1) -- +(240:2);

\end{tikzpicture}
\caption{The three aspects of a fault-tolerant protocol.}
\label{fig:FTaspects}
\end{figure}
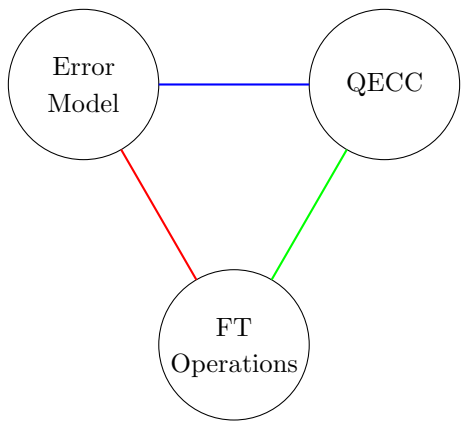

\subsection{Error models}
\label{sec:errormodels}

The error model tells us what kinds of errors occur during the computation.  Quantum states are vectors in a Hilbert space.  Since global phase has no physical significance, in fact a projective Hilbert space is more appropriate, but frequently we work with the full Hilbert space and only remove the phase at a later stage.  For the purpose of this paper, \emph{errors} are simply linear operators on the physical Hilbert space.

Given a particular physical implementation of quantum computation, certain kinds of errors will be more likely than others.  For instance, errors acting on a few qubits at a time are typically much more common than errors acting on many qubits.  The aim of the error model is to encapsulate the essential properties of the physical errors, while still being simple enough to make analysis tractable.  We will deal with a set of possible errors representing the range of likely errors; typically, the Kraus operators in a CP map decomposition\footnote{The CP map decomposition is not unique; however, the linear span of the Kraus operators is unique.} will be in the linear span of the set of possible errors.

Conceptually, it is often helpful to make a distinction between storage errors and gate errors:
\begin{Definition}
A \emph{storage error} is an error which occurs in the computer while no gates are being performed.  A \emph{gate error} is an error which occurs in one or more qudits while they are involved in the implementation of a physical gate.
\label{dfn:errortypes}
\end{Definition}
Storage errors are relatively simple to imagine, but the physical reasons for gate errors are a little more complicated, so let us elaborate a little.  Gate errors may be caused by faulty control, such as turning on an incorrect interaction or turning the correct interaction on for the wrong amount of time.  However, gate errors can also occur due to an interaction between storage errors on the qudits being manipulated and attempted control of the system.  This is a possibility because gates are not performed instantaneously in the lab, so storage errors continue to affect the system during that time and furthermore become altered by the gate, diversifying the types of storage-induced errors, which we classify as gate errors in this context.  For instance, if a phase ($Z$) error occurs during the implementation of a CNOT gate, the state at the end of the gate could have $X$ or $Y$ errors on it as well as $Z$ errors.  The gate error due to a storage $Z$ error occurring during a CNOT implementation can be written as $E = aI \otimes F_I + bX \otimes F_X + cY \otimes F_Y+ dZ \otimes F_Z$ (focusing on the error on the first qubit).  In fact, for any CNOT implementation, there exists a time for the storage error to occur so that $b \neq 0$.  Similarly, we can replace $b \neq 0$ with $c \neq 0$ or $d \neq 0$ here, and the statement remains true.  By the same token, a single-qubit storage error that occurs while a two-qubit gate is being performed is likely to produce errors in both qubits involved in the gate.  Due to these reasons, for the purpose of mathematical analysis, it is helpful to consider a storage error that occurs while a gate is being performed as part of the gate error.  In fact, as per Definition~\ref{dfn:errortypes}, we take the liberty to group together all errors which happen during the gate, regardless of their physical source, and call them gate errors.

More precisely, we can consider an \emph{error model} to consist of:
\begin{itemize}
\item a set of possible storage errors $\Errors$,
\item plus a set of gate errors $\Errors_U$ for each unitary physical gate $U$ (at least for those in the fault-tolerant protocol),\footnote{Some physical gates will never occur in the fault-tolerant protocol, so we don't have to specify $\Errors_U$ for those specific $U$'s.}
\item as well as other error sets for other (non-unitary) physical actions like preparing a qubit or measuring.
\end{itemize}

The sets $\Errors_U$ are highly dependent on $\Errors$, as we shall see shortly.  For most of this paper, except for Section \ref{sec:fullFT}, we focus on the first two classes of errors, i.e. the errors in $\Errors$ and in the sets $\Errors_U$.

For applications like proving the threshold theorem, the error model should also specify the probabilities of the different kinds of errors, but for the purposes of this paper, we won't need to think about probabilities; it is sufficient for us to just focus on the sets of errors.  This typically means that we want the error model to contain only the likely errors and exclude errors which are possible but very unlikely.  

For our fault tolerance analysis, the role of $\Errors$ is simply the set of possible errors that we assign to a pure wait time step.  A wait time step is shown as an ``empty'' extended rectangle in Figure \ref{fig:exRec}, i.e. where we only have two EC procedures at the two ends of the extended rectangle, sandwiching no physical or logical gate.

Mathematically and without loss of generality, let us think of a gate error as acting \emph{after} a perfect implementation of the desired physical gate.  That is, suppose the actual implementation of the gate did not realize the desired unitary $U$ but instead some other operator 
\begin{equation}
U' = E \circ U
\end{equation}
(not necessarily unitary) on the physical Hilbert space.  We then say that the gate error is $E$.  The choice to have the errors act after the gates is arbitrary; we could have chosen to have them act before, and that would have worked just as well, but we need to fix a convention.

We usually require that
\begin{equation}
\Errors \subseteq \Errors_U
\label{eqn:aftererror}
\end{equation}
for all $U$ since a physical storage error could occur right at the end of a gate implementation, in which case we still consider it as part of the gate error.  Similarly, a storage error could occur right at the beginning of a gate $U$, so we also require that
\begin{equation}
E \in \Errors
\Longrightarrow U E U^{-1} \in \Errors_U,
\label{eqn:beforeerror}
\end{equation}
since what actually gets implemented is the gate $U \circ E = (U E U^{-1}) \circ U$.  Again, we consider the storage-induced error $U E U^{-1}$ as part of the gate error $\Errors_U$, since it occurs during the implementation of a gate.
As a useful convention, let us throw the identity error into each of the error sets.


We frequently work with Pauli errors.  When expressed in the qubit computational basis, the single-qubit Pauli operators are given by
\begin{equation}
\sigma^{X} := X := \begin{pmatrix} 0&1\\ 1&0 \end{pmatrix},
\;\;
\sigma^{Z} := Z :=  \begin{pmatrix} 1&0\\ 0&-1 \end{pmatrix},
\;\;
\sigma^{Y} := Y := iXZ =  \begin{pmatrix} 0&-i\\ i&0 \end{pmatrix}.
\label{eqn:Paulis}
\end{equation}
In addition, we shall need the definition
\begin{equation}
\sigma^{0} := \idmatrix
\end{equation}
where $\idmatrix$ is the identity operator on the qubit Hilbert space.  There are standard generalizations of the Pauli errors to qudits~\cite{Knill}.  We will also deal with multiple-qudit Pauli errors, which are tensor products of the single-qudit Paulis.

\begin{Definition}
The \emph{weight} of an $n$-qudit Pauli error is the number of non-identity tensor factors in the tensor product.
\label{def:weight}
\end{Definition}

\subsection{Quantum error-correcting codes}
\label{sec:QECCs}

The QECC gives us the ability to correct errors.  In a QECC, a logical Hilbert space is encoded in a larger physical Hilbert space in such a way that errors can be identified and corrected.  Naturally, it is not possible to correct all possible errors\footnote{For example, those errors that are ``tangential" to the code subspace will be undetectable and hence uncorrectable.}, so to be useful, a QECC should focus on the most likely kinds of errors in an implementation.

Here we briefly review a few key concepts of quantum error correction.  See~\cite{QECCintro} for more details.  Let us define an $((n,K))$ {\it qudit quantum error-correcting code} as a linear subspace $C$ of dimension $K$, called the \emph{codespace}, sitting inside a $d^{n}$-dimensional \emph{physical} Hilbert space $\Hilbphy = \nqudits$ of $n$ qudits.  Typically the physical system in use has a physically natural tensor product decomposition into qudits.  For instance, each qudit may be one ion in an ion trap.

It is often the case that we want to run a computation using more than $\log_d K$ logical qudits\footnote{Typically $K$ is a power of $d$, the size of the physical qudits, namely, we usually like to encode an integer number of qudits.}, so a single $((n,K))$ qudit QECC will not be enough.  Let's say we need $m$ logical qudits.  We call the Hilbert space $\mqudits$ of the logical qudits the \emph{logical space}, denoted by $\Hilblog$.  When $m > \log_d K$, we can divide up the $d^m$-dimensional logical space into a tensor product of logical subspaces of dimension $\leq K$ (in whatever way is most convenient) and then encode each of these tensor factors as codespaces in a separate \emph{block} of the QECC, using a total of $n \lceil m/\log_d K \rceil$ physical qudits.

QECCs are made to correct errors.  The following definition makes the notion of error correction precise.
\begin{Definition}
Given a set $\CorErrors = \{E_a\}$ of errors, we say that a QECC $C$ \emph{corrects} this set of errors if there exists a quantum (recovery) operation $\Recovery$ such that
\begin{equation}
(\Recovery \circ E_a) \ket{\psi} \propto \ket{\psi} \text{  for all $E_a \in \CorErrors$ and $\ket{\psi} \in C$}.
\end{equation}
\end{Definition}

This operational definition of correctability translates into the following algebraic condition \cite{BDSW,KL}:
\begin{Theorem}
\label{thm:correction}
A QECC $C$ corrects the set of errors $\CorErrors$ if and only if
\begin{equation}
\bra{\psi_i} E_a^{\dagger} E_b \ket{\psi_j} = f_{ab} \delta_{ij}
\label{eqn:correction}
\end{equation}
where $E_a, E_b \in \CorErrors$ and $\{\ket{\psi_i}\}$ is an orthonormal basis for the codespace $C$.
\end{Theorem}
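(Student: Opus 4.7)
The plan is to prove both implications separately. For the forward direction I would extract the algebraic identity from a Kraus representation of the recovery, while for the converse I would construct an explicit recovery by diagonalizing the matrix $f_{ab}$.

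For the forward direction, fix a trace-preserving recovery $\Recovery(\rho) = \sum_k R_k \rho R_k^\dagger$ with $\sum_k R_k^\dagger R_k = \idmatrix$. Correctability on a pure codeword $\ket{\psi}\in C$ says $\sum_k R_k E_a \ket{\psi}\bra{\psi} E_a^\dagger R_k^\dagger$ is proportional to the rank-one projector $\ket{\psi}\bra{\psi}$, so each positive rank-one summand must itself be a multiple of $\ket{\psi}\bra{\psi}$. This yields scalars $c_{ka}(\psi)$ with $R_k E_a \ket{\psi} = c_{ka}(\psi)\ket{\psi}$. The first nontrivial step is to show these scalars are $\psi$-independent on $C$: applying the relation to $\ket{\psi_i}+\ket{\psi_j}$ and using linearity of $R_k E_a$ forces $c_{ka}(\psi_i) = c_{ka}(\psi_j) =: c_{ka}$. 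Once $c_{ka}$ is constant on the codespace, inserting $\sum_k R_k^\dagger R_k = \idmatrix$ between $E_a^\dagger$ and $E_b$ gives $\bra{\psi_i} E_a^\dagger E_b \ket{\psi_j} = f_{ab}\delta_{ij}$ with $f_{ab} := \sum_k \overline{c_{ka}} c_{kb}$, which is the stated condition.

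For the converse, the matrix $f$ is Hermitian and positive semi-definite (take $a=b$, $i=j$ to see $f_{aa} = \| E_a\ket{\psi_i}\|^2 \ge 0$), so I would diagonalize it as $f = U D U^\dagger$ with $d_c \ge 0$, and define rotated errors $F_c := \sum_a u_{ca} E_a$. In this basis the hypothesis reads $\bra{\psi_i}F_c^\dagger F_d\ket{\psi_j} = d_c \delta_{cd}\delta_{ij}$, which says geometrically that each $F_c$ sends $C$ isometrically, up to the scalar $\sqrt{d_c}$, into a subspace $C_c := F_c(C)$, and that distinct $C_c$ are mutually orthogonal. I would then build $\Recovery$ by first projecting onto the $C_c$ and then applying the partial isometry $V_c : C_c \to C$ that inverts $F_c/\sqrt{d_c}$ on its image, completing to a trace-preserving channel on the orthogonal complement in any convenient way. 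This recovery manifestly undoes each $F_c$, and because the $C_c$ are orthogonal, the inverse change of basis $E_a = \sum_c \overline{u_{ca}} F_c$ combined with linearity of $\Recovery$ propagates correctness to every $E_a \in \Errors$.

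The step I expect to be the main obstacle is the $\psi$-independence claim in the forward direction: the hypothesis supplies scalars $c_{ka}(\psi)$ only one codeword at a time, and upgrading them to the operator identity $R_k E_a P_C = c_{ka} P_C$, with $P_C$ the projector onto $C$, requires a genuine, though short, linearity argument. In the converse the analogous delicate point is verifying that a single recovery simultaneously handles every $E_a$ rather than only the orthogonal family $\{F_c\}$; this is precisely where mutual orthogonality of the subspaces $C_c$ is essential, since it lets the measurement stage of $\Recovery$ correctly identify which $F_c$-component of a general $E_a\ket{\psi}$ is present.
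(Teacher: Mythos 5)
The paper states Theorem \ref{thm:correction} (the Knill--Laflamme conditions) without proof, treating it as standard background and deferring to \cite{QECCintro}, so there is no in-paper argument to compare against. Your sketch is a correct outline of the standard proof, and you have correctly identified and addressed the two steps that need real care: the linearity argument that upgrades the pointwise scalars $c_{ka}(\psi)$ to constants $c_{ka}$ on $C$ (after which $\sum_k R_k^\dagger R_k = \idmatrix$ gives $f_{ab} = \sum_k \overline{c_{ka}} c_{kb}$), and, in the converse, the fact that mutual orthogonality of the rotated images $C_c = F_c(C)$ is what lets the single measure-and-invert recovery act correctly on an arbitrary $E_a = \sum_c \overline{u_{ca}} F_c$ rather than only on the diagonalizing family. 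Two small points to tighten in a full write-up: your parenthetical for positive semi-definiteness only establishes $f_{aa} \geq 0$, which is weaker than PSD; you actually need the quadratic form computation $\sum_{a,b}\bar{v}_a f_{ab} v_b = \bigl\| \sum_a v_a E_a \ket{\psi_i} \bigr\|^2 \geq 0$. Likewise, the change-of-basis indices should be fixed precisely (e.g.\ with $f = V D V^\dagger$ Hermitian, take $F_c = \sum_a \overline{V_{ac}} E_a$) so that $\bra{\psi_i} F_c^\dagger F_d \ket{\psi_j}$ really comes out as $d_c \delta_{cd} \delta_{ij}$; as stated the relation between $u$ and $V$ is left implicit.
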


The independence of the numbers $f_{ab}$ from $i$ and $j$ is the crucial element here.
For any pair of errors $E_a$ and $E_b$, since $f_{ab}$ does not depend on $i, j$, one can verify that if equation (\ref{eqn:correction}) is satisfied for one orthonormal basis of $C$, then it is satisfied for any other orthonormal basis as well, with the same constants $f_{ab}$.

\begin{Definition}
The \emph{distance} of a QECC $C$ is the minimum weight\footnote{See Definition \ref{def:weight}.} of a (qudit) Pauli operator $P$ for which there does not exist $f(P)$ independent of $i$ and $j$ such that
\begin{equation}
\bra{\psi_i} P \ket{\psi_j} = f(P) \delta_{ij}
\end{equation}
for all orthonormal pairs of vectors $\ket{\psi_i}$, $\ket{\psi_j}$ in $C$. If $C$ is an $((n,K))$ qudit QECC with distance $\delta$, we say that $C$ is a \emph{minimum distance (qudit) code} of type $((n,K,\delta))$.
\label{def:distance}
\end{Definition}

As a corollary to Theorem \ref{thm:correction}, we have the following.

\begin{Corollary}
An $((n,K,\delta))$ code can correct all errors\footnote{Only Pauli errors of weight $t$ is defined in this paper.  However, an $((n,K,\delta))$ code indeed corrects general errors of weight $t$ once that is properly defined.} of weight $t$ where $2t + 1 \le \delta$.
\label{cor:distance}
\end{Corollary}

It is also useful to have the notion of an encoding for a QECC.  Let $C$ be an $((n,K))$ qudit quantum error-correcting code as above, that is, $C$ is a $K$-dimensional vector subspace of the $d^n$-dimensional $\Hilbphy$.  An {\it encoding}\footnote{Here, we do not consider subsystem encodings.} for $C$ refers to a choice of a unitary isomorphism $\iota$ from the logical space $\Cvect{K}$ to $C$.  A {\it decoding} for $C$, on the other hand, is a choice of a unitary isomorphism $\omega$ from $C$ to the logical space $\Cvect{K}$.  In a fault-tolerant protocol, the encoding and decoding mostly happen only abstractly in the mind of the quantum information theorist, but are nevertheless very useful and important concepts.

In a fault-tolerant protocol, the QECC $C$ must be matched to the given error model (the blue line in Figure~\ref{fig:FTaspects}).  In particular, the code $C$ should be able to correct at least the set $\Errors$ of storage errors, and for $U$ ranging over the unitary physical gates used in the fault-tolerant protocol, $U(C')$ should be able to correct $\Errors_U$, where $C'$ is the code obtained by applying all the physical gates prior to $U$ in the fault-tolerant gadget to the reference code $C$.

\subsection{Fault-tolerant operations}
\label{sec:FToperations}

Fault-tolerant gadgets let us perform various kinds of actions on a quantum state encoded in a QECC.  

First, we quickly review what is involved in a fault-tolerant quantum computation.  A fault-tolerant protocol consists of a set of \emph{gadgets} which perform encoded versions of the various components of a quantum circuit.  At a minimum, a fault-tolerant protocol needs gadgets for state preparation, measurement, and a universal set of gates.  The protocol also needs a gadget for fault-tolerant error correction, inserted into the fault-tolerant circuit periodically, as in Figure \ref{fig:exRec}, to prevent errors from building up during the computation.  All gadgets are performed on states encoded in the QECC used in the protocol.  Thus, a fault-tolerant circuit might begin with state preparation gadgets to create a number of encoded $\ket{0}$ states, continue with a sequence of gate gadgets implementing a quantum algorithm, interleaved by fault-tolerant error correction procedures, and end with measurement gadgets which output the logical value of one or more encoded qubits.

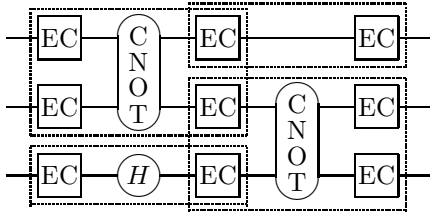
\begin{figure}
\begin{picture}(160, 80)

\put(0,20){\line(1,0){12}}
\put(0,46){\line(1,0){12}}
\put(0,72){\line(1,0){12}}

\put(12,12){\framebox(16,16){EC}}
\put(12,38){\framebox(16,16){EC}}
\put(12,64){\framebox(16,16){EC}}

\put(28,20){\line(1,0){14}}
\put(28,46){\line(1,0){14}}
\put(28,72){\line(1,0){14}}

\put(50,20){\circle{16}}
\put(50,20){\makebox(0,0){$H$}}

\put(50,59){\oval(16,44)}
\put(46,40){\shortstack{C\\N\\O\\T}}

\put(58,20){\line(1,0){14}}
\put(58,46){\line(1,0){14}}
\put(58,72){\line(1,0){14}}

\put(72,12){\framebox(16,16){EC}}
\put(72,38){\framebox(16,16){EC}}
\put(72,64){\framebox(16,16){EC}}

\put(88,20){\line(1,0){14}}
\put(88,46){\line(1,0){14}}
\put(88,72){\line(1,0){44}}

\put(110,33){\oval(16,44)}
\put(106,14){\shortstack{C\\N\\O\\T}}


\put(118,20){\line(1,0){14}}
\put(118,46){\line(1,0){14}}

\put(132,12){\framebox(16,16){EC}}
\put(132,38){\framebox(16,16){EC}}
\put(132,64){\framebox(16,16){EC}}

\put(148,20){\line(1,0){12}}
\put(148,46){\line(1,0){12}}
\put(148,72){\line(1,0){12}}

\put(9,35){\dashbox{0.7}(82,48){ }}
\put(9,9){\dashbox{0.7}(82,22){ }}

\put(69,7){\dashbox{0.7}(82,50){ }}
\put(69,61){\dashbox{0.7}(82,24){ }}

\end{picture}
\caption{A schematic example of a portion of a circuit built of fault-tolerant gadgets, such as EC, CNOT and $H$.  We can consider a fault-tolerant circuit to be broken up into fundamental units called ``extended rectangles'', marked by dotted lines in the figure.  Each (solid) line represents a block of the QECC.  EC is an error correction step, and CNOT and $H$ are two kinds of logical gate.  In general, a fault-tolerant gadget, such as the CNOT or $H$ logical gates here, might also contain error correction.  This is because a fault-tolerant gate might itself be made up of multiple physical gates, and sometimes we need to perform error correction between the physical gates in order to prevent the errors from becoming too severe.}
\label{fig:exRec}
\end{figure}

Each gadget is composed of a number of physical gates, and each physical gate $U$ can potentially be followed by an error drawn from $\Errors_U$.  Sometimes, error correction is performed between the physical gates, to prevent the errors from becoming too severe.  Figure~\ref{fig:exRec} gives an example of part of a fault-tolerant circuit.  In the case of the $7$-qubit code, the logical CNOT and $H$ gate gadgets are each made up of a tensor product of $n$ physical gates (i.e. are transversal), where $n$ is the number of physical qudits used in the QECC.  Since only a tensor product is involved in this case, the logical CNOT and $H$ gates can actually be accomplished in depth 1, and can therefore be considered as made up of a single physical gate.  Better examples of logical gates being composed of multiple physical gates include magic state constructions, error correction gadgets, and moving defects around in a toric code as described in Sections \ref{sec:topologicalFT} and \ref{sec:local_errors}.

Naturally, in a fault-tolerant protocol, fault-tolerant gadgets must be matched to the QECC (the green line in Figure~\ref{fig:FTaspects}) and to the error model (the red line in Figure~\ref{fig:FTaspects}).  In the absence of error, a fault-tolerant gate\footnote{A fault-tolerant gate is always used to mean a fault-tolerant logical gate, not the physical gates that make up the fault-tolerant gadget/gate.} $F$ should map a valid encoded state into another encoded state, namely
\begin{equation}
F(C)=C
\label{eqn:code-gate-relation}
\end{equation}
for the QECC $C$ chosen for our fault-tolerant protocol.  In general, there may be some physical errors already in the state at the beginning of the gadget.  The FT operation will combine these with additional errors (after possible intermediate error correction) that occur during the gadget.  As a property of an FT operation, if neither group of errors is too severe, we ask that the error at the end of the gadget is still correctable by the code.

In order to make sure that the errors during a FT gadget don't become too severe, we might have to perform error correction within the gadget, and the error-correction property the FT gadget must satisfy has implications for the intermediate physical gates, their error sets and the intermediate codes.  Here it is important to keep track of not just the fault-tolerant gates, but also how each fault-tolerant gate $F$ is discretely implemented, i.e. the decomposition $F = U_m \cdots U_1$ into a series of physical gates $U_i$.  In some sense, we need to make sure the error-correcting properties of the reference code and of intermediate codes are not spoiled by the implementation of the gadgets, even in the presence of imperfection in their implementation, and that the error-correcting properties of the intermediate codes are strong enough to correct the accumulated errors.  Since we don't correct errors after every physical gate, when we do correct errors, say after some physical gate $U$, we might need to correct accumulated errors beyond what is in the set $\Errors_U$.  Therefore, the compatibility we ask for here between the fault-tolerant operations and the error model is more stringent than that outlined at the end of the Section \ref{sec:QECCs}.

Similarly, the error-correction property of the FT gadget has implications not only on the discrete implementation of the gadget, but also on the continuous-time implementation and all the intermediate codes.  By continuous-time implementation of a fault-tolerant gate $F$, we mean we keep track of not only the physical gates $U_i$ that make up $F$, but also their continuous-time implementation, e.g. evolving the system under some Hamiltonian.  However, we do not discuss the compatibility between the continuous-time implementation and the error model so much here because for now we model the errors to happen at discrete time, after each physical gate.

In this paper, we will focus mainly on \emph{unitary} fault-tolerant gate gadgets.  To qualify as a unitary fault-tolerant gate, not only does the resulting fault-tolerant gate need be unitary, it also needs admit a continuous-time implementation which is unitary for all time $0 \le t \le 1$.

\subsection{Example: The $s$-qudit error model and transversal gates}
\label{sec:squdit_errors}

Let us consider an example that we will return to in Section~\ref{sec:transversal} and see how the choice of error model, QECC, and fault-tolerant gadgets (and physical gates) must fit together in this case.

We define an \emph{$s$-qudit error model} for when the underlying system has errors naturally occurring independently on $\leq s$ separate qudits.  We shall make this brief statement more precise below.  The underlying physical Hilbert space in this case is $\Hilbphy := \bigotimes_{i=1}^{n} \Hilb{i}$ where $n$ is the number of qudits and $\Hilb{i} = \Cvect{d}$ denotes the qudit Hilbert space for the $\ord{i}$ qudit.  The set of storage errors $\Errors$ in the $s$-qudit error model is the set of all tensor product errors of weight $s$ or less, namely $E = \bigotimes_i E_i$ where $E_i$ acts on $\Hilb{i}$ and $E_i \ne \idmatrix$ for at most $s$ places.

Without loss of generality, a unitary gate $U$ takes the form $\bigotimes_j U_j$, expressed as a tensor decomposition maximizing the number of tensor factors.  Each $U_j$ acts on a subspace $\overline{\Hilb{}}_{j}$ which contains the tensor product of an integer number of the qudit Hilbert spaces $\Hilb{i}$.  $U$ acts on $\Hilbphy$, so $\bigotimes_{j} \overline{\Hilb{}}_{j} = \Hilbphy$.  Furthermore, since the tensor decomposition $\bigotimes_j U_j$ maximizes the number of tensor factors, such a decomposition is unique.  With respect to the Hilbert space decomposition $\bigotimes_{j} \overline{\Hilb{}}_{j} = \Hilbphy$, we define the set of gate errors $\Errors_U$ associated with $U$\footnote{For the error model to be realistic, $U$ must be implemented by a time sequence $U(t)$ of unitaries each of which respecting the tensor product decomposition $\bigotimes_{j} \overline{\Hilb{}}_{j}$.  The $U(t)$ here is a unitary evolution, as defined in Definition \ref{def:uni_evolution}.  In particular, $U(0) = \idmatrix$ and $U(1) = U$.} to be the set of all tensor product errors $F = \bigotimes_j F_j$ where $F_j$ acts on $\overline{\Hilb{}}_{j}$ and at most $s$ of the $F_j$'s are nontrivial.  In particular, if $U$ is a tensor product of unitaries on the individual qudit Hilbert spaces $\Hilb{i}$, then $\Errors = \Errors_U$; otherwise $\Errors \subsetneq \Errors_U$.

By Corollary \ref{cor:distance}, a code with distance $\delta$ corrects $s$ qudit errors iff $\delta > 2s$.  Thus, any code of, for example, distance $2s+1$ is well-suited to correct a $s$-qudit error model if we consider only storage errors.

However, in a quantum fault-tolerant protocol, we need a code that can correct gate errors from $\Errors_U$.  Suppose the gate $U$ is of the form $\bigotimes_i U_i$ where each $U_i$ acts on one of the qudits.  Then $\Errors = \Errors_U$ and the gate errors $E_1$ and $E_2$ (as in Theorem \ref{thm:correction}) would also have a weight of at most $s$, hence in this case $E_1^\dagger E_2$ is an error of weight at most $2s$.  Therefore, the same code that corrects the errors in a wait step can also correct the errors incurred in a time step where the gate $U$ is performed.  Hence, a distance $2s+1$ code may again be a suitable QECC for fault-tolerant computation under the $s$-qudit error model, if all gates involved are of this form.

On the other hand, if each $U_i$ acts non-trivially on $2$ qudits in a single block of the code, then $\Errors_U$ contains errors of weight $2s$.  In this case, we would need a code with distance at least $4s+1$.  To avoid needing larger distance codes than $2s+1$, we shall design fault-tolerant protocols for the $s$-qudit error model to avoid multi-qudit gates acting within a single block of the QECC.  Therefore, to achieve fault tolerance in this case, we restrict our physical gates to be transversal gates, defined below.

\begin{Definition}
A (possibly multi-block) \emph{transversal} gate $U$ is one that can be written as $U = \bigotimes_i U_i$, where the unitary $U_i$ acts only on the $\ord{i}$ qudit in each block of a QECC.
\label{def:transversal}
\end{Definition}
For instance, if we are dealing with a system of qubits, any tensor product of single-qubit gates is a transversal operation, as is a tensor product of CNOT gates, where the $\ord{i}$ CNOT uses qubit $i$ from block $1$ as the control qubit and qubit $i$ from block $2$ as the target qubit.  For CSS codes, the transversal CNOT between two code blocks implements the logical CNOT between the two encoded qubits \cite{Shor}.

A transversal gate that maps the code space back into itself is automatically fault-tolerant with respect to the $s$-qudit error model and a distance $2s+1$ code, in part by the same argument which shows that a code with distance $2s+1$ corrects weight $s$ errors.  Thus, for example, the $s$-qubit error model, a distance $2s + 1$ QECC, and (faulty) transversal gates fit together to produce a sensible fault-tolerant protocol.  Unfortunately, such a protocol is incomplete, since the set of transversal gates on a QECC can never execute a universal set of logical gates~\cite{EastinKnill}.  However, in some important cases, a large number of gates can be performed transversally.  For instance, the $7$-qubit code has a transversal implementation of the full Clifford group \cite{Shor}.

\subsection{Example: The geometrically local error model and the toric code}
\label{sec:local_errors}

A second example is when errors are geometrically local and we use topological codes and gates.  More specifically, in this paper we simply consider the toric code and string operators, which will serve as examples of a topological code and topological gates.  We present a fibre bundle picture of this model in Section~\ref{sec:topologicalFT}.

The \emph{2D $(s,t)$-geometrically local error model} is specified by two parameters $s$ and $t$.  Let us consider the $n$-qubit Hilbert space $\Hilbphy := \nqubits$, with the qubits arranged in $2$ spatial dimensions, say on the vertices (as in Figure~\ref{fig:geoerrors}) or the edges (as in toric code) of a square lattice with unit lattice spacing.  The set of possible storage errors $\Errors$ consists of \emph{all} tensor product Pauli errors which have support on a union $\bigcup_{i=1}^{s} S_i$ of at most $s$ clusters of qubits, where each $S_i$ is contained in a disk $D_i$ of diameter at most $t$, as depicted in Figure~\ref{fig:geoerrors}.

\begin{figure}
\begin{tikzpicture}
\draw [gray,very thin,step=.5cm] (-2,-2) grid (2,2);
\foreach \x in {-2,..., 1}
	\foreach \y in {-2,-1.5,...,2}
		\draw [blue, very thick] (\x,\y) -- +(0.5,0);

\node at (-1,-1) {X};
\node at (-1,-0.5) {X};
\draw (-1,-0.75) circle (0.4);
\draw [dotted] (-0.75,-0.75) circle (0.65);

\node at (1,0) {X};
\draw (1,0) circle (0.3);
\draw [dotted] (1.25,0) circle (0.45);

\node at (0,1.5) {X};
\node at (0.5,1.5){X};
\draw (0.25,1.5) circle (0.4);
\draw [dotted] (0.25,1.5) circle (0.45);

\end{tikzpicture}
\caption{An example of a $(3,2)$-geometrically local error.  X represents an error, and a solid circle indicates a cluster $D_i$.  If we do two-qubit gates between adjacent qubits (represented by thick blue lines), the clusters can be larger, represented by dotted circles.  In this figure, the qubits live on the vertices of the lattice.  This is different from the usual convention for the toric code, which has qubits on the edges of a lattice.}
\label{fig:geoerrors}
\end{figure}
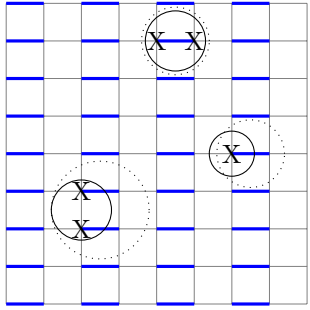

Without loss of generality, a unitary gate $U$ can be written as $U = \bigotimes_{j=1}^{r} U_j$, where $U_j$ acts on $\overline{\Hilb{}}_{j}$, consisting of a tensor product of an integer number of geometrically neighbouring $\Hilb{i}$'s.  The decomposition is chosen to maximize the number of tensor factors.  Then the set $\Errors_U$ of gate errors consists of tensor product Pauli errors acting on a union $\bigcup_{i=1}^{s} S'_i$ of at most $s$ clusters, where each cluster $S'_i$ is contained within $D_i \cup \supp{U_j}$, for some disk $D_i$ of radius $t$ and for all $j$ such that $\supp{U_j} \cap D_i \ne \emptyset$.  We see that in general $\Errors \subsetneq \Errors_U$ because $S_i \subset D_i$ for errors in $\Errors$ whereas the analogous set inclusion $S'_i \subset D_i$ does not generally hold true for errors in $\Errors_U$.  In fact, it can be that $D_i \subset S'_i$, as shown in Figure~\ref{fig:geoerrors}.  We may assume $t \geq 1$ always; when the error sets are all empty, we say that we have a $(0,1)$-geometrically local error model.

As a QECC resilient against the $(s,t)$-geometrically local error model, we use Kitaev's toric code \cite{Kitaev1}, which we define explicitly in Section \ref{sec:toric}.  The standard toric code is defined for a qubit system on a torus, geometrically arranged so that the qubits sit on the edges of a square lattice with unit lattice spacing.  If the underlying lattice on the torus has period $L$, the code then has distance $L$, that is, it is robust against the $s$-qubit storage errors,  as defined in Section \ref{sec:squdit_errors}, with weight $s < \flatquotient{L}{2}$.  On the other hand, since the toric code is sensitive to the local and toroidal geometry, it is also a natural candidate for providing protection against geometrically local errors, and the code distance by itself is not a good measure of the error correction power of the toric code.  We expect the toric code to ``do better'' when we consider a more restricted class of errors.  Indeed, this is what happens.  If we consider only the storage errors, i.e. in a quantum memory setting, a toric code of size $L$ is robust against the $(s,t)$-geometrically local error model for $st < \flatquotient{L}{2}$, effectively increasing the maximum allowed number of error locations from $\lfloor \flatquotient{(L-1)}{2} \rfloor$ to about $\flatquotient{s \pi t^2}{4} = (\flatquotient{\pi t}{4}) \lfloor \flatquotient{(L-1)}{2} \rfloor$ by taking advantage of the geometric clustering of the errors, especially effective when $t$ is large.  Also note that the $s$-qubit error model is the same as the $(s,1)$-geometrically local error model.  Hence, robustness against $(s,t)$-geometrically local error models for $st < \flatquotient{L}{2}$ is a strictly stronger statement than robustness against $s$-qubit error models for $s < \flatquotient{L}{2}$.

In this case, a natural set of fault-tolerant operations compatible with the $(s,t)$-geometrically local error model and the toric code are those consisting of a tensor product of single-qubit gates or gates interacting only clusters of qubits of diameter $d$, since these gates respect the geometric locality of the code.  Then if the error before a gadget is an $(s_1, t)$ error and the gate errors are $(s_2,t)$-geometrically local, then the error after the gadget becomes an $(s_1 + s_2, t+2(d-1))$-geometrically local error, since each cluster could have been increased in size by error propagation, but only by at most $(d-1)$ qubits in each direction.  Here, when the $s_1 + s_2$ disks of error support overlap after being expanded by $(d-1)$ in radius, for ease and certainty of analysis, we still consider the new support of errors to be decomposed into the same $s_1 + s_2$ disks, say with the original centers.

For instance, if the gates involve tensor products of two-qubit Clifford operators (where gates interact only nearest-neighbour qubits), then each cluster can have diameter up to $t+2$.  In this case, the toric code of size/distance $L$ is able to correct the accumulated error after the gate, provided that $(s_1+s_2)(t+2) < \flatquotient{L}{2}$.

In section \ref{sec:topologicalFT}, we work with FT operations which are certain tensor products of single-qubit Paulis, usually known as the \emph{string operators}, which will be defined in Section \ref{sec:toric}.  String operators interact clusters of qubits of diameter $d=1$, so if the initial error is $(s_1, t)$-geometrically local, the error after applying a string operator with $(s_2, t)$-geometrically local gate errors is $(s_1 + s_2, t)$-geometrically local.  The string operators are far from being universal --- they generate the logical Paulis.  For certain topological codes, sequences of local operations acting on slightly larger clusters \emph{can} produce universal sets of logical gates~\cite{FLW}.

\section{Geometric preliminaries}
\label{sec:geometry}

In this section, we will introduce the main mathematical structures used in this paper.  In Section~\ref{sec:fibre}, we introduce fibre bundles, including vector bundles and principal bundles, and in Section~\ref{sec:connection}, we discuss connections and holonomy.  For more information about these topics, see, for instance, \cite{Nakahara}.

\subsection{Fibre bundles}
\label{sec:fibre}

To establish the main claim of this paper, we shall make use of two bundles over the set of all $K$-dimensional subspaces (all the possible $K$-dimensional {\it codespaces} in the language of quantum error correction) of the physical Hilbert space $\Hilbphy = \Cvect{N}$.  Here, we introduce the general notion of a fibre bundle and two prominent subclasses thereof, principal $G$-bundles and vector bundles.

\subsubsection{Definition of a fibre bundle}
The notion of a fibre bundle mirrors that of a manifold --- recall the definition of a manifold involving a maximal atlas of coordinate charts.  Intuitively, a fibre bundle is a manifold with some topological space (the \emph{fibre}) attached to each point.  To make this definition rigorous, we employ the notions of bundle projection and local trivialization.

\begin{Definition}[Bundle projection \& Trivialization]
Let $E$, $B$ and $F$ be Hausdorff spaces\footnote{Roughly speaking, a Hausdorff space is a topological space that satisfies some basic properties to make it reasonably well-behaved.  For a formal definition, refer to \cite{Munkres}.} and $\pi:E\rightarrow B$ a (continuous) map.  Then $\pi$ is called a \emph{bundle projection} with fibre $F$ if each point of $B$ has a neighbourhood $U$ such that there is a homeomorphism $\phi: U\times F \rightarrow \pi^{-1}(U)$ such that $\pi(\phi(p,f))=p$ for all $p\in U$ and $f\in F$.
Such a map $\phi$ is called a \emph{trivialization} of the bundle over $U$.  More specifically, $\phi$ may be called a trivialization of $\pi$, or of $E$, over $U$.  Such a pair $(U, \phi)$ is called a \emph{(local) trivialization chart}.
\label{def:bundproj}
\end{Definition}

We now explain a couple of terms which will appear in Definition \ref{def:bund} of a fibre bundle.  A \emph{topological group} is a group which is also a topological space and where the group operations are continuous.  An action of $G$ on $F$ is said to be \emph{faithful} or \emph{effective} if for any $g\ne h$ in $G$, there exists an element $f\in F$ such that $g\cdot f \ne h\cdot f$.

The following is basically Definition 13.2 of \cite{Bredon}.

\begin{Definition}[Fibre bundle]
Let $G$ be a topological group acting faithfully on the Hausdorff space $F$ as a group of homeomorphisms from the topological space $F$ to itself.  Let $E$ and $B$ be Hausdorff spaces. Then a \emph{fibre bundle} $(E,B,\pi,F,G)$ [over the base space $B$ with total space $E$, fibre $F$ and structure group $G$] is a bundle projection $\pi:E \rightarrow B$ (with fibre $F$) together with a collection $\Phi$ of trivialization charts,
such that:
\begin{enumerate}
\item Each point of $B$ has a neighbourhood $U$ over which there is a trivialization chart $(U, \phi)$ in $\Phi$; \label{item:neighbourhood}
\item If $\phi_i,\phi_j \in \Phi$ are charts over $U$, then there is a continuous map $\theta_{ij}: U\rightarrow G$ such that $\phi_j(p,f)=\phi_i (p,\theta_{ij}(p)\cdot f)$ for all $p\in U$, $f\in F$, called a {\it transition function} or {\it coordinate transformation};
\item If $\phi: U\times F \rightarrow \pi^{-1}(U)$ is in $\Phi$ and $V\subset U$, then the restriction of $\phi$ to $V\times F$ is in $\Phi$;
\item The set $\Phi$ is maximal among collections satisfying (\rmnum{1}), (\rmnum{2}), and (\rmnum{3}).
\end{enumerate}
\label{def:bund}
\end{Definition}

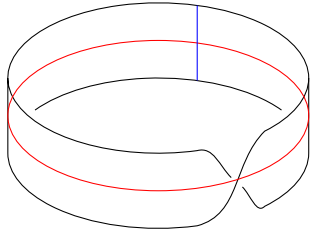
\begin{figure}
\begin{tikzpicture}

\path  (285:2cm and 1cm) ++(0,1) coordinate (A);
\path (299:2cm and 1cm) ++(0,0.55) coordinate (midup) ;
\path (304:2cm and 1cm)++(0,0.38) coordinate (middown);

\path (285:2cm and 1cm) coordinate (end);
\draw (midup) to [out=135,in=15] (A) arc (285:0:2cm and 1cm) arc (360:315:2cm and 1cm) to [out=225,in=15] (end) arc (285:180:2cm and 1cm) -- +(0,1);

\draw (middown) to [out=315,in=225] (315:2cm and 1cm) arc (315:360:2cm and 1cm) -- +(0,1);
\draw (35:2cm and 1cm) arc (35:145:2cm and 1cm);

\draw [color=red]  (0,0.5) ellipse (2cm and 1cm);
\draw [color=blue] (75:2cm and 1cm) -- ++(0,1);

\end{tikzpicture}
\caption{The M\"{o}bius strip is a fibre bundle over the base space $S^1$ (in red) with fibre the interval $[-1, 1] \subset \mathbb{R}$ (fibre at one point shown in blue).}
\label{fig:moebius}
\end{figure}

A collection $\Phi$ of trivialization charts satisfying properties (\rmnum{1}) -- (\rmnum{2}) is known as a \emph{$G$-atlas}, in analogy with an atlas for a manifold.  Comparing Definitions~\ref{def:bundproj} and \ref{def:bund}, we see that property~(\ref{item:neighbourhood}) of a fibre bundle is similar to the intrinsic requirement of a bundle projection.  However, in the definition of a bundle projection, we only ask that a trivialization over some neighbourhood $U$ exits, whereas in the definition of a fibre bundle, we ask the $G$-atlas $\Phi$ to contain such a trivialization chart.

Figure~\ref{fig:moebius} shows a simple example of a fibre bundle.  Here, the base space $B$ is $S^1$, and the fibre $F$ is the interval $[-1, 1]$ in the vector space $\mathbb{R}$.  To get a nontrivial fibre bundle, we let the total space $E$ be the M\"{o}bius strip.  The structure group of this bundle is $G = O(1) = \mathbb{Z}_2$, where $O(1)$ stands for the group of orthogonal linear transformations of the vector space $\mathbb{R}$.


Let us mention a fact about the structure group $G$: Suppose $p\in B$, then the set
\begin{equation}
\{\theta_{ij}(p) \in G \;|\; \phi_i,\phi_j\in \Phi \text{ are charts over $U$ for some $U \ni p$}\} = G,
\label{eqn:strucgrp}
\end{equation}
where $\Phi$ and $G$ are as in Definition \ref{def:bund} and $\theta_{ij}$ is the transition function between $\phi_i$ and $\phi_j$ given by condition (\rmnum{2}).
This is because the $G$-atlas $\Phi$ in Definition \ref{def:bund} is maximal, and therefore contains charts ``shifted by'' any group element $g\in G$.  So, we could alternatively think of the structure group as defined by the set on the left hand side of equation (\ref{eqn:strucgrp}).  The structure group is also known as the \emph{gauge group} in physics.

\begin{Remark}
In this paper, all the spaces that appear are not only Hausdorff topological spaces but also subsets of manifolds.  Furthermore, all the spaces and functions are continuous and piecewise smooth.  In all cases of interest to us, $G$ is a smooth Lie group, the action of a group element $g\in G$ is a diffeomorphism, sometimes even a linear map, on $F$, and all other (originally continuous) maps appearing in the definitions are now piecewise smooth.
\end{Remark}

\subsubsection{Vector bundles and principal bundles}
\begin{Definition}[Vector bundle]
A \emph{vector bundle} $(E,B,\pi,F,G)$ is a fibre bundle whose fibre $F$ is not only a Hausdorff topological space but also comes equipped with a vector space structure.
\end{Definition}

For a complex vector bundle $E$, if its fibre $F=\Cvect{d}$, we call $d$ the \emph{fibre dimension} and denote it by $\dim{E}$.  Then the structure group $G$ for such a vector bundle may be $\GL{d}$ or some subgroup thereof, such as the unitary group $\U{d}$.

\begin{Definition}[Principal bundle]
A \emph{principal bundle} or \emph{$G$-bundle} $(E,G,\pi,F,G)$ is a fibre bundle with fibre $F=G$ and the structure group $G$ acting on the fibre $F=G$ by left multiplication.
\end{Definition}

Principal bundles play an important role in the setting of gauge theories and in this work.

\subsubsection{Transition functions and associated bundles}

The transition functions are what allow us to have a solid handle on fibre bundles, and are useful for understanding associated bundles.  Let us examine them more closely.

Let $\theta_{ij}$ be the transition function from the fibre coordinates under chart $j$ to those under chart $i$, 
as per Definition~\ref{def:bund} (\rmnum{2}).  Since the $\phi_{*}$'s are charts, the $G$-valued transition functions at any point $p\in B$ satisfy:
\begin{enumerate}
\item $\theta_{ii}(p)=e$,
\item $\theta_{ji}(p)=\theta_{ij}(p)^{-1}$,
\item Suppose $\phi_i, \phi_j, \phi_k$ are three charts over $U_i, U_j, U_k$ respectively, with nonempty triple intersection $U_i \cap U_j \cap U_k \ne \emptyset$.  Then $\theta_{ik}(p)=\theta_{ij}(p) \theta_{jk}(p)$.
\end{enumerate}

\begin{Definition}[Associated bundle]
Given a bundle $E$, short for $(E,B,\pi,F,G)$, with fibre $F$, and another Hausdorff space $F'$ with an action of the topological group $G$, we can produce a new bundle $(E',B,\pi',F',G)$ with the same base space and same transition functions (and hence necessarily the same structure group), but with different fibre $F'$ upon which $G$ also acts.  The new bundle $E'$ is called the \emph{associated bundle} of $E$ with fibre $F'$.
\end{Definition}

The notion of associated bundles allows us to go between certain different bundles with the same structure group.
The concept of associated bundles relates the tautological vector bundle $\bigvecbund$ and the tautological principal bundle $\bigPbund$, which we will introduce later in the paper.

\subsubsection{Pullback of a fibre bundle}
\label{sec:pullback}

Let $\pi: E \rightarrow B$ be a fibre bundle with fibre $F$.  If a continuous map $f: B' \rightarrow B$ is given, then the fibre bundle $E$ over $B$ can be ``pulled back'' along $f$ to obtain a new fibre bundle over $B'$ with the same fibre $F$, denoted by $\fE$ and defined below.

\begin{Definition}[Pullback bundle]
Let $E$ and $f$ be as above, and let $\pi$ be the bundle projection for $E$.  Then we define $\fE$ to be the subspace of $B'\times E$ which consists of points $(p,u)$ such that $f(p)=\pi(u)$, namely
\begin{equation}
\fE : = \{ (p,u) \in B' \times E \;|\; f(p)=\pi(u) \},
\label{eqn:pullback}
\end{equation}
with bundle projection $\pi_1: \fE \rightarrow B'$ given by $\pi_1(p,u)=p$.  We thus obtain the {\it pullback bundle} of $E$ by $f$.
\end{Definition}
The idea of the pullback is that, roughly speaking, the fibre of $\fE$ over $p \in B'$ corresponds to the fibre of $E$ over $f(p) \in B$.  Implicitly, the topology on $\fE$ is taken to be the subspace topology of the product space $B' \times E$.  In fact, we can formulate this relation between $E$ and $\fE$ more clearly by observing that there is a bundle map from $\fE$ to $E$.  Below, we define the general notion of a bundle map, and specify this canonical bundle map from $\fE$ to $E$.

\begin{Definition}[Bundle map]
Let $\pi: E \rightarrow B$ and $\pi': E' \rightarrow B'$ be fibre bundles.  A continuous map $\bar{f}: E' \rightarrow E$ is called a bundle map if it maps each fibre $\pi'^{-1}(p)$ of $E'$ into fibre $\pi^{-1}(q)$ of $E$ for some $q$ in $B$.
\end{Definition}

In other words, a bundle map is a continuous fibre-preserving map.  We see that a bundle map $\bar{f}: E' \rightarrow E$ induces a map $f: B' \rightarrow B$, given by $f(p):= q$ where $q$ is as above, or alternatively by $f(p) = \pi(\bar{f}(v))$ for any $v \in F'_{p}$, the fibre of $E'$ over $p$.

Now, if we define $\pi_2: \fE \rightarrow E$ by $\pi_2(p,u) = u$, that gives us a bundle map $\pi_2$ from $\fE$ over $B'$ to $E$ over $B$.  The fact that $\pi_2$ is a bundle map means exactly that the following diagram commutes.
\begin{center}
\begin{tikzpicture}
  \flexsquare{\fE}{E}{B'}{B}{\pi_2}{f}{\pi_1}{\pi}
\end{tikzpicture}
\end{center}
This commutative diagram clearly follows from the definition of $\fE$ in Equation \ref{eqn:pullback}.

We will use pullback bundles in our construction of the important bundles $\Mvecbund$ and $\MPbund$ from $\bigvecbund$ and $\bigPbund$, respectively.  Our application of the pullback bundle is particularly straightforward because the map $f$ will simply be the inclusion map of $\M$ into the Grassmannian $\Grass$, to be defined in Section \ref{sec:Grass}.
%
%

\subsection{Connections and holonomy}
\label{sec:connection}

\subsubsection{Ehresmann connection and horizontal subspaces}

The transition functions tell us how to line up different bundle coordinate charts over a point.  We are also interested in relating the fibres at different points, and the notion of a connection tells us how to do that.

In this section, we assume that all our topological spaces are also smooth manifolds, so the notions of tangent spaces, etc.\ make sense.  If $M$ is a manifold, then $T_p M$ is the tangent space of the manifold at the point $p$, and $TM = \sqcup_{p} T_p M$ is the \emph{tangent bundle}, a vector bundle over $M$ with fibre isomorphic to $T_p M$ over each point $p$.

Let $E$ be a (smooth) fibre bundle with base space $B$, fibre $F$ and structure group $G$.  Take an element $u$ of $E$, and let $F_p$ denote the fibre over $p=\pi(u)$.  We now consider the tangent space $\TE{u}$ of the \emph{total} manifold $E$ over the point $u$.  The idea is to break the vector space $\TE{u}$ into a direct sum of a vertical subspace $\VE{u}$ and a horizontal subspace $\HE{u}$.  The subspace $\VE{u}$ will be canonically defined at each $u\in E$ as the tangent space to the submanifold $F_p$ of $E$, and we obtain the \emph{vertical bundle} $\VE{} = \sqcup_u \VE{u}$ over $E$.  However, there are many ways of choosing the subspace $\HE{u}$ while ensuring that $\TE{u} = \VE{u} \oplus \HE{u}$.  The Ehresmann connection is basically a ``good'' choice or a specification of such a subspace $\HE{u}$ at each $u\in E$.  We give the precise statements below.  The following definition is based on and generalizes the discussions in Sections 10.1.1 and 10.1.2 of \cite{Nakahara}.

\begin{Definition}[Ehresmann connection]
\label{def:connection}
An \emph{Ehresmann connection} on $E$ is a choice of piecewise smooth vector subbundle $\HE{} = \sqcup_u \HE{u}$ of $TE$ over $E$, called the \emph{horizontal bundle} (of the connection), whose fibres satisfy the following properties:
\begin{enumerate}
\item $\TE{u} = \HE{u} \oplus \VE{u}$ for every $u\in E$; \label{item:connection_basic}
\item $\HE{u}$ depends smoothly on $u$;
\item $\HE{}$ is compatible with the action of the structure group $G$ of the bundle. \label{item:struct_compat}
\end{enumerate}
$\HE{u}$ is the fibre of $\HE{}$ over $u \in E$ and $\VE{u}$ is the fibre of $\VE{}$ over $u$.
\end{Definition}

The precise formulation of property (\ref{item:struct_compat}) is most straightforwardly phrased in terms of parallel transport, so we delay this part of the definition until later in statement (\ref{sta:connection_compatibility}) of Section \ref{sec:horizontal_lift}, when we have defined parallel transport.

The projection $\pi$ provides us with a map from $\HE{u}$ to $T_p B$ when $\pi(u) = p$, if we keep in mind equivalence classes of curves in both cases.  $\HE{u}$ is a kind of lift of $T_p B$.  Property (\ref{item:connection_basic}) ensures that the map is an isomorphism.  Thus, we can think of the connection as a way of assigning a direction in the bundle for each direction in the base space.  That is, it tells us how the position in the fibre changes when we move in the base space and therefore how the fibres at different points connect up.

\subsubsection{Horizontal lift and parallel transport}
\label{sec:horizontal_lift}

A curve\footnote{Curves are implicitly piecewise-smooth, and in particular continuous, throughout this paper.} $\lgamma(t)$ in a fibre bundle $E$ is called a {\it lift} of the base space curve $\gamma(t)$ if $\pi \circ \lgamma(t) = \gamma(t)$ for all $t \in [0,1]$.  We shall define below lifts which are considered as ``horizontal''.  The following is a more general version of Definition 10.9 of \cite{Nakahara}.

\begin{Definition}[Horizontal lift]
Let $E$ be a piecewise smooth fibre bundle with an (Ehresmann) connection over the base space $B$, and let $\gamma: [0,1] \rightarrow B$ be a curve in $B$.  A curve $\lgamma: [0,1] \rightarrow E$ in $E$ is said to be a {\it horizontal lift} of $\gamma$ if $\pi \circ \lgamma = \gamma$ and the tangent vector to $\lgamma(t)$ always belongs to $\HE{\lgamma(t)}$.
\end{Definition}

\begin{Theorem}[Generalization of Theorem 10.10 of \cite{Nakahara}]
Let $\pi: E \rightarrow B$ be a piecewise smooth fibre bundle with an (Ehresmann) connection.  Let $\gamma: [0,1] \rightarrow B$ be a piecewise smooth curve in $B$ and let $u_0$ be an element of the fibre at $\gamma(0)$.  Then there exists a unique piecewise smooth horizontal lift $\lgamma(t)$ in $E$ such that $\lgamma(0) = u_0$.
\label{thm:horizontal_lift}
\end{Theorem}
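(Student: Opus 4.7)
The plan is to reduce the existence and uniqueness of the horizontal lift to a standard ODE problem on the fibre, worked out in a local trivialization, and then to patch together local solutions using the compactness of $[0,1]$.

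First I would treat the case where $\gamma$ is everywhere smooth; the piecewise smooth case follows by applying the smooth result on each smooth piece and concatenating, with the endpoint of one piece serving as the starting point of the next. By the compactness of $[0,1]$ and the local triviality of $E$, there exist $0 = t_0 < t_1 < \dots < t_N = 1$ and trivializations $\phi_j : U_j \times F \to \pi^{-1}(U_j)$ such that $\gamma([t_{j-1}, t_j]) \subset U_j$. It therefore suffices to prove the result on each subinterval where $\gamma$ stays inside a single trivializing open set; the patched curves then agree on overlaps by uniqueness (applied on the overlap and transformed through the transition function), and assemble to a continuous piecewise smooth lift on $[0,1]$.

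Next I would convert horizontality into an ODE on the fibre. In a trivialization $\phi : U \times F \to \pi^{-1}(U)$, any lift of $\gamma$ has the form $\lgamma(t) = \phi(\gamma(t), f(t))$ for some curve $f : [t_{j-1}, t_j] \to F$. The differential of $\phi$ identifies $T_{\lgamma(t)}E$ with $T_{\gamma(t)}B \oplus T_{f(t)}F$, under which $V_{\lgamma(t)}E$ corresponds to the $T_{f(t)}F$ factor. Since the horizontal bundle $\HE{\lgamma(t)}$ is transverse to $\VE{\lgamma(t)}$ and $d\pi$ restricted to it is an isomorphism onto $T_{\gamma(t)}B$, horizontality is equivalent to a relation of the form
\begin{equation}
f'(t) = A\bigl(\gamma(t), f(t)\bigr)\bigl(\gamma'(t)\bigr),
\end{equation}
where $A$ is the smooth map that sends $(p,f) \in U \times F$ and a tangent vector $X \in T_p B$ to the vertical component, in the $T_f F$ factor, of the unique horizontal lift of $X$ at $\phi(p,f)$. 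Smoothness of $A$ follows from the smooth dependence of $\HE{u}$ on $u$ guaranteed by Definition~\ref{def:connection}.

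I would then invoke the Picard--Lindelöf theorem for the non-autonomous ODE above with initial condition $f(t_{j-1}) = \phi_j^{-1}(u_{j-1})_{F}$, where $u_{j-1}$ is the endpoint of the lift constructed on the previous subinterval (with $u_0$ as given for $j=1$). This yields a unique smooth solution $f(t)$, hence a unique smooth horizontal lift on $[t_{j-1}, t_j]$. The main obstacle to address is global existence: in principle the local ODE solution might escape the trivializing patch before reaching $t_j$. This is handled by refining the partition $\{t_j\}$ if necessary, shrinking the intervals so that the solution cannot leave the fibre of $U_j$ in finite time; the compatibility of the connection with the structure group (property (\ref{item:struct_compat}) of Definition~\ref{def:connection}) means the driving term $A(\gamma(t), \cdot)$ is equivariant under the $G$-action on the fibre and hence, on each finite interval of definition of $\gamma$, the solution cannot blow up in $F$ before $t_j$. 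Uniqueness across overlaps then follows from the uniqueness part of Picard--Lindelöf in each trivialization, so the patched curve $\lgamma$ is the unique piecewise smooth horizontal lift with $\lgamma(0)=u_0$.
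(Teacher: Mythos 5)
The paper itself does not supply a proof of Theorem~\ref{thm:horizontal_lift}; it merely cites Nakahara and remarks that the result ``can be proven using the existence of a unique solution to an appropriate differential equation.'' Your plan — localize via compactness of $[0,1]$, convert horizontality in each trivialization to a first-order ODE in the fibre, invoke Picard--Lindel\"of, and glue — is precisely the proof being alluded to, so the approach is the right one.

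However, the global-existence step is where your argument is not yet sound. You write that one can ``refine the partition $\{t_j\}$ ... shrinking the intervals so that the solution cannot leave the fibre of $U_j$ in finite time.'' Refining the partition does not help: the trivialization $\phi_j$ already covers the entire fibre $F$ over $U_j$, so ``leaving the fibre'' can only mean $f(t)$ escaping to infinity in $F$, and if the maximal solution blows up at some $t^* \le t_j$ no subdivision of $[t_{j-1},t_j]$ changes that. Your subsequent appeal to $G$-equivariance of the driving term $A$ names the right mechanism, but the ``hence'' is not justified as written — equivariance of a vector field on a non-compact $F$ does not by itself rule out finite-time blow-up. The clean way to close the gap is: (i) for a principal $G$-bundle, write the lift as $\lgamma(t) = v(t)\,g(t)$ for an arbitrary smooth lift $v(t)$; horizontality is then an ODE on $G$ of the form $\dot g = -X(t)\,g$ with $X(t)\in\mathfrak g$, and such right-invariant linear equations on a Lie group always have global solutions on any compact parameter interval (e.g.\ via the time-ordered exponential). (ii) For a general fibre bundle with structure group $G$, pass to the associated principal bundle: the compatible Ehresmann connection on $E$ corresponds to a principal connection on $P$, horizontal lifts on $P$ exist globally by (i), and the horizontal lift in $E = P\times_G F$ is obtained by pushing the principal lift forward through the associated-bundle construction, so it can never escape $F$. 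For the vector-bundle case $\bigvecbund$ one can equivalently observe the ODE is linear in $f$. With this correction, the rest of your argument (local uniqueness by Picard--Lindel\"of, agreement on overlaps via transition functions, concatenation across the piecewise-smooth breaks) goes through as stated.
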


Theorem~\ref{thm:horizontal_lift} can be proven using the existence of a unique solution to an appropriate differential equation.

From the unique horizontal lift $\lgamma(t)$ which satisfies $\lgamma(0) = u_0$, we obtain the element $u_1=\lgamma(1)$ in the fibre over $\gamma(1)$, called the {\it parallel transport} of $u_0$ along $\gamma$.  We define the following \emph{parallel transport map} along the curve $\gamma$:
\begin{eqnarray}
\Gamma(\gamma): F_{\gamma(0)} & \rightarrow & F_{\gamma(1)}\\
		  	  u_0 & \mapsto	    & u_1.
\label{eqn:parallel_transport}
\end{eqnarray}

Now we can precisely formulate property (\ref{item:struct_compat}) from Definition~\ref{def:connection}.  Take two trivializations $\phi_0$ and $\phi_1$ of $E$ over $U_0$ and $U_1$ containing $\gamma(0)$ and $\gamma(1)$ respectively.  Then with respect to the trivializations $\phi_0$ and $\phi_1$, we obtain isomorphisms $F_{\gamma(0)} \simeq F$ and $F_{\gamma(1)} \simeq F$.  That way, we can compare $F_{\gamma(0)}$ and $F_{\gamma(1)}$ directly via these trivializations.  Property~(\ref{item:struct_compat}) says that, with respect to this pair of trivializations:
\begin{equation}
\text{For any curve $\gamma(t)$ in the base space, $\Gamma(\gamma)$ should be a transformation in the structure group $G$.}
\label{sta:connection_compatibility}
\end{equation}
Furthermore, statement~(\ref{sta:connection_compatibility}) does not depend on the choice of trivializations.  That is, suppose $\Gamma(\gamma)$ does not transform fibre $F_{\gamma(0)}$ to fibre $F_{\gamma(1)}$ according to the action of some group element $g$ in $G$ with respect to $\phi_0$ and $\phi_1$, then with respect to another such pair $\psi_0'$ and $\psi_1'$ of trivializations, it also won't transform according to any group element of $G$.  On the other hand, if $\Gamma(\gamma)$ transforms the fibre as $g$ for one such pair of trivializations, then a different choice of trivializations will result in the action by a possibly different group element $g'$.  This follows from the property of trivializations given by property~(\rmnum{2}) of Definition~\ref{def:bund}.

Property (\ref{item:struct_compat}) of Definition \ref{def:connection} guarantees the following: If the structure group $G$ acts on the fibre $F$ as diffeomorphisms\footnote{In the main bundles $\bigvecbund$ and $\bigPbund$ considered in this paper, the structure group action on the fibre will be smooth.} (as opposed to only as homeomorphisms), then the map $\Gamma(\gamma)$ is a diffeomorphism with respect to the standard topology on the fibre, that is, smooth and with smooth inverse given by $\Gamma(\gamma^{-1})$, where $\gamma^{-1}: [0,1] \rightarrow B$ is a curve defined by $\gamma^{-1}(t) := \gamma(1-t)$ for $0 \le t \le 1$.  For a vector bundle, parallel transport should be a linear map; and for a principal bundle, parallel transport should perform multiplication by a group element of $G$.

We don't want to restrict attention to just smooth curves $\gamma(t)$, but we do want to use curves which are sufficiently well-behaved that parallel transport can be sensibly defined on them.  In particular, we shall work with piecewise smooth curves in this paper, which arise naturally from our constructions.

\subsubsection{Holonomy}

Let $(E,B,\pi,F,G)$ be a piecewise smooth fibre bundle, $\HE{}$ a choice of connection in $E$, and $\gamma: [0,1] \rightarrow B$ a \emph{loop} based at $p$, that is $p=\gamma(0)=\gamma(1)$.  Suppose $\lgamma$ is the (unique) horizontal lift of $\gamma$ with $\lgamma(0)=u_0 \in F_p$ with respect to this connection.  In general, $\lgamma(0)\ne \lgamma(1)$ in $F_p$.  This difference is the \emph{holonomy along the loop $\gamma$}, and the holonomy group, to be defined below, captures the range of such differences for all possible loops based at $p$.

For a loop $\gamma(t)$ with $\gamma(0)=\gamma(1)=p$, the parallel transport map gives a mapping of the fibre $F_p$ to itself, $\Gamma(\gamma): F_p \rightarrow F_p$.  Let $\Cp{B}$ denote the set of loops based at $p$ in $B$.  The {\it holonomy group} $\Hol_{p} (\HE{})$ is defined as
\begin{equation}
\Hol_{p}(\HE{}) := \{\Gamma(\gamma) \in \Diff{F_p} \;|\; \gamma \in \Cp{B}\},
\end{equation}
where $\Diff{F_p}$ stands for the group of diffeomorphisms\footnote{The map $\Gamma(\gamma)$ on $F_p$ is diffeomorphic as long as the structure group $G$ acts smoothly on the fibre $F$, which is indeed the case for piecewise smooth bundles.} of the fibre $F_p$, which is a smooth manifold.
While $\Cp{B}$ is not quite a group, the set $\Hol_{p}(\HE{})$ forms a group, specifically a subgroup of $\Diff{F_p}$.  We can see this by noting that if $\alpha$ and $\beta$ are loops at $p$, so is $\alpha*\beta$ ($\alpha$ followed by $\beta$, with reparameterization); therefore $\Gamma(\beta) \Gamma(\alpha) = \Gamma(\alpha*\beta)$ is also in $\Hol_{p}(\HE{})$.  Also $\mathrm{id} \in \Hol_{0}(\HE{})$, because $\Gamma(\gamma_0)$, where $\gamma_0$ is the constant loop $\gamma_0 (t) = p$, yields the identity element in $\Diff{F_p}$.  The inverse of an arbitrary diffeomorphism $\Gamma(\gamma)$ is given by $\Gamma(\gamma^{-1})$.

If the base space $B$ is path-connected, that is between any two points $p_0$ and $p_1$ there is a curve $\gamma$ in $B$ with $\gamma(0)=p_0$ and $\gamma(1)=p_1$, then the holonomy groups $\Hol_{p_0}(\HE{})$ and $\Hol_{p_1}(\HE{})$ at different base points are isomorphic via conjugation as follows:
\begin{equation}
\Hol_{p_1}(\HE{}) = \Gamma(\gamma) \Hol_{p_0}(\HE{}) \Gamma(\gamma)^{-1},
\label{eqn:conj}
\end{equation}
where $\gamma$ can be any such curve.  Different choices of $\gamma$ simply yield different isomorphisms between $\Hol_{p_0}(\HE{})$ and $\Hol_{p_1}(\HE{})$. Note that $\Gamma(\gamma)^{-1} = \Gamma(\gamma^{-1})$.

The discussion above does not involve or depend on a choice of trivialization.  However, if we choose a trivialization $\phi$ over $U \ni p$, we obtain an isomorphism $F_p \simeq F$.  Via the trivialization $\phi$ and the fibre identification, we can view the holonomy group $\Hol_{p}(\HE{})$ as a subgroup\footnote{Recall that the action of the structure group $G$ on the fibre $F$ is faithful.} of the structure group $G$, and simultaneously we can view $G$ as a subgroup of $\Diff{F_p}$.  So, we have
\begin{equation}
\Hol_{p_1}(\HE{}) < G < \Diff{F_p}
\end{equation}
upon choosing a trivialization over $U \ni p$.  Note that the setup here is a special case to the discussion in Section \ref{sec:horizontal_lift}.  Here, we only need to choose a single trivialization to interpret the parallel transport around a loop as a group element of $G$, whereas the parallel transport along an arbitrary curve in Section \ref{sec:horizontal_lift} requires the choice of a pair of trivializations in general.

In the case of a vector bundle, $\Hol_{p}(\HE{}) < \GL{F_p} < \Diff{F_p}$ upon choosing a trivialization over $U \ni p$, since the structure group in this case is $\GL{F_p}$.

\subsubsection{Flatness of a connection and monodromy}
\label{sec:monodromy}

In this paper, we focus on (projective) connections on fibre bundles which are (projectively) flat.  The projectiveness of a connection or of the flatness of a connection will be discussed in some details in Section \ref{sec:projective}, so here we simply focus on defining and analyzing the flatness of a connection.  When we have a flat connection, the holonomy is known as the \emph{monodromy}.  Monodromies are particularly interesting because they are topologically robust, in that homotopic loops produce the same fibre transformations over the base point.  We will return to this point and its physical implications in Section~\ref{sec:FTmonodromy}.

Here we choose to define flatness of a connection indirectly via the triviality of the restricted holonomy group, which we define below.  The definition sets up the stage for the ``robustness'' we have just described.

\begin{Definition}[Restricted holonomy group]
Given $(E,B,\pi,F,G)$ a piecewise smooth fibre bundle, a connection $\HE{}$ on it, and $\Gamma(\gamma)$ the parallel transport along $\gamma$ as defined by equation (\ref{eqn:parallel_transport}), the restricted holonomy group at a point $p\in B$ is defined as
\begin{equation*}
\Hol_{p}^{0}(\HE{}) := \{\Gamma(\gamma)\in \Diff{F_p} \;|\; \gamma \in \Cp{B} \text{ contractible in }B\}.
\end{equation*}
\end{Definition}

Clearly, $\Hol_{p}^0(\HE{})$ is a subgroup of $\Hol_p(\HE{})$.  We are now ready to define the flatness of a connection.

\begin{Definition}[Flat connection]
A connection $\TE{u} = \HE{u} \oplus \VE{u}$ on a fibre bundle $E$ is called \emph{flat} if the restricted holonomy group $\Hol_{p}^{0}(\HE{})$ at point $p$ is trivial for \emph{all} $p\in B$. (Or equivalently, for a fixed $p$ in each path-connected component, by equation (\ref{eqn:conj}).)
\label{def:flat}
\end{Definition}

If the connection $\HE{}$ is flat, then for any $\gamma$ in $B$ based at $p$, the transformation $\Gamma(\gamma)$ of the fibre $F_p$ is only dependent on the homotopy class of $\gamma$. This is because given two homotopic loops $\gamma$ and $\gamma'$, $\gamma' * \gamma^{-1}$ is a contractible loop, so
\begin{equation}
\Gamma(\gamma)^{-1} \Gamma (\gamma') = \Gamma(\gamma' * \gamma^{-1}) = e.
\end{equation}

We explore the implications of this analysis.  For a general (not necessarily flat) connection, we only have a homomorphism $\Cp{B} \rightarrow \Hol_p(\HE{})$.  However if $\HE{}$ is a flat connection, there is a homomorphism\footnote{This homomorphism is surjective, as is the other homomorphism mentioned in this paragraph.} $\pi_{1}(B,p) \rightarrow \Hol_{p}(\HE{})$, where $\pi_{1}(B,p)$ is the fundamental group of manifold $B$ based at the point $p$.  This action of $\pi_{1}(B,p)$ on $F$ is called the \emph{monodromy action} (or \emph{monodromy representation} in the case of a vector bundle), and in this case, the group $\Hol_{p}(\HE{})$ is also known as the \emph{monodromy group}.  In other words, the flat connection lets us complete the following commutative diagram:
\begin{center}
\begin{tikzpicture}
\node (start) at (0,1) {$\Cp{B}$};
\node (down) at (0,0) {$\pi_{1}(B,p)$};
\node (end) at (2.5,0) {$\Hol_{p}(\HE{})$};

\draw [->>,anchor=west] (start)  -- (end);
\draw [->>,anchor=south] (start) -- (down);
\draw [->>,anchor=west,dotted] (down) -- (end);

\end{tikzpicture}
\end{center}

\begin{Remark}
For physicists, it may be worth noting that a non-flat connection is what gives rise to the geometric or Berry phase.  The base space in this case is the manifold of gapped Hamiltonians, with fibre at $H$ equal to the ground state space of $H$.
\end{Remark}

\subsubsection{Projective connections, flat projective connections and projectively flat connections}
\label{sec:projective}

A subgroup $Z$ of a group $G$ is called \emph{central} if it lies in the centre of $G$.  For $G = \GL{K}$, an obvious choice of a central subgroup is $Z = \{\lambda \idmatrix \;|\; \lambda \in \fieldC, |\lambda| \ne 0\} = \Cstar = \GL{1}$.  In the case of interest to us, $G = \U{K}$ and $Z = \{\lambda \idmatrix \;|\; \lambda \in \fieldC, |\lambda| = 1\} = \U{1}$.

\begin{Definition}[Projective fibre bundle]
Now, suppose $E$ is a fibre bundle with structure group $G$.  Upon choosing $Z$, we can create a \emph{projective fibre bundle} $\projectiveE$ from the bundle $E$ by choosing a trivialization and identifying points within each fibre that differ only by the action of $Z$.
\end{Definition}

Since $Z$ sits in the center of $G$, the identification we made to obtain $\projectiveE$ from $E$ is actually independent of our choice of trivializations.  We can see this by letting $(U_i, \phi_i)$ and $(U_j, \phi_j)$ be two trivializations such that $U_i \cap U_j \ne \emptyset$, and $p \in U_i \cap U_j$.  We wish to show that for a pair of arbitrary points $u_1$ and $u_2$ in the fibre $\pi^{-1}(p)$, $u_1 \sim u_2$ with respect to $\phi_j$ implies that $u_1 \sim u_2$ with respect to $\phi_i$.  From the definition of the projective bundle $\projectiveE$, we know that $u_1 = \phi_j (p,f)$, where $f \in F$, and $u_2$ are identified with respect to $\phi_j$ if and only if $u_2 = \phi_j (p, z \cdot f)$ for some $z \in Z$.  On one hand, $u_1 = \phi_j (p, f) = \phi_i (p, \theta_{ij}(p) \cdot f)$ using the transition function $\theta_{ij}$ as in Definition \ref{def:bund}; on the other hand, $\phi_j (p, z \cdot f) = \phi_i (p, \theta_{ij}(p) \cdot z \cdot f) = \phi_i (p, z \cdot \theta_{ij}(p) \cdot f)$ due to the fact that $z \in Z$ is a central element in $G$.  Therefore, we see quite clearly that if $u_1 \sim u_2$ with respect to $\phi_j$, then $u_2 = \phi_i (p, z \cdot \theta_{ij}(p) \cdot f)$, and $u_1 \sim u_2$ with respect to $\phi_i$.  Similarly, using the transition element $\theta_{ji}(p) = \theta_{ij}(p)^{-1}$, we can show that for $u_1$ and $u_2$ in the fibre $\pi^{-1}(p)$, $u_1 \sim u_2$ with respect to $\phi_i$ implies that $u_1 \sim u_2$ with respect to $\phi_j$.  Thus, we have proven that the identification to obtain the projective version $\projectiveE$ from $E$ is independent of the choice of trivializations.

In this paper, we shall use the term \emph{projective vector bundle} to mean the projectivized version $\projectiveE$ of a vector bundle $E$.  Note that even if $E$ is a vector bundle, the fibres of $\projectiveE$ might not be projective spaces, as in the example where the fibre is $F = \Cvect{K}$ and the central subgroup is $Z = \U{1}$.

Next, we define a projective connection.

\begin{Definition}[Projective connection]
A \emph{projective connection} on $E$ is a connection on $\projectiveE$.
\end{Definition}

Let us examine the differences between a connection on $E$ and a projective connection on $E$.

A connection on $E$ always yields a connection on $\projectiveE$.  This is essentially due to property (\ref{item:struct_compat}) of Definition \ref{def:connection} and the fact that $Z$ is a central subgroup of $G$.  Here, we do not give a full proof, but shall show only that a parallel transport in $E$ projects down to a parallel transport in $\projectiveE$, that is if $u_0 \sim u_0'$ in $F_{\gamma(0)}$, then $\Gamma(\gamma) u_0 \sim \Gamma(\gamma) u_0'$ in $F_{\gamma(1)}$ for any curve $\gamma$ in $B$.  Let us choose two trivializations $(U_0, \phi_0)$ and $(U_1, \phi_1)$ where $U_0$ contains $\gamma(0)$ and $U_1$ contains $\gamma(1)$ respectively.  Suppose $u_0 = \phi_0 (\gamma(0), f) \sim u_0'$ in $F_{\gamma(0)}$, then $u_0' = \phi_0 (\gamma(0), z \cdot f)$ for some $z \in Z$.  By property (\ref{item:struct_compat}) of Definition \ref{def:connection}, $\Gamma(\gamma) u_0 = \phi_1 (\gamma(1), g \cdot f)$ and $\Gamma(\gamma) u_0' = \phi_1 (\gamma(1), g \cdot z \cdot f)$ for some $g \in G$.  Now, since $z$ is a central element, we can rewrite $\Gamma(\gamma) u_0'$ as $\Gamma(\gamma) u_0' = \phi_1 (\gamma(1), z \cdot g \cdot f)$, and it becomes clear that $\Gamma(\gamma) u_0 \sim \Gamma(\gamma) u_0'$ in $F_{\gamma(1)}$.

On the other hand, a connection on $\projectiveE$ does not by itself yield a connection on $E$, but it is very close to being one.  In fact, parallel transport in $\projectiveE$ almost translates to parallel transport in $E$, but with an additional ambiguity up to $Z$.  That is, given a curve $\gamma$ in the base space $B$, the parallel transport map $\tilde{\Gamma}(\gamma)$ in $\projectiveE$ yields a one-to-many map $\Gamma(\gamma)$ from the fibre $F_{\gamma(0)}$ to the fibre $F_{\gamma(1)}$ in $E$, taking $u_0 \in F_{\gamma(0)}$ to $\{ \tilde{\Gamma}(\gamma) [u_0] \} = \{ z \cdot u_1 \;|\; z \in Z \}$ where $u_1 \in F_{\gamma(1)}$ is an element of $E$ in the equivalence class $\tilde{\Gamma}(\gamma) [u_0]$.


In this paper, we often prefer to work with the original fibre bundle $E$ rather than $\projectiveE$, and talk about projective connections on $E$ rather than connections on $\projectiveE$, in order to match the common convention in the field of quantum information.

For an honest connection, we define the notion of projective flatness for a connection on $E$, to be used in Section \ref{sec:topologicalFT}, as follows:

\begin{Definition}[Projectively flat connection]
Let $E, G, Z$ be as above, and let $\HE{}$ be a connection on the fibre bundle $E$.  We say that the connection $\HE{}$ is \emph{projectively flat} if, for all points $p \in B$ (or equivalently, for just one point), the restricted holonomy group $\Hol_{p}^{0}(\HE{}) < G$ satisfies
\begin{equation}
\Hol_{p}^{0}(\HE{}) < Z \cdot {e} < G,
\end{equation}
where $e$ is the identity element of $G$.
\label{def:projectively_flat}
\end{Definition}

Contrasting with Definition \ref{def:flat}, we see that this notion is a relaxation of the flatness condition, by the central subgroup $Z$.  In particular, if $G$ is a matrix group and $Z = \U{1}$, projective flatness would mean $\Hol_{p}^{0}(\HE{}) < \U{1} \cdot \idmatrix$, where $\idmatrix$ is the identity operator.  This is the case that concerns us in the present work.

We shall also need the following notion, which invokes Definition \ref{def:flat} of a flat connection, in Section \ref{sec:transversal}.

\begin{Definition}[Flat projective connection]
A \emph{flat projective connection} on $E$ is a flat connection on $\projectiveE$.
\label{def:flat_projective}
\end{Definition}

Comparing Definition \ref{def:flat_projective} with Definition \ref{def:projectively_flat}, we see that they are actually quite similar in terms of the conditions imposed upon the respective restricted holonomy groups.  This is because $\Hol_{p}^{0}(H\projectiveE{}) = [e]$ is essentially equivalent to saying $\Hol_{p}^{0}(\HE{}) < Z \cdot e$.  The main difference is then that a flat projective connection is only a projective connection, whereas a projective flat connection is a true connection.


\section{A geometric picture for unitary evolutions of QECCs}
\label{sec:geoUnitary}

In this section, we begin to paint the geometric picture central to this paper.  We explain how a few key objects in the theory of quantum error correction and fault tolerance relate to natural geometric constructions.  We first describe in Section \ref{sec:codes_Grass} a natural manifold structure on the set of all codes with a fixed logical dimension $K$ and physical dimension $N$.   We then give the set of all codewords in these codes a vector bundle structure in Section \ref{sec:codewords_bigvecbund}, and the set of all encodings for these codes the structure of a principal $\U{K}$-bundle in Section \ref{sec:encodings_bigPbund}.  Eventually, in Section \ref{sec:geoFT}, we want to construct a smaller vector bundle and a smaller principal $\U{K}$-bundle for every family of fault-tolerant unitary operations, along with a QECC.  

\subsection{Codes and the Grassmannian manifold $\Grass$}
\label{sec:codes_Grass}

Let us define the notion of \emph{code dimension} as a pair of integers $(K, N)$ where $K$ gives the dimension of the codespace and $N$ is the dimension of the physical Hilbert space.  Below, we give a manifold structure on the set of all QECCs of a fixed code dimension $\vec{d} = (K, N)$.  In particular, this set can be endowed with the manifold structure of $\Grass$, the Grassmannian manifold of $K$-dimensional subspaces in $\Cvect{N}$.  We introduce the Grassmannian manifold and make the aforementioned identification in Section \ref{sec:Grass}, establishing a ``static'' geometric picture; in Section \ref{sec:path.Grass}, we capture some dynamics on the manifold by considering unitary evolutions acting on the manifold of codes.

\subsubsection{Space of $(K,N)$-dimensional codes as the Grassmannian manifold $\Grass$}
\label{sec:Grass}

In this section, we introduce the Grassmannian manifold, which can be identified with the set of all $(K,N)$-dimensional codes.  This means we can think of the set of codes as the \emph{manifold of codes}.  The specification of a code dimension $\vec{d} = (K, N)$ is always implicit.

As a set, the Grassmannian is defined as
\begin{equation*}
\Grass := \{\text{All $K$-dimensional linear subspaces of $\Cvect{N}$}\}.
\end{equation*}
We will put a manifold structure on $\Grass$ by first describing it as a ``homogeneous space'' and then as a coset space of a closed (Lie) subgroup in the Lie group $\U{N}$, giving $\Grass$ the structure of a smooth manifold.

\begin{Definition}
\label{def:homogeneous}
A \emph{homogeneous space} for a group $G$, or a \emph{$G$-space}, is a non-empty set  with a transitive $G$ action.  
\end{Definition}

The Grassmannian is a homogeneous space for the group $G = \U{N}$, where the action of $A \in \U{N}$ on a point $V \in \Grass$ is simply given by the application of linear transformation $A$ to $\Cvect{N}$, transforming the $K$-dimensional subspace $V \subseteq \Cvect{N}$ to the subspace $A(V)$.  It is not difficult to convince yourself that this action is transitive on $\Grass$, namely between any two $K$-planes $V$ and $W$, there always exists some unitary operator $A \in \U{N}$ such that $A(V) = W$.  For any $G$-space $S$, we can always express it as the left cosets of the subgroup $G_x$ in $G$, where $G_x$ is the stabilizer in $G$ of any point $x \in S$.  We identify the subgroup $G_x$ for the $\U{N}$-space $\Grass$ below, culminating in Theorem \ref{thm:Grassmap}.

First we clarify a couple of notations used below.  Let $\U{N}$, as usual, denote the Lie group of all unitary operators on $\Cvect{N}$.  Given a $K$-dimensional vector subspace $V \subset \Cvect{N}$, we can decompose $\Cvect{N}$ into orthogonal subspaces
\begin{equation}
\Cvect{N} = V \oplus V^{\bot},
\end{equation}
where $V^{\bot}$ is the orthogonal complement of $V$ in $\Cvect{N}$ with respect to the standard inner product.
We define $\Ub{V}$ as the subgroup of operators on $\Cvect{N} = V \oplus V^{\bot}$ of the form $S' \oplus \idmatrix_{V^{\bot}}$, where $S'$ is a unitary operator on the subspace $V$ and $\idmatrix_{V^{\bot}}$ is the identity transformation on its orthogonal complement $V^{\bot}$.  It is pretty clear that $\Ub{V} \subset \U{N}$ and $\Ub{V} \simeq \U{K}$.  Similarly, $\Vperp$ is defined as the subgroup of operators of the form $\idmatrix_{V} \oplus S''$ where $S''$ acts unitarily on the subspace $V^{\bot}$.  Analogously, we have $\Vperp \subset \U{N}$ and $\Vperp \simeq \U{N-K}$.

The following theorem and corollary are standard properties of the Grassmannian (see, e.g., Sec. 3.65(f) of \cite{Warner}).  We present them without proof.

\begin{Theorem}
Fix a $K$-plane $V$ in $\Cvect{N}$.  There is a bijection of sets
\begin{equation}
\Grass \simeq \NmodVVperp,
\label{eqn:Grass}
\end{equation}
identifying $\Grass$ with the coset space on the right.  For any $W \in \Grass$, let $A^{W}$ be any unitary operator satisfying $W=A^{W}(V)$.  The explicit bijective map is given by
\begin{equation}
W \mapsto [A^{W}] := \{ A^{W} (S' \oplus S'') \;|\; S' \in \Ub{V} \text{ and } S'' \in \Vperp \} = A^W \left(\VVperp \right),
\label{eqn:Grassmap}
\end{equation}
mapping $W$ to the left coset of $\VVperp$ in $\U{N}$ containing the operator $A^W$.  
\label{thm:Grassmap}
\end{Theorem}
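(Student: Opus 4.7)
The plan is to view this theorem as a special instance of the general orbit-stabilizer identification for transitive group actions: if a group $G$ acts transitively on a set $S$ and $G_x$ is the stabilizer of a chosen point $x \in S$, then the map $g \mapsto g \cdot x$ descends to a bijection $G/G_x \simeq S$. Here $G = \U{N}$ acts on $S = \Grass$ by $A \cdot W := A(W)$, and the fixed point is $x = V$. Thus the entire content of the theorem is: (a) the action is transitive, and (b) the stabilizer of $V$ is exactly $\VVperp$.

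First I would verify transitivity. Given any $W \in \Grass$, pick an orthonormal basis $\{v_1, \dots, v_K\}$ of $V$ and extend it to an orthonormal basis of $\Cvect{N}$; do the same for $W$. The linear map sending one basis to the other is unitary and carries $V$ onto $W$, so such an $A^W$ always exists. Next, I would identify the stabilizer $\{A \in \U{N} : A(V) = V\}$. If $A$ preserves $V$ setwise, then because $A$ is unitary it also preserves the orthogonal complement $V^{\bot}$ (orthogonality is preserved, and $A$ is a bijection of $\Cvect{N}$). Therefore $A$ decomposes with respect to $\Cvect{N} = V \oplus V^{\bot}$ as $A = S' \oplus S''$ with $S'$ unitary on $V$ and $S''$ unitary on $V^{\bot}$, i.e.\ $A \in \VVperp$. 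Conversely, any such block-diagonal operator fixes $V$ setwise, giving the exact identification of the stabilizer.

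Given (a) and (b), well-definedness and bijectivity of the map in equation~(\ref{eqn:Grassmap}) are formal. For well-definedness, if $A_1$ and $A_2$ both satisfy $A_i(V) = W$, then $A_2^{-1} A_1$ stabilizes $V$, so it lies in $\VVperp$, which means $A_1 \in A_2 \cdot \VVperp$; hence $[A^W]$ is independent of the choice of representative. Injectivity is the same calculation read the other way: if $A_1 \cdot \VVperp = A_2 \cdot \VVperp$, then $A_2^{-1} A_1 \in \VVperp$ fixes $V$, so $A_1(V) = A_2(V)$. Surjectivity follows because any coset $[A]$ comes from the $K$-plane $A(V) \in \Grass$, using transitivity.

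There is essentially no hard step here; the only subtlety worth highlighting is the assertion that a unitary preserving $V$ automatically preserves $V^{\bot}$, which relies specifically on unitarity (a general invertible operator fixing $V$ need not preserve a chosen complement). Strictly speaking, Theorem~\ref{thm:Grassmap} only asserts a bijection of sets, so I would not need to address the smooth/Lie-theoretic structure in this proof. The upgrade to a manifold isomorphism $\Grass \simeq \flatNmodVVperp$ would require checking that $\VVperp$ is a closed Lie subgroup of $\U{N}$ (immediate, since it is the zero set of continuous matrix-entry conditions), after which the standard theory of quotients of Lie groups by closed subgroups supplies the smooth structure.
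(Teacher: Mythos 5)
Your proof is correct. Note that the paper itself states Theorem~\ref{thm:Grassmap} without proof, explicitly remarking that it and the ensuing corollary ``are standard properties of the Grassmannian,'' so there is no in-paper argument to compare against. Your argument is the standard orbit--stabilizer identification: transitivity of the $\U{N}$ action via extension of orthonormal bases, identification of the stabilizer of $V$ as $\VVperp$ (using unitarity to pass from invariance of $V$ to invariance of $V^{\bot}$), and then the formal bijection $\flatNmodVVperp \simeq \Grass$. You also correctly isolate the one point with genuine content --- that unitarity is what forces $A(V^{\bot}) = V^{\bot}$, since a general invertible operator fixing $V$ setwise need not respect the orthogonal complement --- and correctly observe that the smooth/manifold structure is a separate matter, which the paper records in Corollary~\ref{cor:smoothGrass} rather than in this theorem.
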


\begin{Remark}
In fact, the left coset $[A^W]$ contains precisely all those unitary operators on $\Cvect{N}$ which take the subspace $V$ into the subspace $W$, as summarized by the following equation
\begin{equation}
[A^{W}] = \{A \in \U{N} \;|\; A(V) = W \}.
\end{equation}
\end{Remark}

The natural action of $\U{N}$ on $\Grass$ for $g \in \U{N}$ and $W \in \Grass$ is given by
\begin{equation}
g \cdot W := g(W).
\label{eqn:actionGrass}
\end{equation}
It follows from a general fact about homogeneous spaces that this $\U{N}$ action on $\Grass$ agrees precisely with the natural $\U{N}$ action on $\flatNmodVVperp$ under the homogeneous space identification of equations (\ref{eqn:Grass}) and (\ref{eqn:Grassmap}).  Via this same identification, we arrive at the following corollary.

\begin{Corollary}
$\Grass$ is a smooth manifold.  The $\U{N}$ action on the Grassmannian $\Grass$ is smooth.
\label{cor:smoothGrass}
\end{Corollary}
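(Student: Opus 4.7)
The plan is to endow $\Grass$ with a smooth structure by transporting one from the coset space $\NmodVVperp$ via the bijection of Theorem~\ref{thm:Grassmap}, and then to deduce smoothness of the $\U{N}$-action from the standard fact that the translation action of a Lie group on a homogeneous quotient is smooth.

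First I would verify that $\Ub{V} \oplus \Vperp$ is a closed Lie subgroup of $\U{N}$. This subgroup consists precisely of the unitaries that preserve the orthogonal decomposition $\Cvect{N} = V \oplus V^\bot$, i.e.\ the stabilizer of $V$ under the $\U{N}$-action on $\Grass$. It is topologically closed because it is the simultaneous zero set of the continuous matrix-valued functions $A \mapsto \Pi_{V^\bot} A \Pi_V$ and $A \mapsto \Pi_V A \Pi_{V^\bot}$, where $\Pi_V$ and $\Pi_{V^\bot}$ are the orthogonal projectors. By the closed subgroup theorem (Cartan's theorem), a topologically closed subgroup of a Lie group is automatically an embedded Lie subgroup, so $\VVperp$ inherits a canonical smooth structure and is isomorphic as a Lie group to $\U{K} \times \U{N-K}$.

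Next I would invoke the quotient manifold theorem for Lie groups: if $H$ is a closed Lie subgroup of a Lie group $G$, then $G/H$ carries a unique smooth manifold structure such that the projection $G \to G/H$ is a smooth submersion, and the natural left action $G \times G/H \to G/H$ is smooth. Applying this with $G = \U{N}$ and $H = \VVperp$ gives a smooth manifold structure on $\NmodVVperp$ together with a smooth $\U{N}$-action. I would then transport this structure through the bijection $\Grass \simeq \NmodVVperp$ of Theorem~\ref{thm:Grassmap}, declaring a subset of $\Grass$ open (respectively, a function smooth) exactly when its image (respectively, pullback) under the bijection is. Because the bijection is by construction $\U{N}$-equivariant with respect to the actions described in equation~(\ref{eqn:actionGrass}) and the natural coset action, smoothness of the action on $\Grass$ follows immediately from smoothness of the action on the quotient.

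The main conceptual obstacle is simply the invocation of the quotient manifold theorem, which is nontrivial but entirely standard; once it is in hand, everything else reduces to checking that $\VVperp$ is a closed subgroup and that equivariance is preserved by transport of structure. A minor subtlety worth flagging is uniqueness: one should remark that the smooth structure on $\Grass$ produced this way is independent of the auxiliary choice of reference plane $V$ used in Theorem~\ref{thm:Grassmap}, which follows from the transitivity of the $\U{N}$-action together with the fact that conjugation by any unitary carrying $V$ to $V'$ induces a Lie group isomorphism between the respective stabilizers and a diffeomorphism between the resulting quotients.
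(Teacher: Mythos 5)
Your proof is correct and supplies the standard argument. Note that the paper itself deliberately omits a proof here, remarking just before Theorem~\ref{thm:Grassmap} that ``the following theorem and corollary are standard properties of the Grassmannian, so we give them without proof''; the route you take (closed subgroup theorem for $\VVperp$, then the quotient manifold theorem for $G/H$, then transport of structure through the $\U{N}$-equivariant bijection of Theorem~\ref{thm:Grassmap}) is exactly the canonical proof the authors are implicitly appealing to, and your remark about independence from the reference plane $V$ is a sensible addition.
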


Recall from Section \ref{sec:QECCs} that an $((n,K))$ qudit quantum error-correcting code is simply a $K$-dimensional subspace in $\Hilbphy = \nqudits$.  In other words, the code dimension of such a QECC is then $(K,N)$ where $N = d^n$.  Therefore, the Grassmannian $\Grass$, where $N=d^n$, is also equal to the set of $((n,K))$ qudit QECCs, so we may equally well think of $\Grass$ as the \emph{manifold of codes}.

\subsubsection{Unitary evolution of codes yielding paths in $\Grass$}
\label{sec:path.Grass}

In this section, our goal is to study unitary evolutions of codes, with special attention to their geometric meanings.  We now concern ourselves with the action of not only a single unitary, but a piecewise smooth one-parameter family of unitaries.  Physically, this difference corresponds to the difference between a one-shot implementation of a unitary and a continuous implementation thereof, which is typically more realistic.

Let us first define the notion of unitary evolution.

\begin{Definition}
A \emph{unitary evolution} is a one-parameter family $U(t)$ of unitary operators such that, at time 0, $U(0) = \idmatrix$, and as time passes, $U(t)$ evolves piecewise smoothly with time, until at time 1, it accomplishes some target unitary $U(1)=U$.
\label{def:uni_evolution}
\end{Definition}

Since we are interested in understanding the action of such a $U(t)$ on the Grassmannian or the manifold of codes, we will presumably be studying the trajectory $U(t) (V)$ of $K$-planes in $\Cvect{N}$ for all points $V$ in $\Grass$.  However, since the action of $\U{N}$ is transitive on $\Grass$, it's sufficient, both mathematically and for fault tolerance purposes, to understand the action of unitary evolutions on a single point $C \in \Grass$, which we call the \emph{base point}.  In the context of fault tolerance, the base point of our choice corresponds to the QECC employed in a fault-tolerant protocol, which we shall call the \emph{reference code}.  Mathematically, suppose we choose a different base point $C'$.  Then there is always a unitary $A \in \U{N}$ for which $A(C) = C'$, and the homotopy classes of loop trajectories based at $C'$ are in one-to-one correspondence with the homotopy classes of loops based at $C$.  More precisely, any loop $\gamma(t) := U(t) (C)$ at $C$ can be turned into a loop $\gamma'(t) = U'(t)(C)$ based at $C'$ via $U'(t) = A U(t) A^{-1}$ for all $t \in [0,1]$.  Mapping the homotopy class $[\gamma(t)]$ to $[\gamma'(t)]$ gives a \emph{group isomorphism} between $\pi_1(\Grass, C)$, the fundamental group of $\Grass$ with base point $C$, and $\pi_1(\Grass, C')$, the fundamental group with base point $C'$.  Therefore, it is sufficient to work with just base point $C$.

The fact, stated in Corollary \ref{cor:smoothGrass}, that the action of $\U{N}$ on $\Grass$ is smooth means that the map $\U{N} \times \Grass \rightarrow \Grass$ is a smooth map from the smooth product manifold on the left to the smooth manifold on the right.  In particular, this map restricted to $\U{N} \times \{C\}$ still gives a smooth map into $\Grass$.  Therefore we have that a piecewise smooth family of unitaries $\{U(t) \;|\; 0 \le t \le 1 \}$ gives rise to a piecewise smooth curve $\gamma(t) = U(t) (C)$ in $\Grass$.%
\footnote{A parametrized unitary path $U(t)$ or curve $\gamma(t)$ is smooth if all $t$-derivatives are defined.  This includes the possibility of derivatives being $0$, so it is possible for a smooth curve to ``stop'' and then resume in a different direction, provided it stops and re-starts smoothly.  It is piecewise smooth if it can be decomposed into a finite sequence of smooth curves.}
Since $U(0) = \idmatrix$ and $U(1) = U$ for some target unitary $U$, $\gamma(t)$ is a piecewise smooth curve from the point $C$ to $U(C)$ in the Grassmannian manifold.  Note that $\gamma(t)$ may be self-intersecting even when $U(t)$ is not, as depicted in Figure~\ref{fig:unitarypath} of Section \ref{sec:restriction_bundle}.

\subsection{Codewords and the tautological vector bundle $\bigvecbund$}
\label{sec:codewords_bigvecbund}

In this section, we describe the first of the two fibre bundles over the Grassmannian manifold which will be important to us.  We start with the simpler one, where the fibres have the structure of a vector space.  This fibre bundle can hold more information about our quantum system than just its base space, the Grassmannian; in particular, it allows us not only to store information about where the subspace gets mapped to by a given unitary evolution, but also to keep track of the evolution of individual vectors in the subspaces.

\subsubsection{Space of codewords as the tautological vector bundle $\bigvecbund$}
\label{sec:codewords_bigvecbund2}

Here we describe a vector bundle $\bigvecbund$ over the Grassmannian.  All the vector bundles considered for the concrete quantum information applications in this paper are pullbacks of the vector bundle $\bigvecbund$ to some subset of the Grassmannian.  Hence we sometimes refer to the bundle $\bigvecbund$ as the ``big vector bundle''.  Mathematically, $\bigvecbund$ is known as the tautological vector bundle, defined below.

\begin{Definition}[Tautological vector bundle $\bigvecbund$ over $\Grass$]
The \emph{tautological vector bundle} $\bigvecbund$ over $\Grass$ is the vector bundle whose fibre at a point $W \in \Grass$ is simply the $K$-plane $W$, and the total space $E$ is constructed by taking the disjoint union of all such fibres.  All the fibres $F_W = \pi^{-1}(W) = W$ are isomorphic to $F = \Cvect{K}$ (upon choosing a trivialization), and the structure group $G = \U{K}$ acts on the fibre $\Cvect{K}$ from the left in the obvious manner.
\end{Definition}

%
%
By a \emph{codeword}, we mean a pair $(C, w)$ where $C$ is a code and $w$ a vector in $C \subset \Cvect{N}$.  Clearly, we have a one-to-one correspondence between the codewords and the elements of the tautological vector bundle: $C$ tells us which fibre of $\bigvecbund$ the codeword belongs to, and the vector $w$ tells us which point in that fibre $(C, w)$ represents.

Note that for simplicity, our definition of a codeword did not take into account the projective equivalence relation
\begin{equation}
w \sim w' \text{\;\; if $w = \xi w'$ for some $\xi \in \Cstar$},
\end{equation}
which is appropriate for quantum states.
%
%

\subsubsection{Unitary evolution of codewords as paths in $\bigvecbund$}
\label{sec:Upath.V}

We wish to consider the action of a unitary evolution of $\Cvect{N}$ on the space of codewords $\bigvecbund$.  Let us first briefly consider the action of one single unitary operator $U \in \U{N}$ on the space.  Let $u = (C,w)$ be a codeword in $\bigvecbund$.  The action of $U$ results in the new codeword $(U(C), U(w))$, which stands for ``the vector $U(w)$ in the code $U(C)$''.  We see that a unitary action by $U$ maps any fibre $F_C$ of $\bigvecbund$ to the fibre $F_{U(C)}$, so $U$ induces a smooth\footnote{The smooth action of $G = \U{N}$ on the topological space $F = \Cvect{K}$ guarantees the smoothness within fibres, and the smoothness of the action across fibres follows from the fact that a unitary $U$ maps the fibre over $C$ to the fibre over $U(C)$ and Corollary \ref{cor:smoothGrass}.} bundle map from $\bigvecbund$ to itself, which we call $f_U$.  Since $U$ is invertible, the map $f_U$ is actually a diffeomorphism as it is also invertible, with inverse given by $f_{U^{-1}}$.

Now, let $U(t)$ be a unitary evolution and let $(C, w)$ be a codeword in the reference code $C$.  If we define 
\begin{equation}
\eta^{\{U\}}(t) := f_{U(t)}(C, w) = (U(t)(C), U(t)(w)), 
\label{eq:vectorbundlelift}
\end{equation}
we see that $\eta^{\{U\}}(t)$ is a path in the bundle $\bigvecbund$ which projects via the bundle projection $\pi$ to the path $\gamma(t) = U(t) (C)$ in the base Grassmannian manifold.

As discussed in Section~\ref{sec:path.Grass}, if the one-parameter family of unitaries $U(t)$ is piecewise smooth in $\U{N}$, which is the case for unitary evolutions, then the path $\gamma(t)$ is piecewise smooth in $\Grass$.  This is a consequence of Corollary~\ref{cor:smoothGrass}, which states that the action of $\U{N}$ on $\Grass$ is smooth.  Similarly, the path $\eta^{\{U\}}(t)$ is also piecewise smooth in $\bigvecbund$, because the $\U{N}$ action on $\bigvecbund$ is also smooth.

\subsection{Encodings and the tautological principal bundle $\bigPbund$}
\label{sec:encodings_bigPbund}

In Section \ref{sec:codewords_bigvecbund}, we developed a geometric picture for the space of codewords $\bigvecbund$ and the unitary action on it.  In this section, we shall build a similar picture for the space of encodings, also giving it the structure of a fibre bundle.  However, this one will not be a vector bundle but one with fibres diffeomorphic to the structure Lie group $\U{K}$ itself.  That is, it is a principal $\U{K}$-bundle.  In a way, this second fibre bundle allows us to capture the unitary evolution of QECCs more efficiently, in that it captures the evolution of ordered orthonormal bases or orthonormal $K$-frames in QECCs instead of the evolution of one codeword at a time.

\subsubsection{The tautological principal bundle $\bigPbund$ over $\Grass$}
\label{sec:bigPbund}

The mathematical hero this time is the tautological principal $\U{K}$-bundle, which is the associated bundle of $\bigvecbund$ with fibre $\U{K}$.  Just like for $\bigvecbund$, the principal $\U{K}$-bundle $\bigPbund$ can be thought of as the ``big principal bundle'' for our framework, because all the principal bundles we will consider in unitary fault tolerance applications are pullbacks or restrictions of the bundle $\bigPbund$ to some subset of the base manifold $\Grass$.

Here we shall give two descriptions of the bundle $\bigPbund$.  In the first approach, we define the bundle using the language of Lie groups.  In the second approach, which we will take in Section~\ref{sec:encodings}, we describe the same bundle in terms of $K$-frames.  To juxtapose the two descriptions, we can think of the first one as the ``operator'' perspective and the second one as the ``state'' perspective.

\begin{Definition}[Tautological principal bundle $\bigPbund$ over $\Grass$]
\label{def:bigPbund}
Fix a particular $K$-dimensional subspace $V \subset \Cvect{N}$.  The \emph{tautological principal bundle} $\bigPbund$ over $\Grass$ is the $\U{K}$-principal bundle by choosing an identification\footnote{That is, we want to choose an encoding for the code $V$.} $\Cvect{K} \simeq V$ once and for all, and first describing $\bigPbund$ as a $\Ub{V}$-principal bundle with
\begin{enumerate}
\item Total space $E$: the smooth manifold $\flatNmodVperp$;\footnote{$\flatNmodVperp$ is a \emph{smooth manifold} because $\Vperp$ is a closed subgroup of $\U{N}$.  However, $\flatNmodVperp$ is not a Lie group because $\Vperp$ is not a normal subgroup of $\U{N}$.} \label{item:totalP}
\item Fibre $\pi^{-1}(W)$ over $W$: Consists of those left cosets of $\Vperp$ in $\U{N}$ which contain unitary operators $A$ for which $A(V) = W$.  Such cosets are denoted as $\cosetVperp{A}$; \label{item:fibreP}
\item A trivialization-dependent left $\U{K}$-action on the fibre $\pi^{-1}(W)$: Choose a trivialization at $W$, namely an isomorphism $\pi^{-1}(W) \xrightarrow{\sim} \Ub{V}$, by choosing a particular coset $\cosetVperp{A}$ and using the following map
\begin{equation}
\cosetVperp{B} \mapsto A^{-1} B |_{V} \oplus \idmatrix_{V^{\perp}}.
\label{eqn:trivializationP}
\end{equation}
Then $\Ub{V}$ acts on $\pi^{-1}(W)$ via this trivialization, explicitly as follows:
\begin{equation}
Y \cdot \cosetVperp{B} = \cosetVperp{(A Y A^{-1} B)}
\label{eqn:actionP}
\end{equation}
for $Y$ in $\Ub{V}$.  It's not hard to see that both the right hand side of Equation (\ref{eqn:trivializationP}) and (\ref{eqn:actionP}) depend only on $A|_{V}$ and not on $A|_{V^{\perp}}$.
\label{item:actionP}
\end{enumerate}
\end{Definition}

After our first definition of $\bigPbund$ above, let us try to understand and expand on it.  Property (\ref{item:totalP}) of Definition \ref{def:bigPbund} reads
\begin{equation}
\bigPbund := \NmodVperp = \{\text{Left cosets $\cosetVperp{A} \;|\; A \in \U{N}$}\}
\label{eqn:bigPbund}
\end{equation}
which bears resemblance to Equation (\ref{eqn:Grass}) of Theorem \ref{thm:Grassmap}, formulating $\bigPbund$ as a homogeneous space with a transitive left action of $\U{N}$.  Hence, parallel to Corollary \ref{cor:smoothGrass}, we have that $\bigPbund$ is a smooth manifold, with a smooth $\U{N}$ action.

As a set, and as discussed in property (\ref{item:fibreP}) of Definition \ref{def:bigPbund}, the total space $\bigPbund$ can be decomposed as
\begin{eqnarray}
\bigPbund &=& \bigsqcup_{W \in\Grass} \{\text{Left cosets $\cosetVperp{A} \;|\; A \in \U{N}$ and $A(V) = W$}\} \label{eqn:bigPbund.fibre} \\
		& =: & \bigsqcup_{W \in\Grass} F(W),
\label{eqn:decomposeP}
\end{eqnarray}
where the definition of fibres $F(W)$ is self-evident.  This is the fibre decomposition for $\bigPbund$.  We saw in Definition \ref{def:bigPbund}, especially in property (\ref{item:actionP}), that $F(W)$ is diffeomorphic to $\U{K}$.  This is a necessary condition for $\bigPbund$ to be a principal $\U{K}$-bundle.

\begin{Remark}
Instead of viewing elements of the fibre $F(W)$ as left cosets $\cosetVperp{A}$ for unitary operators $A \in \U{N}$ which in particular takes $V$ to $W$, we can instead think of them as the different unitary transformations $\overline{A} := A|_{V}: V \rightarrow W$ obtained through restricting the operator $A$ to the subspace $V$.  This way, we have exactly one $\overline{A}$ for each coset $\cosetVperp{A}$.
\end{Remark}

%
%
%

\subsubsection{Space of encodings as an alternative description $\Stiefel$ of $\bigPbund$}
\label{sec:encodings}

Our goal in this section is to relate the physical concept of QECC encodings to a mathematical object, the smooth fibre bundle $\bigPbund$.

Let us recall and elaborate on the concept of encoding for a QECC $C$, introduced in Section \ref{sec:QECCs}.  We gave a very brief characterization of an encoding for a $(K,N)$-dimensional code $C$ as a unitary isomorphism $\iota: \Cvect{K} \rightarrow C \subset \Cvect{N}$.  
Let us call any encoding of a $K$-dimensional logical Hilbert space into an $N$-dimensional physical Hilbert space a \emph{$(K,N)$-dimensional encoding}.  Then any $(K,N)$-dimensional encoding is given by a partial isometry\footnote{For us, a \emph{partial isometry} is a linear map from a vector space $A$ to a vector space $B$ which preserves the inner product.  If the vector spaces $A$ and $B$ are complex vector spaces, then a partial isometry differs from a unitary transformation in that it does not have to be an isomorphism.  It must be injective, but might not be surjective.} $\iota: \Cvect{K} \hookrightarrow \Cvect{N}$.  Given such an injective map $\iota$, we say that $\iota$ is an encoding \emph{for the QECC $\iota(\Cvect{K})$}.

In order to come up with an efficient way to specify an encoding, let us now introduce the notion of a $K$-frame.

\begin{Definition}
Assume $K \leq N$.
A \emph{$K$-frame} is an ordered set of $K$ linearly independent vectors in the vector space $\Cvect{N}$.  An \emph{orthonormal $K$-frame} is a $K$-frame where all vectors in the set have norm $1$ and are pairwise orthogonal.
\end{Definition}

For us, $K \le N$ in general.  If $K = N$, a $K$-frame is precisely an ordered basis for $\Cvect{N}$.  In the general case of $K \le N$, a $K$-frame $\beta$ would be an ordered basis for the subspace $W = \mathrm{span}(\beta)$ of $\Cvect{N}$.

Now, let us describe encodings in terms of orthonormal $K$-frames.  Let $\alpha = \{e_1, e_2, ..., e_K\}$ be the standard (orthonormal, ordered) basis for $\Cvect{K}$.  To specify a QECC encoding $\iota: \Cvect{K} \hookrightarrow \Cvect{N}$, it is necessary and sufficient to specify $\iota(\alpha)$,\footnote{In other words, we should specify the codewords representing logical 0 through logical $K-1$.} which is an orthonormal $K$-frame in $\Cvect{N}$.  On the other hand, if we have an orthonormal $K$-frame $\beta$ in $\Cvect{N}$, we can obtain a partial isometry $\iota$ by letting $\iota(\alpha) = \beta$.  Hence, the set of possible encodings $\iota$ is in bijection with the set of possible orthonormal $K$-frames $\beta$, where their relation is given by $\beta = \iota(\alpha)$.  To summarize, we have
\begin{eqnarray}
\mathcal{S} &:=& \{\text{All encodings for $(K,N)$-dimensional codes}\} \\
	&=& \{\text{All orthonormal $K$-frames in $\Cvect{N}$}\}.
\end{eqnarray}

It turns out that the set of all orthonormal $K$-frames in $\Cvect{N}$ is a well-known mathematical object called the \emph{Stiefel manifold}, denoted as $\Stiefel$.  Therefore, we see that the set of all $(K,N)$-dimensional encodings is exactly the Stiefel manifold, namely
\begin{equation}
	\mathcal{S} = \Stiefel.
\end{equation}
%

Next, we give $\Stiefel$ a principal $\U{K}$-bundle structure, then in Theorem \ref{thm:PSt} below we identify it with $\bigPbund$ as principal bundles, and thereby letting $\Stiefel$ inherit the smooth manifold structure of $\bigPbund$ in Corollary \ref{cor:manifoldSt}.

Let $G(W)$ be the set of all orthonormal $K$-frames $\beta$ in $\Cvect{N}$ such that $\mathrm{span}(\beta) = W$.  Then
\begin{equation}
\Stiefel = \bigsqcup_{W \in \Grass} G(W).
\label{eqn:decomposeStiefel}
\end{equation}
$G(W)$ will be the fibre over $W$ in the bundle $\Stiefel$.

\begin{Theorem}
Let us fix a $K$-plane $V$ in $\Cvect{N}$ and an encoding $\iota: \Cvect{K} \hookrightarrow \Cvect{N}$ for $V$, namely $\mathrm{span}(\iota(\alpha))=V$ where $\alpha$ is the standard ordered orthonormal basis for $\Cvect{K}$.  Then there is a natural bijection of sets
\begin{equation}
	\bigPbund \simeq \Stiefel.
\label{eqn:bundleP}
\end{equation}
Furthermore, the set bijection preserves the fibres
\begin{equation}
	F(W) \simeq G(W)
\label{eqn:fibreP}
\end{equation}
for each $W \in \Grass$, where $F(W)$ is defined by Equation (\ref{eqn:decomposeP}) and $G(W)$ by the discussion preceding Equation (\ref{eqn:decomposeStiefel}).  That is, the above isomorphism is a bundle isomorphism.
\label{thm:PSt}
\end{Theorem}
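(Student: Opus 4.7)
The plan is to write down the natural set-theoretic map and then verify, in sequence, that it is well-defined, bijective, and fibre-preserving; once the bijection respects the fibre decompositions on both sides, the principal bundle structure on $\bigPbund$ (already available by Definition~\ref{def:bigPbund}) transports to $\Stiefel$, making the map a bundle isomorphism tautologically.

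First I would write the natural candidate. Let $\alpha = (e_1, \ldots, e_K)$ be the standard ordered orthonormal basis of $\Cvect{K}$ and let $\iota: \Cvect{K} \hookrightarrow \Cvect{N}$ be the chosen encoding with $\mathrm{span}(\iota(\alpha)) = V$. Define
\begin{equation}
\Phi: \bigPbund \longrightarrow \Stiefel, \qquad \Phi\bigl(\cosetVperp{A}\bigr) := \bigl(A\iota(e_1), \ldots, A\iota(e_K)\bigr).
\end{equation}
The geometric content is simply: use the representative $A$ to push the fixed reference frame $\iota(\alpha) \subset V$ forward into $\Cvect{N}$.

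Second, I would check well-definedness. Any other representative of $\cosetVperp{A}$ has the form $A U''$ with $U'' \in \Vperp$; since $U''$ acts as the identity on $V$ and each $\iota(e_k) \in V$, we get $AU''\iota(e_k) = A\iota(e_k)$, so $\Phi$ does not depend on the choice of representative. Because each $A$ is unitary and $\iota(\alpha)$ is orthonormal, $\Phi(\cosetVperp{A})$ is genuinely an orthonormal $K$-frame in $\Cvect{N}$.

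Third, I would prove bijectivity. For \emph{injectivity}, suppose $\Phi(\cosetVperp{A}) = \Phi(\cosetVperp{A'})$. Then $A^{-1}A'$ fixes each $\iota(e_k)$ and hence fixes $V$ pointwise, so $A^{-1}A' \in \Vperp$ and the two cosets agree. For \emph{surjectivity}, take any orthonormal $K$-frame $\beta = (w_1, \ldots, w_K)$ in $\Cvect{N}$. Extend both $\iota(\alpha)$ and $\beta$ to orthonormal bases of $\Cvect{N}$ (by Gram--Schmidt applied to the complements) and let $A \in \U{N}$ be the unique unitary sending the first extended basis to the second, in order. Then $A\iota(e_k) = w_k$ for $k \leq K$, so $\Phi(\cosetVperp{A}) = \beta$.

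Fourth, I would match fibres. If $A(V) = W$, then $\mathrm{span}\bigl(A\iota(e_1), \ldots, A\iota(e_K)\bigr) = A(V) = W$, so $\Phi\bigl(F(W)\bigr) \subseteq G(W)$; conversely, the surjectivity construction above applied to any $\beta \in G(W)$ produces $A$ with $A(V) = \mathrm{span}(\beta) = W$. Hence $\Phi$ restricts to a bijection $F(W) \simeq G(W)$ for each $W \in \Grass$, which is equation~(\ref{eqn:fibreP}).

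Finally, since $\Phi$ is a fibre-preserving bijection and the principal $\U{K}$-bundle structure on $\Stiefel$ is \emph{defined} by pulling back the structure on $\bigPbund$ through $\Phi$, the resulting map is automatically a smooth bundle isomorphism with respect to that structure. I do not anticipate a serious obstacle here; the main subtlety is bookkeeping, namely keeping straight that changing the representative $A$ within its $\Vperp$-coset corresponds exactly to the trivial ambiguity in sending $V$ to $W$ (ambiguity only on the orthogonal complement), which is why the frame image is unchanged and the map descends cleanly to the quotient.
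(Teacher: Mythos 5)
Your proposal is correct and takes essentially the same approach as the paper: you define the same map $\cosetVperp{A} \mapsto A(\iota(\alpha))$, verify well-definedness and bijectivity (the paper does so by exhibiting the explicit inverse $\psi$, you by checking injectivity and surjectivity directly, which is equivalent), and establish fibre preservation by the same span argument.
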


\begin{proof}
Let us describe the bijections.
Recall that $\bigPbund$ is the coset space $\flatNmodVperp$.  Having fixed an encoding $\iota$ for $V$, define $\phi: \bigPbund \rightarrow \Stiefel$ by $\phi (\cosetVperp{A}) = A(\iota(\alpha))$.  Recall that $A$ is unitary, so $A(\iota(\alpha))$ is another orthonormal $K$-frame, this time spanning $A(V)$. Also define $\psi: \Stiefel \rightarrow \bigPbund$ by letting $\psi(\beta)$, for each orthonormal $K$-frame $\beta \in \Stiefel$, be the unique $\Vperp$-coset of unitary transformations which takes the $K$-frame $\iota(\alpha) \subset V$ to $\beta \subset \mathrm{span}(\beta)$.  The maps are well-defined, and it is easy to see that $\phi$ and $\psi$ are inverses of each other.

Now, we will show that $\phi$ maps the set $F(W)$ to $G(W)$ for every $W \in \Grass$.  Recall that $F(W)$ consists of all the left cosets $\cosetVperp{A}$ where $A(V) = W$ and $A$ unitary.  Hence, for $\cosetVperp{A}$ in $F(W)$, we have $\phi(\cosetVperp{A}) = A(\iota(\alpha))$, which is an orthonormal $K$-frame spanning $W$ because $\iota(\alpha)$ spans $V$ and $A(V) = W$.  Therefore, $\phi(\cosetVperp{A})$ is in $G(W)$.
\end{proof}


Because of Theorem \ref{thm:PSt}, we could view the fibre bundle structure on $\Stiefel$ as being inherited from $\bigPbund$ rather than being independently defined.  Thus we formulate the following corollary.

\begin{Corollary}
\label{cor:manifoldSt}
The Stiefel manifold has the structures of a smooth manifold and of a principal $\U{K}$-bundle, both inherited from $\bigPbund$, as described above.
\end{Corollary}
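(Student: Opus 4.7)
The plan is to use the bijection $\phi: \bigPbund \to \Stiefel$ constructed in Theorem~\ref{thm:PSt} as the transport vehicle: by declaring $\phi$ to be a diffeomorphism and a bundle isomorphism, both the smooth manifold structure and the principal $\U{K}$-bundle structure of $\bigPbund$ push forward to $\Stiefel$. Concretely, I would first define the smooth atlas on $\Stiefel$ by stipulating that a map $f: \Stiefel \to M$ (for $M$ any smooth manifold) is smooth if and only if $f \circ \phi$ is smooth on $\bigPbund = \flatNmodVperp$, where the right-hand side already has a smooth manifold structure as a homogeneous space of the Lie group $\U{N}$ modulo the closed subgroup $\Vperp$ (Definition~\ref{def:bigPbund}(\ref{item:totalP})).

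Next, I would transport the principal $\U{K}$-bundle data. The projection $\pi_{\Stiefel}: \Stiefel \to \Grass$ is defined as $\pi_{\Stiefel} := \pi \circ \phi^{-1}$; by the fibre preservation statement (\ref{eqn:fibreP}) of Theorem~\ref{thm:PSt}, this coincides with the natural map $\beta \mapsto \mathrm{span}(\beta)$. The $\U{K}$-action on each fibre $G(W)$ is defined by $g \cdot \beta := \phi(g \cdot \phi^{-1}(\beta))$, and local trivializations on $\Stiefel$ are produced by composing local trivializations of $\bigPbund$ with $\phi$. Since $\phi$ is a fibre-preserving bijection, all the axioms of Definition~\ref{def:bund}, together with the freeness and transitivity of the $\U{K}$-action on fibres required for a principal bundle, pass from $\bigPbund$ to $\Stiefel$ with no further work.

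The step I expect to be the main obstacle is verifying that the induced structures on $\Stiefel$ are canonical, i.e.\ independent of the auxiliary choices made in Theorem~\ref{thm:PSt} (the fixed subspace $V$ and the reference encoding $\iota$). A different choice $\iota'$ produces a different bijection $\phi'$, but $\phi' \circ \phi^{-1}$ is a bundle automorphism of $\Stiefel$ given fibrewise by right multiplication by a fixed element of $\U{K}$ (the element relating the two reference $K$-frames $\iota(\alpha)$ and $\iota'(\alpha)$). Because this automorphism is smooth in both directions, the two smooth structures agree, and likewise for the bundle structures.

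As a sanity check, I would compare the inherited structure to the standard extrinsic description of $\Stiefel$ as the subset of $(\Cvect{N})^{K}$ cut out by the orthonormality relations $\langle v_i, v_j \rangle = \delta_{ij}$. A direct verification that $\phi$ is smooth in both descriptions (using matrix entries of a representative $A \in \U{N}$ as coordinates near the identity coset) would confirm that the structure obtained by pullback coincides with the classical one, closing the argument.
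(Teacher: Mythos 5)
Your proposal is correct and matches the paper's intent: the paper offers no separate argument for this Corollary, treating it as an immediate transfer of structure along the fibre-preserving bijection $\phi$ from Theorem~\ref{thm:PSt}, which is exactly what you do. The extra care you take about canonicity of the transported structure (showing that a change of reference encoding $\iota$ conjugates $\phi$ by a fixed right $\U{K}$-translation) and the sanity check against the extrinsic description of $\Stiefel$ inside $(\Cvect{N})^K$ are reasonable elaborations that the paper leaves implicit.
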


Since the bundle $\Stiefel$ has been identified with the bundle $\bigPbund$, we shall view the description of $\Stiefel$ as an alternative description of $\bigPbund$, and often refer to either description as $\bigPbund$ in the remainder of the paper.  We use the notation $\Stiefel$ when we would like to emphasize the $K$-frame interpretation of the bundle.



\subsubsection{Unitary evolution of encodings as paths in $\Stiefel$}
\label{sec:Upath.P}


In this section, we are concerned with the action of the full $\U{N}$ on $\Stiefel$.  The $\U{N}$ action, unlike the $\U{K}$ action of Equation (\ref{eqn:actionP}), does not preserve the fibres of $\Stiefel$.  If $U$ is a unitary operator on $\Cvect{N}$, it naturally induces a smooth map $h_U: \Stiefel \rightarrow \Stiefel$, namely 
\begin{equation}
h_U (\beta) = U(\beta).  
\label{eq:h_U}
\end{equation}
In particular, $h_U$ maps the fibre $G(W)$ to the fibre $G(U(W))$.

We now consider the action of a unitary evolution, namely, a piecewise smooth one-parameter family of unitary operators $U(t)$ on $\Cvect{N}$ such that $U(0) = \idmatrix$.  Suppose $U(t)$ is a unitary evolution and $\beta = \iota (\alpha)$ an orthonormal $K$-frame, or equivalently an encoding, for the reference code $C$.  We define 
\begin{equation}
\lambda^{\{U\}}(t) := h_{U(t)}(\beta) = U(t)(\beta), 
\label{eq:principalbundlelift}
\end{equation}
a one-parameter family of orthonormal $K$-frames, or equivalently, encodings.  Then at time $t$,
\begin{equation}
\pi(\lambda^{\{U\}}(t)) = \pi(U(t)(\beta)) = \mathrm{span}(U(t)(\beta)) = U(t)\, (\mathrm{span} (\beta)) = U(t)(C).
\end{equation}
Therefore, the encoding path $\lambda^{\{U\}}(t)$ projects to the subspace path $\gamma(t) = U(t)(C)$ in the base Grassmannian manifold.  In other words, the path $\lambda^{\{U\}}(t)$ is a lift of the path $\gamma(t)$ from the base manifold $\Grass$ to the bundle $\bigPbund$.  If $U(t)$ is piecewise smooth, so are $\lambda^{\{U\}}(t)$ and $\gamma(t)$.

\subsection{Candidates of natural parallel transport structures on $\bigvecbund$ and on $\bigPbund$}
\label{sec:proposal}

\begin{Remark}
Unitary evolutions corresponding to fault-tolerant operations, or any unitary evolutions that induce a closed loop in the base Grassmannian manifold, when applied to an initial codeword $(C,w)$, do not necessarily yield closed loops in $\bigvecbund$.  In general, $f_{U(1)}(C,w) = (C,w')$ for some element $w'$ in $\pi^{-1}(C)$.  These non-closed loops in the bundle $\bigvecbund$ are precisely the interesting points of this paper and the source of computational power of fault-tolerant protocols.
\label{remark:nonclosed}
\end{Remark}

\begin{Remark}
Similar to Remark \ref{remark:nonclosed}, a lift of a closed loop in $\Grass$ may not result in a closed loop in $\bigPbund$.  If $F(t)$ is a unitary fault-tolerant loop and if $h_{F(1)}(\beta) \ne \xi \beta$ for any $\xi \in \Cstar$ (or $\xi \in \U{1}$), this $F(t)$ shall correspond to a fault-tolerant gate implementation which yields a nontrivial logical operation.  
\end{Remark}

Here, we give primitive proposals for a connection on $\bigvecbund$ and on $\bigPbund$.  We then show that our proposals will not in general yield a flat projective connection on either of the two bundles over the entire Grassmannian manifold $\Grass$ using all unitary evolutions.  However, our ultimate goal is to show that they do yield well-defined flat projective connections when restricted to suitable $\M$ (i.e., when considered in $\Mvecbund$ and $\MPbund$) and sensible $\F$-evolutions.

The main idea is that a unitary path $U(t)$ induces not only a path $\gamma(t)$ in the Grassmannian, but also paths $\lgamma(t)$ in the fibre bundles $\bigvecbund$ and $\bigPbund$.  $\lgamma(t)$ is supposed to represent the lift of $\gamma(t)$ to the fibre bundle provided by parallel transport, and with a certain initial value $u_0$ in $\pi^{-1}(C_1)$ where $C_1$ is the starting $K$-dimensional subspace.  This is depicted in Figure~\ref{fig:unitarypath}.

We will use the term \emph{pre-connection} or \emph{pre-parallel transport} to indicate a connection-like object that gives a possibly non-unique notion of parallel transport.  In other words, a pre-connection is a set of possible parallel transport structures.  Two examples are given by equations (\ref{eqn:paratranspV}) or (\ref{eqn:paratranspP}).

\subsubsection{A proposal for a pre-connection on $\bigvecbund$}
\label{sec:connectionV}

Given a piecewise smooth path $\gamma(t)$ in $\Grass$ with $\gamma(0) = C_1$, we can choose a piecewise smooth unitary evolution $U(t)$ that induces $\gamma(t)$ when acting on $C_1$.  Then there is a path in $\bigvecbund$ defined by equation (\ref{eq:vectorbundlelift}),
\begin{equation}
\eta^{\{U\}}(t) := f_{U(t)} (C_1, w_1) = (U(t)(C_1), U(t)(w_1)),
\label{eqn:paratranspV}
\end{equation}
which is a lift of $\gamma(t)$, and the map $f_{U(t)}$ gives us a way to relate the fibre over $C_1$ to the fibre over $U(t)(C_1) = \gamma(t)$ for any $t \in [0,1]$.  In other words, $f_{U(t)}$ resembles the notion of a parallel transport, which is key to the concept of a connection.  

However, this primitive proposal of a parallel transport, $f_{U(t)}$, on $\bigvecbund$ depends crucially on the unitary evolution $\{U\} = U(t)$ and might not yield a consistent parallel transport in $\bigvecbund$ for different choices of $\{U\}$ from an allowed set $\Fpath$.  Depending on the unitary evolution $\{U\}$ we choose, there are various possibilities for a parallel transport between fibres over any two fixed points, $C_1$ and $C_2$, in the Grassmannian.  The proposal will not lead to a well-defined connection unless $f_{U(t)}$ is independent of the choice of unitary evolutions $\{U\}$ consistent with the same path $\gamma$ in the Grassmannian.

\subsubsection{A proposal for a pre-connection on $\bigPbund$}
\label{sec:connectionP}

Similarly, we can define a pre-connection on $\bigPbund$: Given any curve $\gamma(t)$ in the base space $\Grass$, choose a unitary evolution $\{U\} = U(t)$, which when acting on $\Grass$ results in the path $\gamma(t)$.  A natural $\{U\}$-dependent pre-parallel transport of a $(K,N)$-dimensional encoding $\beta = \iota(\alpha)$, where $\alpha$ is the standard basis for $\Cvect{K}$, over a curve $\gamma(t)$ in $\Grass$ is given by equation (\ref{eq:principalbundlelift}),
\begin{equation}
\lambda^{\{U\}}(t) := h_{U(t)}(\beta) = U(t)(\beta).
\label{eqn:paratranspP}
\end{equation}
Again, equation (\ref{eqn:paratranspP}) only defines a true parallel transport on $\bigPbund$ if along any path of subspace unitary evolution $\gamma(t)$, the proposed parallel transport $\lambda^{\{U\}}(t)$ is independent of the choice of unitary evolution $\{U\}$.

\subsubsection{Failure to have a connection for $\Fpath = \{\text{All unitary paths}\}$ or $\M = \Mgraph = \Grass$}
\label{sec:connection_failure}

If we try to apply these definitions to the full bundles $\bigvecbund$ or $\bigPbund$ while allowing all possible unitary paths, we find that we do not obtain connections.  For example, consider $\subGrass{2}{4}$, the manifold of $2$-dimensional subspaces of a $2$-qubit Hilbert space, and let $C$ be the subspace $\{ \ket{0} \otimes \ket{\psi} \, | \, \ket{\psi} \in \Cvect{2}\}$.  Let $\gamma(t) = C$ be the trivial path that stays at $C$ for all times.  $\gamma(t)$ can be derived from many different unitary paths, for instance a path of the form $\idmatrix \otimes U(t)$ for any $U(t)$.  However, any two such paths with different values of $U(1)$ will result in a different transformation of the fibre at $C$, giving many different pre-parallel transports over the trivial path at $C$ in $\Grass$.  So the pre-parallel transport we get this way is inconsistent, and fails to give us a well-defined connection.


\section{A richer geometric picture for \emph{fault-tolerant} unitary evolutions}
\label{sec:geoFT}

Given a unitary fault-tolerant protocol, our goal is to construct a connection out of the $\U{N}$ action on $\bigvecbund$ and $\bigPbund$.  We do not have a connection over the full bundles $\bigvecbund$ and $\bigPbund$.  However, when considering fault-tolerant operations, we do not consider arbitrary unitary paths but only certain unitary paths in a subset of the Grassmannian.  This leads us to a bundle with a smaller base space, on which we can hope to have a well-defined flat projective connection.

In this section, we outline a program to demonstrate the \emph{topological} nature of fault tolerance.  We define the main mathematical objects that we expect to play a role in the construction and state Conjecture~\ref{conj:flatconnection}, which codifies the main goal of the program.  We also prove Criterion~\ref{prop:flat_condition}, which will be useful in Sections \ref{sec:transversal} and \ref{sec:topologicalFT}, where we work out in detail our picture for two examples of fault-tolerant protocols.  In the next few sections, we analyze in detail only the \emph{unitary} fault-tolerant gates; we return to non-unitary gates briefly in Section~\ref{sec:fullFT}.

Ultimately, we would like to have a way to automatically construct the smaller bundles given a fault-tolerant protocol.  However, at this stage in the development of our program, the topological picture requires us to make some additional choices which might not be uniquely specified by the fault-tolerant protocol we begin with.  These choices need to be made carefully in order to both be consistent with the structure of the original fault-tolerant protocol and have the right properties to give us a flat projective connection.  Currently we must rely on intuition about the nature of fault tolerance in the model being studied in order to determine the correct choices.  In some cases, the choice is clear, while in others there may be a degree of arbitrariness in the choice.

In Section \ref{sec:specsFT}, we bridge the gap between Section \ref{sec:introFT} and the input required for our geometric framework, and we specify how (the continuous-time implementations of) the fault-tolerant operations should be described.  
In Section~\ref{sec:frameworksummary}, we introduce the main mathematical objects for the framework which we want to create, including the restriction bundles $\Mvecbund$ and $\MPbund$.  In Section \ref{sec:connectionM}, we give the main conjecture: on these restriction bundles, the general proposal for parallel transport in Section \ref{sec:proposal} is a flat projective connection.  The implication on the fault-tolerant logical gates is highlighted in Section \ref{sec:FTmonodromy}.

\subsection{Specifications of a fault-tolerant protocol for our framework}
\label{sec:specsFT}

In the existing literature, a rigorous definition of quantum fault tolerance in the general setting (for general error models and protocols)\footnote{In some restrictive settings, however, one can find more precise definitions for quantum fault tolerance.  For example, see~\cite{threshold} for a more precise definition of fault tolerance in one important context.} is still lacking.  Naturally, this poses certain challenges for us as we seek to give a geometric formulation for unitary quantum fault tolerance.  We do not attempt to give a rigorous definition for fault tolerance here; we merely try to organize it enough for our use.  

\subsubsection{Bridging the gap between Section \ref{sec:introFT} and our framework}

In Section~\ref{sec:introFT}, we described the three main ingredients of a fault-tolerant protocol.  There we illustrated how the error model, the QECC, and the FT operations must all be chosen harmoniously, a concept illustrated pictorially in Figure \ref{fig:FTaspects}.

For Sections \ref{sec:geoFT}, \ref{sec:transversal} and \ref{sec:topologicalFT}, we assume that we are given a fault-tolerant protocol, upon which we carry out the construction of a mathematical framework as sketched out in the present section.  We will not need all the information in a specification of an FT protocol for building our current framework, instead we extract the part of the specification we need.

For one thing, we do not explicitly incorporate error models into our framework at this stage, although we hope to address and incorporate errors into our followup work.  However, we do assume that an appropriate error model has been used, and compatibility considered, in coming up with the set of fault-tolerant operations, the physical gates and their continuous-time implementations, all of which play key roles in our geometric construction.

For our framework, we do however need a description of the reference code and the FT operations, along with some idea of how the FT operations are implemented.  Now, let us take a closer look at how these ingredients of a fault-tolerant protocol are described.  One simple way to partially prescribe a fault-tolerant protocol is to give a reference code $C$ to which the encoded Hilbert space always returns in between fault-tolerant gates and a set of fault-tolerant gates acting on $C$, along with a continuous-time implementation for each fault-tolerant gate.%
\footnote{Sometimes the continuous-time implementation is implicitly given or not completely specified, leaving some ambiguity in the detailed implementation.}
This description is somewhat minimal and often used when \emph{specifying} an FT protocol.  However, when \emph{designing} FT protocols, that is when we are mainly concerned about which physical gates or implementations to allow in order to keep errors under control, another approach to describing the FT operations is often used: as in the examples in Sections \ref{sec:squdit_errors} and \ref{sec:local_errors}, FT protocols are often designed by first figuring out and specifying the allowed physical gates (the ordering could matter), and then defining the FT operations to be a sequence of physical gates which takes code $C$ back to itself.

Even if $F$ is a fault-tolerant gate, not all unitary paths that start at $\idmatrix$ and end at $F$ are valid fault-tolerant implementations of that gate.  Some such paths will pass through codes that are not able to correct errors well, and others will cause errors to propagate in an uncontrolled way.  A fault-tolerant implementation must avoid these dangers so that errors can be corrected no matter when they occur.  It may be that in order for the implementation to be fault tolerant, error correction must be performed multiple times in the course of the procedure.  Since we are only considering paths without errors and since error correction does not change a state with no errors in it, any required error correction steps do not show up in the formulation we use.  In the remainder of Section~\ref{sec:geoFT}, we will assume the fault-tolerant protocol is given to us and will not worry about how it is designed.


\subsubsection{Fault-tolerant unitary operations $\FsmallL$ and loops $\Floop$}
\label{sec:FTlogicals}

As discussed above, a fault-tolerant protocol consists of a group $\FsmallL$ of fault-tolerant unitary operations and a set $\Floop$ of fault-tolerant unitary loops, the specified fault-tolerant implementations of the gates in $\FsmallL$.

Recall that one of the properties of an FT operation was captured by Equation (\ref{eqn:code-gate-relation}) of Section \ref{sec:FToperations}:
\begin{equation}
F(C) = C
\end{equation}
where $C$ is the reference code for the FT protocol (and will always mean the reference code by default).  Note that not all unitaries satisfying this equation are fault tolerant.  

The fault-tolerant implementations of the elements of $\FsmallL$ are given by paths in the set $\Floop$.  The paths in $\Floop$ are maps from $[0,1]$ to $\U{N}$ which start at the identity and end at elements of $\FsmallL$:
\begin{itemize}
\item If $F(t) \in \Floop$, then $F(0) = \idmatrix$ and $F(1) \in \FsmallL$.
\end{itemize}

\begin{Definition}
When working within the context of an FT protocol, we use the term \emph{FT unitary operations} to mean only those operators in $\FsmallL$.  An \emph{FT unitary loop} is a unitary path in $\Floop$.
\end{Definition}

It is easy to see that we may assume the identity operator is in $\FsmallL$ because we can take its unitary implementation to be $F(t) = \idmatrix$, which is clearly fault tolerant for sensible pairs of code and error model.  It is also clear that the composition of the two elements of $\FsmallL$ is again in $\FsmallL$ by concatenating the two corresponding unitary fault-tolerant implementations.  However, to fully establish the group structure on $\FsmallL$ (which ensures that $\Floop$ has some convenient properties for our construction), we shall consider only fault-tolerant protocols where the following holds:
\begin{itemize}
\item If $F$ is in $\FsmallL$, then $F^{-1} \in \FsmallL$.
\end{itemize}
In standard examples this property holds, but it is not clear it is true for all fault-tolerant protocols.

Fault-tolerant unitary loops are basically those unitary evolutions which fault-tolerantly and unitarily implement a fault-tolerant unitary operation.  
Conversely, a unitary is only considered a fault-tolerant unitary operation if there exists a fault-tolerant unitary loop implementing it.  Putting the two together, we have the following relation between the sets $\FsmallL$ and $\Floop$:
\begin{equation}
\FsmallL = \{ F(1) \;|\; F(t) \in \Floop \}.
\end{equation}

Furthermore, if $F(t)$ with $t \in [0,1]$ is a unitary fault-tolerant implementation, so is $F(t)$ with $t \in [0, t_0]$, where $0 \le t_0 \le 1$, as long as $F(t_0)(C) = C$.  Therefore, we also have
\begin{equation}
\FsmallL = \{ F(t_0) \;|\; 0 \le t_0 \le 1,\, F(t_0)(C) = C,\, F(t) \in \Floop \},
\label{eqn:FsmallL_max_Floop}
\end{equation}
which expresses a kind of maximality of the group $\FsmallL$ with respect to the set $\Floop$.  

Since $\FsmallL$ is closed under inverses, we may also assume $\Floop$ is closed under reversal of paths:
\begin{itemize}
\item If $F(t) \in \Floop$, then $F(1-t) F(1)^{-1} \in \Floop$.
\end{itemize}

In terms of the geometric picture we have been building since Section \ref{sec:geoUnitary}, the trajectory $F(t)(C)$ of codes, where $F(t)$ is a fault-tolerant unitary loop, actually forms a closed loop (based at $C$) in the Grassmannian, hence the term ``loop'' even though in general $F(0) \ne F(1)$ and $F(0)|_{C} \ne F(1)|_{C}$.  This means that $F(t)$ only induces a closed loop in the base space when acting on the subspace $C$, but $F(t)$ does not form a closed loop in $\U{N}$, nor does it yield closed loops in general in the two fibre bundles when acting on an element $u$ in the fibre $\pi^{-1}(C)$.  To gain some intuition, the reader may consult Figure~\ref{fig:unitarypath}, especially focusing on the segment of $U(t)$ that gives rise to the little loop (to the right of the intersection point) in the base of the bundle (depicted in the right hand side of the figure) and its lift.

Note that the set of loops we get by applying fault-tolerant unitary loops to the reference code $C$ will in general not be the set of all loops in $\Grass$ based at $C$, but instead just those loops passing through sufficiently robust intermediate codes at all times.  

\subsection{The main elements of the framework}
\label{sec:frameworksummary}

\subsubsection{The sets $\Fsmall$, $\Fpath$, and $\Mgraph$}

We have been given the following two sets specifying a fault-tolerant protocol:

\begin{itemize}
\item $\FsmallL$, the group of gates in the protocol implemented via fault-tolerant unitary operations
\item $\Floop$, the set of implementations (i.e., FT unitary loops) of the gates in $\FsmallL$
\end{itemize}

Based on these, we introduce the following objects:
\begin{itemize}
\item $\Fsmall$, the set of unitary operations that can appear in $\Floop$,
\begin{equation}
\Fsmall := \{ F(t_0) \;|\; 0 \le t_0 \le 1,\, F(t) \in \Floop \} \subset \U{N}.
\label{eqn:Fsmall_def}
\end{equation}

\item $\Fpath$, the set of paths in $\Fsmall$:
\begin{equation}
\Fpath := \{ F(t): [0,1] \rightarrow \Fsmall, F(t) \text{ piecewise smooth} \}
\end{equation}

\item $\Mgraph$, the subset of $\Grass$ indicating those subspaces which appear in the course of a fault-tolerant unitary gate,
\begin{equation}
\Mgraph := \Fsmall(C)
\end{equation}
\end{itemize}

We don't restrict $\Fsmall$ to contain only unitaries with $U(C) = C$ for the reference code $C$ because we want $\Fsmall$ to include both fault-tolerant gates and intermediate unitaries that we pass through while implementing fault-tolerant gates.
Since $F(0) = \idmatrix$ for all $F(t) \in \Floop$, it's not hard to see that in general $\Fsmall$ will be a subset of $\U{N}$ that is path-connected to the identity.  However, unlike $\FsmallL$, $\Fsmall$ does not always have a group structure.

\subsubsection{The sets $\Fbig$, $\Fbigpaths$, and $\M$}

The set $\Mgraph$ gives a first approximation of the base space of the bundles with projectively flat connections.  In the case of transversal gates (Section~\ref{sec:transversal}), it actually is the base space.  However, in other cases, such as for the toric code (Section~\ref{sec:topologicalFT}), $\Mgraph$ is too sparse to meaningfully use as a base space, so we wish to extend it.  This leads to some larger sets:
\begin{itemize}
\item $\M$, a subset of $\Grass$ with a projectively flat connection
\item $\Fbig$, a set of unitary operations including $\Fsmall$ but acting on $\M$.  $\Fbig$ must have the property that
\begin{equation}
\M = \Fbig(C)
\end{equation}

\item $\Fbigpaths$, the set of paths in $\Fbig$:
\begin{equation}
\Fbigpaths := \{ F(t): [0,1] \rightarrow \Fbig, F(t) \text{ piecewise smooth} \}
\end{equation}

\end{itemize}

We currently have no precise formula for choosing $\M$ and $\Fbig$ given the fault-tolerant protocol.  This is the biggest missing piece of our program.

In general, we wish to choose $\M$ to be a topological space which is ``not too much larger'' than $\Mgraph$.  In particular, we want $\M$ to have the property that ``small'' loops in $\Mgraph$ should be homotopically trivial in $\M$ even though they might not be in $\Mgraph$.  ``Large'' loops may remain homotopically non-trivial. The precise meaning of the terms ``small loops'' and ``large loops'' has to be determined on a case-by-case basis.


The choice of $\Fbig$ is closely tied to the choice of $\M$.  Unitaries in $\Fbig$ should be implementable by paths that have similar but perhaps somewhat weaker fault-tolerance properties to those in $\Floop$.  One possible route for the program is to consider paths in $\Floop$ but with errors added, and then let $\Fbig$ be the set of unitaries that appear on the noisy paths.

Both $\M$ and $\Fbig$ play critical roles in our program.  We use $\M$ as the base space for $\Mvecbund$ and $\MPbund$, which are supposed to have projectively flat connections.  However, in order to move from a pre-connection to a well-defined projective connection, we also need to restrict the unitaries available to the set $\Fbig$.  Thus, the claim is that when $\M$ and $\Fbig$ are chosen properly, parallel transport is well-defined and flat on $\M$ for unitary paths involving operators from $\Fbig$.

\subsubsection{The restriction bundles $\Mvecbund$ and $\MPbund$}
\label{sec:restriction_bundle}

Consider the (smooth) inclusion map
\begin{equation}
\iota_{\M}: \M \hookrightarrow \Grass,
\end{equation}
where the map $\iota_{\M}$ is the standard inclusion map obtained by viewing $\M$ as a subset of $\Grass$.  As described in Section \ref{sec:pullback}, we can pull back the ``big bundles'' over $\Grass$ to bundles over $\M$ along the base map $\iota_{\M}$.  We call the pullback of $\bigvecbund$ along $\iotaM$ the \emph{restriction vector bundle} over $\M$, and denote it by $\Mvecbund$.  Similarly, the pullback of $\bigPbund$ is called the \emph{restriction principal bundle} over $\M$ and is denoted $\MPbund$.

The pullback is generally defined for any continuous (or smooth, depending on what kind of bundles we are dealing with) map, and not just for injective ones.  But since $\iota_{\M}$ is injective, the bundles resulting from the pullback along $\iota_{\M}$ will be embeddable in the respective ``big bundles'', fibre by fibre.  In fact, the restriction bundles over $\M$ obtained this way, as a topological space, consist exactly of those fibres over the points in the subspace $\M$.

\begin{figure}
\begin{tikzpicture}

\path (-1,0) coordinate (Ucenter);
\path (Ucenter) ++(0,1) coordinate (Uabove);
\draw (Ucenter) ++(-1,0) arc (180:360:1cm and 0.5cm);
\draw (Ucenter) circle (1cm);
\node [anchor=south] at (Uabove) {$\U{N}$};

\path (2,-1) coordinate (Mleft);
\path (Mleft) ++(3,0) coordinate (Mright);
\path (Mleft) ++(0,2) coordinate (topleft);
\path (Mright) ++(0,2) coordinate (topright);
\path (topleft) ++(1.5,0.7) coordinate (topmid);

\draw (Mleft) .. controls +(0,-0.2) and +(-0.2,0) .. ++(0.5,-0.5) .. controls +(0.2,0) and +(-0.5,0) .. ++(1,0.3) .. controls +(0.2,0) and +(-0.5,0) .. ++(1,-0.4) .. controls +(0.5,0) and +(0,-0.3) .. (Mright) .. controls +(0,0.2) and +(330:0.3) .. ++(-1.3,0.5) .. controls +(150:0.3) and +(0.2,0) .. ++(-0.7,0.2) .. controls +(-0.2,0) and +(0,0.2) .. (Mleft);

\draw (topleft) .. controls +(0,-0.2) and +(-0.2,0) .. ++(0.5,-0.5) .. controls +(0.2,0) and +(-0.5,0) .. ++(1,0.3) .. controls +(0.2,0) and +(-0.5,0) .. ++(1,-0.4) .. controls +(0.5,0) and +(0,-0.3) .. (topright) .. controls +(0,0.2) and +(330:0.3) .. ++(-1.3,0.5) .. controls +(150:0.3) and +(0.2,0) .. ++(-0.7,0.2) .. controls +(-0.2,0) and +(0,0.2) .. (topleft);

\draw (Mleft) -- (topleft);
\draw (Mright) -- (topright);

\node [anchor=north west] at (Mright) {$\M$};
\node [anchor=south] at (topmid) {$\Mvecbund$ or $\MPbund$};

\path (Ucenter) ++(-0.5,0.1) coordinate (Ustart);
\path (Ustart) ++(1,0.2) coordinate (Uend);
\path (Ustart) ++(0.5,-0.1) coordinate (Ulabel);

\draw (Ustart) .. controls +(45:0.8) and +(160:0.3) .. (Uend);
\draw (Ustart) [fill=black] circle (0.05);
\draw (Ustart) ++(1,0.2) [fill=black] circle (0.05);
\node at (Ulabel) {$U(t)$};

\path (Mleft) ++(0.6,0) coordinate (Mstart);
\path (Mstart) ++(1.5,0) coordinate (Mmid);
\path (Mmid) ++(0,-0.1) coordinate (Mmiddown);
\path (Mstart) ++(0.5,0.4) coordinate (Mend);
\path (Mstart) ++(0,1) coordinate (liftstart);
\path (liftstart) ++(1.5,0) coordinate (liftmid);
\path (liftstart) ++(0.5,0.505) coordinate (liftend);

\draw (Mstart) .. controls +(30:0.5) and +(90:0.3) .. (Mmid) .. controls +(270:0.3) and +(330:0.6) .. (Mend);
\draw (Mstart) [fill=black] circle (0.05);
\draw (Mend) [fill=black] circle (0.05);
\node [anchor=west] at (Mmiddown) {$\gamma(t)$};

\draw (liftstart) .. controls +(30:0.5) and +(190:0.3) .. ++(0.85,0.2);
\draw (liftstart) ++(1.05,0.21) .. controls +(10:0.3) and +(90:0.3) .. (liftmid) .. controls +(270:0.2) and +(310:0.2) .. ++(-0.55,0.205) .. controls +(130:0.2) and +(330:0.3) .. (liftend);
\draw (liftstart) [fill=black] circle (0.05);
\draw (liftend) [fill=black] circle (0.05);
\node [anchor=west] at (liftmid) {$\lgamma(t)$};

\draw [dotted] (Mstart) -- (liftstart);
\draw [dotted] (Mend) -- (liftend);

\path (Uend) ++(0.2,-0.2) coordinate (Uendnw);
\path (Mstart) ++(-0.2,0.05) coordinate (Mstarte);
\draw [dashed, ->] (Uendnw) -- (Mstarte);

\path (Uend) ++(0.2,0) coordinate (Uendw);
\path (liftstart) ++(-0.2,0) coordinate (liftstarte);
\draw [dashed,->] (Uendw) -- (liftstarte);

\end{tikzpicture}
\caption{A unitary path $U(t)$ induces a path $\gamma(t)$ in the base space $\M$ and a path $\lgamma(t)$ in the fibre bundle $\Mvecbund$ or $\MPbund$.  By definition, $\lgamma(t)$ is a lift of $\gamma(t)$.  The induced path $\gamma(t)$ may be self-intersecting even when the original unitary path $U(t)$ is not.}
\label{fig:unitarypath}
\end{figure}
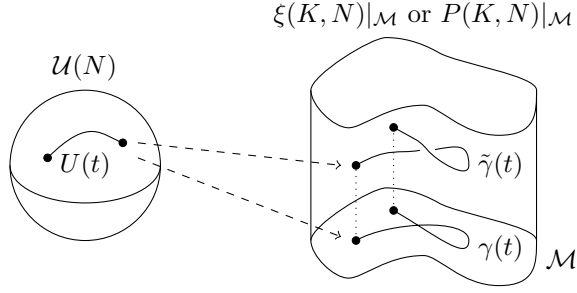

\subsection{Parallel transport structures and flat projective connections on $\Mvecbund$ and $\MPbund$}
\label{sec:connectionM}

\subsubsection{Partial connections and their classification}

The set $\Mgraph$ is the object given to us by the fault-tolerant protocol, but parallel transport on $\Mgraph$, even if consistent, may only give a \emph{partial connection}, telling us how to perform parallel transport along a limited set of paths in $\M$ in the case that $\M \neq \Mgraph$.  Parallel transport using this ``connection'' makes sense when we restrict attention only to $\Mgraph$, but this may not be a sufficiently well-behaved space, so to talk about (projective) flatness, we wish to think about all of $\M$ and use an extended version $\Fbigpaths$ of $\Fpath$.

We can distinguish between two senses in which a partial connection may be ``projectively flat'':
\begin{Definition}
\label{def:FPP}
A partial connection on $\Mgraph$ is a \emph{strong flat projective partial, or strong FPP, connection} over $\Mgraph \subseteq \M$ if it can be extended to a flat projective connection on $\Mvecbund$ and $\MPbund$.
If it can be extended to a flat connection (not just a projective connection), then it is a \emph{strong flat partial, or strong FP, connection}.  A partial connection on $\Mgraph$ is a \emph{weak FPP connection} if the modified restricted holonomy group $\Hol_{C}^{'0}(\HE{})$ (for $C \in \Mgraph$) defined below:
\begin{equation*}
\Hol_{C}^{'0}(\HE{}) := \{\Gamma(\gamma) \;|\; U(t) \in \Fpath,\ \gamma(t) = U(t)(C) \text{ contractible in } \M \}
\end{equation*}
is projectively trivial (i.e., is contained in $\Cstar$ or $\U{1}$).
If $\Hol_{C}^{'0}(\HE{})$ is trivial (not just projectively trivial), then it is a \emph{weak flat partial, or weak FP, connection}.
\end{Definition}

A weak FPP connection is only defined along paths induced by $\Fpath$ --- it may not be extendable to all of $\M$ --- and we only use $\M$ to determine if a loop is topologically trivial or not.  A strong FPP connection is originally defined only on paths $\gamma(t)$ lying in $\Mgraph$, but we can extend it to all of $\M$, using $\Fbigpaths$ instead of $\Fpath$.

To show that a partial connection is a strong FPP, we must define the larger set of allowed unitary paths $\Fbigpaths \supset \Fpath$ so that any path in $\M$ starting at $C$ can be derived from a unitary path in $\Fbigpaths$.  Then we use the natural pre-connection for $\Fbigpaths$ to define a pre-connection on $\Mvecbund$ and $\MPbund$.  If the pre-connection defines a flat projective connection on all of $\M$, then we had a strong FPP connection over $\Mgraph$.

For transversal gates (Section~\ref{sec:transversal}), the projective connection is naturally defined on a manifold.  It makes sense for this example to choose $\Mgraph = \M$, so we do not have to deal with partial connections.  For toric codes (Section~\ref{sec:topologicalFT}), $\Mgraph$ is not a manifold and we naturally define only a partial connection.  However, we will show that it is actually a strong FP connection, namely it can be extended to a flat connection on some suitable $\M$.  For more general cases, it might be necessary to resort to weak FPP connections.

\subsubsection{Main conjecture}
\label{sec:flatconnection_conj}

We now state the main conjecture of the paper:

\begin{Conjecture}
For any set $\Fpath$ of fault-tolerant unitary paths that comes from a fault-tolerant protocol, involving a reference code $C$ and a set $\Floop$ of fault-tolerant unitary loops, the natural proposals from Section~\ref{sec:proposal} for pre-connections on $\Mgraph$ may be extended to a well-defined projective connections in $\Mvecbund$ and $\MPbund$ when
restricted to an appropriately chosen subset $\M \supset \Mgraph$ of $\Grass$ and an appropriate subset of unitary paths $\Fbigpaths \supset \Fpath$; furthermore, these extended connections are projectively flat.
\label{conj:flatconnection}
\end{Conjecture}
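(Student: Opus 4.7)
The plan is to attack this conjecture by abstracting the common structural features from the two worked examples (transversal gates in Section~\ref{sec:transversal} and the toric code in Section~\ref{sec:topologicalFT}) and then identifying which hypotheses on the triple $(\Fpath, C, \Errors)$ force both (a) well-definedness of the pre-parallel transport of Section~\ref{sec:proposal} and (b) projective flatness of the resulting connection. Because the choice of $\M$ and $\Fbigpaths$ is intrinsically protocol-dependent, the ``proof'' is really a template: it produces a genuine flat projective connection once one has verified certain compatibility conditions between $\Fpath$, the code, and the submanifold $\M$. The special results Theorems~\ref{thm:flatconnection.transversal} and \ref{thm:toricflat} should be recoverable as instances of this template.

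First I would establish well-definedness of the pre-connection on $\Mgraph$. Suppose $\gamma(t)$ is a path in $\Mgraph$ realized in two ways as $\gamma(t) = U_1(t)(C) = U_2(t)(C)$ with $U_1, U_2 \in \Fpath$ and $U_1(0) = U_2(0) = \idmatrix$. Then $V(t) := U_2(t)^{-1} U_1(t)$ fixes the subspace $C$ at every time, so it descends along the stabilizer projection to a path of logical unitaries on $C$. The pre-parallel transports induced by $U_1$ and $U_2$ agree precisely when this logical path is scalar, i.e., lies in $\U{1} \cdot \idmatrix_C$. The key lemma to prove is therefore: the fault-tolerance hypotheses defining $\Fpath$, together with the requirement that codewords remain correctable throughout the evolution, force any such ``stabilizer ambiguity'' $V(t)$ to act as a phase on $C$. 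Intuitively, a non-scalar ambiguity would amount to a hidden logical gate executable while the codespace is held fixed, and the error-correction and locality constraints built into $\Fsmall$ should rule this out except up to a global phase.

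Next I would tackle projective flatness and the extension of the pre-connection from $\Mgraph$ to $\M$. The extension $\Fbigpaths \supset \Fpath$ must be just large enough that every piecewise smooth path in $\M$ based at $C$ is realized by some unitary path, while still obeying the stabilizer-triviality condition just established. Given a loop $\gamma \in \Cp{\M}$ which is contractible in $\M$, one lifts a contracting homotopy $H(s,t)$ slice by slice using the extended pre-connection. The composition, reversal, reparametrization, and shortening properties (\ref{item:pathcomposition})--(\ref{item:shortening}) of $\Fbigpaths$ ensure that the slicewise lifts piece together continuously in $\Mvecbund$ and $\MPbund$, so the endpoint of the lift varies continuously with $s$ within the fibre; since the homotopy contracts $\gamma$ to a constant loop whose lift is the identity, the holonomy around $\gamma$ must lie in the connected component of the identity in the centre of the structure group, i.e., in $\U{1} \cdot \idmatrix$. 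This yields projective flatness on all of $\M$.

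The hard part, and the reason the statement remains a conjecture rather than a theorem, is that there is no a priori recipe for constructing $\M$ and $\Fbigpaths$ in an arbitrary protocol. One needs $\M$ large enough that ``small'' loops in $\Mgraph$ whose bundle holonomy is already projectively trivial become genuinely contractible, yet small enough that homotopically nontrivial loops still carry nontrivial logical monodromy; simultaneously, $\Fbigpaths$ must cover $\M$ by unitary paths without breaking the stabilizer-triviality lemma. Balancing these two conditions is exactly what is done by hand in the transversal case (via the $\U{K}$-orbit structure together with the Eastin--Knill constraint) and in the toric case (via the hard-core configuration space of defects). A uniform construction that produces a compatible $(\M, \Fbigpaths)$ from an arbitrary fault-tolerant $\Fpath$ is the principal obstacle to upgrading the conjecture to a theorem.
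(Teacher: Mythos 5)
This statement is explicitly a \emph{Conjecture} in the paper, not a theorem; the paper proves it only for two worked examples (Theorems~\ref{thm:flatconnection.transversal} and~\ref{thm:toricflat}) and reduces the general question, via Criterion~\ref{prop:flat_condition}, to a hypothesis that must be verified protocol by protocol. Your proposal recognizes the conjectural status and correctly identifies the choice of $\M$ and $\Fbigpaths$ as one obstacle. However, there are two genuine gaps that make the sketch fall short even as a ``template.''

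First, the ``key lemma'' — that a stabilizer ambiguity $V(t) = U_2(t)^{-1}U_1(t)$ fixing $C$ pointwise must act as a phase — is asserted, not proved, and the proffered intuition does not do the work. This is precisely the hard content. In the transversal case it is Lemma~\ref{lemma:trivialaction}, the Eastin--Knill result that the identity component of $\FL$ acts projectively trivially, which requires the nontrivial hypothesis that the code has distance $\geq 2$ and is specific to the Lie-group structure of $\F$. In the toric case it follows instead from the discreteness of phases of string operators. Neither of these comes for free from the axioms (\ref{item:trivialpath})--(\ref{item:reachF}) on $\Fpath$ or from abstract ``error-correction and locality constraints.'' A general $\Fsmall$ could contain a continuous family of non-scalar logical gates preserving $C$, and nothing in your argument rules this out.

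Second, the flatness argument in your Step 2 is circular. You lift a contracting homotopy $H(s,t)$ and argue that since the holonomy around the slice loops varies continuously from $\Gamma(\gamma)$ at $s=0$ to $\idmatrix$ at $s=1$, the holonomy $\Gamma(\gamma)$ ``must lie in the connected component of the identity in the centre of the structure group.'' But $\U{K}$ is connected, so the connected component of the identity is all of $\U{K}$; continuity gives you nothing. The restriction to the \emph{centre} is exactly what you are trying to prove, not something you can invoke. The paper's Criterion~\ref{prop:flat_condition} avoids this by taking as \emph{hypothesis} that any two $\Fbigpaths$-paths with the same endpoints in a simply-connected $\N$ agree up to phase, and then \emph{building} the connection from a canonical local trivialization; the paper's Theorem~\ref{thm:flatconnection.transversal} proves this hypothesis in the transversal case via the homotopy lifting property for the Lie-group quotient $\F \to \F/\FL$, which is structure you do not have in the abstract setting. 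Your sketch does not supply either the hypothesis or a substitute for that structure, and so does not yield projective flatness.
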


Our examples both produce strong FPPs, so we have phrased Conjecture \ref{conj:flatconnection} to say that we will always get strong FPP connections, but it is possible that in more general cases, it has to be weakened a bit to allow weak FPPs.

By restricting ``parallel transport'' to be defined via a properly chosen set $\Fpath$ of fault-tolerant unitary paths in $\U{N}$, we believe that the problems noted in Section~\ref{sec:connection_failure} vanish, and the pre-connection becomes an actual (projective) connection.  We will show that the conjecture is true for the examples given in Sections \ref{sec:transversal} and \ref{sec:topologicalFT}.  It is quite natural for a fault-tolerant protocol to yield a flat projective connection: A fault-tolerant protocol should be robust against small deviations from the ideal implementation of the gates, which are created via evolutions from $\Floop$, and a flat projective connection indicates that small distortions of the path of unitaries $F(t)$ should not change the logical gate.  This physical intuition leads us as well to the converse conjecture, Conjecture \ref{conj:FTprotocol}, which we state in the conclusions section.

\subsubsection{Criteria for projective flatness of the proposed (extended) connection on $\M$}

The following provides a useful criterion to determine if the pre-connection defined in Section~\ref{sec:proposal} is a flat projective connection.

\begin{Criterion}
\label{prop:flat_condition}
Let $W_1, W_2 \in \M$ be contained in a simply-connected subset $\N \subseteq \M$.  Suppose that for any such $W_1$, $W_2$, and $\N$, and any pair of unitary paths $U_1(t), U_2(t) \in \Fbigpaths$ such that $U_1(0) = U_2 (0) = \idmatrix$, $U_1(1)(W_1) = W_2 = U_2(1)(W_1)$, and $U_1(t)(C), U_2(t)(C) \in \N$ for all $t \in [0,1]$, and for any initial condition $(C,w)$ or $\beta$ in the fibre at $W_1$, we have that
\begin{align}
\eta^{\{U_1\}} (1) &= \xi \cdot \eta^{\{U_2\}} (1), \label{eqn:flat_condition_vect} \\
\lambda^{\{U_1\}} (1) &= \xi \cdot \lambda^{\{U_2\}} (1). \label{eqn:flat_condition_prin}
\end{align}
where $\xi$ is a nonzero complex number. When equation~(\ref{eqn:flat_condition_vect}) is true, we work with $\Mvecbund$, and when equation~(\ref{eqn:flat_condition_prin}) is true, we work with $\MPbund$.  When both are true, we can work with either bundle.  Then the natural pre-connection on the bundle defines a projective connection on the bundle over $\M$ and the projective connection is projectively flat.  If $\xi = 1$ always, then we have a flat connection rather than a flat projective connection.  When $\Mgraph \neq \M$, we also have a strong FPP (or FP connection when $\xi = 1$) on $\Mgraph$.
\end{Criterion}

\begin{proof}
In general, there are many unitaries $U$ such that $U(W_1) = W_2$.  Two such unitaries $U$ and $U'$ differ by their action on the subspace, $U = U' V$, with $V \in \Ub{W_1}$. However, when the hypothesis is true, we have a way to uniquely choose such a unitary $U$ (up to phase $\xi$) by focusing on subspace paths within $\N$: Suppose we have two paths $\gamma_1$ and $\gamma_2$ within $\N$ with the same endpoints, and let $U_1(t)$ and $U_2(t)$ be lifts of $\gamma_1$ and $\gamma_2$ in $\Fbigpaths$ respectively, satisfying $U_1(0) = \idmatrix = U_2(0)$.  If $U$ and $U'$ are the endpoints of the unitary paths, $U = U_1(1)$ and $U'=U_2(1)$, then by the hypothesis of the proposition, they must transform the fibre in the same way (up to a phase), and therefore $V = \xi \idmatrix$.  If we fix $W_1$ and let $W_2$ vary over $\N$, we get a canonical choice of $U$ (up to a phase) for all points in $\N$.  This canonical choice of $U$ then determines a canonical local trivialization of the fibre bundle over $\N$.  We can then use this local trivialization to define the projective connection in the obvious way within $\N$, in the sense of horizontal subspaces for the Ehresmann connection.  When $\xi = 1$, we get a regular (non-projective) connection.

This conclusion holds for any simply-connected $\N$.  If $\N_1$ and $\N_2$ are two simply-connected regions and $\N_1 \cap \N_2$ is connected, then $\N_1 \cup \N_2$ is simply connected as well, and the projective connection is well-defined on the union.\footnote{For intuition, it suffices to imagine $\N_i$ as disks.}  In particular, this means the canonical choice of projective connection is consistent between $\N_1$ and $\N_2$.  By breaking the coordinate chart of $\M$ up into simply-connected sets whose pairwise intersections are connected, we see that the projective connection is well-defined globally. (If $\M$ is not connected, we can treat each connected component separately.)

Furthermore, with this condition, the projective connection is projectively flat: Consider a contractible loop $\gamma$ based at $W_1$, and let $\N$ consist of $\gamma$ union the interior of $\gamma$ (the points swept out when we contract $\gamma$ to nothing).  We can break up $\gamma = \gamma_2 * \gamma_1^{-1} $, where $\gamma_1$ and $\gamma_2$ are two paths from $W_1$ to some other $W_2 \in \N$.  Then there exist unitary paths $U_1 (t)$ and $U_2(t)$ realizing $\gamma_1$ and $\gamma_2$.  By the hypothesis, the paths $U_1 (t)$ and $U_2(t)$ perform the same parallel transport on any fibre element $\beta$ (for $\MPbund$) up to a phase.  The loop $\gamma$ thus turns $\beta$ into $\xi \beta$.  This is true for arbitrary $\gamma$, so the projective connection is projectively flat (or flat if $\xi = 1$).  The same argument works for $\Mvecbund$ as well.
\end{proof}

\subsection{Fault-tolerant logical gates and the monodromy groups}
\label{sec:FTmonodromy}

If we think of implementing a logical gate in a real system, we are likely to do so by turning on a Hamiltonian $H(t)$ for some period of time and then turning it off.  That means the gate is not implemented all at once; instead, we perform a path
\begin{equation}
F(t) = \timeorder e^{-i \int_0^t H(t') \, dt'}.
\end{equation}
In this case, $F(t)$ is a path in $\Fpath$, and since $F(0)( C) = F(1)( C) = C$ by virtue of being a logical gate, $F(t)$ is in fact a $\Fsmall$-loop based at $C$, namely $F(t) \in \Floop$.

In cases where Conjecture~\ref{conj:flatconnection} holds, it follows that if the loop $F(t)(C)$ is homotopically trivial in $\M$, then it performs a trivial (up to a phase) fibre automorphism on $\pi^{-1}(C)$.  In other words, it performs the identity logical gate.  The non-trivial logical transversal gates correspond to some of the homotopically non-trivial loops in $\M$, so we can with justice say that all fault-tolerant gates in the set $\Floop$ are in fact topological.  Since the connections in $\Mvecbund$ and $\MPbund$ are projectively flat, $\pi_1 (\M)$, the first fundamental group of $\M$, has a monodromy action on the fibres, given by
\begin{eqnarray}
\GV &=& \{\eta^{\{F\}}(1) \;|\; F(t) \in \Floop\} \\
\GP &=& \{\lambda^{\{F\}}(1) \;|\; F(t) \in \Floop\}
\end{eqnarray}
in the two bundles respectively.

As a consequence of Conjecture \ref{conj:flatconnection}, we obtain the following corollary relating geometry/topology and fault-tolerant logical gates.

\begin{Corollary}
\begin{equation}
\GV = \{\text{Logical gates implemented via paths in $\Fsmall$}\} = \GP
\end{equation}
\label{cor:FTlogicalgates}
\end{Corollary}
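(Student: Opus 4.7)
The plan is to observe that under Conjecture \ref{conj:flatconnection} the natural pre-connection descends to a projectively flat connection, so the monodromy at the base point $C$ is a well-defined (up to phase) group $\GV$ (resp.\ $\GP$) of fibre automorphisms of $\pi^{-1}(C)$. Once this is in place, the corollary becomes essentially a matter of bookkeeping: the fibre transformation produced by an $\F$-loop $F$ is manifestly the same unitary that $F$ implements on $C$ as a logical gate.

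The first step is to identify the monodromy action concretely. For any $F(t) \in \Floop$, equations (\ref{eq:vectorbundlelift}) and (\ref{eq:principalbundlelift}) give
\begin{equation*}
\eta^{\{F\}}(1) = (F(1)(C),\, F(1)(w)) = (C,\, F(1)(w)),
\qquad
\lambda^{\{F\}}(1) = F(1)(\beta),
\end{equation*}
using that $F(1)(C) = C$. Thus the induced map on the fibre over $C$ is, in both bundles, the action of the single unitary operator $F(1)|_{C}$: on a codeword $w \in C$ in $\Mvecbund$, and on an orthonormal $K$-frame $\beta$ for $C$ in $\MPbund$. By the discussion of Section \ref{sec:QECCs}, $F(1)|_{C}$ is precisely the logical gate implemented by the $\Fsmall$-loop $F$.

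The second step is to verify both inclusions in $\GP = \{\text{logical gates implementable via paths in } \Fsmall\}$ (the argument for $\GV$ is identical). For the $\supseteq$ direction, any logical gate implementable via an $\Fsmall$-path is by definition $F(1)|_{C}$ for some piecewise smooth $F(t) \in \Fpath$ with $F(0) = \idmatrix$ and $F(1)(C) = C$; such an $F$ lies in $\Floop$, and its monodromy action on $\pi^{-1}(C)$ is exactly $F(1)|_{C}$. The $\subseteq$ direction is the same statement read backward. Finally, $\GV = \GP$ because $\bigvecbund$ and $\bigPbund$ are associated bundles sharing the structure group $\U{K}$, and the monodromy element attached to $F$ is in both cases the same unitary $F(1)|_{C}$, merely applied to different kinds of fibre elements.

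The main obstacle is really nothing more than confirming that $F \mapsto F(1)|_{C}$ descends to an assignment on homotopy classes of loops in $\M$ that is well-defined modulo a global phase --- but this is exactly the content of projective flatness in Conjecture \ref{conj:flatconnection}, the hypothesis under which the corollary is stated. In the weaker-FPP setting of Definition \ref{def:FPP}, the same argument still goes through, since the identification of $F(1)|_{C}$ with a logical gate on $C$ does not itself require the stronger extendability property.
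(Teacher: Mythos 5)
Your proof is correct and matches the approach the paper takes, which is really just the explicit version of the discussion immediately preceding the corollary: the paper does not give a formal proof but instead observes that the monodromy maps are $\eta^{\{F\}}(1) = (C, F(1)(w))$ and $\lambda^{\{F\}}(1) = F(1)(\beta)$, identifies these with the logical action of $F(1)$ on $C$, and invokes Conjecture~\ref{conj:flatconnection} for well-definedness on homotopy classes. You have correctly spelled out both inclusions and the identification between the two bundles via the shared structure group; no gaps.
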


Of course, it is possible that there are topologically non-trivial loops in $\M$ that nonetheless give us trivial logical gates.  See Section~\ref{sec:fivequbit} and Figure~\ref{fig:fivequbit} for an example.

\section{Example I: Transversal gates}
\label{sec:transversal}


We have seen in Section \ref{sec:squdit_errors} that minimum distance codes (Definition \ref{def:distance}) and transversal gates (Definition \ref{def:transversal}) work together in a fault-tolerant protocol to protect against $s$-qudit error models.  In this section, we investigate how they fit into our geometric description of fault tolerance.

\subsection{Single-block and multiple-block transversal gates}

In the single code block case, we work with an $n$-qudit (physical) Hilbert space of the most general form, namely $\Hilbphy=\bigotimes_{j=1}^{n} \Cvect{d_j}$ where $d_j$ is the dimension of the $\ord{j}$ \emph{transversal component}.  Therefore, the total physical dimension is $N = \prod_{j=1}^{n} d_j$.  Usually, all $d_j$ are the same and are equal to $d$, but this is not needed for our results.

Often, instead of a single block, we work with $M$ blocks of identical QECCs, each of which is an $((n,K',\delta))$ qudit code $C_1$ with physical qudit dimensions $\vec{D} = (D_1, D_2, ..., D_n)$.  It follows that the total physical Hilbert space is
\begin{equation}
\Hilbphy = (\bigotimes_{j=1}^{n} \Cvect{D_j})^{\otimes M} \cong \bigotimes_{j=1}^{n} (\Cvect{D_j})^{\otimes M},
\end{equation}
and the joint code space is $ C = C_1\,^{\otimes M} \subset \Hilbphy$, where each copy of $C_1$ sits in a different tensor factor
in the first decomposition of $\Hilbphy$.  Transversal gates interact the $\ord{j}$ registers $\Cvect{D_j}$ of each of the $M$ code blocks.

It will prove convenient to view $M$ blocks of identical QECCs as a single QECC with larger qudits, as shown in Figure~\ref{fig:transversalmultiple}.  To do this, we use the second tensor product decomposition $\Hilbphy = \bigotimes_{j=1}^{n} (\Cvect{D_j})^{\otimes M}$ and view each transversal component $(\Cvect{D_j})^{\otimes M}$ as a new qudit of dimension $d_j = D_j^M$.  This way, we obtain a single $((n, K,\delta))$ qudit QECC $C$ with physical qudit dimensions $\vec{d} = (d_1, d_2, ..., d_n)$ and $K = K'^M$.  Since the physical Hilbert space dimension for each code block $C_1$ is $N' = \prod_{j=1}^{n} D_j$, the joint physical Hilbert space dimension of the $M$ blocks is $N = N'^M$.  Note that the distance $\delta$ and the number $n$ of qudits needed remain unchanged, although the size of the qudits has changed.

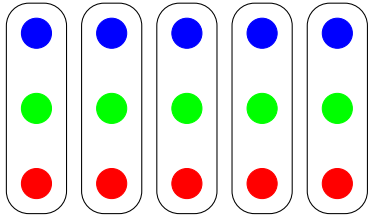
\begin{figure}
\begin{tikzpicture}

\foreach \x in {1,..., 5}
	\draw [fill, red] (\x,1) circle (2mm);
\foreach \x in {1,..., 5}
	\draw [fill, green] (\x,2) circle (2mm);
\foreach \x in {1,..., 5}
	\draw [fill, blue] (\x,3) circle (2mm);

\foreach \x in {1,...,5}
	\draw [rounded corners=0.3cm] (\x,0) ++(-0.4,0.6) rectangle ++(0.8,2.8);

\end{tikzpicture}
\caption{A transversal gate on multiple blocks of a QECC can be considered as a transversal gate on a single block of a QECC with larger physical qudits.  Each circle represents a qudit, and each row represents a block of the code.  We group together qudits in the same column to make the larger qudits.   The $\ord{j}$ qudits from the various code blocks are interacted together in a transversal gate, so a transversal gate can also be viewed as acting separately (as a tensor product) on the individual larger qudits.}
\label{fig:transversalmultiple}
\end{figure}

A transversal gate on $M$ blocks of the code $C_1$ therefore acts in the form $\bigotimes U_j$, where each tensor factor acts on the $d_j$-dimensional $\ord{j}$ qudit of the new code $V$.  Therefore, in the following sections, it is sufficient to confine our attention to single-block transversal gates.  Given instead multiple blocks of identical QECCs, we shall automatically define $C$, $\vec{d}$, $K$ and $N$ as above.

\subsection{$\F$ and $\M$ for transversal gates}
\label{sec:QECproperties.transversal}

\subsubsection{Definitions of $\F$ and $\M$}

When we consider fault-tolerant transversal gates, the natural set of allowed unitaries are simply the transversal gates:
\begin{equation}
\F = \left\{ \bigotimes_{j=1}^{n} U_j \;\bigg|\; U_j \in \Ub{\Cvect{d_j}} \right\}.
\label{eqn:transversal}
\end{equation}
The set $\F$ is a compact Lie subgroup of the full unitary group $\U{N}$.  We define the set $\Ftranspath$ of allowed/fault-tolerant unitary evolutions to be all piecewise smooth paths in $\F$.  It is easy to see that $\F$ is $\F$-path-connected to the identity $\idmatrix$.

In Conjecture \ref{conj:flatconnection}, we mention the subset $\M$ of the Grassmannian where we may hope to establish a flat projective connection for the restriction vector bundle $\Mvecbund$ and the restriction principle $\U{K}$-bundle $\MPbund$.  For transversal gates, when $C$ is the initial code space, we can define $\M$ as:
\begin{equation}
\M = \Mgraph := \F(C) = \{F(C) \;|\; F \in \F \} \subset \Grass.
\label{eqn:defM.transversal}
\end{equation}
Since $\Mgraph$ is already a nice and ``big enough'' manifold, we can let $\M = \Mgraph$ for transversal gates.

\begin{Proposition}
If $C, C' \in \M$, then the distance of $C'$ is the same as the distance of $C$.
\label{prop:singledistance}
\end{Proposition}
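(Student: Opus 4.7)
The plan is to show that the code distance is invariant under the action of any transversal unitary $F = \bigotimes_{j=1}^n U_j$ sending $C$ to $C'$. The core observation is a weight-preservation property for conjugation by transversal unitaries: if $P = \bigotimes_j P_j$ is a Pauli of weight $w$ (with non-identity tensor factors at positions in a set $S \subseteq \{1,\ldots,n\}$, $|S|=w$), then
\begin{equation}
F^{-1} P F \;=\; \bigotimes_j \bigl(U_j^{-1} P_j U_j\bigr).
\end{equation}
For $j \notin S$ the factor is $\idmatrix$, while for $j \in S$ the factor $U_j^{-1} P_j U_j$ is an operator on the $j$-th qudit with $\tr(U_j^{-1} P_j U_j) = \tr(P_j) = 0$ (using the standard normalization for qudit Paulis). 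Hence its expansion in the qudit Pauli basis involves only non-identity Paulis, and so $F^{-1} P F$ expands as a linear combination $\sum_k c_k Q_k$ in which every $Q_k$ is a Pauli of weight exactly $w$ with support equal to $S$.

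With this lemma in hand, the proof is a one-line transport of the correctability condition. First, I would fix an orthonormal basis $\{\ket{\psi_i}\}$ of $C$ and set $\ket{\psi'_i} := F \ket{\psi_i}$, which is an orthonormal basis of $C'$. For any Pauli $P'$ of weight $w < \distcodes{C}$, write $F^{-1} P' F = \sum_k c_k Q_k$ as above, so that
\begin{equation}
\bra{\psi'_i} P' \ket{\psi'_j} \;=\; \sum_k c_k \bra{\psi_i} Q_k \ket{\psi_j} \;=\; \Bigl(\sum_k c_k f(Q_k)\Bigr)\, \delta_{ij},
\end{equation}
where each $f(Q_k)$ exists and is $(i,j)$-independent by Theorem~\ref{thm:correction}/Definition~\ref{def:distance} applied to $C$ (note $\mathrm{wt}(Q_k) = w < \distcodes{C}$). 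Thus every Pauli of weight $< \distcodes{C}$ satisfies the correctability relation on $C'$, giving $\distcodes{C'} \geq \distcodes{C}$.

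The reverse inequality follows by symmetry: $F^{-1}$ is also transversal and sends $C'$ back to $C$, so the identical argument yields $\distcodes{C} \geq \distcodes{C'}$. Combining, $\distcodes{C'} = \distcodes{C}$, proving the proposition. The only step that is more than bookkeeping is the weight-preservation lemma, and even that is essentially the traceless-Pauli observation; I do not anticipate any real obstacle, although some care is needed to state the lemma for the general qudit Pauli basis used in the paper (rather than only for qubits), so that the "identity component vanishes by tracelessness" argument applies verbatim.
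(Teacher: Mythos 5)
Your proof is correct and follows essentially the same approach as the paper's: conjugate a low-weight Pauli by the transversal unitary, observe it stays supported on the same qudits, expand it in Paulis of weight less than $\distcodes{C}$, transport the correctability relation, and finish by symmetry. The only difference is that you invoke tracelessness to conclude the conjugated Pauli has weight \emph{exactly} $w$ with support \emph{exactly} $S$, whereas the paper only needs (and only states) the weaker fact that the support is contained in $S$, which follows immediately from transversality without any trace argument.
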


\begin{proof}
For a pair of codes $C, C' \in \M$, by the definition of $\M$, $\exists\, U \in \F$ such that $U(C) = C'$.
Let $\delta(C)$ be the distance of $C$ and $\delta(C')$ the distance of $C'$.  We prove that $\delta(C') \geq \delta(C)$ by showing, according to Definition \ref{def:distance} of code distance, that for any Pauli $E'$ of weight less than $\delta(C)$, $\ket{\psi_i}$ and $\ket{\psi_j}$ orthonormal codewords in the code $C'$, $\bra{\psi_i'} E' \ket{\psi_j'} = f'(E') \delta_{ij}$ for some function $f'$ independent of $i$ and $j$.

The key observation is that, for any Pauli $E'$ of weight less than $\delta(C)$ and any $U \in \F$, $U^\dagger E' U$ acts on the same physical qudits as $E'$ did, and therefore, we can write $U^{\dagger}E'U = \sum_m c_m E_m$, where each $E_m$ is a Pauli operator of weight $< \delta(C)$.  Then the conclusion follows easily from the definition of distance (Definition~\ref{def:distance}):  
\begin{align}
\bra{\psi_i'} E' \ket{\psi_j'} &= \bra{\psi_i} U^{\dagger} E' U \ket{\psi_j} \\
&= \bra{\psi_i} \sum_m c_m E_m \ket{\psi_j} = \sum_m c_m \bra{\psi_i} E_m \ket{\psi_j} \\
&= \sum_m c_m f(E_m) \delta_{ij} = f'(E') \delta_{ij}
\end{align}
if we let $f'(E') := \sum_m c_m f(E_m)$.  Here, $\ket{\psi_i}$ and $\ket{\psi_j}$ are orthonormal codewords in the code $C$, and $f(E_m)$ the functions that would appear in Definition \ref{def:distance} of code distance for the code $C$.  $\bra{\psi_i'} E' \ket{\psi_j'} = f'(E') \delta_{ij}$ for any Pauli errors of weight less than $\delta(C)$ means that $\delta(C') \geq \delta(C)$.

A similar argument, namely by switching the roles of codes $C$ and $C'$, gives us $\delta(C) \geq \delta(C')$.  Therefore, we have $\delta(C) = \delta(C')$.

\end{proof}

\begin{Remark}
The Lie group $\F$ acts on $\Grass$, and partitions it into orbits.  The orbit of $C$ under $\F$ is $\M$. 
We can get very different submanifolds $\M$ from different codes $C$.  In particular, for two codes of different distance, the submanifolds corresponding to them will always be non-intersecting.  It is also possible that two codes of the same distance are contained in disjoint $\F$ orbits.  Of course, not all the $\F$ orbits are equally interesting from the point of view of quantum error correction.  Many will contain codes of distance $1$, and as such are not even able to detect a single error.  Of the remaining $\F$ orbits, some will have larger distance than others.  Also, some families of codes are more easily ``constructible''\footnote{An example of being relatively constructible is if a code $C$ has a relatively simple encoding circuit, then the $\F$ orbit containing $C$ will contain codes which have similar constructibility.} than others.
\end{Remark}

\subsubsection{Various subgroups of $\F$ or $\U{n}$ and quotients thereof}

Here we define a few subgroups of $\F$ which we will use later.

Fix an $((n,K))$ QECC $C$, and let $N = \prod_{j=1}^{n} d_j$.  Let $P$ be the projector from $\Cvect{N}$ onto the code space $C$.  Define the set $\Logical$ of \emph{logical unitary operators} as
\begin{equation}
\Logical := \{ U \in \U{N} \;|\; (\idmatrix - P) U P = 0 \}
\end{equation}
This is the same as the condition that a unitary operator $U \in \Logical$ iff $U(C) = C$.  Lemma 1 of \cite{EastinKnill} says that $\Logical$ forms a group.

The set of $\FL$ of \emph{fault-tolerant logical unitary operators}, or in this case transversal logical unitary operators, is given by the intersection
\begin{equation}
\FL := \F \cap \Logical.
\label{eqn:defFL}
\end{equation}

There is a similar but slightly different definition of logical operators in the literature, which we shall call effective logical operators in this work.  When restricted to within the unitary group $\U{N}$, the \emph{effective logical unitary operators} are given by 
\begin{equation}
\effLogical :=  \quotient{\Logical}{(U_1 \sim U_2 \text{ if  $\exists\, \xi \in \Cstar$ s.t. } U_1(x) = \xi U_2(x)\ \forall x \in C)} = \quotient{\Logical}{\Logicalperp},
\end{equation}
where $\Logicalperp$ is defined as
\begin{equation}
\Logicalperp := \{U \in \Logical \;|\; U(e_i) = \xi \cdot e_i \text{ for all } 1 \le i \le K \text{; where } \xi \in \fieldC, |\xi|=1 \text{ independent of } i\}.
\end{equation}
Here, $\{e_i\}_{i=1}^{N}$ is an orthonormal basis for $\Hilbphy$ extending a fixed basis $\{e_i\}_{i=1}^{K}$ for $C$.

Similarly, let us introduce the physical notion of the set of \emph{effective fault-tolerant logical unitary operators},
\begin{equation}
\effFL := \quotient{\FL}{(F_1 \sim F_2 \text{ if  $\exists\, \xi \in \Cstar$ s.t. } F_1(x) = \xi F_2(x)\ \forall x \in C)} = \quotient{\FL}{\FLperp}, \label{eqn:effFL}
\end{equation}
where
\begin{equation}
\FLperp := \F \cap \Logicalperp \subseteq \FL.
\end{equation}
The sets $\FLperp$ and $\Logicalperp$ so obtained do not depend on the choice of basis.  It is an easy-to-check fact that the set $\FLperp$ is a closed subgroup of $\FL$.  Therefore, by Cartan's theorem, $\FLperp$ is a Lie subgroup and $\effFL$ is at least a coset space.  Furthermore, $\FLperp$ is normal in $\FL$: Any unitary $g \in \FL$ fixes the codespace $C$, that is, it acts independently on $C = \vspan{e_1, ..., e_K}$ and on $\C^{\bot} = \vspan{e_{K+1}, ..., e_N}$.  Therefore, if $F$ satisfies $F |_{C} = \xi \idmatrix$ and $F' = gFg^{-1}$, then $F' |_{C} = \xi \idmatrix$ as well.  ($\FLperp$ is a normal subgroup of $\FL$ but is \emph{not} a normal subgroup of $\F$.)  We deduce the following proposition.

\begin{Proposition}
$\FLperp$ is a normal Lie subgroup of $\FL$, and so the quotient $\effFL = \flatquotient{\FL}{\FLperp}$ has the structure of a Lie group.
\label{prop:FLperpNormalsubLieFL}
\end{Proposition}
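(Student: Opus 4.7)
The plan is to establish the three assertions bundled in the proposition in the following order: $\FLperp$ is closed in $\FL$ (and hence a Lie subgroup by Cartan's closed subgroup theorem), $\FLperp$ is normal in $\FL$, and the quotient inherits a Lie group structure from the standard quotient construction for closed normal subgroups of Lie groups.

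For the first step, I would rewrite the membership condition $F \in \FLperp$ as the conjunction of two closed conditions on $F \in \U{N}$. Letting $P$ be the orthogonal projector onto $C$, the condition $F(C) = C$ is equivalent to $[F, P] = 0$ (for $F$ unitary), and the additional condition $F|_C = \xi \idmatrix$ for some $\xi \in \U{1}$ is equivalent to $PFP = \tfrac{1}{K}\tr(PF)\, P$. Both equations are preserved under limits of continuous matrix-valued expressions, so $\FLperp$ is a closed subset of $\U{N}$ and in particular of $\FL$. Cartan's theorem then immediately upgrades $\FLperp$ to an embedded Lie subgroup.

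For normality, I would simply make rigorous the sketch already given in the paragraph preceding the proposition. Given $F \in \FLperp$ with $F|_C = \xi\idmatrix_C$ and any $g \in \FL$, the fact that $g$ and $g^{-1}$ each preserve $C$ means $gFg^{-1}$ also preserves $C$, and on $C$ its action is $g|_C \circ (\xi\idmatrix_C) \circ g^{-1}|_C = \xi\idmatrix_C$. Hence $gFg^{-1} \in \FLperp$, showing $\FLperp \triangleleft \FL$. Note this argument uses only that $\FL$ preserves $C$, not anything specific to transversal gates, so it works uniformly.

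For the last step, I would invoke the standard theorem that if $H$ is a closed normal subgroup of a Lie group $G$, then $G/H$ carries a unique smooth manifold structure making it a Lie group for which the projection $G \to G/H$ is a smooth submersion. Applying this with $G = \FL$ and $H = \FLperp$ completes the proof. The main potential obstacle is the closedness verification, specifically confirming that the scalar $\xi$ is forced to have $|\xi|=1$ rather than lie in $\Cstar$; this is automatic because $F$ is unitary and $F|_C$ is a scalar operator on a nonzero space, so $|\xi|$ is the operator norm of $F|_C$, which equals $1$. Everything else is essentially bookkeeping.
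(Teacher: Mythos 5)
Your proposal matches the paper's argument step for step: closedness of $\FLperp$ plus Cartan's theorem gives the Lie subgroup, normality follows because conjugation by $g \in \FL$ preserves both $C$ and the scalar-on-$C$ condition, and the quotient is then a Lie group by the standard construction. The one thing you add is an explicit verification of closedness (via $[F,P]=0$ and $PFP = \tfrac{1}{K}\tr(PF)\,P$, together with $\F$ itself being closed), which the paper leaves as an ``easy-to-check fact.''
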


Let us comment further on the physical interpretation of the quotient in equation (\ref{eqn:effFL}).  The set $\FLperp$ is analogous to the notion of stabilizer for stabilizer codes, whereas the set $\FL$ is analogous to the normalizer.  So these two sets are very naturally defined and important in the quantum information context.  Their quotient is analogous to the set of ``logical operators'' in the context of stabilizer code, which corresponds as we see to the \emph{effective} version in our language.  In particular, $\effFL$ is the set of distinct logical operations that can be implemented with transversal gates on the code $C$.

\subsubsection{Manifold structure of $\M$ for transversal gates}

The set $\M$ for transversal gates was defined in equation (\ref{eqn:defM.transversal}).  Here we show that $\M$ has a manifold structure.

From the definition of $\FL$ by equation (\ref{eqn:defFL}), we see that there is the set inclusion $\FL \subset \F$.  Let us rephrase $\M$ in terms of these two sets of operators.  Consider the action of $\F$ on the Grassmannian manifold $\Grass$.  Since $\M$ is the orbit of $C$ under this $\F$ action and $\FL$ is the stabilizer of $C$ under the same action, we have
\begin{equation}
\M \simeq \quotient{\F}{\FL}.
\label{eqn:Mquotient}
\end{equation}
%

Next, we set out to establish the fact that $\FL$ is a Lie subgroup of $\F$.  Lemma 2 of \cite{EastinKnill} reads: The logical operators contained in a Lie group of unitary operators form a Lie subgroup.  Namely, if $\A$ is any Lie subgroup of $\U{N}$, then $\AL := \A \cap \Logical$ is a Lie subgroup of $\A$.  Since $\F$ is a Lie group, it follows that $\FL = \F \cap \Logical$ is a Lie subgroup of $\F$.

Combining this fact with equation (\ref{eqn:Mquotient}), we immediately establish that $\M$ is isomorphic to a \emph{coset space} for a Lie subgroup $\FL$ inside a Lie group $\F$, and therefore by Cartan's theorem, $\M$ inherits the manifold structure of the coset space, culminating in the next theorem.

\begin{Theorem}
$\M$ is a submanifold of the Grassmannian manifold $\Grass$.
\label{thm:manifoldM.transversal}
\end{Theorem}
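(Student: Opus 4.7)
The plan is to combine the coset description $\M \simeq \F/\FL$ from equation~(\ref{eqn:Mquotient}) with the standard fact that an orbit of a compact Lie group action on a smooth manifold is an embedded submanifold. First, I would recall that $\F$ is a compact Lie group, being the direct product $\prod_{j=1}^{n} \U{d_j}$ of compact unitary groups, and that by Corollary~\ref{cor:smoothGrass} the $\U{N}$-action on $\Grass$ is smooth. Restricting this action to $\F$ gives a smooth orbit map $\varphi: \F \to \Grass$ defined by $\varphi(F) = F(C)$, whose image is exactly $\M$.

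Next, since $\FL$ is a closed Lie subgroup of $\F$ by Lemma~2 of~\cite{EastinKnill}, the map $\varphi$ is constant on $\FL$-cosets and so descends to a smooth map $\tilde{\varphi}: \F/\FL \to \Grass$. This descended map is injective because $\FL$ is exactly the stabilizer of $C$ under the $\F$-action. To promote $\tilde{\varphi}$ to an embedding, I need to show that it is an immersion. Writing $\fLie$ for the Lie algebra of $\F$ and $\mathfrak{l}$ for the Lie algebra of $\FL$, the differential at the identity coset sends a class $X + \mathfrak{l} \in \fLie/\mathfrak{l}$ to the velocity at $t=0$ of the curve $t \mapsto e^{tX}(C)$ in $\Grass$. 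This velocity vanishes precisely when $e^{tX}(C) = C$ for all small $t$, i.e.\ when $X \in \mathfrak{l}$, so the induced map on the quotient Lie algebra is injective. By $\F$-equivariance, $d\tilde{\varphi}$ is then injective at every point, making $\tilde{\varphi}$ an immersion.

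Finally, $\F/\FL$ is compact as a continuous image of the compact group $\F$, and $\Grass$ is Hausdorff, so a continuous injection $\F/\FL \hookrightarrow \Grass$ is automatically a topological embedding. Combined with the immersion property, this upgrades $\tilde{\varphi}$ to a smooth embedding, exhibiting $\M$ as an embedded submanifold of $\Grass$. The step with the real content is the immersion check (via the Lie algebra kernel computation); the rest is standard machinery, and the compactness of $\F$ is what spares us from having to settle for a merely immersed, rather than embedded, submanifold.
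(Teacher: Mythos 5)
Your proof follows the same core strategy as the paper --- identify $\M \simeq \F/\FL$, cite Lemma~2 of Eastin--Knill to get $\FL$ as a closed Lie subgroup, and invoke Cartan's theorem --- but you supply the steps the paper leaves implicit: explicitly checking that the descended orbit map $\tilde\varphi : \F/\FL \to \Grass$ is an injective immersion (via the Lie algebra kernel computation, which is correct because for a one-parameter group the orbit velocity vanishing at $t=0$ forces the curve to be constant), and then using compactness of $\F/\FL$ against the Hausdorffness of $\Grass$ to promote the injective immersion to an embedding. The paper's stated argument only establishes that the coset space $\F/\FL$ carries a manifold structure; it does not actually verify that the bijection with $\M \subset \Grass$ is an embedding, which is what ``submanifold of $\Grass$'' requires, so the extra work you do is genuinely needed for a complete proof rather than mere pedantry. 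One small inaccuracy worth flagging: $\F$ is not literally the direct product $\prod_{j} \U{d_j}$ but its image in $\U{N}$ under the tensor-product map, which has a nontrivial kernel of relative global phases; the quotient is still a compact Lie group, so your compactness argument goes through unchanged.
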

We write $\iota_{\M}: \M \rightarrow \Grass$.

\subsection{Establishing the flat projective connections on the restriction bundles}
\label{sec:flatconnection.transversal}

Here, we try to substantiate the claim that the pullback along $\iotaM$ of the pre-connections defined by $\F$, as proposed in Sections \ref{sec:connectionV} and \ref{sec:connectionP} for $\bigvecbund$ and $\bigPbund$ respectively, indeed become flat projective connections when we take $\F$ to be the set of transversal gates and $\M$ as defined by equation~(\ref{eqn:defM.transversal}).  We adopt some arguments from \cite{EastinKnill} to help us establish this claim.

The proof revolves around the set
\begin{equation}
\C := \text{The connected component of the identity in $\FL$},
\end{equation}
which also plays an important role in \cite{EastinKnill}.  

The following lemma is the main step of the proof of Theorem 1 in \cite{EastinKnill}, and also provides the main result we need in order to prove that the proposed pre-connection is a flat projective connection on $\M$:
\begin{Lemma}
The identity connected component $\C$ of $\FL$ acts projectively trivially, namely by $\xi \cdot \idmatrix$, on any code $C$ with distance $\ge 2$.
\label{lemma:trivialaction}
\end{Lemma}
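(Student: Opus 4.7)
The plan is to show that any $U \in \C$ acts on $C$ as a scalar multiple of the identity by combining the Lie-algebraic structure of $\FL$ with the weight-1 detection property implied by distance $\geq 2$.

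First I would identify the Lie algebra of $\FL$ explicitly. Since $\F \cong \prod_{j=1}^n \U{d_j}$, the Lie algebra $\fLie_{\F}$ consists of anti-Hermitian operators $iH$ with $H = \sum_{j=1}^n H^{(j)}$, where each $H^{(j)}$ is a Hermitian operator supported only on the $j$-th tensor factor. The Lie algebra of $\FL$ is the further subspace of such $iH$ for which $e^{itH}(C) = C$ for all $t$; differentiating this condition at $t=0$ shows that membership is equivalent to the requirement $(\idmatrix - P) H P = 0$, i.e.\ $HP = PHP$, where $P$ is the projector onto $C$.

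Next I would apply the distance-$\geq 2$ hypothesis to each single-qudit summand. Any single-qudit Hermitian operator $H^{(j)}$ is a real linear combination of weight-1 (generalized) Pauli operators. By Definition~\ref{def:distance}, every weight-1 Pauli $Q$ satisfies $P Q P = f(Q) P$ for some scalar $f(Q)$. Extending linearly gives $P H^{(j)} P = f_j \, P$ for some real scalar $f_j$. Summing over $j$ yields $P H P = (\textstyle\sum_j f_j) P =: f \cdot P$. Combined with the logical condition $HP = PHP$, this gives $HP = f \, P$, so $H$ restricted to $C$ equals $f \cdot \idmatrix_C$. Consequently $e^{iH}$ acts on $C$ as $e^{if} \idmatrix_C$, a phase.

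Finally I would globalize from the Lie algebra to $\C$ itself. Since $\F$ is compact and $\FL$ is closed in $\F$, both $\FL$ and its identity component $\C$ are compact Lie groups; $\C$ is connected by definition. For a compact connected Lie group the exponential map is surjective, so every $U \in \C$ can be written as $U = \exp(iH)$ for some $iH$ in the Lie algebra of $\FL$ (or, without invoking surjectivity, as a finite product of such exponentials, which suffices). By the previous paragraph each factor acts on $C$ as a phase, hence so does $U$, giving $U|_C = \xi \cdot \idmatrix_C$ for some $\xi \in \U{1}$, as claimed.

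The main obstacle is the second step: making rigorous the passage from the weight-1 Pauli detection identity in Definition~\ref{def:distance} to the statement that an arbitrary single-qudit Hermitian $H^{(j)}$ acts on $C$ as a scalar. This requires noting that the generalized single-qudit Paulis span all single-qudit operators, and that the scalar $f(Q)$ depends linearly on $Q$, so that linearity of expectation transfers the property from the Pauli basis to arbitrary $H^{(j)}$. All remaining pieces (identifying the Lie algebra, summing over $j$, and invoking compactness/connectedness to write $U$ as an exponential) are standard.
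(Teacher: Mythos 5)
The paper does not actually reprove this lemma: it is stated and attributed to Eastin--Knill (``the main step of the proof of Theorem~1 in \cite{EastinKnill}''), with no proof supplied in the text.  Your argument fills in the details and follows essentially the same route as Eastin--Knill: pass to the Lie algebra $\fLie_{\FL} = \{iH \in \fLie_{\F} : (\idmatrix - P)HP = 0\}$, observe that any such $H$ is a sum of single-qudit Hermitians $H^{(j)}$, apply the distance-$\geq 2$ (Knill--Laflamme weight-1) condition to force $PH^{(j)}P = f_j P$, conclude $H|_C = f\,\idmatrix_C$, and globalize via connectedness.  The proof is correct.

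One small imprecision worth fixing: for $d_j > 2$ the generalized single-qudit Paulis $X^aZ^b$ are not Hermitian, so ``$H^{(j)}$ is a \emph{real} linear combination of weight-$1$ Paulis'' is not quite right; you want a \emph{complex} linear combination over the full weight-$0$ and weight-$1$ generalized Pauli basis.  The argument survives unchanged, since $PQP = f(Q)P$ extends by linearity over $\mathbb{C}$ and the resulting scalar in $PH^{(j)}P = f_j P$ is forced to be real because $PH^{(j)}P$ is Hermitian.  Also, strictly speaking $\F$ is a quotient of $\prod_j \U{d_j}$ rather than isomorphic to it (the kernel kills the redundant product of overall phases), but the description of $\fLie_{\F}$ as sums of local terms is the correct one regardless.
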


Note that we have the following inclusions:
\begin{equation}
\C \subset \FLperp \subset \FL \subset \F.
\label{eqn:4subgroups}
\end{equation}
The second set inclusion was shown to be a Lie subgroup inclusion in Proposition \ref{prop:FLperpNormalsubLieFL}, and the third inclusion is also a Lie subgroup inclusion as argued in the discussion leading up to Theorem \ref{thm:manifoldM.transversal}.
$\C$ is a Lie subgroup of $\FL$, so this is a sequence of inclusions of Lie subgroups of $\U{N}$.  Each has an interpretation in terms of loops, as given in Table~\ref{tab:groupsloops}, which provides us with clearer intuition of the role played by these groups.

\begin{table}[h]
\begin{tabular}{| c | l |}
\hline
Restricted action of & \\
subgroups of $\U{N}$ on $C$ & Is realizable on the fibre $\pi^{-1}(C)$ in $\Mvecbund$ or $\MPbund$ as \\
\hline \hline
$\C$			& \{Automorphisms of $\pi^{-1}(C)$ induced by all $\F$-loops in the bundle lifting the \\
			&  contractible $C$-based loops in $\M$\} \\
\hline
$\FLperp$ 	& \{Automorphisms of $\pi^{-1}(C)$ induced by all $\F$-loops in the bundle lifting all \\
			& \emph{stabilizer homotopy classes} of $C$-based loops in $\M$\} \\
\hline
$\FL$		& \{Automorphisms of $\pi^{-1}(C)$ induced by all $\F$-loops in the bundle lifting\\
			& arbitrary $C$-based loops in $\M$\} \\
\hline
$\F$			& \{Fibre transformations $\pi^{-1}(\gamma(0) = C) \rightarrow \pi^{-1}(\gamma(1))$ induced by all arbitrary\\
			& $\F$-paths $\gamma(t), t \in [0,1]$ in the bundle\}\\
\hline
\end{tabular}
\caption{Correspondences between subgroups of $\U{N}$ and homotopy classes in $\M$}
\label{tab:groupsloops}
\end{table}

The main step in our proof of Theorem~\ref{thm:flatconnection.transversal} will be to establish the first row of Table \ref{tab:groupsloops} using Lemma~\ref{lemma:trivialaction}.  
Rows 3 and 4 of the table are more or less by definition.  The term \emph{stabilizer homotopy class} in Row 2 of the table makes sense once the homotopy class dependence of fibre automorphism demonstrated by Row 1 of the table has been established.  In fact, we can take Row 2 of Table \ref{tab:groupsloops} as its definition.  Namely, the \emph{stabilizer homotopy classes} of loops in $\M$ refer to those homotopy classes whose lifts to an $\F$-loop in the restriction bundle(s) always yield a projectively trivial fibre automorphism, that is, it transforms the fibre by only a complex phase.  For example, in Figure~\ref{fig:fivequbit}, the loop marked ``stabilizer generator'' would produce an automorphism in Row 2, and the loops marked ``logical $X$'' and ``logical $R_3$'' give automorphisms in Row 3.  We do not prove Rows 2, 3 and 4 in this paper.

\begin{Theorem}[Flat projective connections in $\Mvecbund$ and $\MPbund$ for transversal gates]
Let $C$ be a QECC with distance $\ge 2$, $\F$ the set of transversal gates as given by equation (\ref{eqn:transversal}), and $\M = \F(C)$ as defined by equation (\ref{eqn:defM.transversal}).  Then the pre-connections proposed in Sections \ref{sec:connectionV} and \ref{sec:connectionP}, for $\Mvecbund$ and $\MPbund$ respectively, both become well-defined projective connections in the case of transversal gates; furthermore, the projective connections are projectively flat.
\label{thm:flatconnection.transversal}
\end{Theorem}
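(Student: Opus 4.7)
The plan is to verify the hypotheses of Criterion~\ref{prop:flat_condition} for both bundles simultaneously by reducing the question to showing that certain composed loops produce an endpoint unitary lying in the identity connected component $\C$ of $\FL$, at which point Lemma~\ref{lemma:trivialaction} finishes the job. By the homogeneity of $\M$ under $\F$ and the equivariance of the pre-connections defined in Sections~\ref{sec:connectionV} and \ref{sec:connectionP}, it suffices to take the base point $W_1 = C$. Given two paths $U_1(t), U_2(t) \in \Ftranspath$ with $U_1(0) = U_2(0) = \idmatrix$, $U_1(1)(C) = U_2(1)(C) = W_2$, and $U_i(t)(C)$ lying in a simply-connected subset $\N \subseteq \M$, the pre-parallel transports send a codeword $(C,w)$ (respectively a $K$-frame $\beta$) to $(W_2, U_i(1) w)$ (respectively $U_i(1) \beta$), so the flatness condition amounts to showing that $V := U_2(1)^{-1} U_1(1)$ acts on $C$ as $\xi \cdot \idmatrix$ for some $\xi \in \U{1}$.

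Next, I would exhibit an explicit realization of the composite loop $\gamma_1 \ast \gamma_2^{-1}$ in $\M$ as an $\F$-evolution in order to connect $V$ to the topology of $\M$. Concretely, set
\begin{equation*}
U(t) = \begin{cases} U_1(2t), & t \in [0, 1/2], \\ U_2(2-2t) \cdot V, & t \in [1/2, 1], \end{cases}
\end{equation*}
which lies in $\Ftranspath$, satisfies $U(0) = \idmatrix$, $U(1) = V$, and, because $V(C) = C$, traces out the loop $\gamma_1 \ast \gamma_2^{-1}$ on $\M$. Since $\N$ is simply connected, this loop is contractible in $\M$, so I can invoke the homotopy lifting property of the principal $\FL$-bundle $\F \to \F/\FL \simeq \M$ (Theorem~\ref{thm:manifoldM.transversal}): a null-homotopy of the loop downstairs lifts to a homotopy upstairs whose endpoint traces a path in $\FL$ from $V$ to $\idmatrix$. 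In other words, $V$ lies in the identity connected component $\C$ of $\FL$.

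Then, because $C$ has distance $\ge 2$, Lemma~\ref{lemma:trivialaction} applies and gives $V|_C = \xi \cdot \idmatrix$ for some $\xi \in \U{1}$. Consequently $U_1(1) w = \xi \, U_2(1) w$ and $U_1(1) \beta = \xi \, U_2(1) \beta$ for every codeword and every $K$-frame in the fibre above $C$, verifying equations~(\ref{eqn:flat_condition_vect}) and (\ref{eqn:flat_condition_prin}) of Criterion~\ref{prop:flat_condition}. This criterion then promotes the pre-connections to genuine projective connections on $\Mvecbund$ and $\MPbund$, and establishes that they are projectively flat.

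The main obstacle is the homotopy lifting step: one must be careful that although the proposed pre-connection does not yet furnish a canonical lift of paths from $\M$ to $\F$, the fibration $\F \to \M$ is genuinely a principal $\FL$-bundle (hence a Serre fibration), so any null-homotopy of a loop downstairs can be lifted abstractly to give the required path in $\FL$ from $V$ to $\idmatrix$. Equivalently, one can argue via the homotopy long exact sequence $\pi_1(\M) \to \pi_0(\FL) \to \pi_0(\F)$, using that $\F = \prod_j \U{d_j}$ is connected to identify the kernel of $\pi_1(\M) \to \pi_0(\FL)$ with the image of $\pi_1(\F)$ and conclude that contractible loops land in the identity component of $\FL$. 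Either route cleanly routes the geometric hypothesis through Lemma~\ref{lemma:trivialaction}, which is the only place the code-theoretic input (distance $\ge 2$) enters.
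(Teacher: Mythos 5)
Your proposal is correct and takes essentially the same route as the paper's own proof: reduce via Criterion~\ref{prop:flat_condition} to showing that any $\F$-loop lifting a contractible loop in $\M$ ends in the identity component $\C$ of $\FL$, establish this via the homotopy lifting property of the fibration $\F \to \F/\FL \simeq \M$, and then invoke Lemma~\ref{lemma:trivialaction} to get projective triviality. The paper phrases this as the inclusion $\C|_C \supset \I$ (where $\I$ is the set of fibre automorphisms from contractible-loop lifts) and proves the reverse inclusion too, while you go straight for the needed direction by constructing the explicit concatenated path $U(t)$; your added remark about the long exact sequence $\pi_1(\M) \to \pi_0(\FL) \to \pi_0(\F)$ is a tidy algebraic restatement of the same homotopy-lifting step, not a genuinely different argument.
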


\begin{proof}

Let
\begin{eqnarray}
\I &:=& \{\text{Automorphisms of $\pi^{-1}(C)$ induced by all $\F$-loops in the bundle lifting} \\
   &   & \text{the contractible $C$-based loops in $\M$}\}. \nonumber
\end{eqnarray}
Using Lemma~\ref{lemma:trivialaction}, we can reduce the proof of Theorem \ref{thm:flatconnection.transversal} to the following claim:
\begin{equation}
\C|_C = \I,  \label{eqn:trivialclassC}
\end{equation}
interpreting Row 1 of Table~\ref{tab:groupsloops}.  Actually, only $\C|_C \supset \I$ is needed for the proof of Theorem~\ref{thm:flatconnection.transversal}, but below we shall prove both directions of inclusion.

Here is how the reduction works: Given the claim and Lemma~\ref{lemma:trivialaction}, we can quickly prove the theorem using Criterion~\ref{prop:flat_condition}. Consider any simply-connected subset $\N$ of $\M$.  If two unitary paths $U_1(t)$ and $U_2(t)$ induce paths $\gamma^{\{U_1\}}$ and $\gamma^{\{U_2\}}$ in $\N$ with the same endpoints $W_1$ and $W_2$, then the concatenated path $\gamma = \gamma^{\{U_1\}} * (\gamma^{\{U_2\}})^{-1}$ is a loop.  Since $\N$ is simply connected, $\gamma$ is homotopically trivial.  By the claim of $\C|_C \supset \I$, it induces an automorphism of $C$ in $\C$.  But by Lemma~\ref{lemma:trivialaction}, $\C$ only contains projectively trivial maps.  Therefore, $\eta^{\{U_1\}} (1) = \xi \cdot \eta^{\{U_2\}} (1)$ and $\lambda^{\{U_1\}} (1) = \xi \cdot \lambda^{\{U_2\}} (1)$.  Thus, the conditions of Criterion~\ref{prop:flat_condition} are satisfied for either $\Mvecbund$ or $\MPbund$, and the projective connection is well-defined and projectively flat.

We thus only need to prove the claim of equation~(\ref{eqn:trivialclassC}).

First, we try to establish $\C|_{C} \subset \I$.  Suppose $F \in \C$.  Since $\C$ is the connected component of the identity in the Lie group $\FL$, there exists an $\FL$-path $F(t)$ connecting $F$ to the identity $\idmatrix$.  Here, an $\FL$-path is defined analogously to an $\F$-path, except that every point $F(t)$ in the path now belongs to the subset $\FL \subset \F$.  We have $F(0) = \idmatrix$ and $F(1) = F$.  This path $F(t)$ projects to the constant path $\gamma(t) = C$ in $\M$ because $F(t) \in \FL$ for all $t$, and the constant loop is homotopically trivial. Thus, $F|_{C}$ is realizable as a $\pi^{-1}(C)$ fibre automorphism for the homotopically trivial loop $\gamma(t) = F(t)(C)$ induced by the $\F$-loop $F(t)$, so $\C|_{C} \subset \I$.

Now we show that $\C|_{C} \supset \I$.  Suppose $F(t)$ is an $\F$-loop in $\F$ such that $\gamma(t) := F(t)(C)$ is a homotopically trivial loop in $\M$, in other words, $F(t)$ induces a $\pi^{-1}(C)$ fibre automorphism in the set $\I$.  From the fact that $F(0) = \idmatrix$ and $\gamma(t)$ is homotopic to the constant loop $C$, we want to show that $F(1)$ is in $\C$, namely that $F(1)$ can be connected to $\idmatrix$ via a path in $\FL$.  Let's denote such a homotopy by $H(t,s)$ where $s \in [0,1]$, $H(t,0) = \gamma(t)$ and $H(t,1) = C$ is the constant path.  Since $F(t)$ is a lift of $\gamma(t) = H(t,0)$ from the space $\M \subset \Grass$ to the space $\F \subset \U{N}$, by the homotopy lifting property \cite{homotopy_lifting} (which always holds for the projection from a Lie group onto a quotient thereof) for the projection $p: \F \rightarrow \quotient{\F}{\FL} \simeq \M$, we can lift the homotopy $H(t,s)$ of based loops in $\M$ to a homotopy $\tilde{H}(t,s)$ of paths in $\F$ such that $\tilde{H}(t,0) = F(t)$ and $p(\tilde{H}(t,s)) = H(t,s)$.  See Figure~\ref{fig:homotopylifting}.  In particular, $\theta_2(t) := \tilde{H}(t,1)$ is a path in $\FL$ because $H(t,1) = C$; also $\theta_1(s) := \tilde{H}(0,s)$ and $\theta_3(1-s) := \tilde{H}(1,s)$ are paths in $\FL$ because $H(1,s) = C = H(0,s)$.  Concatenating the paths $\theta_1 * \theta_2 * \theta_3$ in $\FL$, we see that $\theta_3(1) = \tilde{H}(1,0) = F(1)$ is path-connected to $\theta_1(0) = \tilde{H}(0,0) = F(0) = \idmatrix$ in $\FL$.  So, $F(1)$ is in $\C$, and we have $\C|_{C} \supset \I$.

This establishes the claim $\C|_{C} = \I$, completing the proof of the theorem.

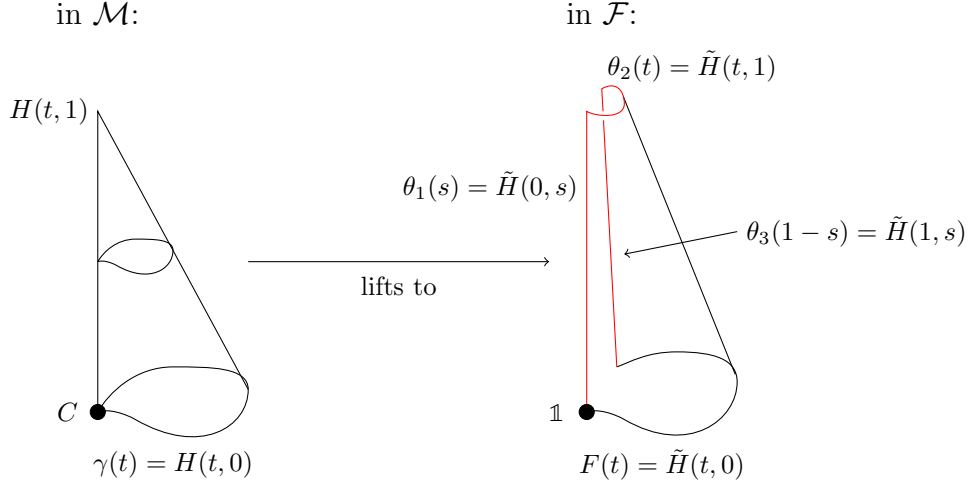
\begin{figure}
\begin{tikzpicture}
\path (-2,0) coordinate (C);
\path (C) ++(0,2) coordinate (Cmid);
\path (C) ++(0,4) coordinate (Ctop);
\path (C) ++(2,0.3) coordinate (gammamid);

\draw (C) .. controls +(20:0.15) and +(150:0.15) .. ++(0.5,-0.1) .. controls +(330:1) and +(270:0.5) .. (gammamid) .. controls +(90:0.3) and +(0:0.5) .. ++(-1,0.3) .. controls +(180:0.5) and +(60:0.3) .. (C);
\draw (C) -- (Ctop) -- (gammamid);
\draw [fill] (C) circle (0.1);
\draw (Cmid) .. controls +(20:0.075) and +(150:0.075) .. ++(0.25,-0.05) .. controls +(330:0.5) and +(270:0.25) .. ++(0.75,0.2) .. controls +(90:0.15) and +(0:0.25) .. ++(-0.5,0.15) .. controls +(180:0.25) and +(60:0.15) .. (Cmid);

\path (C) ++(6.5,0) coordinate (I);
\path (I) ++(2,0.4) coordinate (Fmid);
\path (Fmid) ++(-0.025,0.1) coordinate (Fmidup);
\path (I) ++(0.4,0.6) coordinate (Fend);
\path (I) ++(0,4) coordinate (Itop);
\path (Fmid) ++ (-1.5,3.7) coordinate (Fmidtop);
\path (Fmidtop) ++(-0.015,0.1) coordinate (Fmidtopup);
\path (Fend) ++(-0.2,3.7) coordinate (Fendtop);
\path (Fendtop) ++(0.02,-0.37) coordinate (int);
\path (int) ++(-0.003,0.07) coordinate (intup);
\path (int) ++(0.003,-0.05) coordinate (intdown);

\path (I) ++(-0.5,2) coordinate (arrowend);

\draw [->] (Cmid) ++(2,0) -- node[below=2pt] {lifts to} (arrowend);

\draw (I) .. controls +(20:0.15) and +(150:0.15) .. ++(0.5,-0.1) .. controls +(330:1) and +(270:0.5) .. (Fmid) .. controls +(90:0.3) and +(0:0.5) .. ++(-0.8,0.4) .. controls +(180:0.4) and +(20:0.2) .. (Fend);
\draw (Fmidup) -- (Fmidtopup);
\draw [red] (I) -- (Itop) .. controls +(330:0.15) and +(270:0.2) .. (Fmidtop) .. controls +(90:0.2) and +(30:0.2) .. (Fendtop) -- (intup); 
\draw [red] (intdown) -- (Fend);
\draw [fill] (I) circle (0.1);

\path (C) ++(-0.4,0) coordinate (Cleft);
\path (C) ++(1,-0.7) coordinate (gammadown);
\path (Ctop) ++(0,1.3) coordinate (Mmainlabel);

\node at (Cleft) {$C$};
\node [anchor=east] at (Ctop) {$H(t,1)$};
\node at (gammadown) {$\gamma(t) = H(t,0)$};
\node at (Mmainlabel) {{\Large in $\M$:}};

\path (I) ++(-0.4,0) coordinate (Ileft);
\path (I) ++(1,-0.7) coordinate (Fdown);
\path (I) ++(0,3) coordinate (Imidlabel);
\path (Fendtop) ++(1.2,0.3) coordinate (Ftoplabel);
\path (Fmid) ++(0,2) coordinate (Fmidlabel);
\path (Fendtop) ++(0,1) coordinate (Fmainlabel);

\node at (Ileft) {$\idmatrix$};
\node at (Fdown) {$F(t) = \tilde{H}(t,0)$};
\node [anchor=east] at (Imidlabel) {$\theta_1(s) = \tilde{H}(0,s)$};
\node at (Ftoplabel) {$\theta_2(t) = \tilde{H}(t,1)$};
\node [anchor=west] at (Fmidlabel) {$\theta_3(1-s) = \tilde{H}(1,s)$};
\draw [->] (Fmidlabel) -- ++(-1.5,-0.3);
\node at (Fmainlabel) {{\Large in $\F$:}};

\end{tikzpicture}
\caption{The homotopy $H(t,s)$ on $\M$ and the lifted homotopy $\tilde{H}(t,s)$ of paths in $\F$.}
\label{fig:homotopylifting}
\end{figure}

\end{proof}

\subsection{Implications on fault-tolerant logical gates}

\subsubsection{Logical gates and $\pi_1 (\M)$}

Since we have established a flat projective connection in the bundles $\Mvecbund$ and $\MPbund$ in the case of transversal gates, Corollary \ref{cor:FTlogicalgates} applies and says that:
\begin{equation}
\GV = \{\text{Logical gates implemented via paths in $\F$}\} = \GP
\label{eqn:transversal_logical}
\end{equation}
Equation (\ref{eqn:transversal_logical}) relates the fault-tolerant logical gates to two monodromy representations of the fundamental group of $\M$, as part of our general theory of geometric fault tolerance.  It also tells us that the logical gates are associated with the homotopy classes $\pi_1(\M,C)$.

Equation~(\ref{eqn:transversal_logical}) does not guarantee that each logical gate is associated with a \emph{unique} homotopy class.  However, we note the following:
\begin{Proposition}
In the case of transversal gates,
\begin{equation}
\pi_1 (\M, C) = \pi_0 (\FL/\U{1}).
\end{equation}
\end{Proposition}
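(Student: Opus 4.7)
The plan is to apply the long exact sequence of homotopy groups to the principal $\FL$-bundle $p: \F \to \F/\FL \cong \M$ from equation~(\ref{eqn:Mquotient}); this is a fibration because $\FL$ is a closed Lie subgroup of the Lie group $\F$. Based at $\idmatrix \in \F$ (with image $C \in \M$), the sequence reads
\[ \pi_1(\FL) \to \pi_1(\F) \xrightarrow{p_*} \pi_1(\M, C) \xrightarrow{\partial} \pi_0(\FL) \to \pi_0(\F). \]
Since $\F = \prod_{j=1}^{n} \U{d_j}$ is a product of connected Lie groups, $\pi_0(\F) = 0$ and $\partial$ is surjective. It therefore suffices to (a) show that $p_* = 0$, so that $\partial$ becomes an isomorphism $\pi_1(\M, C) \xrightarrow{\sim} \pi_0(\FL)$, and then (b) show that quotienting by the global phase $\U{1}$ does not alter the set of path components, i.e.\ $\pi_0(\FL) = \pi_0(\FL/\U{1})$.

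For (a), I would use that $\pi_1(\F) = \mathbb{Z}^n$, with the $j$-th factor generated by the loop $\gamma_j(t) = (\idmatrix_{d_1}, \ldots, e^{2\pi i t}\idmatrix_{d_j}, \ldots, \idmatrix_{d_n})$ sitting inside the $j$-th factor of $\F$. Under the tensor-product embedding $\F \hookrightarrow \U{N}$ that realizes the $\F$-action on $\M \subset \Grass$, each $\gamma_j$ collapses to the global phase loop $t \mapsto e^{2\pi i t}\idmatrix$ in $\U{N}$, independently of $j$, via the identity $\idmatrix_{d_1} \otimes \cdots \otimes (e^{2\pi i t}\idmatrix_{d_j}) \otimes \cdots \otimes \idmatrix_{d_n} = e^{2\pi i t}\idmatrix$. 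Since scalar multiplication fixes every $K$-dimensional subspace of $\Cvect{N}$, this loop projects to the constant loop at $C$ in $\M$; hence $p_*(\gamma_j) = 0$ for every generator, so $p_* = 0$ and $\partial$ is an isomorphism.

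For (b), the global phase subgroup $\U{1} \subset \U{N}$ lies in $\FL$ (scalars preserve $C$) and is connected, hence is contained in the identity component $\C$ of $\FL$. Consequently, the quotient map $\FL \to \FL/\U{1}$ sends each connected component of $\FL$ onto a connected subset of $\FL/\U{1}$ (a single coset of $\C/\U{1}$), and since $\U{1} \subset \C$, two distinct cosets of $\C$ in $\FL$ remain distinct in $\FL/\U{1}$. This gives a bijection $\pi_0(\FL) \cong \pi_0(\FL/\U{1})$, which combined with (a) yields $\pi_1(\M, C) = \pi_0(\FL/\U{1})$.

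The only real subtlety is the collapse in (a): verifying that all $n$ generators of $\pi_1(\F)$ descend to the \emph{same} global phase loop in $\U{N}$. Once this observation is made, the remainder is a routine unwinding of the long exact sequence together with the Lie-group facts already established earlier in this section.
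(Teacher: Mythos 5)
Your long-exact-sequence framing is a reasonable reformulation of the paper's fibration argument, and your part (b) is fine (the global $\U{1}$ is connected and contained in $\FL$, hence in its identity component). The gap is in step (a): the loops $\gamma_j(t) = (\idmatrix_{d_1}, \ldots, e^{2\pi i t}\idmatrix_{d_j}, \ldots, \idmatrix_{d_n})$ are \emph{not} generators of $\pi_1(\F)$. Since $\pi_1(\U{d}) \cong \mathbb{Z}$ is detected by the determinant, and $\det(e^{2\pi i t}\idmatrix_{d_j}) = e^{2\pi i d_j t}$ winds $d_j$ times around $\U{1}$, the loop $\gamma_j$ represents $d_j$ times a generator of the $j$-th $\mathbb{Z}$ summand, not a generator. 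A genuine generator, e.g.\ $t \mapsto \mathrm{diag}(e^{2\pi i t}, 1, \ldots, 1)$ in the $j$-th slot, is \emph{not} a scalar in $\U{N}$ for $0 < t < 1$ once $d_j > 1$, so it does not obviously descend to a constant loop in $\M$. Showing that $p_*$ annihilates $d_j$ times each generator does not show $p_* = 0$, and the vanishing of $p_*$ on the actual generators is precisely the step that carries the substance of the proposition.

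The paper's own proof does not try to kill $\pi_1(\F)$ directly. It instead works with the fibration $\FL/\U{1} \to \F/\U{1} \to \M$ and argues that $\F/\U{1}$ is simply connected (it identifies it with $\bigotimes \SU{d_j}$), so that the homotopy class of a lifted path in $\F/\U{1}$ is pinned down by the connected components of $\FL/\U{1}$ at its two endpoints. If you wish to keep the exact-sequence framing, the parallel move is to run it for $\FL/\U{1} \to \F/\U{1} \to \M$, where the simple connectivity of $\F/\U{1}$ would make $\partial$ an isomorphism outright and render your step (b) redundant. As written, though, your collapse argument does not establish $p_* = 0$, and the proposal has a gap at its central step.
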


\begin{proof}
We will define a map $\Psi: \pi_1(\M,C) \rightarrow \pi_0 (\FL/\U{1})$.  The zeroth fundamental group $\pi_0 (\FL/\U{1}) \simeq \pi_0 (\FL)$ is the set of connected components of $\FL/\U{1}$ (which is equal to the set of connected components of $\FL$).  Choose one distinguished component $\FL^{(0)}$ of $\FL$, for instance the component containing the identity.  A loop $\gamma(t)$ based at $C$ in $\M \simeq \F/\FL$ lifts to unitary paths $F(t)$ which begin and end in $\FL \subset \F$, because $\FL$ is the set of unitaries in $\F$ that fix $C$.  Given some homotopy class $[\gamma]$ of $\M$ (i.e., an element of $\pi_1(\M,C)$), choose a loop $\gamma(t)$ in $\M$ based at $C$ and lift it to a path $F(t)$ that begins in $\FL^{(0)}$.  Let $\FL^{(1)}$ be the component of $\FL$ containing $F(1)$.  Then we define $\Psi ([\gamma]):=\FL^{(1)}$.  We show below that this map is well-defined and that it is an isomorphism.

The second part of the proof of Theorem~\ref{thm:flatconnection.transversal} shows that a homotopically trivial loop in $\M$ lifts to a path $F(t)$ in $\F$ that is homotopically equivalent (in $\F$) to a path that lies completely within $\FL$.  In fact, essentially the same argument using the homotopy lifting property shows that if $\gamma$ and $\gamma'$ are \emph{any} two homotopic loops in $\M$ and $F(t)$ is a lift of $\gamma$ to $\F$, then $F(t)$ is homotopic in $\F$ to some lift $F'(t)$ of $\gamma'$, that $F(0)$ and $F'(0)$ are connected by a path within $\FL$, and that $F(1)$ and $F'(1)$ are connected by a path within $\FL$.  The only difference in the argument is that $H(t,1) = \gamma'(t)$ instead of being a constant path.  The resulting situation is shown in Figure~\ref{fig:pathsbetweenFL}.

Furthermore, any pair of lifts $F_0(t)$ and $F_1(t)$ of a loop $\gamma$ in $\M$ which start in the same component of $\FL$ also end in the same component of $\FL$:  Suppose first that
$F_0(0) = F_1(0)$.  Then the path $F_0^{-1} * F_1$ is a lift
of the homotopically trivial loop $\gamma^{-1} * \gamma$, and by
the argument in the proof of Theorem~\ref{thm:flatconnection.transversal}, the lifted path is
homotopically equivalent to a path lying completely within $\FL$.  In
particular, there is a path within $\FL$ between $F_0(1)$ and
$F_1(1)$, so they are in the same component of $\FL$.  If $F_0(0)
\neq F_1(0)$, but they are in the same component of $\FL$, then they
are connected by a path $G(t)$ and we can consider $F_0^{-1} * G * F_1$ instead of $F_0^{-1} * F_1$.

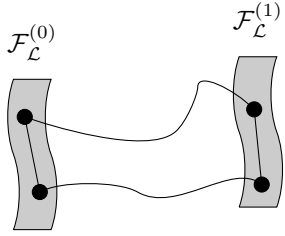
\begin{figure}
\begin{tikzpicture}

\path (0,0) coordinate (start1);
\path (start1) ++(3,0.3) coordinate (start2);
\path (start1) ++(0.35,0.5) coordinate (pathA0);
\path (start1) ++(0.15,1.5) coordinate (pathB0);
\path (start2) ++(0.3,0.3) coordinate (pathA1);
\path (start2) ++(0.2,1.3) coordinate (pathB1);
\path (start1) ++(0.25,2.5) coordinate (FL0);
\path (start2) ++(0.25,2.5) coordinate (FL1);

\draw [fill=black!20] (start1) -- ++(0.5,0) .. controls +(70:0.3) and +(290:0.3) .. ++(0,1) .. controls +(110:0.3) and +(250:0.3) .. ++(0,1) -- ++(-0.5,0) .. controls +(250:0.3) and +(110:0.3) .. ++(0,-1) .. controls +(290:0.3) and +(70:0.3) .. (start1);

\draw [fill=black!20] (start2) -- ++(0.5,0) .. controls +(70:0.3) and +(290:0.3) .. ++(0,1) .. controls +(110:0.3) and +(250:0.3) .. ++(0,1) -- ++(-0.5,0) .. controls +(250:0.3) and +(110:0.3) .. ++(0,-1) .. controls +(290:0.3) and +(70:0.3) .. (start2);

\draw [fill] (pathA0) circle (0.1);
\draw [fill] (pathA1) circle (0.1);
\draw [fill] (pathB0) circle (0.1);
\draw [fill] (pathB1) circle (0.1);

\draw (pathA0) .. controls +(30:0.3) and +(150:0.3) .. ++(1.3,0) .. controls +(330:0.6) and +(140:0.5) .. (pathA1);
\draw (pathB0) .. controls +(340:0.3) and +(220:0.5) .. ++(2,-0.2) .. controls +(40:0.2) and +(270:0.1) .. ++(0.3,0.6) .. controls +(90:0.2) and +(135:0.4) .. (pathB1);
\draw (pathA0) -- (pathB0);
\draw (pathA1) -- (pathB1);

\node at (FL0) {$\FL^{(0)}$};
\node at (FL1) {$\FL^{(1)}$};

\end{tikzpicture}
\caption{Two paths that start and end in same connected components of $\FL$ (shaded areas) can be put together with paths in $\FL$ to form a closed loop.  Since $\pi_1(\F) = 0$, the loop is contractible and therefore projects to the trivial homotopy class in $\M$.  Thus the two paths project to the same homotopy class in $\M$.}
\label{fig:pathsbetweenFL}
\end{figure}

As a consequence, given a particular homotopy class of $\M$, the lift of any path from it that starts in $\FL^{(0)}$ will end in a particular connected component $\FL^{(1)}$ of $\FL$.  The same statement holds if we work with $\F/\U{1}$ and $\FL/\U{1}$.  Therefore, $\Psi$ is well defined.

Because $\F/\U{1} = \bigotimes \SU{d_j}$ is simply connected, the converse is also true: Two paths in $\F/\U{1}$ that start and end in the same connected components of $\FL/\U{1}$ are homotopic in $\F$, so they project down to the same homotopy class in $\M$, as illustrated in Figure~\ref{fig:pathsbetweenFL}.  Therefore $\Psi$ is one-to-one.  Given any component $\FL^{(1)}$ of $\FL \subset \F$, we can choose a path $F(t)$ with $F(0) \in \FL^{(0)}$ and $F(1) \in \FL^{(1)}$, and get a path $\gamma(t)$ in $\M$ by projecting.  This shows that $\Psi$ is onto. Thus, the homotopy classes of $\M$ correspond precisely to the connected components of $\FL/\U{1}$.  That is, $\pi_1 (\M) = \pi_0 (\FL/\U{1})$.
\end{proof}

By the result of \cite{EastinKnill}, $\FL/\U{1}$ is discrete whenever the QECC $C$ has distance at least $2$, and thus $\pi_0 (\FL/\U{1}) = \FL/\U{1}$.  Therefore, the homotopy classes of loops in $\M$ exactly correspond to the fault-tolerant logical unitary operators (up to global phase).  However, the \emph{effective} fault-tolerant logical unitary operators might be a smaller set, as some non-contractible loops in $\M$ might perform different unitary operations on all of $\Cvect{N}$ but might have the same monodromy action on the fibre $C$, and hence perform the same logical operation.

\subsubsection{Transversal gates example: The five-qubit code}
\label{sec:fivequbit}

In principle, it might be possible to study the manifold $\M$ associated with a particular code $C$, determine $\pi_1 (\M, C)$, and use that to figure out the set of possible logical transversal gates for $C$.  In practice, however, this seems to be quite difficult to do.  At the moment, the only way we know to figure out the topology of $\M$ is to study the transversal gates of $C$ using other methods and then use Theorem~\ref{thm:flatconnection.transversal} to tell us about $\M$.  In this section, we will give an example of this for a single block of the five-qubit code~\cite{fivequbit,BDSW}.
 
The five-qubit code is a stabilizer code, so all unitaries in the stabilizer $S$ preserve the code space and perform the logical identity operator.  Furthermore, elements of $N(S)$, the set of Paulis that commute with the stabilizer, also preserve the code space.  The cosets $N(S)/S$ correspond to distinct effective logical operations; in fact, $N(S)/S$ in this case is isomorphic to the logical Pauli group.  The five-qubit code has two additional logical transversal gates $R_3$ and $R_3^2$~\cite{GottesmanFT}.  $R_3$ is a Clifford group operation that, under conjugation, maps $X \mapsto Y$, $Y \mapsto Z$, and $Z \mapsto X$, and on the five-qubit code, the logical $R_3$ gate can be performed transversally via $R_3^{\otimes 5}$, i.e., by performing the physical $R_3$ gate on each qubit of the code.  For the five-qubit code, there are no non-Clifford transversal gates that preserve the code space~\cite{Rains}.

The generators of $\FL$ for the five-qubit code are therefore the generators of $N(S)$ plus $R_3$.  $N(S)$ has $6$ generators --- $4$ stabilizer generators plus generators corresponding to the logical $X$ and logical $Z$.\footnote{The logical $R_3$ is not performed using Pauli operators and is therefore not in $N(S)$.} Because $\FL/\U{1}$ is a discrete group in this case (as the five-qubit code has distance $\geq 2$), each of the seven generators of $\FL/\U{1}$ corresponds to a generator of $\pi_1 (\M)$.  The monodromy representation of $\pi_1 (\M)$ on $\bigvecbund$ has an image $\effFL$ with $3$ generators $X$, $Z$, and $R_3$.  The $4$ stabilizer generators correspond to non-trivial loops in $\M$ (generators of $\pi_1(\M)$) that nonetheless have trivial holonomy on $\bigvecbund$; these loops generate $\FLperp$.

\begin{figure}
\begin{tikzpicture}

\draw (0,0) arc (210:330:2cm and 1cm) arc (-120:120:3cm and 1.5cm) arc (30:150:2cm and 1cm) arc (30:270:0.75cm and 0.375cm) arc (90:270:0.75cm and 0.375cm) arc (90:270:0.75cm and 0.375cm) arc (90:330:0.75cm and 0.375cm) -- cycle;

\path (150:0.75cm and 0.375cm) coordinate (A);

\foreach \y in {0,0.75,1.5,...,2.25}
	{
	\draw (A) ++(0,0.1) ++(0,\y) ++(330:0.4cm and 0.2cm) arc (330:210:0.4cm and 0.2cm);
	\draw (A) ++(0,0.1) ++(0,\y) ++(315:0.4cm and 0.2cm) arc (45:135:0.4cm and 0.2cm);
	}

\path (30:2cm and 1cm) coordinate (B);
\path (B) ++(0,1.5) coordinate (C);
\path  (B) ++(330:2cm and 1cm) ++(60:3cm and 1.5cm) coordinate (D);

\draw (B) ++(0,0.25) ++(330:1cm and 0.5cm) arc (330:210:1cm and 0.5cm);
\draw (B) ++(0,0.25) ++(315:1cm and 0.5cm) arc (45:135:1cm and 0.5cm);

\draw (C) ++(0,0.25) ++(330:1cm and 0.5cm) arc (330:210:1cm and 0.5cm);
\draw (C) ++(0,0.25) ++(315:1cm and 0.5cm) arc (45:135:1cm and 0.5cm);

\draw (D) ++(0,0.375) ++(330:1.5cm and 0.75cm) arc (330:210:1.5cm and 0.75cm);
\draw (D) ++(0,0.375) ++(315:1.5cm and 0.75cm) arc (45:135:1.5cm and 0.75cm);

\path (B) ++(0,1) ++(330:1cm and 0.5cm) coordinate (start);
\draw [fill] (start) circle (0.1cm);

\path (A) ++(0,0.55) coordinate (stabdown);
\draw [dotted] (start) .. controls +(210:1cm) and +(0:1cm) .. (stabdown) arc (270:90:0.5cm and 0.2cm) .. controls +(0:1cm) and +(180:1cm) .. (start);

\path (C) ++(-1.3,0) coordinate (Xleft);
\draw [dashed] (start) .. controls +(165:0.6cm) and +(270:0.3cm) .. (Xleft) arc (180:0:1.3cm and 0.4cm) .. controls +(270:0.3 cm) and +(75:0.3cm) .. (start);

\path (D) ++(210:1.8cm and 0.9cm) coordinate (Tleftish);
\draw [densely dashed] (start) .. controls +(350:0.5cm) and +(150:0.3cm and 0.15cm) .. (Tleftish) arc (-150:150:1.8cm and 0.9cm) .. controls +(210:0.3cm and 0.15cm) and +(10:0.5cm) .. (start);

\path (A) ++(-2.5,0.75) coordinate (stablabel);
\path (stablabel) ++(0,-0.35) coordinate (stablabeldown);
\node at (stablabel) {stabilizer};
\node at (stablabeldown) {generator};
\draw [->] (stablabel) ++(0.75,0) --  ++(1.2,0);

\path (C) ++(0,1.5) coordinate (Xlabel);
\node at (Xlabel) {logical $X$};
\draw [->] (Xlabel) ++(0,-0.2) -- ++(0,-0.75);

\path (D) ++(2,1.8) coordinate (R3label);
\node at (R3label) {logical $R_3$};
\draw [->] (R3label) ++(0,-0.2) -- ++(-0.8,-0.8);

\end{tikzpicture}
\caption{A cartoon of $\M$ for single-block transversal gates for the $5$-qubit code.  The actual $\M$ is $15$-dimensional, which is hard to visualize, so instead we draw a $2$-manifold with a fundamental group with the same number of generators.}
\label{fig:fivequbit}
\end{figure}
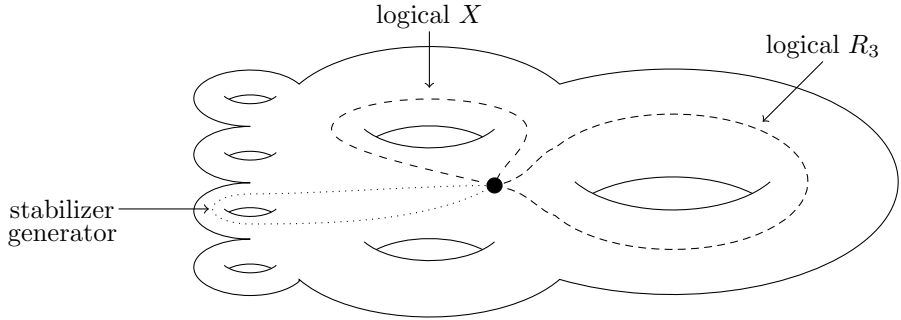

\section{Example II: Toric codes and string operators}
\label{sec:topologicalFT}

In Section \ref{sec:transversal}, we studied the restriction fibre bundles for transversal gates acting on QECCs with a fixed distance.  We claim that we can build a similar picture for string operators acting on the toric code \cite{Kitaev1} while preserving the number of each type of defects.  Some crucial features in this case will be different from those for transversal gates and are very much worth noting.  At the end of this section, we describe how to apply the same picture for Kitaev's quantum double model based on any finite gauge group $G$ or for more general Levin-Wen string net models.

Note that the toric code is a QECC of distance about $\sqrt{N}$, and the string operators we will be working with are transversal operations.  Therefore, the fibre bundle picture from Section~\ref{sec:transversal} would be a valid way of understanding these gates.  However, in this section we will develop a different picture that emphasizes the geometric structure of the toric code and can be generalized to other topological codes.

We focus on the case of the toric code with a number of defects, and fault-tolerant gates performed by moving the defects on the torus.  As we shall elaborate in this section, a defect can be moved a small distance using gates acting only locally, making this an appropriate set of fault-tolerant gates for the toric code and the 2D $(s,t)$-geometrically local error model, as discussed in Section~\ref{sec:local_errors}.  

The details of this construction are fairly complicated, but we can outline the main ideas quickly.  
The ultimate goal is to construct $\M$, a subset of $\Grass$ which is isomorphic to the configuration space of some number of points (which come in two species) on the torus, and we impose an additional ``hard-core'' constraint that the points (known as ``defects'') are not too close together.  That is, we wish to associate to each defect configuration a subspace of the Hilbert space.  The toric code gives a standard way to do this when the defects lie on any of the points of some standard lattice (for one species of point) or the dual lattice (for the other species), defining a quantum error-correcting code for each such configuration of defects.  To associate a subspace with a configuration not of this form, we interpolate between the standard quantum codes, defining a QECC whose codewords are superpositions of standard toric codes with defect configurations close to the desired one.  We also define geometrically local unitary operators that relate subspaces associated with similar defect configurations.  This produces a continuous space $\M \subseteq \Grass$ and allows us to show that $\M$ is flat.

In Sections~\ref{sec:toric_string} and \ref{sec:discrete_string}, we introduce the toric code with defects and the standard method of moving defects around via string operators.  In Section~\ref{sec:continuous_toric}, we show how to do the interpolation to move defects continuously on the edges of the graph, and in Section~\ref{sec:toricM}, we extend this to allow defects to be moved to any location on the surface of the torus.  We use this to define the desired $\M$ and show that it is isomorphic to the configuration space of the defects (subject to the hard-core condition).
Ultimately this enables us to show in Section~\ref{sec:toric_flatFPP} that we can apply the picture developed in Section~\ref{sec:geoFT}, getting a flat connection for the two usual fibre bundles over $\M$.  In Sections~\ref{sec:toricmonodromy} and \ref{sec:toricfreedom}, we comment on a few aspects of the construction.  Finally, in Section~\ref{sec:otheranyons}, we show how to apply the construction to other topological models, including many with non-Abelian anyons.

\subsection{The toric codes and the string operators}
\label{sec:toric_string}

\subsubsection{The toric codes}
\label{sec:toric}

Let us fix a square lattice $\Gamma$ on (and ``covering'') a genus-$g$ torus $T_g$.  Now imagine having one qubit sitting at each edge in the edge set $E$, with its own computational basis $\{\ket{0}, \ket{1}\}$.  As introduced by Kitaev \cite{Kitaev1}, the total physical Hilbert space of the quantum system for the toric code is
\begin{equation}
\Hilbphy = \bigotimes_{e \in E} (\Cvect{2})_e = \Cvect{N}
\end{equation}
where $N = 2^{|E|}$.  Let $\vec{\beta} = \{\beta_1, ..., \beta_{|E|}\}$ be a fixed ordered collection of qubit computational bases, one for each edge Hilbert space $\Cvect{2}$. 

Then the {\it toric code} $C_K$ refers to the ground state subspace for the following Hamiltonian:\footnote{All subscripts $K$ in this section stand for Kitaev model, of which the toric code is a special case.}
\begin{equation}
H_K = - \sum_{v \in V} A_v - \sum_{f \in F} B_f
\label{eqn:Hamiltonian.toric}
\end{equation}
where $V$ and $F$ are the sets of vertices and faces, respectively, for the graph $\Gamma$, $A_v$ acts on the four edges meeting at a vertex $v$ by
\begin{equation}
A_v = X \otimes X \otimes X \otimes X
\label{eqn:vertexop}
\end{equation}
and $B_f$ acts on the four edges around a face $f$ by
\label{eqn:faceop}
\begin{equation}
B_f = Z \otimes Z \otimes Z \otimes Z,
\end{equation}
where the Pauli operators $X$ and $Z$ are defined with respect to the fixed bases $\vec{\beta}$ for the physical qubits.  

Let $n_v$ and $n_f$ be two even non-negative integers.  We can consider a \emph{modified toric code} or \emph{toric code with defects} $\Ctoricfix$, where $S_v = (v_1, \cdots, v_{n_v})$ and $S_f = (f_1, \cdots, f_{n_f})$ are sets of vertices and faces respectively.  The code $\Ctoricfix$ is defined as the ground state subspace of the modified Hamiltonian
\begin{equation}
\Htoric = - \!\!\! \sum_{v \in V \setminus S_v} A_v - \!\!\! \sum_{f \in F \setminus S_f} B_f + \sum_{v \in S_v} A_v + \sum_{f \in S_f} B_f.
\label{eqn:Hamiltonian.defects}
\end{equation}
The Hamiltonian $\Htoric$ always has the same ground state degeneracy as the Hamiltonian $H_K$.  For example, if we work with the genus-1 torus $T_1$, the dimension of $\Ctoricfix$ is always $K=4$ for any legal $(n_v, n_f)$ or $(S_v, S_f)$ and regardless of our choice of the lattice.  For our present discussion, we work with a fixed lattice $\Gamma$.  Suppose, for a moment, we choose the lattice $\Gamma$ on the torus $T_1$ to be a square lattice of size $L\times L$; then the total Hilbert space dimension is $N = 2 L (L-1)$.  This can be derived by counting the number of distinct edges, horizontal and vertical, as in \cite{Dennis}.  Therefore, with this special choice of $\Gamma$, we are working with the Grassmannian $\Grass = \mathrm{Gr}(4, 2L(L-1))$.  In general, however, the number $N$ varies with different choices of the lattice even for the same torus, while the number $K$ is lattice-independent.  If we fix the lattice $\Gamma$ as well as the torus $T_g$, as is the assumption in this paper, then both $K$ and $N$ will be fixed for any legal choice of the pair $\locations$.

Physically, the number $n_v$ represents the number of \emph{primal defects} (which live at the vertices of $\Gamma$ specified by $S_v$) whereas $n_f$ represents the number of \emph{dual defects} (which live on the faces of $\Gamma$ specified by $S_f$).  In particular, the original code $C_K$ is just $\Ctoricfix$ for $n_v = 0 = n_f$ and it is the unique four-dimensional subspace that corresponds to having no defects of any type, given a fixed graph $\Gamma$ on the genus-$g$ torus $T_{g}$.  

It is also helpful to think about the dual lattice:
\begin{Definition}
Let $\Gamma$ be a graph embedded in a $2$-manifold.  The \emph{dual graph} $\bar{\Gamma}$ is a graph whose vertices are the faces of $\Gamma$.  $\bar{\Gamma}$ has an edge between two of its vertices if and only if the two corresponding faces of $\Gamma$ share an edge of $\Gamma$.  The faces of $\bar{\Gamma}$ are thus the vertices of $\Gamma$.
\end{Definition}
Based on this definition, we can see that the primal defects can equivalently be thought of as living on the faces of the dual graph $\bar{\Gamma}$ and dual defects living on the vertices of $\bar{\Gamma}$.

\subsubsection{The string operators}
\label{sec:strings}

A \emph{string operator} $F$ on $\Hilbphy = \bigotimes_{e \in E} (\Cvect{2})_e$ is just a tensor product of Pauli operators or the identity operator on the individual qubits, along with one of the phases $\phases$.  It can be written as
\begin{equation}
F = \xi \cdot O_1 \otimes \cdots \otimes O_{|E|}
\end{equation}
where each $O_i$ is an operator acting on the $\ord{i}$ qubit and is chosen from the set $\Paulis$, whereas $\xi$ is a phase factor from the set $\phases$.  In the case where $O_i = Y$, we can use equation (\ref{eqn:Paulis}) to rewrite $O_i = Y = iXZ$ as a product of $X$ and $Z$, hence we can product decompose the whole string operator $F$ into, for example, $\xi' \cdot F_X F_Z$ where the operator $F_X$ is a tensor product of purely $X$ and $\idmatrix$, $F_Z$ purely $Z$ and $\idmatrix$, and $\xi' \in \phases$ a new phase factor.  Alternatively, since $X$ and $Z$ anticommute, that is,
\begin{equation}
XZ = - ZX,
\end{equation}
we can equally well write a general string operator $F$ as $\xi'' \cdot F_Z' F_X'$.

Now, why do we call these simple tensor products of Pauli operators the string operators?  This is because we usually represent the physical qubits as edges on a graph, and we like to represent the action of a nontrivial Pauli operator on a qubit by ``darkening'' that edge.  When a bunch of edges, especially some geometrically connected ones, are darkened, they tend to look like a bunch of strings living on the lattice, as in Figure~\ref{fig:stringoperators}.  If we try to represent a string operator $F$ by darkening or coloring the graph $\Gamma$, in general it might appear that we will need 3 colors, one for each of $X, Y$ and $Z$ (plus 1 lighter or default shade for the identity operator).  However, after we decompose $F$ into the form $\xi \cdot F_X F_Z$, we only need 2 colors, one to color the edges upon which $F_X$ acts nontrivially (by $X$), and the other to color the edges where $F_Z$ acts nontrivially (by $Z$).  We call the string operator corresponding to $F_X$ an $X$-string operator, and the string operator corresponding to $F_Z$ a $Z$-string operator.  We typically think of the $Z$-string operators as living on the edges of $\Gamma$ and the $X$-string operators as living on the edges of the dual lattice $\bar{\Gamma}$, because $X$-string operators produce defects on faces of $\Gamma$; this also avoids the possibility of needing color a single edge with both colors.  
 
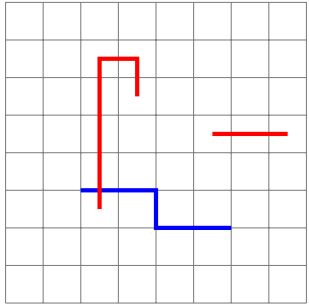
\begin{figure}
\begin{tikzpicture}
[X/.style={ultra thick,red}, Z/.style={ultra thick,blue}]
\draw [gray,step=.5cm] (-2,-2) grid (2,2);
\draw [Z] (-1,-0.5) -- (0,-0.5) -- (0,-1) -- (1,-1);
\draw [X] (-.75,-.75) -- (-.75,1.25) -- (-0.25,1.25) -- (-0.25,.75);
\draw [X] (0.75, 0.25) -- (1.75,0.25);

\end{tikzpicture}
\caption{An $X$-string operator (red, through faces) and a $Z$-string operator (blue, through vertices) for the toric code.  Note that in this example, the $X$-string operator consists of two disconnected pieces.}
\label{fig:stringoperators}
\end{figure}

\subsection{Discrete-time evolutions and the discrete configuration space $\Ctorichardcore$}
\label{sec:discrete_string}

If we want to do computation with the toric codes, we must introduce dynamics to the quantum system.  This is done, for example, by introducing a time flow to the standard string operators, representing the order in which the operations are done.  This has a natural interpretation in terms of moving defects around on the torus, or creating or annihilating pairs of defects.  We begin by defining a discrete time flow below, then we extend it to a continuous time flow in Section \ref{sec:continuous_toric}.

\subsubsection{Discrete-time string evolutions}

Let's think about \emph{how} we are going to apply a string operator $F$.  We could apply it all at once, namely instantaneously.  However, the usual strategy for gates in a topological code is to move defects around in a geometrically local fashion.  Therefore, we want to think about the toric code in the context of an $(s,t)$-geometrically local error model.  For such an error model, a large string operator could create a large cluster of errors.\footnote{In fact, for the toric code, the string operators are also transversal gates, so they do not actually produce large clusters.  However, for more general topological codes, the string operators are not tensor products.  The point really is to come up with an approach that still works for the general topological code.}  To implement a string operator fault-tolerantly, we should therefore grow the string operators slowly --- perform a small string operator, then stop and do error correction, then continue with a second small string operator, and so on, until we have built up the full string operator we wish to perform.

For the moment, let us imagine that we have a specific fixed way to implement the Pauli operators $X$ and $Z$ on every qubit $e \in E$, and furthermore let us make the temporary simplification that the implementation of a single-qubit $X$ or $Z$ is instantaneous.  (We will relax this assumption shortly.)  To introduce a discrete time flow, we are then just talking about specifying an order in which to implement the $O_i$'s for a general string operator $F = \xi \cdot O_1 \otimes \cdots \otimes O_{|E|}$.  We call a string operator with a specified discrete-time flow a \emph{discrete-time string evolution}, whose formal definition is given as follows:

\begin{Definition}
\label{dfn:discretestringev}
A \emph{discrete-time string evolution} is a sequence $(P_1, \ldots, P_l)$, where each $P_i$ is either $X$ or $Z$ acting on a single qubit.  Equivalently, we can represent a discrete-time string evolution as a sequence $(a_k)_{k=1,l}$, where each $a_k$ denotes either a \emph{primal edge} (an edge of the lattice $\Gamma$) expressed in the form of $e_k$ or a \emph{dual edge} (an edge of the dual lattice to $\Gamma$) in the form of $\bar{e}_k$.  To convert such a sequence $(a_k)_{k=1,l}$ to an ordered application of Pauli operators, we simply interpret an element $e_k$ in the sequence as ``apply the Pauli $Z$ operator to the primal edge $e_k$'', and we read an element $\bar{e}_k$ as ``apply the Pauli $X$ operator to the dual edge $\bar{e}_k$''.  Then we compose the sequence of Pauli operators in the order $P_l \circ \cdots \circ P_1$.  A discrete-time string evolution $(a_k)_{k=1,l}$ is said to realize the string operator $F$ if $P_l \circ \cdots \circ P_1 = F$.
\end{Definition}

There are a couple of points to note about this definition.  First, we do not insist that all the edges in the sequence be distinct.  It is allowed to have edge $e$ or $\bar{e}$ appear multiple times in a sequence, as in Figure~\ref{fig:tangle}.  Second, this particular evolution results in a particular choice of phase for the string operator $F$.\footnote{However, only the two phases corresponding to real string operators can be achieved in this way; i.e., phase $\pm 1$ if the number of $Y$ tensor factors in the string operator is even and phase $\pm i$ if the number of $Y$ factors is odd.}

At each step in a discrete-time string evolution, we perform a single Pauli operator.  It is standard to view string operators as a means to transport defects, as in for example \cite{Dennis}. A $Z$ string segment transports a primal defect from one end of the string to the other and an $X$ string segment (on the dual lattice) transports a dual defect from one of its ends to the other.   A discrete-time string evolution fills out this picture, giving us a history of the process in which one defect hops \emph{by one edge} at each time step.  When a primal edge $e$ is performed on a configuration with a primal defect at one end of the edge, the defect hops to the other end of $e$, and similarly a dual edge $\bar{e}$ causes a dual defect to hop from one end of the dual edge to the other.  When $e$ is  performed on a configuration where there are no defects at either end of $e$, it instead creates a pair of primal defects at its ends, and when $e$ is performed on a configuration with primal defects at both ends, it causes the two defects to annihilate.  Similarly, a dual edge $\bar{e}$ can create or annihilate pairs of dual defects.  Primal edge operators have no direct effect on dual defects and dual edge operators have no direct effect on primal defects, but in both cases, there may be some global effect on the phase of the overall string operator.\footnote{In Kitaev models with non-abelian anyons, the phase could be replaced by a unitary on the code space.}

We can interpret a discrete-time evolution acting on a code $\Ctoricfix$ as a rectilinear tangle.  An example is shown in Figure~\ref{fig:tangle}.  However, normally we define tangles only up to isotopy, and we are explicitly \emph{not} doing that here.  Instead, we insist the tangle has, at each time step, only a single thing happening: one strand moves by one step, a cup (creation of a pair of defects), or a cap (annihilation of two adjacent defects).  Isotopic tangles will realize the same string operator (or at any rate, ones that are equivalent in that they have the same action on $\Ctoricfix$), but represent different discrete-time string evolutions.  Thus, there is not a unique discrete-time string evolution realizing a given string operator.

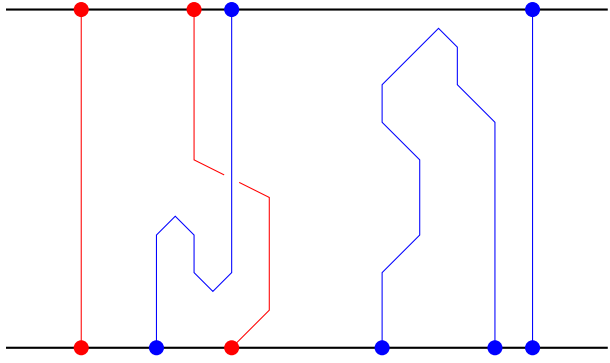
\begin{figure}
\begin{tikzpicture}
[defect/.style={circle,inner sep=0pt,minimum size=2mm}]

\draw [thick] (-1,0) -- +(8,0);
\draw [thick] (-1,4.5) -- +(8,0);

\draw [red] (0,0) -- (0,4.5);

\draw [blue] (1,0) -- (1,1.5) -- (1.25,1.75) -- (1.5,1.5) -- (1.5,1) -- (1.75,0.75) -- (2,1) -- (2,4.5);

\draw [red] (2,0) -- (2.5,0.5) -- (2.5,2) -- (2.1,2.2);
\draw [red] (1.9,2.3) -- (1.5,2.5) -- (1.5,4.5);

\draw [blue] (4,0) -- (4,1) -- (4.5,1.5) -- (4.5,2.5) -- (4,3) -- (4,3.5) -- (4.75,4.25) -- (5,4) -- (5,3.5) -- (5.5,3) -- (5.5,0);

\draw [blue] (6,0) -- (6,4.5);

\foreach \x in {0,2}
\node at (\x,0) [defect,fill=red] {};
\foreach \x in {0,1.5}
{\node at (\x,4.5) [defect,fill=red] {};}

\foreach \x in {1,4,5.5,6}
{\node at (\x,0) [defect,fill=blue] {};}
\foreach \x in {2,6}
{\node at (\x,4.5) [defect,fill=blue] {};}

\end{tikzpicture}
\caption{An example of a tangle.  This example cannot be realized as a verified discrete-time string evolution, since there are two events where defects annihilate and one where a pair of defects is created.}
\label{fig:tangle}
\end{figure}

\subsubsection{Verified string operators and evolutions}

One annoyance with discrete-time string evolutions as defined in Definition~\ref{dfn:discretestringev} is that the same evolution can correspond to very different tangles depending on the initial state.  For instance, the single Pauli $Z$ applied to primal edge $e$ can move a primal defect along the edge $e$ (in either direction), it can create a pair of defects (a cup), or it can annihilate a pair of defects (a cap).  We prefer to work with a more controlled Hilbert space, so we instead work with slightly more complicated unitary operators:
\begin{Definition}
Let $\ver{Z}_i$ be the unitary operator that performs $Z_i$ if there is exactly one primal defect on one of the vertices of the edge $e$ corresponding to qubit $i$ and performs $\idmatrix$ if there are zero or two primal defects on the vertices of $e$.  Similarly, let $\ver{X}_i$ be the unitary operator that performs $X_i$ if there is exactly one dual defect on the one of the dual vertices of the dual edge $\bar{e}$ corresponding to qubit $i$ and performs $\idmatrix$ otherwise.  We can also let $\ver{Y}_i = i \ver{X} \ver{Z}$, but we will not generally need to use this operator.  We call $\ver{X}_i$, $\ver{Y}_i$ and $\ver{Z}_i$ the \emph{verified Pauli operators} because they first check that a defect is present before moving it.

We can then define a \emph{verified discrete-time string operator} as a tensor product of verified Pauli operators with phase $\pm 1$ or $\pm i$, just replacing the usual Paulis in a string operator with verified Paulis.  Similarly, a \emph{verified discrete-time string evolution} is a sequence of verified $\ver{X}$ and $\ver{Z}$ Pauli operators.  A verified discrete-time string evolution realizes a verified discrete-time string operator in just the same way as a discrete-time string evolution realizes a regular string operator.
\end{Definition}

Note that verified string evolutions cannot create or annihilate pairs of defects, so at the end of a verified discrete-time string evolution, the numbers and types of defects are just the same as in the initial configuration of defects.  Only the positions have changed.  Thus, a tangle realized by a verified discrete-time string evolution cannot have any ``cups'' or ``caps'' in it.  For instance, the tangle in Figure~\ref{fig:tangle} cannot be realized in this way.

If the qubit $i$ is on edge $e$ with vertices $v$ and $v'$, the verified Pauli is
\begin{equation}
\ver{Z}_i = \frac{1}{4} [(\idmatrix - A_v) (\idmatrix + A_{v'}) + (\idmatrix + A_v)(\idmatrix - A_{v'})] Z_i + \frac{1}{4} [(\idmatrix + A_v) (\idmatrix + A_{v'}) + (\idmatrix - A_v)(\idmatrix - A_{v'})] \idmatrix.
\label{eq:verifiedPauli}
\end{equation}
This can be realized by acting on the qubit $i$ and the six other qubits adjacent to $v$ and $v'$.  Similarly, $\ver{X}_i$ can also be realized by acting on $7$ qubits.  Also note that $\ver{Z}_i^2 = \ver{X}_i^2 = \idmatrix$.

\subsubsection{Hard-core discrete-time string evolutions}

Our next step is to restrict attention to verified discrete-time evolutions which respect a ``hard-core'' condition at all times.  In particular, we insist that at all times, the defects in the code are separated by a minimum separation $\sep \geq 3$.  We count $\sep$ in terms of lattice distance, so lattice points $(x,y)$ and $(x',y')$ have separation $|x-x'| + |y-y'|$.  We also enforce the same minimum separation between primal and dual defects, counting the lattice distance from the vertex supporting the primal defect to the closest vertex of the face containing the dual defect.  
The hard-core condition constrains the movement of defects that are already at the minimum distance apart, limiting the possibilities for string evolutions.

For our analysis of the toric code, it is sufficient to consider $\sep = 3$, the minimum needed to avoid having two defects on the same face, but larger $\sep$ may be needed for more general anyon models.  The separation is needed for our definitions of various operators later in this section.  The topology of $\M$ does not change if we take a larger value of $\sep$ unless it is so large that the defects become unable to move while still satisfying the hard-core constraint.

The hard-core requirement more or less implements the idea that only codes of large distance should arise during the moving of defects --- when two defects get close together, even a small cluster of errors can braid them around each other.  Of course, if we were actually storing information in the defects, it would not be sufficient to keep them far apart; we would also need the defects to physically be \emph{holes} of large diameter.  Again, the point here is to work in a simplified model that contains many of the main features of a more complicated topological code.

Satisfying the hard-core condition is not an intrinsic property of a discrete-time string evolution $(a_k)$, or even of a verified discrete-time string evolution (since that only checks the locations of defects that move).  Instead, it is a property of a code $\Ctoricfix$ together with an evolution $(a_k)$.  Suppose we have performed $r$ time steps of $(a_k)$.  This has moved the defects in $\Ctoricfix$, giving us instead $\Ctoricfixend$.  Because the evolution so far has satisfied the hard-core condition, no defects have been created or annihilated, and thus $\Ctoricfixend$ still has $(|S_v|,|S_f|)$ defects.  To continue to satisfy the hard-core condition at time $r+1$, the edge $a_{r+1}$ must be adjacent to a primal defect in $\Ctoricfixend$ (if $a_{r+1}$ is a primal edge) or to a dual defect in $\Ctoricfixend$ (if $a_{r+1}$ is a dual edge).  Furthermore, the other end of the edge must not be distance $< \sep$ from any other defect.

The hard-core condition imposes substantial restrictions on the possible discrete-time realizations of a particular string operator $F$, but it's likely that there will still be multiple hard-core discrete-time string evolutions realizing $F$.  There are many ways this can occur.  For instance, given a configuration of multiple defects, we can imagine moving them in any order, including moving one defect, then moving a second one, and then moving the first one again.  Another class of examples can be derived from the observation that the string evolution $(e, e)$ realizes the identity string operator, but it is not a trivial evolution.  It or more complicated evolutions realizing the identity can be inserted into an evolution (provided they do not lead to a violation of the hard-core condition) without changing the overall string operator.  One can think of this as a non-monotonic reparameterization of the time coordinate.

\subsubsection{Discrete picture of evolution and configurations}

Based on what we have done so far, we can put together a completely discrete picture of paths and configurations starting at the toric code $\Ctoricfix$ and satisfying the hard-core condition.

Let
\begin{align}
\FdiscreteBeginend := & \{\text{Verified discrete-time string evolutions starting at defect locations $\locations$, ending} \nonumber \\
			     &   \text{at defect locations $\locationsend$, and satisfying the hard-core condition at all times}\}.
\end{align}
The set $\FdiscreteBeginend$ is non-empty only when $|S'_v| = |S_v|$, $|S'_f| = |S_f|$, and $\Ctoricfixend$ also satisfies the hard-core condition.  The set $\Fdiscretepaths$ consists of verified discrete-time string evolutions beginning at defect locations $\locations$ but which are allowed to end anywhere:
\begin{equation}
\Fdiscretepaths := \bigsqcup_{\locationsend} \FdiscreteBeginend,
\end{equation}
where the union is taken over all possible endpoints $\locationsend$.  $\Fdiscretepaths$ is a discrete-time analog of $\Fpath$ as introduced in Section~\ref{sec:geoFT}, except that $\Fpath$ allows for an arbitrary starting point of the paths it contains.  We can similarly define a discrete-time analog of $\Fsmall$:
\begin{equation}
\Fdiscrete := \{\text{verified string operators } F | F \text{ realized by } (a_1, \ldots, a_l) \in \Fdiscretepaths\}.
\end{equation}

Let $n_v = |S_v|$, $n_f = |S_f|$.  The corresponding discrete configuration space is 
\begin{equation}
\Ctorichardcore := \Fdiscrete (\Ctoricfix).
\end{equation}
Typically, $\Ctorichardcore$ is the set of all toric codes with $(n_v,n_f)$ defects such that the hard-core condition is satisfied, and is otherwise independent of $\Ctoricfix$.  However, in some unusual special cases, it may not be possible to move between otherwise valid configurations using only evolutions in $\Fdiscretepaths$.  For instance, this can happen if the lattice $\Gamma$ is completely packed with defects at the minimum distance $\sep$ from each other.

Superficially, this structure seems like precisely what we need in order to follow the program laid out in Section~\ref{sec:geoFT}.  However, with only a discrete configuration space, it is not straightforward to define a connection, let alone determine if that connection is projectively flat.  Therefore, in the following sections, we will discuss how to extend the discrete string evolutions and discrete configuration space into continuous objects for which the notion of a flat projective connection is well-defined.

\subsection{Continuous-time string evolution and the graph-like configuration space $\Mgraph$.}
\label{sec:continuous_toric}

\subsubsection{Continuous-time string evolutions}
\label{sec:contin-strings}
%

Let us denote by $\ver{\sigma}^i_j$ the verified Pauli operator $\ver{\sigma}^i$ acting on the $j^{\rm{th}}$ qubit of the system and acting as the identity everywhere else, where $i$ ranges over the set $\Paulis$.  To fit these single-qubit Pauli operators into the framework of differential geometry, we fix a continuous-time unitary evolution $U^i_j(t)$ for each single-qubit Pauli operator $\sigma^i_j$, that is, $U^i_j(0) = \idmatrix$, where $\idmatrix$ is the identity transformation, and $U^i_j(1) = \ver{\sigma}^i_j$.  In particular, we may make the following choice for $U^i_j(t)$: Let $H^i_j = (\ver{\sigma}^i_j - \idmatrix) \pi/2$.  Then
\begin{equation}
U^i_j(t) := e^{i t H^i_j} = e^{-i t \pi/2} [\cos (t \pi/2)\, \idmatrix + i \sin (t \pi/2)\, \ver{\sigma}^i_j]. \label{eq:UPauli}
\end{equation}
In order to capture the possibility of a defect moving ``backwards'' along an edge (with respect to edge orientation), we also define the opposite evolution given by
\begin{equation}
V^i_j(t) := U^i_j(-t) = e^{-i t H^i_j} = e^{i t \pi/2} [\cos (t \pi/2)\, \idmatrix - i \sin (t \pi/2)\, \ver{\sigma}^i_j]. \label{eq:VPauli}
\end{equation}
Note that $V^i_j(1) = \ver{\sigma}^i_j$ as well.  Let
\begin{align}
\alpha(t) &= e^{-i t \pi/2} \cos (t \pi/2) \\
\beta(t) &= i e^{-i t \pi/2} \sin (t \pi/2).
\end{align}
Then
\begin{align}
U^i_j (t) &= \alpha(t)\, \idmatrix + \beta(t)\, \ver{\sigma}^i_j \\
V^i_j (t) &= \alpha (1-t)\, \sigma^i_j + \beta(1-t)\, \idmatrix = U^i_j (1-t)\, \ver{\sigma}^i_j. \label{eqn:VijUij}
\end{align}
Note that $\alpha(0) = \beta(1) = 1$, $\alpha(1) = \beta(0) = 0$, and $|\alpha(t)|^2 + |\beta(t)|^2 = 1$.

\begin{Definition}
\label{defn:continuoustimestring}
For each primal edge of $\Gamma$ and each dual edge of $\bar{\Gamma}$ choose an orientation.  The orientations on $\Gamma$ and $\bar{\Gamma}$ may be chosen independently.  Assume we start with the fixed base code $\Ctoricfix$.  Let $F(t): [0,l] \rightarrow \U{N}$ be a continuous function ($l$ is a non-negative integer), and let $\Delta F (m, \delta t) = F(m+ \delta t) F(m)^{-1}$ be the change in $F(t)$ between time $t=m$ and time $t=m+\delta t$.  Given the choice of edge orientations, $F(t)$ is a \emph{continuous-time string evolution} running for $l$ time steps if it satisfies the following properties:
\begin{enumerate}
\item $F(0) = \idmatrix$.
\item The sequence $(\Delta F(0,1), \Delta F(1,1), \Delta F (2,1), \ldots,\Delta F(l-1,1))$ is a verified hard-core discrete-time string evolution (represented as a sequence of single-qubit verified Paulis). \label{item:hardcorediscrete}
\item Let $m$ be an integer in $[0,l-1]$.  Suppose $\Delta F(m,1) = \ver{\sigma}^i_j$, where $j$ is an edge oriented from vertex $a$ to vertex $b$ (if $i=Z$) or a dual edge oriented from face $a$ to face $b$ (if $i=X$).  If $F(m)(\Ctoricfix)$ has a (dual) defect at $a$, then we require that $\Delta F(m, \delta t) = U^i_j (\delta t)$ for $0 \leq \delta t < 1$.  If $F(m)(\Ctoricfix)$ has a (dual) defect at $b$, then we require that $\Delta F(m , \delta t)= V^i_j (\delta t)$.  If $F(m)(\Ctoricfix)$ has no (dual) defects at either $a$ or $b$, then we require $\Delta F(m, \delta t) = \idmatrix$.  (By property~(\ref{item:hardcorediscrete}), the evolution satisfies the hard-core condition and exactly one of these three conditions must be true.)  $j$ is the qubit on which $\Delta F(m,1)$ acts.
\end{enumerate}
A \emph{continuous string operator} is $F = F(t)$ for some $t$ and some continuous-time string evolution $F(t)$.  We can define a set $\Fgraph$:
\begin{equation}
\Fgraph := \{\text{Continuous string operators} \}.
\end{equation}
A \emph{truncated continuous-time string evolution} is a continuous-time string evolution taken on the domain $[t_1, t_2]$, with $0 \leq t_1 < t_2 \leq l$.  
\end{Definition}
Note that we use $U^i_j(t)$ from time step $m$ to $m+1$ if we are moving a defect along the orientation of an edge and use $V^i_j (t)$ if we are moving against the orientation of the edge.
The orientation of edges will also be needed in Section~\ref{sec:Mgraph} to define a continuous configuration space of defects.  At this stage, the orientations can be arbitrary, but later in Section~\ref{sec:toricM}, we shall insist on a regular pattern for the edge orientations.  Also note that Definition~\ref{defn:continuoustimestring} implies that the evolution $F(t)$ is piecewise smooth.

A continuous-time string evolution represents a string evolution where time can take continuous values.  At integer times, it agrees with the steps of a verified discrete-time string evolution, and at non-integer times, the next discrete step is in progress.  However, we have two possible ways to define the evolution between discrete times, depending on whether we are moving a defect along the orientation of an edge or against it.

Because we have fixed the way we interpolate between integer time steps, a continuous-time string evolution does not contain any more information than a discrete-time string evolution.  Indeed, given any discrete-time string evolution $(a_k)$, there is a unique continuous-time string evolution corresponding to it, given by
\begin{equation}
F(m + \delta t) = U^i_j(\delta t) \circ a_{m} \circ \cdots \circ a_1 \text{ or } V^i_j(\delta t) \circ a_{m} \circ \cdots \circ a_1
\end{equation}
for all $m \in [0,l-1]$, $0 \leq \delta < 1$, depending on the orientation of the edge $a_{m+1}$, and where $j$ is the qubit at the edge $a_{m+1}$, $i = Z$ if $a_{m+1}$ is primal, and $i=X$ if $a_{m+1}$ is dual.

We can now define the set of \emph{continuous-time fault-tolerant paths} $\Fgraphpaths$ to be the set of continuous-time string evolutions allowing for truncation, reparameterization, and reversal:
\begin{align}
\label{eqn:Fgraphpaths}
\Fgraphpaths := \{ F(t):[0,1] \rightarrow \U{N} \, | \, & F(g(t)) \text{ is a truncated piecewise smooth} \\
& \text{continuous-time string evolution for some $g(t)$} \}. \nonumber
\end{align}
The function $g(t)$ can be any possible reparameterization, including truncation and reversal, i.e. $g:[0,1] \rightarrow [0,l]$, $g$ piecewise smooth.  For the toric code, $\Fgraphpaths$ plays the role of $\Fpath$ as discussed in Section~\ref{sec:geoFT}.  Note that $\Fgraph = \{ F(1) | F(t) \in \Fgraphpaths\}$.

\subsubsection{The graph-like configuration space $\Mgraph$}
\label{sec:Mgraph}

We can define a set of codes $\Mgraph$ that can be reached from $\Ctoricfix$ via hard-core continuous-time fault-tolerant evolutions.  Since $\Fgraph$ contains the endpoints of evolutions in $\Fgraphpaths$, we have
\begin{equation}
\Mgraph = \Fgraph (\Ctoricfix) \subset \Grass.
\end{equation}

Since a continuous-time string evolution reduces to a verified discrete-time string evolution at integer times, we can certainly think of a continuous-time string evolution as moving defects around in just the same way as a discrete-time string evolution.  That is, the discrete set $\Ctorichardcore$ is contained in $\Mgraph$.  This only applies at integer times; at non-integer times, we don't have a toric code for any configuration of the defects, so $\Mgraph$ is strictly bigger than $\Ctorichardcore$.

Nevertheless, we can interpret the extra codes in $\Mgraph$ as a continuous graph-like configuration space which allows primal defects to be on either vertices or \emph{edges} of $\Gamma$ and dual defects to be on vertices or edges of the dual graph $\bar{\Gamma}$.  Specifically, we imagine a partially completed Pauli operator as moving a defect along the edge between the starting location and ending location of the defect.  

To be precise, suppose $U^Z_j(t)$ acts on a subspace $\Ctoricfix$ associated with the defect configuration $(S_v, S_f)$ such that exactly one end $a$ of the edge $j$ is in $S_v$, and let $\Ctoricfixend = \sigma^Z_j (\Ctoricfix)$ be the code reached by transporting one primal defect from $a$ to $b$ (the other end of $j$) via $\sigma^Z_j$.  Then we have $S_f' = S_f$ and $S_v' = (S_v \setminus \{a\}) \cup \{b\}$.  Then if the edge is oriented from $a$ to $b$, we define the subspace $\Ctorict{t}$ ($t \in [0,1]$) as
\begin{align}
\Ctorict{t} :=&\ U^Z_j(t) (\Ctoricfix) = (\alpha(t) \idmatrix + \beta(t) \ver{\sigma}^Z_j) (\Ctoricfix) \\
=&\  \{ \alpha(t) \ket{\psi} + \beta(t) \ver{\sigma}^Z_j \ket{\psi} \text{ s.t. } \ket{\psi} \in \Ctoricfix \}.
\end{align}
When $\ket{\psi} \in \Ctoricfix$, $\ver{\sigma}^Z_j \ket{\psi}$ is the corresponding codeword in $\Ctoricfixend$, so $\Ctorict{t}$ consists of superpositions of corresponding codewords in $\Ctoricfix$ and in $\Ctoricfixend$, and is thus ``partway in between'' these two codes.  If the edge is oriented from $b$ to $a$, we instead let
\begin{align}
\Ctorict{t} &:= V^Z_j(t) (\Ctoricfix) \\
&= U^Z_j(1-t) \ver{\sigma}^Z_j (\Ctoricfix) \\
&= U^Z_j(1-t) (\Ctoricfixend) \\
&= \{ \alpha(1-t) \ver{\sigma}^Z_j \ket{\psi} + \beta(1-t) \ket{\psi} \text{ s.t. } \ket{\psi} \in \Ctoricfix \}.
\end{align}
Here we have used the fact that $V^Z_j(t) = U^Z_j(1-t) \ver{\sigma}^Z_j$ (equation (\ref{eqn:VijUij})) and the relationship between $\Ctoricfix$ and $\Ctoricfixend$.

We interpret $\Ctorict{t}$ as associated with a configuration where there are dual defects at the locations $S_f$, primal defects at the locations $S_v \setminus \{a\}$, plus one more primal defect located on the edge between $a$ and $b$, a distance $t$ from $a$.  (Assume the length of each edge is normalized to $1$.)  Since $\Ctoricfix$ and $\Ctoricfixend$ are orthogonal subspaces and $\alpha(t)$, $\beta(t)$ are monotonic in $t$, the only intersection between the subspaces $\Ctorict{t}$ and $\Ctorict{t'}$ is the null vector whenever $t \neq t'$.  Thus, given a codeword in one of these subspaces, we can uniquely determine the value of $t$ and therefore the location of the moving defect in a corresponding configuration.  Conversely, given a location of a defect along the edge, we can immediately assign a value of $t$ and thus obtain a code $\Ctorict{t}$.  The only potential ambiguity is that the code corresponding to a particular defect location might in principle depend on whether the defect was moved to the location by starting from $a$ and moving towards $b$ or by moving from $b$ towards $a$, but by assigning orientation and using $U$ or $V$ as appropriate, we have ensured that the code $\Ctorict{t}$ is the same whether it is derived by moving a distance $t$ from $a$ or a distance $1-t$ from $b$.  Thus, we have found a smooth diffeomorphism between the subset $\{\Ctorict{t}\}_{t \in [0,1]}$ of $\Grass$ and the edge in $\Gamma$ corresponding to qubit $j$.

Similarly, if $U^X_j(t)$ maps the code $\Ctoricfix$ to $\Ctoricfixend$ by moving a single dual defect from face $a$ to face $b$ along an edge $j$ of the dual lattice $\bar{\Gamma}$, we can define
\begin{equation}
\Ctorict{t} := U^X_j(t) (\Ctoricfix) = \{ \alpha(t) \ket{\psi} + \beta(t) \ver{\sigma}^X_j \ket{\psi}  \text{ s.t. } \ket{\psi} \in \Ctoricfix \}
\end{equation}
or
\begin{equation}
\Ctorict{t} := V^X_j(t) (\Ctoricfix) =  \{ \alpha(1-t) \ver{\sigma}^X_j \ket{\psi} + \beta(1-t) \ket{\psi} \text{ s.t. }  \ket{\psi} \in \Ctoricfix \},
\end{equation}
depending on the orientation of the edge $j$ with respect to $a$ and $b$.  As with the primal defects, we may interpret $\Ctorict{t}$ as having a dual defect located on the edge of the dual lattice between $a$ and $b$, a distance $t$ from $a$.

Based on this, we can see that $\Mgraph$ is equivalent to the set of configurations such that either all primal defects are on vertices and dual defects are on faces, or all but one defect is on a vertex or face and the last defect is located on an edge of the lattice $\Gamma$ (if the defect is primal) or on an edge of the dual lattice $\bar{\Gamma}$ (if it is a dual defect).  Because we defined continuous-time string evolutions to only have a single incomplete Pauli operator, $\Mgraph$ never has more than one defect on the interior of an edge.

While $\Mgraph$ has a continuous structure (i.e., it is not discrete), it is still not a manifold.  When we imagine shifting a defect around on the interior of an edge, $\Mgraph$ looks locally like a $1$-manifold.  However, when all defects are on vertices of $\Gamma$ or $\bar{\Gamma}$, there are many choices for how to move the defects.  Any defect (subject to hard-core constraints) could move out onto any of the edges incident at its vertex, so at these points, $\Mgraph$ is a junction between many different $1$-manifolds.  In other words, $\Mgraph$ has the structure of a graph.  The maximum degree of the graph is $4(n_v + n_f)$ when $\Gamma$ is a square lattice, but for some vertices, the degree will be smaller because the hard-core condition restricts the possible moves.

We can define the usual pre-connection on the bundles over $\Mgraph$, and it turns out to be a well-defined partial connection.  We delay the proof that it is a connection until Section~\ref{sec:toric_flatFPP}, where we show that it is a strong FPP connection.

\subsection{The extended evolutions and the full configuration space $\M$}
\label{sec:toricM}

The next step is to figure out how to extend $\Mgraph$ to a larger topological space $\M$ so we can determine if the connection we get on $\M$ is flat.

$\Mgraph$ is already an extension of $\Ctorichardcore$, and was defined by allowing defects to be on the edges as well as on the vertices of $\Gamma$ and $\bar{\Gamma}$.  The next step is to relax the restrictions on the locations of defects even further, to let them be anywhere (subject to the hard-core condition) on the torus $T_g$.  Once again, we find a homeomorphism between a subset $\M$ of the Grassmannian and the hard-core configuration space of the defects.

For this purpose, we wish to assume that the lattice $\Gamma$ is a (square) rectilinear one and to choose orientations on the edges in a regular way so that all faces look the same. Figure~\ref{fig:uniformtorus} is an example of how to do this.  We believe neither of these assumptions is truly necessary, but they save us from having to consider many different cases.

\begin{figure}
\begin{tikzpicture}[thick,>=latex]

\draw [step=1cm] (-2,-2) grid (2,2);

\foreach \x in {-2,...,2}
	\foreach \y in {-2,...,1}
		\draw[->] (\x,\y) -- +(0,0.6);

\foreach \x in {-2,...,1}
	\foreach \y in {-2,...,2}
		\draw[->] (\x,\y) -- +(0.6,0);

\end{tikzpicture}
\caption{A way of choosing orientations on the torus so that all faces look the same.}
\label{fig:uniformtorus}
\end{figure}
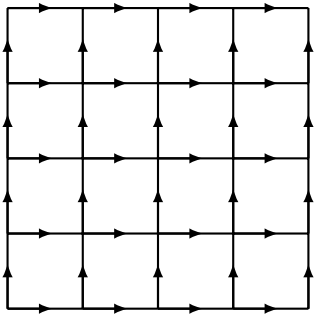

\subsubsection{Codes with one defect off a vertex}
\label{sec:codesonedefect}

Let us start by considering moving a single primal defect off of a vertex.  The case of a dual defect will be similar.  We define a map assigning a subspace in $\Grass$ to each location within a face of $\Gamma$.  Let the four vertices of the face be labelled $A$, $B$, $C$, and $D$, with the edges oriented $A \rightarrow B$, $C \rightarrow D$, $C \rightarrow A$, $D \rightarrow B$, as shown in Figure~\ref{fig:ABCD}.  Fix the locations of all but one defect, and consider four codes $C_A$, $C_B$, $C_C$, and $C_D$, which have the last defect at $A$, $B$, $C$, and $D$, respectively.  We assume that the defect configurations for all four codes satisfy the hard-core condition.  When the last defect is in the square $ABCD$, we can refer to the defect location with the coordinates $(x,y)$, with $(x,y) = (0,0)$ corresponding to $C$ and $(x,y) = (1,1)$ corresponding to $B$, etc.  We wish to define a homeomorphism between a subset of $\Grass$ and locations $(x,y) \in [0,1] \times [0,1]$ which is consistent with the one already chosen for $\Mgraph$.  For instance, a defect at position $(0,y)$ on the edge $CA$ should correspond to the code $\{\alpha(y) \ket{\psi_C} + \beta(y) \ket{\psi_A}\}$.  Here, and in the following, we let $\ket{\psi_A}$, $\ket{\psi_B}$, $\ket{\psi_C}$, and $\ket{\psi_D}$ be mutually corresponding codewords in the codes $C_A$, $C_B$, $C_C$, and $C_D$.  Namely, $\ket{\psi_B} = \ver{\sigma}^i_j \ket{\psi_A}$, where $\ver{\sigma}^i_j$ is the verified Pauli that moves a defect from $A$ to $B$, and so on.

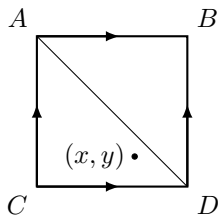
\begin{figure}
\begin{tikzpicture}[>=latex]
\draw [thick] (0,0) rectangle (2,2);
\draw (2,0) -- (0,2);

\draw[thick,->] (0,0) -- +(0,1.1);
\draw[thick,->] (2,0) -- +(0,1.1);
\draw[thick,->] (0,0) -- +(1.1,0);
\draw[thick,->] (0,2) -- +(1.1,0);

\node at (0,2) [anchor=south east] {$A$};
\node at (2,2) [anchor=south west] {$B$};
\node at (0,0) [anchor=north east] {$C$};
\node at (2,0) [anchor=north west] {$D$};

\draw (1.3,0.4) circle (1pt)[fill=black] node[anchor=east] {$(x,y)$};

\end{tikzpicture}
\caption{A single face with corners $A$, $B$, $C$, and $D$.}
\label{fig:ABCD}
\end{figure}

We can break the square face into two triangles via the line $AD$, as in figure~\ref{fig:ABCD}.  If the defect is below this line ($x + y \leq 1$), we will choose a code which is a superposition of $C_A$, $C_C$, and $C_D$.  
\begin{equation}
C_{(x,y)} = \{ a(x,y) \ket{\psi_A} + c(x,y) \ket{\psi_C} + d(x,y) \ket{\psi_D} \}.
\label{eq:ACD}
\end{equation}
If it is above the line ($x+y \geq 1$), we choose a code which is a superposition of $C_A$, $C_B$, and $C_D$.  
\begin{equation}
C_{(x,y)} = \{ a'(x,y) \ket{\psi_A} + b'(x,y) \ket{\psi_B} + d'(x,y) \ket{\psi_D} \}.
\label{eq:ABD}
\end{equation}
In order to agree with the codes along the edges, we have the following conditions:
\begin{align}
\text{Edge $CA$: } a(0,y) &= \beta(y) \\
c(0,y) &= \alpha(y) \\
d(0,y) &= 0 \\
\text{Edge $CD$: } a(x,0) &= 0 \\
c(x,0) &= \alpha(x) \\
d(x,0) &= \beta(x) \\
\text{Edge $AB$: }a'(x,1) &= \alpha(x) \\
b'(x,1) &= \beta(x) \\
d'(x,1) &= 0 \\
\text{Edge $DB$: } a'(1,y) &= 0 \\
b'(1,y) &= \beta(y) \\
d'(1,y) &= \alpha(y).
\end{align}
If the defect is on the line $AD$, we wish the rules obtained from the two triangles to agree:
\begin{align}
a(x,1-x) &= a'(x,1-x) \\
b'(x,1-x) &= 0 \\
c(x, 1-x) &= 0 \\
d(x, 1-x) &= d'(x,1-x).
\end{align}
We must also obey normalization:
\begin{align}
|a(x,y)|^2 + |c(x,y)|^2 + |d(x,y)|^2 &= 1 \\
|a'(x,y)|^2 + |b'(x,y)|^2 + |d'(x,y)|^2 &= 1.
\end{align}
Finally, we wish the mapping $(x,y) \mapsto (a,c,d; a', b', d')$ to be one-to-one so that each code corresponds to only one point in the face.

One solution is as follows:
\begin{align}
a(x,y) &=  \alpha(x) \beta(y) \\
c(x,y) &=   \alpha(x+y) \\
d(x,y) &=  \frac{\beta(x) \alpha(y)}{|\beta(x) \alpha(y)|} \sqrt{1 - |\alpha(x) \beta(y)|^2 - |\alpha(x+y)|^2} \\
d(0,y) &= 0 \\
\nonumber \\
a'(x,y) &=  \alpha(x) \beta(y) \\
b'(x,y) &=  \beta(x+y - 1) \\
d'(x,y) &=  \frac{\beta(x) \alpha(y)}{|\beta(x) \alpha(y)|} \sqrt{1 - |\alpha(x) \beta(y)|^2 - |\beta(x+y-1)|^2} \\
d'(x,1) &= 0
\end{align}
These formulas satisfy the constraints because of the properties of $\alpha$ and $\beta$ noted in Section~\ref{sec:contin-strings}.  Note that $d$ and $d'$ are continuous at $x=0$ and $y=1$, respectively.  Not all values of $(a,c,d; a',b',d')$ can be achieved for $(x,y)$ within the square.

\begin{Proposition}
There is a bijection between the points $(x,y)$ within the square $ABCD$ and the achievable sextuplets $(a,c,d; a',b',d')$.
\end{Proposition}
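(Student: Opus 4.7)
The plan is to show injectivity of the map $\Phi:(x,y)\mapsto(a,c,d;a',b',d')$ on the closed square $[0,1]\times[0,1]$; surjectivity onto the set of ``achievable'' sextuplets is automatic since that set is by definition the image. The map is defined piecewise, so I will handle the two triangles $T_{\mathrm{low}}=\{x+y\le 1\}$ and $T_{\mathrm{up}}=\{x+y\ge 1\}$ separately, check injectivity on each, and then show that the only overlap between $\Phi(T_{\mathrm{low}})$ and $\Phi(T_{\mathrm{up}})$ occurs along the shared diagonal $x+y=1$, where the two prescriptions already coincide by the defining constraints.

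For $(x,y)\in T_{\mathrm{low}}$, I will recover the value $s:=x+y$ from $c(x,y)=\alpha(s)$ using that $\alpha$ is injective on $[0,1]$ (because $|\alpha(s)|^2=\cos^2(s\pi/2)$ is strictly decreasing on $[0,1]$). Knowing $s$, I then recover $x$ (and hence $y$) from $|a(x,y)|=\cos(x\pi/2)\sin((s-x)\pi/2)$ by showing this function of $x$ is strictly decreasing on $[0,s]$: its derivative computes to $-(\pi/2)\cos((2x-s)\pi/2)$, and for $s\le 1$ and $x\in[0,s]$ we have $(2x-s)\pi/2\in[-\pi/2,\pi/2]$, so the cosine is non-negative and vanishes only at the endpoints when $s=1$. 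The component $d$ is then determined by normalization and the prescribed phase, so it adds no independent constraint but is consistent. The same argument, with $b'=\beta(s-1)$ replacing $c$ and the interval $x\in[s-1,1]$ replacing $[0,s]$, handles $T_{\mathrm{up}}$; the derivative sign analysis goes through identically because $2x-s\in[s-2,2-s]\subseteq[-1,1]$.

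To rule out collisions between the two triangles, observe that any element of $\Phi(T_{\mathrm{low}})$ has its component along $\ket{\psi_D}$ (namely $d$) lying in a one-dimensional subspace orthogonal to $\mathrm{span}(\ket{\psi_A},\ket{\psi_C})$, with no contribution along $\ket{\psi_B}$; analogously $\Phi(T_{\mathrm{up}})$ has no contribution along $\ket{\psi_C}$. Hence a sextuplet lies in both images only when the $\ket{\psi_C}$-component (i.e.\ $c$) of the lower-triangle representation vanishes and the $\ket{\psi_B}$-component (i.e.\ $b'$) of the upper-triangle representation vanishes. The first forces $\alpha(s)=0$, i.e.\ $s=1$, and the second forces $\beta(s-1)=0$, i.e.\ $s=1$, which is precisely the diagonal where the matching conditions listed before the proposition guarantee that the two definitions give the same code.

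The main (and essentially only) technical obstacle is the monotonicity step in the interior of each triangle: one must verify that $|a|$, read as a function of $x$ for fixed $s$, is strictly monotonic and thus inverts uniquely. The derivative computation above makes this clean, reducing the whole injectivity question to an elementary calculus fact about $\cos((2x-s)\pi/2)$. Edge cases ($x=0$, $y=1$, corners) where $d$ or $d'$ are defined by continuity rather than by the general formula are handled directly by noting that $\Phi$ restricted to each edge agrees with the homeomorphism already established for $\Mgraph$ in Section~\ref{sec:Mgraph}, which is manifestly injective on each edge.
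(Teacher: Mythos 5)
Your proof is correct and follows essentially the same route as the paper's: on each triangle you recover $x+y$ from $|c|=|\alpha(x+y)|$ (resp.\ $|b'|=|\beta(x+y-1)|$) and then pin down the point on that diagonal using the strict monotonicity of $|a|$. You add two details the paper leaves implicit --- an explicit derivative computation establishing the monotonicity (where the paper simply observes that both factors $\cos(x\pi/2)$ and $\sin(y\pi/2)$ decrease) and an explicit argument that the images of the two triangles meet only on the line $x+y=1$ --- but the key mechanism is identical.
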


\begin{proof}
By the definitions above, the values of $(x,y)$ completely specify $(a,c,d; a',b',d')$, so we have a map $\iota_\Box: ABCD \rightarrow \Grass$.  We wish to show that this map is one-to-one.  

Let us consider a possible triplet $(a,c,d)$ in the triangle $ACD$ and prove that there is a unique $(x,y)$ that it corresponds to.  In fact, it is sufficient to look only at $|a|$ and $|c|$ to determine $(x,y)$.  $|\alpha|$ is monotonically decreasing and $|\beta|$ is monotonically increasing.  Therefore, the value of $|c|$ tells us the value of $x+y$ provided that $x+y \leq 1$.  Moreover, for fixed $x+y$, increasing $x$ and decreasing $y$ decreases $|a|$, so there is only a single point on the line $x+y = |\alpha|^{-1}(c)$ that can correspond to a particular value of $|a|$.

Similarly, a triplet $(a',b',d')$ in the triangle $ABD$ corresponds to a unique pair $(x,y)$ with $x+y \geq 1$, since $|b'|$ tells us the value of $x+y -1$ and $|a'|$ specifies a unique $(x,y)$ on that line.
\end{proof}

Thus, we have established an isomorphism between the possible locations of a defect within the square $ABCD$ and a subset of the Grassmannian (the image of $\iota_\Box$).  Furthermore, this isomorphism and its inverse are continuous.  Indeed, the map $\iota_\Box$ is a smooth diffeomorphism within the triangle $ACD$ and inside the triangle $ABD$.  However, there is a discontinuity in the derivatives across the line $AD$.  Therefore, the isomorphism is only \emph{piecewise smooth}.

Via the same strategy, we can define codes for which a single dual defect is located anywhere within a face of the dual graph $\bar{\Gamma}$.

\subsubsection{Unitary to move a single defect into a square}

We now wish to define unitaries that can move primal and dual defects from a vertex into the interior of a face of $\Gamma$ or $\bar{\Gamma}$ respectively.  We will describe in some detail the appropriate unitaries to move a primal defect, and the definition for a dual defect is very similar.

Let $A$ be the upper left corner of a face $ABCD$ of $\Gamma$ as in Figure~\ref{fig:ABCD}, and let $C_A$, $C_B$, $C_C$, and $C_D$, as above, be codes that have a primal defect on $A$, $B$, $C$, or $D$, respectively, and all other defects at the same locations in all four codes.  We restrict attention to situations in which all four codes satisfy the hard core condition.  We will define a unitary $U_{A,(x,y)}$ that maps $C_A$ to $C_{(x,y)}$, i.e., that moves a defect from $A$ to $(x,y)$, where $(x,y)$ is an interior point of the square $ABCD$.  

There are many possible such unitaries, but we wish to choose one that acts only on qubits in the vicinity of the face $ABCD$.  In particular, it will act non-trivially only on the $12$ qubits on edges adjacent to the four vertices $ABCD$. 

Furthermore, the unitary $U_{A,(x,y)}$ shall act non-trivially only on the subspaces $C_A \oplus C_B \oplus C_C \oplus C_D$ spanned by states with exactly one primal defect on one of the vertices $A$, $B$, $C$, or $D$.  There are many such subspaces corresponding to different configurations of defects outside the square, and each such subspace $C_A \oplus C_B \oplus C_C \oplus C_D$ will be an invariant subspace of $U_{A,(x,y)}$.   If there are no defects on these vertices or more than one (which would violate the hard-core condition), $U_{A,(x,y)}$ acts as the identity.  
Within a single subspace $C_A \oplus C_B \oplus C_C \oplus C_D$, we demand that $U_{A,(x,y)}$ maps each state $\ket{\psi} \in C_A$ to the corresponding state $\ket{\psi}\in C_{(x,y)}$, as given in Section~\ref{sec:codesonedefect}.  We also insist that if $(x,y)$ is on an edge of the square, then $U_{A,(x,y)}$ agrees with a continuous string operator that moves the defect from $A$ to that position on the edge.  (There are two such continuous string operators acting on the edges of the square, corresponding to moving around the face in opposite directions, but they agree on the subspace on which $U_{A,(x,y)}$ acts non-trivially because by the hard-core condition there is no dual defect located at that face.)  Finally, we ask that the map $(x,y) \mapsto U_{A,(x,y)}$ be a piecewise smooth diffeomorphism.\footnote{It should be a smooth diffeomorphism on each triangle, but there will be a discontinuity in the derivatives on the line $AD$.}  We have specified the action of $U_{A,(x,y)}$ on a subset of basis vectors (those from $C_A$) within each invariant subspace $C_A \oplus C_B \oplus C_C \oplus C_D$ and completely defined it when $(x,y)$ is on the edge of the square.  We don't care otherwise about the action of $U_{A,(x,y)}$, so pick some $U_{A,(x,y)}$ satisfying these constraints.

\begin{Proposition}
\label{prop:contunitary}
A family of unitaries $U_{A,(x,y)}$ with the properties listed in the previous paragraph exists, and the unitaries can be performed by acting on just the $12$ qubits at edges adjacent to the square ABCD.
\end{Proposition}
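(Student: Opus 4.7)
The plan is to construct $U_{A,(x,y)}$ by giving its action on a distinguished family of four-dimensional subspaces and then extending. First I would observe that the codes $C_A, C_B, C_C, C_D$ are obtained from one another by applying single-qubit Pauli operators on the four edges $AB$, $CD$, $CA$, $DB$ of the face, call them $P_{AB}, P_{CD}, P_{CA}, P_{DB}$; for each codeword $\ket{\psi_A} \in C_A$, the corresponding states $\ket{\psi_B}, \ket{\psi_C}, \ket{\psi_D}$ are $P_{AB}\ket{\psi_A}$, $P_{CA}\ket{\psi_A}$ and $P_{DB} P_{AB}\ket{\psi_A}$ respectively (or some equivalent expression up to the global phase conventions fixed by $\alpha, \beta$). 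Let $W_{\ket{\psi_A}}$ denote the four-dimensional subspace spanned by these four states. As $\ket{\psi_A}$ ranges over an orthonormal basis of $C_A$, the subspaces $W_{\ket{\psi_A}}$ are mutually orthogonal and their direct sum is $C_A \oplus C_B \oplus C_C \oplus C_D$.

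On each $W_{\ket{\psi_A}}$, I would define a $4\times 4$ unitary $\tilde{U}_{A,(x,y)}$ by demanding $\ket{\psi_A} \mapsto a(x,y)\ket{\psi_A} + c(x,y)\ket{\psi_C} + d(x,y)\ket{\psi_D}$ (or the primed version when $x+y \geq 1$), which is a unit vector in $W_{\ket{\psi_A}}$ by the normalization already established for $(a,c,d)$ and $(a',b',d')$. Extend to a unitary on $W_{\ket{\psi_A}}$ using a standard Gram--Schmidt completion that depends smoothly on $(x,y)$ within each triangle; for concreteness one can pick the extension that acts as identity on the orthogonal complement of $\ket{\psi_A}$ intersected with $W_{\ket{\psi_A}}$ plus the minimal rotation needed. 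On the orthogonal complement of $C_A \oplus C_B \oplus C_C \oplus C_D$, declare $U_{A,(x,y)}$ to act as the identity. The boundary conditions along the four edges and along $AD$ are then inherited directly from the matching conditions on $(a,c,d;a',b',d')$ already verified.

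The main obstacle, and the nontrivial part of the proposition, is showing that this globally defined unitary can actually be implemented as a unitary supported on only the $12$ qubits on edges incident to $A, B, C, D$. The key point to establish is that the projector onto the invariant subspace $C_A \oplus C_B \oplus C_C \oplus C_D$ and the basis transformation within each $W_{\ket{\psi_A}}$ are themselves $12$-qubit operators. I would argue this by noting that the subspace $C_A \oplus C_B \oplus C_C \oplus C_D$ is distinguished from other defect configurations exactly by the values of the vertex operators $A_A, A_B, A_C, A_D$ (plus the face operator at $ABCD$), all of which are supported on those $12$ qubits. Within the correct syndrome sector, the four states $\ket{\psi_A},\ldots,\ket{\psi_D}$ differ by the single-edge Paulis $P_{AB}, P_{CD}, P_{CA}, P_{DB}$, which again live entirely inside the $12$-qubit region. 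Consequently the desired $4\times 4$ rotation lifts to a unitary acting only on the $12$ qubits, conditioned on the correct local syndrome pattern via a polynomial in the local stabilizer generators and the edge Paulis. Finally, piecewise smoothness in $(x,y)$ follows because the matrix entries $a, c, d, a', b', d'$ are piecewise smooth (smooth away from $AD$) and the Gram--Schmidt completion preserves this.
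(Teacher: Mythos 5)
Your proposal correctly identifies the right invariant subspaces $C_{\ket{\psi}}$ and the same "fix the column on $\ket{\psi_A}$, then extend" strategy, but it has two real gaps when compared with what the proposition actually demands.

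First, your extension step glosses over a genuine topological obstruction. The constraint on $U_{A,(x,y)}$ along the boundary of the square is not merely the normalization/matching conditions on $(a,c,d;a',b',d')$; the proposition (and the surrounding discussion) requires that when $(x,y)$ lies on an edge of $ABCD$, the \emph{full} unitary $U_{A,(x,y)}$ agree with a specific continuous string operator. This fixes not just the image of $\ket{\psi_A}$ but the entire $4\times4$ block, so a ``Gram--Schmidt completion plus minimal rotation'' will generically fail to match the boundary. Once you subtract off a smooth initial guess $\tilde{U}_{A,(x,y)}$, what remains to be chosen is a map $\hat{U}:ABCD\to\U{3}$ whose restriction to $\partial(ABCD)$ is already prescribed; such a filling exists only if that boundary loop in $\U{3}$ is contractible. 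The paper proves this by computing the winding of $\det U_{A,(x,y)}$ around the boundary and showing it cancels thanks to the chosen pattern of edge orientations (using $\pi_1(\SU{3})=0$ to finish). Your proposal never checks this, and without it existence is not established.

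Second, your locality argument is not sound as stated. You assert that the $4\times4$ rotation ``lifts to a unitary acting only on the 12 qubits, conditioned on the correct local syndrome pattern via a polynomial in the local stabilizer generators and the edge Paulis.'' But $U_{A,(x,y)}$ involves continuous coefficients $a,c,d,\ldots$ and is not a priori a polynomial in Pauli operators; moreover, a linear combination like $a\idmatrix + c\,P_{CA} + d\,P_{DB}P_{AB}$ is not unitary, so this ansatz does not produce the needed operator. The correct and nontrivial point, which the paper proves, is that all states in a single $C_{\ket{\psi}}$ have identical reduced density matrix on the complement of the $12$ qubits — shown by a Pauli expansion argument that $\tr_{\#}(\sigma_i\ket{\psi_A}\bra{\psi_A}\sigma_j)=0$ for $i\neq j$ and equality of diagonal terms by cyclicity — from which it follows (by a standard purification argument) that any unitary acting within $C_{\ket{\psi}}$ can be implemented on the $12$ qubits alone. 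That reduced-state computation is the substance of the locality claim, and your proposal does not supply a replacement for it.
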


\begin{proof}
Let us focus on a single invariant subspace.  In particular, let us focus on a four-dimensional subspace $C_{\ket{\psi}}$ spanned by $\ket{\psi_A} = \ket{\psi}$, $\ket{\psi_B} = \ver{\sigma}_B \ket{\psi}$, $\ket{\psi_C} = \ver{\sigma}_C \ket{\psi}$, and $\ket{\psi_D} = \ver{\sigma}_D \ket{\psi}$, where $\ket{\psi} \in C_A$ and $\ver{\sigma}_B$, $\ver{\sigma}_C$, and $\ver{\sigma}_D$ are verified Pauli operators mapping $C_A$ to $C_B$, $C_C$, and $C_D$, respectively.  We will choose a $U_{A,(x,y)}$ such that each $C_{\ket{\psi}}$ is an invariant subspace.  

Section~\ref{sec:codesonedefect} specifies the value of $\ket{\psi_{(x,y)}} = U_{A,(x,y)} \ket{\psi_A}$ for all $(x,y)$ in the square $ABCD$.  There is no obstruction to extending $\ket{\psi_{(x,y)}}$ to a basis for $C_{\ket{\psi}}$ in a piecewise smooth way over the whole square $ABCD$.  This is equivalent to choosing a $4 \times 4$ unitary $\tilde{U}_{A,(x,y)}$ such that $\tilde{U}_{A,(x,y)} \ket{\psi_A} = \ket{\psi_{(x,y)}}$ $\forall (x,y) \in ABCD$.  In fact, because we are free to choose the overall phase, we can ensure that $\tilde{U}_{A,(x,y)} \in \SU{4}$.  However, $\tilde{U}_{A,(x,y)}$ is not necessarily the $U_{A,(x,y)}$ we desire because we also have a requirement that specifies $U_{A,(x,y)}$ completely when $(x,y)$ is on the boundary of the square $ABCD$.  

Note that choosing $U_{A,(x,y)}$ is equivalent to choosing $\hat{U}: ABCD \rightarrow \U{3}$:
\begin{equation}
\hat{U}(x,y) = \tilde{U}_{A,(x,y)}^{-1} U_{A,(x,y)} |_{\hat{C}},
\end{equation}
where $\hat{C}$ is the subspace of $C_{\ket{\psi}}$ spanned by $\ket{\psi_B}$, $\ket{\psi_C}$, and $\ket{\psi_D}$.  We chose $\tilde{U}_{A,(x,y)}$ to be piecewise smooth as a function of $(x,y)$. Therefore, $U_{A,(x,y)}$ is piecewise smooth as $(x,y)$ varies over the square iff $\hat{U}$ is piecewise smooth.  Since $U_{A,(x,y)}$ is specified on the boundary of the square, so is $\hat{U}$.  We can think of this specification as a loop $\gamma$ in $\U{3}$.  We claim that $\gamma$ is contractible.  A piecewise smooth homotopy between $\gamma$ and the trivial loop is equivalent to a choice of $\hat{U}$ that has the correct value on the boundary of ABCD, and thus determines $U_{A,(x,y)}$ on the subspace $C_{\ket{\psi}}$.

Therefore, we just need to show that $\gamma$ is contractible.  $\U{3} = \U{1} \times \SU{3}$, with $\U{1}$ the center of $\U{3}$, and the projection from $\U{3}$ to $\U{1}$ is given by the determinant.  Since $\pi_1(\SU{3})=0$, we just need to show that $\det \gamma$ is contractible in $\U{1}$.  We can check this by calculating $\det U_{A,(x,y)}$ around the loop, since $\det \tilde{U}_{A,(x,y)} = 1$.  Now, on the subspace $C_{\ket{\psi}}$, $U^i_j(t)$ and $V^i_j(t)$ act non-trivially only on the 2-dimensional subspace with a defect at one vertex of the edge for qubit $j$, so $\det U^i_j(t) = \det V^i_j (1-t) = e^{-2it\pi/2}$.  Referring to Figure~\ref{fig:ABCD}, as we move from $A$ to $C$ along the edge $CA$, we move the phase around half the unit circle, but then we move it back (since the edge orientation is opposite) as we move from $C$ to $D$ along the edge $CD$.  Then we go halfway around the unit circle again going from $D$ to $B$ and then reverse ourselves again as we return from $B$ to $A$.  We see that the phase does not loop. Therefore, $\det \gamma$ and $\gamma$ are contractible.  Note that this argument relied on our choice of edge orientation.

We thus get a valid definition of $\hat{U}$ on the square ABCD and can deduce a definition of $U_{A,(x,y)}|_{\hat{C}} := \tilde{U}_{A,(x,y)} \hat{U}(x,y)$.  We can use the same definition for $U_{A,(x,y)}$ on any subspace of the form $C_{\ket{\psi}}$.  We can let $\ket{\psi}$ run over a basis of $C_A$, giving us a definition of $U_{A,(x,y)}$ satisfying all constraints on the whole invariant subspace $C_A \oplus C_B \oplus C_C \oplus C_D$.  

To show that there exists a unitary $U_{A,(x,y)}$ which can be realized by acting only on the $12$ qubits adjacent to the square ABCD, we can apply the following lemma:
\begin{Lemma}
Let $U$ be a unitary operator with an invariant subspace $S$.  There exists $U'$ such that $U'|_S = U|_S$ such that $U'$ can be realized by acting only on the qubits in set $T$ if $\tr_T U \ket{\psi}\bra{\psi} U^\dagger = \tr_T \ket{\psi}\bra{\psi}$ for all $\ket{\psi} \in S$.
\end{Lemma}

\begin{proof}[Proof of Lemma.]

Let $R$ be reference Hilbert space of the same dimension as $S$, and let $\ket{\Phi} = \sum_{k} \ket{k}_S \ket{k}_R$ be a maximally entangled state between $S$ and $R$.  Then $\tr_{T} (U \otimes \idmatrix) \ket{\Phi} \bra{\Phi} (U^\dagger \otimes \idmatrix) = \tr_{T} \ket{\Phi} \bra{\Phi}$.  It follows (from standard arguments, see e.g. \cite{mayers}) that there exists unitary $U'$ acting only the qubits in $T$ such that
\begin{equation}
(U' \otimes \idmatrix)  \ket{\Phi} = (U \otimes \idmatrix) \ket{\Phi}.
\end{equation}
Since $U'$ and $U$ have the same Choi-Jamiolkowski isomorphism on the subspace $S$, they act the same way on $S$.
\end{proof}

We thus need only to show that all states in one invariant subspace $C_{\ket{\psi}}$ have the same density matrix on qubits not adjacent to the square $ABCD$.  Let $\tr_\#$ indicate the trace over the $12$ qubits adjacent to the square $ABCD$.  

Let $\Pi_A = \frac{1}{2} (\idmatrix - A_A)$ be the projector on the subspace with a defect at $A$ and similarly let $\Pi_B$, $\Pi_C$, and $\Pi_D$ be the projectors on the subspaces with defects at $B$, $C$, and $D$.  Note that these four projectors only act on the $12$ qubits adjacent to $ABCD$.  

Consider $\ket{\psi_i}$ and $\ket{\psi_j}$, with $i \neq j$ drawn from $\{A,B,C,D\}$.  Then
\begin{align}
\tr_{\#} \ket{\psi_i} \bra{\psi_j} &= \tr_{\#} (\Pi_i \ket{\psi_i} \bra{\psi_j}) \\
&= \tr_{\#} (\ket{\psi_i} \bra{\psi_j} \Pi_i) \\
&= 0,
\end{align}
with the second line following by cyclicity of the trace.

Furthermore, 
\begin{equation}
\tr_\# (\ket{\psi_A} \bra{\psi_A}) = \tr_\# (\ver{\sigma}_B \ket{\psi_A} \bra{\psi_A} \ver{\sigma}_B) = \tr_\# (\ver{\sigma}_C \ket{\psi_A} \bra{\psi_A} \ver{\sigma}_C) = \tr_\# (\ver{\sigma}_D \ket{\psi_A} \bra{\psi_A} \ver{\sigma}_D)
\end{equation}
by cyclicity of the trace.  Thus, given any normalized state 
\begin{equation}
\ket{\phi} = \lambda_A \ket{\psi_A} + \lambda_B \ket{\psi_B} + \lambda_C \ket{\psi_C} + \lambda_D \ket{\psi_D} \in C_{\ket{\psi}},
\end{equation}
we have that
\begin{align}
\tr_{\#} \ket{\phi} \bra{\phi} &= \sum_{i \in \{A,B,C,D\}} |\lambda_i|^2 \tr_{\#} \ket{\psi_i} \bra{\psi_i} + \sum_{i \neq j} \lambda_i \lambda_j^* \tr_{\#} \ket{\psi_i} \bra{\psi_j} \\
&= \Big( \sum_{i \in \{A,B,C,D\}} |\lambda_i|^2 \Big) \tr_{\#} \ket{\psi} \bra{\psi} + \sum_{i \neq j} 0 \\
&= \tr_{\#} \ket{\psi} \bra{\psi}.
\end{align}
(Recall that $\ket{\psi_i} = \ver{\sigma}_i \ket{\psi}$.)  Since the density matrix of all states in $C_{\ket{\psi}}$ is the same outside of the $12$ qubits adjacent to $ABCD$, any unitary on $C_{\ket{\psi}}$ can be implemented by acting on just the qubits adjacent to $ABCD$.  It follows that the unitaries $U_{A,(x,y)}$ can be implemented by acting on just the qubits adjacent to $ABCD$.
\end{proof}

Superficially, the definition only says what $U_{A,(x,y)}$ does to the specific code $C_A$, and not to codes that have a defect at $A$ but with a different configuration of the far away defects.  However, because of the requirement that $U_{A,(x,y)}$ acts only on the qubits adjacent to $ABCD$, $U_{A,(x,y)}$ has the same effect (i.e., it moves a defect from $A$ to $(x,y)$) for all codes satisfying the hard core condition and with a defect at $A$.  To see this, consider such a code $C'_A$.  Because $C'_A$ and $C_A$ only differ in the location (and possibly number) of defects away from $ABCD$, we can reach $C'_A$ from $C_A$ by applying some string operator $F$ that does not involve any of the qubits adjacent to $ABCD$.  $F$ and $U_{A,(x,y)}$ act on disjoint sets of qubits, and therefore they commute.

Let $C'_B$, $C'_C$, $C'_D$ be the codes that have all defects in the same locations as $C'_A$ except for having one at $B$, $C$, or $D$ instead of $A$, and $C'_{(x,y)}$ be the code derived from them which corresponds to having a defect at location $(x,y)$ in the square $ABCD$.  Assume all of the codes $C'_A$, $C'_B$, $C'_C$, and $C'_D$ satisfy the hard-core condition.  Just as $C'_A = F C_A$, it is also true that $C'_B = F C_B$, $C'_C = F C_C$, and $C'_D = F C_D$.  Therefore, $C'_{(x,y)} = F C_{(x,y)}$ as well, since these two codes are defined in equations (\ref{eq:ACD}) and (\ref{eq:ABD})  via linear superpositions of the other codes.  Thus,
\begin{equation}
C'_{(x,y)} = F \left( C_{(x,y)} \right) = F U_{A,(x,y)} \left( C_A \right) = U_{A,(x,y)} F \left( C_A \right) = U_{A,(x,y)} \left( C'_A \right).
\end{equation}
That is, $U_{A,(x,y)}$ moves a defect from $A$ to $(x,y)$ in the square $ABCD$ no matter what the locations are of other defects (primal or dual) far away.

In the same way, we can define a unitary $V_{\bar{A},(x,y)}$ that moves a dual defect from a vertex of the dual graph $\bar{\Gamma}$ to anywhere on the dual face for which $\bar{A}$ is in the upper left corner.  Once again, we insist that $V_{\bar{A},(x,y)}$ acts only on the qubits adjacent to the dual face, and therefore $V_{\bar{A},(x,y)}$ has the same effect on all codes satisfying the hard-core condition that have a dual defect at $\bar{A}$.

Note that it is possible that a code $C_A$ satisfies the hard-core condition when a defect is at $A$, but it would violate the hard-core condition if the defect is moved from $A$ to one of the other vertices $B$, $C$, or $D$.  In this case, we will not use the unitary $U_{A,(x,y)}$ since it might have an unexpected effect and we will not allow a defect to be moved into the interior of a face for which one of the vertices would violate the hard-core condition.  It is still possible to move it along an edge of that face using a continuous string operator, provided both vertices adjacent to the edge satisfy the hard-core condition.  Similarly, we do not allow dual defects to be moved into the interior of a dual face if having a dual defect at one of the dual vertices of the dual face would violate the hard-core condition.

We have only defined the unitaries $U_{A,(x,y)}$ and $V_{\bar{A},(x,y)}$ to move defects from the upper left corner of a face or dual face into the interior.  However, it is straightforward to move a defect from any other corner of a face or dual face into the interior by first moving the defect to the upper left corner via a verified discrete string operator and then using $U_{A,(x,y)}$ or $V_{\bar{A},(x,y)}$ to move it into the interior.

\subsubsection{The set $\Fext$ of unitaries and the set $\Fextpaths$ of unitary evolutions}

The next step is to define a set of unitaries and unitary evolutions that allows us to move \emph{multiple} defects off of the vertices of the graph.  To do so, we will need to be careful that moving multiple defects off of the graph does not involve codes that violate the hard-core condition.

\begin{Definition}
Let $S \subseteq \Ctoricall$ be a set of toric codes, let $S_A \subseteq S$ be the subset of codes in $S$ that have a primal defect at the vertex $A$, and let $S_{\bar{A}} \subseteq S$ be the subset of codes that have a dual defect at the dual vertex (i.e., primal face) $\bar{A}$.  Let $f$ be the face for which $A$ is the upper left corner.  Then define 
\begin{equation}
S_f^A = S \cup \{C | C = F C_A, C_A \in S_A, F \text{ a verified discrete $Z$ string operator acting on edges of $f$.} \}.  
\end{equation}
Similarly, let $\bar{f}$ be the dual face for which $\bar{A}$ is the upper left corner.
\begin{equation}
S_{\bar{f}}^{\bar{A}}  = S \cup \{C | C = F C_{\bar{A}}, C_{\bar{A}} \in S_{\bar{A}}, F \text{ a verified discrete $X$ string operator acting on edges of $\bar{f}$.} \}.  
\end{equation}
We can also define $S_e^A$ and $S_{\bar{e}}^{\bar{A}}$ analogously with $F$ a verified discrete string operator (i.e. a verified Pauli for a single qubit) and $e$ or $\bar{e}$ is an edge or dual edge for which $A$ is the starting point.
\end{Definition}

The set $S_f^A$ could add up to $3$ new codes to $S$ for each code $C_A \in S_A$ and similarly for $S_{\bar{f}}^{\bar{A}}$.  $S_e^A$ could add $0$ or $1$ new code to $S$ for each code $C_A \in S_A$.
$S_f^A$ is the set of codes that we can get by starting with a code in $S$ and moving a defect from $A$ to one of the other corners of $f$.  The definition makes the most sense for codes which satisfy the hard-core condition, since then we don't have to worry about annihilating two defects, but for the moment, we let the definition apply to all sets of codes.  Similarly, $S_e^A$ includes the codes we get from $S$ by moving the defect along the edge $e$.  The sets $S_{\bar{f}}^{\bar{A}}$ and $S_{\bar{e}}^{\bar{A}}$ are the versions of these sets for dual defects.

\begin{Definition}
\label{def:WFdecomp}
Suppose we have a unitary $W$ of the form $W = \left(\prod_{i=1}^r W_i \right) F$, where $F$ is a verified discrete string operator, and each $W_i$ is $U_{A,(x,y)}$, $V_{\bar{A},(x,y)}$, or $U^P_a(t)$ (with $a$ a qubit representing either an edge or dual edge, depending on the Pauli $P$; see equation (\ref{eq:UPauli}) for the definition of $U^P_a$, $0 \leq t \leq 1$).  We can then define a sequence of sets of codes $S_i$ for $(W,\Ctoricfix)$ via the following process:
\begin{enumerate}
\item Let $\Ctoricfixend = F \Ctoricfix$, and let $S_0 = \{ \Ctoricfixend \}$.
\item Let $S_i$ be defined recursively from $S=S_{i-1}$ as one of the following:
\begin{itemize}
\item If $W_i = U_{A,(x,y)}$, then $S_i = (S_{i-1})_f^A$, where $f$ is the face with upper left corner $A$.
\item If $W_i = V_{\bar{A},(x,y)}$, then $S_i = (S_{i-1})_{\bar{f}}^{\bar{A}}$, where $\bar{f}$ is the dual face with upper left corner $\bar{A}$.
\item If $W_i = U^Z_a(t)$ and $A$ is the start of the edge $a$, then $S_i = (S_{i-1})_a^A$.
\item If $W_i = U^X_a(t)$ and $\bar{A}$ is the start of the dual edge $a$, then $S_i = (S_{i-1})_a^{\bar{A}}$.
\end{itemize}
\end{enumerate}
If a vertex $A$ or dual vertex $\bar{A}$ appears while defining $S_i$, we say it is the \emph{starting point} of $W_i$.
\end{Definition}
Because we always use $W_i$ to move a defect from the starting point of an edge, we do not need to use $V^i_a(t)$ in the definition.  Note that given $W$ of this form, the decomposition into $W_i$ and $F$ satisfying the above definition is unique up to permutation of the $W_i$'s; we will show this as part of Lemma~\ref{lemma:Wunique}.

Recall that when one defect is off of a vertex, we get a state which is a superposition of states from multiple toric codes with slightly different defect locations.  The same will hold true when we move multiple defects off of the vertices of $\Gamma$.  The set $S_r$ (with $r$ the number of $W_i$'s in the decomposition of $W$) is the set of codes that appear in the superposition, and $S_i$ is the set of codes that would appear if we only used $W_1, \ldots, W_i$ instead of $W_1, \ldots, W_r$.  Note that $S_i \subseteq S_{i+1}$.

The extended set $\Fext$ of allowed unitaries acting on $\Ctoricfix$ then consists of unitaries of the form $\prod_{i=1}^r W_i F$, as above, with the additional constraints that 
\begin{itemize}
\item $F$ can be realized via a hard-core verified discrete-time string evolution from $\Ctoricfix$.
\item All of the vertices $A$ or dual vertices $\bar{A}$ that appear as starting points of a $W_i$ are locations of defects in $\Ctoricfixend = F \Ctoricfix$.
\item Each vertex $A$ or dual vertex $\bar{A}$ appears only once, i.e., as a starting point of only one $W_i$.
\item The set $S_r$ contains only codes satisfying the hard-core condition.
\end{itemize}

This definition may seem complicated, but it is not actually that bad.  All we are saying is that $\Fext$ consists of unitaries that can be created via a verified discrete-time string evolution followed by moving some or all of the defects onto edges or into the interior of faces.  We can then create any allowed configuration by starting with an approximation to it where all defects are on integer lattice sites, the upper left corners of the destination faces, or the starting vertices of the destination edges (this can be created from $\Ctoricfix$ using $F$), and then applying an appropriate set of $W_i$'s.

We can also define a set of allowed evolutions
\begin{equation}
\Fextpaths := \{ F(t) | \text{$F(t)$ piecewise smooth}, F(t) \in \Fext \ \forall t \in [0,1] \}.
\label{eqn:Fextpaths}
\end{equation}

The set of unitaries $\Fext$ is the version of $\Fbig$ from Section~\ref{sec:geoFT} for this example, and $\Fextpaths$ is the analog of $\Fbigpaths$.

\subsubsection{The full configuration space $\M$ of codes}

We now can define the set of codes which corresponds to allowing multiple defects to leave the vertices of the graph:
\begin{equation}
\M = \Fext (\Ctoricfix).
\label{eqn:toricMdef}
\end{equation}
The various sets of codes and paths are summarized in Figure~\ref{fig:setsummary}.

\begin{figure}
\begin{tikzpicture}[text height=1.5ex, text depth=.25ex]
\tikzset{
actson/.style = {->, to path={++(0,0.75) ++(225:0.5cm) arc (225:-45:0.5cm)}}
}

\node (C) at (0,0) {$\Ctorichardcore$};
\node at (1.25,0) {$\subset$};
\node (Mgraph) at (2,0) {$\Mgraph$};
\node at (3,0) {$\subset$};
\node (M) at (4,0) {$\M$};
\node at (4.8,0) {$\subset$};
\node (Grass) at (6,0) {$\Grass$};

\draw [->] (C) ++(0,0.75) ++(225:0.5cm) arc (225:-45:0.5cm);
\node at (0,1.6) {$\Fdiscretepaths$};
\node at (0,2.3) {$\Fdiscrete$};

\draw [->] (Mgraph) ++(0,0.75) ++(225:0.5cm) arc (225:-45:0.5cm);
\node at (2,1.6) {$\Fgraphpaths$};
\node at (2,2.3) {$\Fgraph$};

\draw [->] (M) ++(0,0.75) ++(225:0.5cm) arc (225:-45:0.5cm);
\node at (4,1.6) {$\Fextpaths$};
\node at (4,2.3) {$\Fext$};

\node at (-3,0) {Subsets of $\Grass$:};
\node at (-3,1.6) {Paths in $\U{N}$:};
\node at (-3,2.3) {Subsets of $\U{N}$:};

\end{tikzpicture}
\caption{We have three configuration spaces satisfying the hard core condition: $\Ctorichardcore$ (which corresponds to having defects only on vertices of the graph or dual graph), $\Mgraph$ (which corresponds to having defects only on edges), and $\M$ (allowing defects also in the interior of faces).  All three correspond to subsets of $\Grass$.  They have sets of operator evolutions acting on them: $\Fdiscretepaths$, $\Fgraphpaths$, and $\Fextpaths$, respectively, which are different versions of the sets of allowed fault-tolerant operations.  In the notation of Section~\ref{sec:geoFT}, $\Fgraph = \Fsmall$ and $\Fext = \Fbig$.}
\label{fig:setsummary}
\end{figure}

We claim that $\M$ is isomorphic to the configuration space of the defects (subject to the hard-core condition) on $T_g$.

First, let us be explicit about what we mean by the hard-core condition when the defects can be anywhere on the surface:
\begin{Definition}
Let $\Gamma$ be a lattice on $T_g$ and $\bar{\Gamma}$ be its dual lattice, also interpreted as a lattice on $T_g$.  Let $x$ be a point on $T_g$.  We say that a vertex $v$ of $\Gamma$ is \emph{adjacent} to $x$ if one of the following holds:
\begin{itemize}
\item If $x$ is in the interior of a face $f$ of $\Gamma$ and $v$ is a vertex of the face $f$,
\item If $x$ is on an edge $e$ of $\Gamma$ and $v$ is a vertex on the edge $e$, or
\item If $x$ is located at the vertex $v$.
\end{itemize}
Similarly, we say that a dual vertex $\bar{v}$ of $\bar{\Gamma}$ is \emph{adjacent} to $x$ if $\bar{v}$ is on dual face $\bar{f}$ when $x$ in the interior of $\bar{f}$, $\bar{v}$ is an endpoint of a dual edge $\bar{e}$ when $x$ is on $\bar{e}$, or if $x$ is on $\bar{v}$ itself.

A pair of defects at points $x$ and $y$ on $T_g$ satisfies the \emph{hard-core condition} with respect to the lattice $\Gamma$ and the dual lattice $\bar{\Gamma}$ if every vertex (if the defect at $x$ is primal) or dual vertex (if the defect at $x$ is dual) adjacent to $x$ has lattice distance at least $s$ from every vertex or dual vertex adjacent to $y$.  (Recall that we count distance between a vertex $v$ and a dual vertex $\bar{v}$ by counting the minimal distance between $v$ and any vertex on the face of $\Gamma$ given by $\bar{v}$.)  A configuration of many defects satisfies the hard-core condition if all pairs of defects satisfy the hard-core condition.
\end{Definition}

\begin{Theorem}
$\M$ is isomorphic (via piecewise smooth diffeomorphisms) to the configuration space of defects subject to the hard-core condition and labelled by defect type (primal or dual defect).
\end{Theorem}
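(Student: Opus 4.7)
The plan is to construct explicit maps in both directions between $\M$ and the labelled hard-core configuration space $\mathcal{X}$ of defects on $T_g$, and then verify that they are mutually inverse piecewise smooth diffeomorphisms. Fix the lattice orientation shown in Figure~\ref{fig:uniformtorus}.

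For the forward map $\Phi\colon \mathcal{X}\to\M$, take a labelled configuration with primal defects at positions $\{x_i\}$ and dual defects at positions $\{\bar{x}_j\}$. For each $x_i$, let $A_i$ be the upper-left corner of the face of $\Gamma$ containing $x_i$ (or the vertex itself if $x_i$ is a vertex, or the starting vertex of an edge if $x_i$ is interior to an edge); similarly define $\bar{A}_j$ for the dual defects. The configuration with defects at $\{A_i\}\cup\{\bar{A}_j\}$ lies in $\Ctorichardcore$, so there is a hard-core discrete-time string evolution starting at $(S_v,S_f)$ that produces it, giving a string operator $F$ with $F(\Ctoricfix)$ having defects exactly at these corner/start locations. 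Then set
\[
\Phi(\{x_i\},\{\bar{x}_j\}) \;=\; \Big(\prod_i U_{A_i,x_i}\Big)\Big(\prod_j V_{\bar{A}_j,\bar{x}_j}\Big) F (\Ctoricfix),
\]
where each factor $U_{A_i,x_i}$ or $V_{\bar{A}_j,\bar{x}_j}$ is replaced by $U^Z_a(t)$ or $U^X_a(t)$ when the corresponding defect lies on an edge rather than in a face interior. By construction this element is in $\Fext(\Ctoricfix)=\M$.

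Well-definedness of $\Phi$ requires two checks. First, the hard-core condition on the configuration implies that the $12$-qubit supports of distinct $W_i$ factors (the qubits adjacent to the relevant face or dual face) are pairwise disjoint, so the $W_i$ commute and the product does not depend on ordering; furthermore $F$ acts only on qubits away from these supports, so it commutes with each $W_i$. Second, if there is genuine ambiguity in the choice of $F$ (several discrete evolutions reaching the same corner configuration), the resulting code is the same because all such $F$ produce the same vertex configuration and, crucially, the $W_i$ then act identically on any code satisfying the hard-core condition with a defect at the appropriate starting point (as shown in the paragraphs following the definition of $U_{A,(x,y)}$). Piecewise smoothness in $(x_i,\bar{x}_j)$ follows from the piecewise smoothness of $U_{A,(x,y)}$ and $V_{\bar{A},(x,y)}$ and of the interpolations $U^i_a(t)$; the pieces are joined smoothly when a defect crosses from the interior of a face onto an edge because the $U_{A,(x,y)}$ were required to agree with the continuous string operators on the boundary of the face, and smoothly when it crosses a vertex because of the normalization relations among $\alpha$ and $\beta$.

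For the inverse map $\Psi\colon \M\to\mathcal{X}$, given $C'\in\M$ pick any expression $W=\prod_i W_i \cdot F$ with $W(\Ctoricfix)=C'$ satisfying Definition~\ref{def:WFdecomp}, and read off defect positions: each $W_i$ determines one defect's location in a face interior or on an edge, while the remaining defects of $F(\Ctoricfix)$ provide the vertex-based positions. The content of $\Psi$ being well-defined is an analogue of Lemma~\ref{lemma:Wunique}: the decomposition is unique up to permutation of the commuting $W_i$ and up to replacing $F$ by another discrete-time string evolution with the same endpoint. This uniqueness will be the main obstacle, and I would prove it by extracting the configuration intrinsically from $C'$. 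The key observation is that for each defect, the local reduced state of $C'$ on the $12$ qubits around a face (or dual face) determines $(x,y)$: the expansion of $C'$ into $\ket{\psi_A},\ket{\psi_B},\ket{\psi_C},\ket{\psi_D}$ has coefficients $(a,c,d)$ or $(a',b',d')$ that uniquely encode $(x,y)$ by the bijection proved in Section~\ref{sec:codesonedefect}, and the hard-core condition guarantees that this local analysis at different defects is independent. Once $\Psi$ is well-defined, $\Phi\circ\Psi=\mathrm{id}_{\M}$ and $\Psi\circ\Phi=\mathrm{id}_{\mathcal{X}}$ follow directly from the constructions, and the piecewise smoothness of $\Psi$ follows from that of the local inverse $(a,c,d)\mapsto(x,y)$. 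The only subtle point remaining is continuity as a defect passes from one face to an adjacent one through a shared edge or vertex, which is ensured by the compatibility of $U_{A,(x,y)}$ with the edge string operators and by the fact that the configuration space glues faces along edges and vertices in exactly the way $\Grass$ glues the corresponding families of codes.
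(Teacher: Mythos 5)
Your proposal follows the same overall route as the paper: build the map $\Phi$ from a labelled hard-core configuration into $\M$ by composing a discrete string operator $F$ with the face/edge interpolants $W_i$, observe that the $W_i$ commute by disjointness of their $12$-qubit supports, and then construct the inverse by reading off the decomposition. The one place you genuinely diverge is the well-definedness of the inverse map, which is the crux. The paper isolates this as Lemma~\ref{lemma:Wunique} and proves it directly by comparing two decompositions $W = (\prod W_i) F$ and $W' = (\prod W_j') F'$ with $W(\Ctoricfix) = W'(\Ctoricfix)$: the sets of starting points must coincide (else the defect locations differ), each starting point uniquely determines its $W_i$, the $W_i$ commute, so the products of $W_i$'s agree and $(F')^{-1}F$ is a discrete string loop. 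You instead propose to recover the configuration intrinsically from the code $C'$ by examining its reduced state on the $12$ qubits around each face, invoking the bijection between $(x,y)$ and the coefficients $(a,c,d;a',b',d')$. This is a conceptually attractive alternative — the configuration becomes manifestly a property of the code rather than the description — but as stated it leaves real gaps: you would first need a way to locate which faces and dual faces host defects without already knowing a decomposition, and the bijection of Section~\ref{sec:codesonedefect} was established only for a single defect moving off a vertex while all others sit on lattice sites, so you would need to re-verify it (using the $\tr_{\#}$ argument showing states differing by far-away string operators have the same local reduced state) in the situation where several defects are simultaneously in face interiors. Neither point is insurmountable, but they are not addressed, whereas the paper's direct comparison of decompositions avoids them entirely and is shorter. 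You should either fill in those steps or switch to the paper's more economical decomposition-comparison argument.
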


In case there are so many defects that not all defect configurations can be moved to each other without violating the hard-core condition at some point (i.e., the hard-core configuration space is not connected), we focus attention on the component of the configuration space path-connected to the code $\Ctoricfix$.

\begin{proof}
We will demonstrate a continuous piecewise smooth bijection between $\M$ and the defect configuration space.  Fix a reference code $\Ctoricfix$ as before.

There is a straightforward map $\Phi$ from the defect hard-core configuration space to $\M$ based our definitions so far: Let us describe the configuration by assigning to the $i$th defect a vertex $A_i$ of $\Gamma$ (if the defect is a primal defect) or a vertex $\bar{A}_i$ of $\bar{\Gamma}$ (if the defect is a dual defect), such that the vertex (or dual vertex) is the upper left corner of the face (or dual face) containing the defect, the start of the oriented edge containing the defect, or the location of the defect, depending on if the defect is in the interior of a face $a_i$, on an edge $e_i$, or on a vertex of $\Gamma$ (or $\bar{\Gamma}$ if it is a dual defect).  Then we assign coordinates $(x_i,y_i)$ within the face if the defect is in the interior of a face, or we assign $t_i$ if the defect is along an edge.  (We do not need any additional coordinate if the defect is on a vertex.)  Begin with the code $\Ctoricfix$ and apply a verified string operator $F$ to bring it to $\Ctoricfixend$, the code with defects in the locations $\{A_i\} \cup \{\bar{A}_i\}$.  Then, for each defect $i$ not on a vertex, apply to the code $U_{A_i,(x_i,y_i)}$, $V_{A_i,(x_i,y_i)}$, $U^Z_{e_i} (t_i)$, or $U^X_{e_i} (t_i)$, as appropriate (for primal defect in the interior of a face, dual defect in the interior of a dual face, primal defect on an edge, and dual defect on an edge, respectively).  The result is a code reached from $\Ctoricfix$ via an operator in $\Fext$.  Therefore the resulting code is in $\M$ by equation~(\ref{eqn:toricMdef}).  Because the four maps used to move defects from vertices onto faces or edges are continuous as a function of the coordinates $(x_i, y_i)$ or $t_i$, and because they are chosen to be consistent with each other, $\Phi$ is continuous.  
Further, note that as defects move within individual triangles, $\Phi$ is smooth.  Whenever a defect crosses an edge, there is a discontinuity in derivatives, so the map as a whole is only piecewise smooth.

We can also define a map $\Xi$ from $\M$ to the defect hard-core configuration space.  Note that for any element $C$ of $\M$, we can pick an element $W$ of $\Fext$, decomposed as $\prod_{i=1}^r W_i F$, such that $(\prod_{i=1}^r W_i F) (\Ctoricfix) = C$.  From the additional constraints in the definition of $\Fext$, we can assign to each defect in $\Ctoricfixend = F \Ctoricfix$ an ``end point'' in the following ways:  If the vertex or dual vertex containing the $i$th defect is the starting point for some $W_i$, then the end point of that defect is the location specified by the coordinates used in $W_i$.  If the vertex or dual vertex does not appear as the starting point of a $W_i$, then the end point of the defect is the (dual) vertex itself.  This assigns to each code $C \in \M$ a configuration of defects.  We need to verify that the configuration is unique (to ensure that the map $\Xi$ is well-defined) and does not depend on the choice of $W$ or its decomposition; this is a consequence of the following lemma:

\begin{Lemma}
\label{lemma:Wunique}
If $W, W' \in \Fext$ and $W (\Ctoricfix) = W' (\Ctoricfix)$, then $W = W' G$, with $G$ a verified discrete-time string operator.  Furthermore, $G$ is a closed loop.
\end{Lemma}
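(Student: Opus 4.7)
The plan is to exploit the structural decomposition $W = (\prod_i W_i) F$ mandated by Definition~\ref{def:WFdecomp} and then show that both the ``partial'' movements $W_i$ and the trailing discrete operator $F$ are essentially uniquely determined by the output code $W(\Ctoricfix)$. The argument then reduces to showing that a \emph{discrete-time} string operator that stabilises $\Ctoricfix$ must be a closed-loop tangle.

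First, I would write $W = (\prod_i W_i)F$ and $W' = (\prod_j W'_j) F'$ as in Definition~\ref{def:WFdecomp}. Each $W_i$ is one of $U_{A,(x,y)}$, $V_{\bar A,(x,y)}$, or $U^P_a(t)$, acts non-trivially only on the 12 qubits adjacent to the corresponding face or dual face, and distinct $W_i$'s in a valid decomposition use disjoint starting (dual) vertices; combined with the hard-core separation $s\ge 3$, their supports are disjoint, so the $W_i$'s commute pairwise. The key uniqueness claim is that $W(\Ctoricfix)$ alone determines the multiset of $W_i$'s. To prove this, I would trace out all qubits except those in the 12-qubit neighbourhood of a single face $f$: by the density-matrix computation already carried out in the proof of the preceding proposition (the argument around equation~(\ref{eqn:etaP})), the reduced state of $W(\Ctoricfix)$ on this neighbourhood is determined by the corresponding state of $C_{(x,y)}$ on that same neighbourhood, and the explicit formulas for $a(x,y),c(x,y),d(x,y)$ and $a'(x,y),b'(x,y),d'(x,y)$ given in Section~\ref{sec:codesonedefect} show that $(x,y)$ is uniquely recoverable from that reduced state. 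Repeating this face by face, and analogously for dual faces and for edges, recovers the full list of $(W_i)$'s from $W(\Ctoricfix)$, so the $W_i$'s and $W'_j$'s coincide as multisets. By commutativity we can arrange the orderings to match, and then $\prod_i W_i = \prod_j W'_j$ as unitaries.

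Cancelling this common factor yields $F(\Ctoricfix) = F'(\Ctoricfix)$. Setting $G := (F')^{-1}F$, clearly $G$ is again a discrete-time string operator (a product of single-qubit Paulis), and by construction $W = W' G$. The stabilising condition $G(\Ctoricfix) = \Ctoricfix$ then needs to be converted into the ``closed loop'' statement. Interpreting $G$ via its tangle diagram (as in Figure~\ref{fig:tangle}), every open endpoint of the tangle corresponds to a creation, annihilation, or net displacement of a defect between the initial and final defect configurations. Since $G$ leaves the defect configuration of $\Ctoricfix$ invariant, the tangle of $G$ can have no open endpoints, i.e.\ it consists entirely of closed loops (primal loops in $\Gamma$ for the $Z$-portions and dual loops in $\bar\Gamma$ for the $X$-portions), which is precisely the closed-loop condition.

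The main obstacle I expect is the uniqueness of the $W_i$ decomposition, because in principle a different choice of the string $F$ combined with a different set of partial movements might conspire to produce the same final code. The resolution relies crucially on the fact already established in Section~\ref{sec:toricM}: the $W_i$ operators are constructed to act only on the 12 qubits adjacent to their associated face, and all states within an invariant subspace $C_{\ket\psi}$ share the same reduced density matrix on the complement of that neighbourhood. This locality property is what lets one read off the endpoint coordinates $(x,y)$ (or $t$) from the \emph{local} reduced state of $W(\Ctoricfix)$, decoupling the analysis face-by-face and defeating any potentially obscuring global superpositions. A secondary, but easier, subtlety is phase ambiguity in $G$: since discrete string operators include an overall phase $\xi \in \phases$, one simply absorbs any residual phase into $G$, which does not affect the closed-loop property.
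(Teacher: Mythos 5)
Your proof follows the same overall architecture as the paper's: decompose $W = (\prod_i W_i)F$ and $W' = (\prod_j W'_j)F'$, argue that the $W_i$'s and $W'_j$'s coincide as multisets (using disjointness of supports and commutativity to reorder), cancel them to get $G = (F')^{-1}F$, and then argue $G$ is a closed loop because it stabilises the defect configuration. The core uniqueness idea — that the $W_i$'s act on pairwise disjoint 12-qubit neighbourhoods and are individually recoverable from the code — is exactly the paper's key observation, and your remark on the phase ambiguity of $G$ is a harmless extra.

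The one place you diverge is in \emph{how} you justify that the multiset of $W_i$'s is recoverable from $W(\Ctoricfix)$, and there you cite the wrong prior result. The density-matrix computation around equation~(\ref{eqn:etaP}) establishes that the reduced density matrix of codewords in $C_{\ket{\psi}}$ on the \emph{complement} of the 12-qubit neighbourhood is independent of $(x,y)$; it does not establish the thing you actually need, namely that the reduced state \emph{on} the 12 qubits is an injective function of $(x,y)$. Those are genuinely different claims, and the first does not imply the second (a priori the 12-qubit marginal could also be constant). The paper takes a more direct route: since (by the bijection established in Section~\ref{sec:codesonedefect}) the code determines the exact location of each defect, the set of starting points and destination coordinates is pinned down, and for each (starting point, destination) pair there is a unique $W_i$ by construction; no density-matrix argument is needed. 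If you do want to run your local-tomography version, you should instead appeal to local stabilizer expectation values — e.g.\ the four vertex operators $A_v$ for $v\in\{A,B,C,D\}$ give $\langle A_A\rangle = 1-2|a|^2$, etc., from which the magnitudes $|a|,|c|,|d|$ (or $|a'|,|b'|,|d'|$) and hence $(x,y)$ are recovered — rather than the equation~(\ref{eqn:etaP}) computation.
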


\begin{proof}
We write $W = (\prod_{i=1}^r W_i) F$ and $W' = (\prod_{j=1}^{r'} W_j') F'$, with $W_i$ and $F$ as in Definition~\ref{def:WFdecomp}.  We can assume without loss of generality that $r=r'$ by adding additional trivial $W_i$ or $W_j'$ operators to the smaller decomposition.  Unless the set of starting points for the set $\{W_i\}$ match the starting points for $\{W_j'\}$, the defects in $W (\Ctoricfix)$ will be in different locations than those for $W' (\Ctoricfix)$, and the codes will be different.  Furthermore, for a given starting point, there is a unique $W_i$ that moves a defect from that starting point to a particular location in the interior of an adjacent edge or face.  Thus, the set $\{W_i\}$ must be the same as $\{W_j'\}$.  The order may be different, but due to the hard-core condition, the $W$'s commute, so $\prod_{i=1}^r W_i = \prod_{j=1}^r W_j'$.  Thus, $W = W' (F')^{-1} F$.  $F$ and $F'$ are both verified discrete-time string operators, so $G := (F')^{-1} F$ is a verified discrete-time string operator.

Since all the starting points of the $\{W_i\}$ match the starting points of the $\{W_i'\}$, the locations of all defects in $F(\Ctoricfix)$ and $F'(\Ctoricfix)$ must be the same.  Thus, the end points of the string $F$ must be the same as the end points of $F'$.  Therefore the product $(F')^{-1} F$ is a loop.
\end{proof}

That is, given $C \in \M$, the choice of $W \in \Fext$ is unique up to a verified discrete-time string operator $G$ which is a closed loop.  A closed loop acting on a code in $\Ctorichardcore$ does not move any defects; therefore, the map $\Xi$ we have given above is well-defined.  By comparing the resulting map from $\M$ to the defect configuration space with the definition of $\Phi$ above, we can see that the map $\Xi$ is equal to $\Phi^{-1}$.  As with $\Phi$, $\Xi$ is piecewise smooth.

\end{proof}

There are two awkward things about the topological space $\M$.  First, it is not generally a manifold, or even a manifold with boundary.  Second, it is not smooth, but only piecewise smooth.

It is not a manifold because of the hard-core condition.  When there are $n_v$ primal defects and $n_f$ dual defects, the set of defect configurations satisfying the hard-core condition is, in most places, locally diffeomorphic to a $2(n_v + n_f)$-dimensional Euclidean space: When all defects are far from each other, each is unconstrained and has $2$ directions it can move.  When a pair of defects gets close to each other, there is a boundary that prevents defects moving closer, and a cluster of close-together defects can constrain movement even more.  Because we measure distance between defects in terms of lattice distance, which only takes integer values, a cluster of three or more defects may be constrained to a space of even lower dimension than $2(n_v + n_f)-1$.

$\M$ is not smooth because it is defined by being pieced together from configurations where each defect is constrained to be inside a triangle.  We have been careful to make everything continuous when a defect crosses into a different triangle, but the structure is not smooth at such points.  If viewed using the usual differentiable structure of $\Grass$, there are kinks in $\M$ when a defect crosses between triangles, either across a diagonal or across an edge of $\Gamma$ or $\bar{\Gamma}$.

The failure of $\M$ to be a manifold is not that serious for us, although it does make it unclear precisely what structure we should demand for other examples of fault-tolerant gate sets.  The failure to be smooth is more troublesome, since in order to define a connection, we need a well-defined tangent space for $\M$ and the bundles $\toricvecbundext$ and $\toricPbundext$. 

Luckily, $\M$ is sufficiently well-behaved that we \emph{can} define a tangent space over $\M$ and the two bundles of interest.  All of these spaces are piecewise smooth, and in fact can be decomposed as a union of smooth manifolds with boundary (or corners), and the only intersection between different pieces is at the boundary of the pieces.  If we consider a point in the interior of one of the pieces, the tangent space is defined normally.  If we are at the boundary of just a single piece, we consider the tangent space to just consist of those tangent vectors which point inwards, in directions which are still within the piece. 

If we are at a point which is the intersection of two or more pieces, we can define the tangent space as the disjoint union of the tangent spaces of the pieces meeting at the point (truncated as above due to boundaries).  Effectively we are gluing together the pieces of the tangent spaces in the same way that the pieces of $\M$ are attached together.  In this case, every direction in $\M$ (or in a bundle over $\M$) has tangent vectors associated to it, as usual.  An equivalent definition would be to let the tangent space at any point consist of sets of curves, where two curves are equivalent if they have the same directional derivative \emph{in the positive direction} at the point.  This is very close to a standard definition of the tangent space for a differentiable manifold; the difference being that in this case, the derivative in the negative direction might not agree with the derivative in the positive direction, or the negative direction might not exist at all if the curve begins at the point.  The tangent space defined in this way is sufficient for our purposes (defining a connection), but it lacks some standard properties of tangent spaces, so one should be cautious when applying this definition in other contexts.  For instance, the tangent space defined this way does not have a natural linear structure.

\subsection{Strong flat partial connections over $\Mgraph \subset \M$}
\label{sec:toric_flatFPP}

As in Section \ref{sec:proposal}, we can define a natural pre-connection $\conngraphtoric$ on $\toricPbundgraph$ for paths in $\Mgraph$.\footnote{The ``K'' in $\conngraphtoric$ stands for ``Kitaev code''.} Let $U(t) \in \Fgraphpaths$ be a path of continuous string operators inducing the path $\gamma(t)$ in $\Mgraph$.  Then pre-parallel transport along $\gamma$ in the bundle $\toricPbundgraph$ at time $t$ is given by $\beta \mapsto h_{U(t)} (\beta)$, with $h_{U(t)}$ defined by equation (\ref{eq:h_U}).  Recall that different paths $U(t)$ corresponding to the same path $\gamma(t)$ could in principle give different results from the pre-parallel transport.  We claim that instead the pre-connection defines a flat partial connection (not just projectively flat) on $\Mgraph$.  We can show this by extending it to a flat connection on all of $\M$, making $\conngraphtoric$ a \emph{strong} flat partial connection.  (See Definition~\ref{def:FPP}.)  The same argument and same conclusion applies for $\toricvecbundgraph$.

\subsubsection{Extending $\conngraphtoric$ to $\connexttoric$ on $\M$}

To extend the pre-connection $\conngraphtoric$ to a pre-connection $\connexttoric$ on $\M$, all we have to do is to replace $\Fgraphpaths$ with the set $\Fextpaths$ of allowed unitary paths for the toric code.  We prove in this section the following theorem:
\begin{Theorem}
$\connexttoric$ gives a well-defined notion of parallel transport on $\M$.
\end{Theorem}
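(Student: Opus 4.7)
The plan is to show well-definedness of $\connexttoric$ by proving the stronger fact that the unitary lift itself is unique: if $\gamma(t)$ is a path in $\M$ with $\gamma(0) = C \in \M$, and $U_1(t), U_2(t) \in \Fextpaths$ are two lifts (i.e., $U_i(0) = \idmatrix$ and $U_i(t)(C) = \gamma(t)$ for all $t \in [0,1]$), then $U_1(t) = U_2(t)$ for all $t$. Once established, this immediately implies that $U_1(1)(\beta) = U_2(1)(\beta)$ for any frame $\beta$ in the fibre at $C$ (and likewise for vector bundle lifts), giving a well-defined parallel transport on $\toricvecbundext$ and $\toricPbundext$. The strategy combines a pointwise application of Lemma~\ref{lemma:Wunique} with a continuity argument.

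First I would handle the case $C = \Ctoricfix$. Fix any $t \in [0,1]$; the operators $U_1(t), U_2(t) \in \Fext$ both map $\Ctoricfix$ onto the same subspace $\gamma(t)$. By Lemma~\ref{lemma:Wunique}, there exists a discrete-time string operator $G(t)$ forming a closed loop (with respect to $\Gamma$ and $\bar{\Gamma}$) such that
\begin{equation}
U_1(t) = U_2(t)\, G(t).
\end{equation}
The map $t \mapsto G(t) = U_2(t)^{-1} U_1(t)$ is continuous (indeed piecewise smooth), since $U_1$ and $U_2$ are. On the other hand, the set of discrete-time closed string loops sits inside the (countable) set of tensor products of single-qubit Paulis multiplied by phases in $\phases$, which forms a discrete subset of $\U{N}$. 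A continuous function from $[0,1]$ into a discrete set is constant, hence $G(t) \equiv G(0) = \idmatrix$, and so $U_1(t) = U_2(t)$ throughout.

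For a general starting point $C \in \M$, I would use the fact that $\M = \Fext(\Ctoricfix)$ to pick $W_0 \in \Fext$ with $W_0(\Ctoricfix) = C$, and then analyze the composites $\tilde{U}_i(t) := U_i(t) W_0$, which take $\Ctoricfix$ to $\gamma(t)$. Since $\Fext$ is not in general closed under composition, one either extends Lemma~\ref{lemma:Wunique} to cover pairs of operators sharing a common base code in $\M$, or argues directly by decomposing $U_i(t)$ into the canonical building blocks $U_{A,(x,y)}$, $V_{\bar{A},(x,y)}$, and $U^P_a(t)$: each defect moves along a definite trajectory prescribed by $\gamma(t)$, and the building block associated with that trajectory in each face or along each edge is uniquely specified by the position of the defect. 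The same continuity-plus-discreteness principle then pins down any residual ambiguity (which at each $t$ lies in a discrete set of closed string loops) to be constant and equal to $\idmatrix$ at $t = 0$.

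The main obstacle I expect is the rigorous extension of Lemma~\ref{lemma:Wunique} to base codes $C \in \M$ that are not the standard $\Ctoricfix$; this is a bookkeeping issue but needs some care, especially at times $t$ where a defect crosses an edge of $\Gamma$ or the diagonal $AD$ of some face (where $\Fextpaths$ is only piecewise smooth and the decomposition of $U_i(t)$ into building blocks genuinely changes). However, continuity of $U_1$ and $U_2$ still holds across such transition times, and the set of admissible closed string loops is independent of which ``piece'' of $\M$ we are in, so the discrete-valued-continuous-function argument goes through unchanged.
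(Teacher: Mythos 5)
Your core idea---apply Lemma~\ref{lemma:Wunique} pointwise in $t$, note that the resulting ambiguity $G(t)$ lives in the finite (hence discrete) set of phased discrete string operators, and use continuity of $t\mapsto G(t)=U_2(t)^{-1}U_1(t)$ to force $G$ to be constant---is essentially the same mechanism the paper uses, just stated more explicitly; the paper buries it inside the claim that there is a unique way to update the discrete string operator $F$ in the decomposition $\prod W_i\, F$ so that $W(t)$ remains continuous as defects cross edges. Your formulation is cleaner and would be a nice addition. However, your handling of a general starting point $C\neq\Ctoricfix$ is the wrong move. Requiring $U_i(0)=\idmatrix$ together with $U_i(t)\in\Fext$ and $U_i(t)(C)=\gamma(t)$ is not the right notion of lift here: $\Fext$ is defined by its action on $\Ctoricfix$, and Lemma~\ref{lemma:Wunique} compares operators via $W(\Ctoricfix)$, not via $W(C)$. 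Precomposing by $W_0$ to try to shift the base point is precisely what fails (as you note, $\Fext$ is not closed under composition), and trying to extend Lemma~\ref{lemma:Wunique} to an arbitrary base code is not needed.

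The fix is to follow the paper's framing: take two lifts $W,W'\in\Fextpaths$ with $W(t)(\Ctoricfix)=W'(t)(\Ctoricfix)=\gamma(t)$, where $W(0)$ and $W'(0)$ are the (possibly distinct, possibly non-identity) discrete string operators bringing $\Ctoricfix$ to $\gamma(0)$. Your continuity-plus-discreteness argument then applies verbatim to $G(t)=W(t)^{-1}W'(t)$, giving $G(t)\equiv G(0)=W(0)^{-1}W'(0)$, whence $W'(t)W'(0)^{-1}=W(t)W(0)^{-1}$ for all $t$; this is exactly the well-definedness of the parallel transport $\beta\mapsto W(t)W(0)^{-1}\beta$. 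Two further remarks: (i) your argument addresses only uniqueness of the lift, whereas the paper also needs existence of a continuous $W(t)\in\Fextpaths$ over $\gamma$, which it supplies via the explicit piecewise construction (tracking defects crossing edges and diagonals); and (ii) you should state explicitly that the set of closed discrete string loops is finite and hence a discrete subset of $\U{N}$, since it is this finiteness, not merely countability, that makes the continuous-into-discrete argument immediate.
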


\begin{proof}
Suppose we have a path $\gamma(t)$ in $\M$.  By the definition of $\M$, $\gamma(t) = W(t) (\Ctoricfix)$ for some $W(t) \in \Fextpaths$; in particular, for each $t$, $W(t) \in \Fext$.  We wish to choose $W(t) \in \Fextpaths$.  Lemma~\ref{lemma:Wunique} tells us that the only ambiguity in choosing $W(t)$ for any given $t$ is up to verified discrete-time string loops.  Therefore, for any path $\gamma(t)$, there is a natural choice for a \emph{continuous} unitary path $W(t)$: Start with some value of $W(0)$ as the initial value of the path, to be determined later.  As long as defects move within a face or edge, we simply vary the appropriate $W_i$ in the decomposition of $W(t)$ and leave the discrete string operator unchanged; the resulting segment of the unitary path is continuous.  

When a defect moves from one face $f$ to an edge $e$, we need to be careful, as the starting point for the edge might be different from the starting point for the face.  Let $A$ be the upper left corner of the face $f$, and let $H$ be a continuous string operator that moves a defect from $A$ along the edges of $f$ to the new location of the defect on $e$.  The starting point $A'$ of the edge $e$ might be different than $A$.  Write $H = U^P_a(s) E$ for an appropriate time $s$, Pauli $P = X$ or $Z$, and verified discrete-time string operator $E$ from $A$ and $A'$ along the edges of the face.  ($E$ is trivial if $A=A'$.)  We must update the decomposition of $W(t) = \prod W_i F$ when the defect reaches the edge.  We replace $W_i$ with $U^P_a(s)$ and replace $F$ by $EF$.  Because $U_{A,(x,y)}$ is required to agree with a continuous string operator that moves a defect along the edge(s) from $A$ to $(x,y)$ when $(x,y)$ lies on an edge of the face, $U_{A,(x,y)} \rightarrow H = U^P_a(s) E$ as $(x,y)$ reaches the edge.  Therefore, the change in the decomposition of $W(t)$ as the defect reaches the edge still leads to a continuous unitary path $W(t)$.

We change the decomposition similarly if we move from an edge into the interior of a face, or if we move on to or off of a vertex, or on to or off of dual edges or vertices.  This process tells us uniquely what the verified discrete-time string operators for $W(t)$ must be at all times in order for the unitary path to be continuous and how to represent $W(t)$ in the decomposition of Definition~\ref{def:WFdecomp}.  

The only remaining ambiguity is a choice of the verified discrete-time string operator for the starting point $W(0)$ of the unitary path.  If we fix $\gamma(0) = \Ctoricfix$, then it is natural to choose $W(0) = \idmatrix$.  Then $W(t)$ is unique, and therefore parallel transport along $\gamma(t)$ is well-defined.  If we use a path with a different starting point $\gamma(0)$, we must choose some verified discrete-time string operator $F$ for $W(0)$ such that $W(0)(\Ctoricfix) = \gamma(0)$.  (Lemma~\ref{lemma:Wunique} tells us this choice is the only ambiguity given $\gamma(0)$.)  Now imagine we choose a different path $W'(t) \in \Fextpaths$ for $\gamma(t)$ with verified discrete-time string operator $F'$ at the start.  $W'(t)$ is the unique path with $F'$, and $W'(t) = W(t) F^{-1} F'$.  Parallel transport along $\gamma(t)$ determined by $W(t)$ is given by $W(t) W(0)^{-1}$, but $W'(t) W'(0)^{-1} = W(t) W(0)^{-1}$, so the two unitary paths give the same parallel transport.  Thus, $\connexttoric$ gives a well-defined notion of parallel transport everywhere on $\M$, independent of even the choice of initial value $W(0)$.
\end{proof}

The reason we find a connection in this case and not a projective connection is that while we allow a global phase factor in discrete string operators, it is a discrete phase $\pm 1$ or $\pm i$.  Thus, there is no continuous unitary path that changes the global phase while not braiding the defects around each other.

\subsubsection{Flatness}

\begin{Theorem}
\label{thm:toricflat}
$\connexttoric$ is a flat connection, so $\conngraphtoric$ is a strong flat partial connection.
\end{Theorem}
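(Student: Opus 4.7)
The plan is to apply Criterion~\ref{prop:flat_condition} with $\xi = 1$: it suffices to show that for any contractible loop $\gamma(t)$ in $\M$, the lift $W(t) \in \Fextpaths$ starting at $W(0) = \idmatrix$ satisfies $W(1)|_{\Ctoricfix} = \idmatrix|_{\Ctoricfix}$. Once this is established, the natural pre-connection gives a well-defined flat connection on $\M$, and restriction to $\Mgraph$ yields the advertised strong flat partial connection.

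First I would unpack the lift $W(t)$ using the canonical decomposition $W(t) = \prod_i W_i(t) \cdot F(t)$ guaranteed by Lemma~\ref{lemma:Wunique} and the continuous lift construction of the previous theorem. Because $\gamma$ is a loop, at $t=1$ every defect is back at its original lattice vertex (resp.\ dual vertex), so no $W_i$ factor is required there. Therefore $W(1) = F(1)$ is a pure discrete-time closed string operator, and the problem reduces to showing that $F(1)$ lies in the stabilizer of $\Ctoricfix$, i.e.\ is a product of the vertex operators $A_v$ ($v \notin S_v$) and face operators $B_f$ ($f \notin S_f$) appearing with coefficient $-1$ in the Hamiltonian~\eqref{eqn:Hamiltonian.defects}.

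The main step is the following topological translation: a contractible loop in $\M$ moves each defect along a closed path in the punctured torus $T_g \setminus (\text{other defect locations})$, and this motion can be continuously contracted to the constant motion subject to the hard-core constraint. Following each defect's trajectory at discrete time steps yields a primal (resp.\ dual) closed string in $\Gamma$ (resp.\ $\bar{\Gamma}$) that bounds a disk on $T_g$ disjoint from the complementary defects. By the standard toric-code calculation, a $Z$-closed string that bounds such a disk in $\Gamma$ is a product of the face operators $B_f$ contained in the disk, and dually for $X$-closed strings, using that no defects of opposite type sit inside. Hence each defect's contribution to $F(1)$ is a stabilizer element, and their product (up to an overall phase $\pm 1$ or $\pm i$ carried by the convention of Definition~\ref{defn:continuoustimestring}) is a stabilizer element of $\Ctoricfix$. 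Since stabilizers act as $+\idmatrix$ on the ground space $\Ctoricfix$, we conclude $W(1)|_{\Ctoricfix} = \idmatrix|_{\Ctoricfix}$, and no residual phase survives because the discrete evolutions $U^i_j(1)$, $V^i_j(1)$ contribute phases that exactly cancel around any contractible loop (as already used for the contractibility of $\det \gamma$ in the face-unitary construction).

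The hard part will be making the topological translation fully rigorous: characterizing $\pi_1(\M, \Ctoricfix)$ and identifying contractible loops with unions of disk-bounding defect motions. Concretely, one must verify that (i) the hard-core constraint does not introduce spurious non-contractible loops beyond those associated to winding around non-contractible cycles of $T_g$ or braiding around defects of opposite type, and (ii) the transitions in the decomposition of $W(t)$ as defects cross between faces, edges and vertices glue consistently, so that summing the local stabilizer contributions along a homotopy contracting $\gamma$ to a point yields a global product of stabilizers. A convenient way to organize this is to decompose an explicit homotopy of $\gamma$ into elementary moves (one defect crossing one edge of $\Gamma$ or $\bar{\Gamma}$, or sliding across a single face), and check flatness for each elementary move by the explicit formulas for $U_{A,(x,y)}$, $V_{\bar{A},(x,y)}$, $U^P_a(t)$, thereby reducing the global statement to a finite check at the level of a single plaquette.
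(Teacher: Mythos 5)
Your proposal takes essentially the same route as the paper's proof: reduce to showing that a contractible loop in $\M$ produces a closed discrete string operator that is a stabilizer element of $\Ctoricfix$, and invoke Criterion~\ref{prop:flat_condition}. The paper gets to the discrete string operator by first distorting $\gamma$ to a loop $\gamma'$ lying on $\Mgraph$ and showing that the intervening within-face loops $\eta_i$ have trivial holonomy (because $U_{A,(x,y)}$ is a well-defined function of $(x,y)$, so $G_i(0) = G_i(1)$). Your observation that $W(1) = F(1)$ is a clean shortcut that arrives at the same conclusion directly, since all $W_i$ factors vanish when the loop returns to $\Ctorichardcore$; it sidesteps the explicit distortion argument. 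Both versions then rely on the translation ``contractible in $\M$ $\Leftrightarrow$ trivial colored braid on $T_g$ $\Rightarrow$ closed strings that do not wind around opposite-type defects or around $T_g$,'' which the paper also leaves at the level of ``standard properties of the toric code.''

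Two points worth tightening. First, your phase-cancellation justification is off: the determinant/contractibility argument in the face-unitary construction was about showing $\hat{U}$ exists (contractibility of a loop in $\U{3}$), not about any residual phase of parallel transport. Since $U^i_j(1) = V^i_j(1) = \sigma^i_j$ exactly, there is no phase ambiguity at integer times; the discrete string operator $F(1)$ is a genuine product of Paulis, and once you know it equals a product of stabilizer generators $A_v$ and $B_f$ with the right parity of enclosed defects, it acts as $+\idmatrix$ on $\Ctoricfix$ with no further phase to cancel. Second, the claim that each defect's projected trajectory ``bounds a disk disjoint from the complementary defects'' is stronger than what a trivial braid actually guarantees and stronger than what you need; the correct statement is that the total winding (mod $2$) of each defect around every opposite-type defect vanishes, which is what makes the product of enclosed $B_f$'s (resp.~$A_v$'s) a $+1$ eigenvalue operator. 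Formulating this via linking numbers, or following the paper's route of reducing to a product of disjoint lattice loops that do not encircle any defect, would close the gap you flagged.
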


\begin{proof}
Consider any loop $\gamma(t)$ in $\M$.  Let us for the moment restrict attention to loops $\gamma(t)$ that begin and end on points in $\Ctorichardcore$ (with all defects on vertices).  Because $\M$ is homeomorphic to the configuration space of the defects, we can think of a path $\gamma(t)$ in $\M$ as a colored braid over $T_g$ (with two colors representing primal and dual defects).  We can distort $\gamma(t)$ to a loop $\gamma'(t)$ that lies on $\Mgraph$, as in Figure~\ref{fig:distortpath}, by taking each strand of the braid and pushing (and possibly stretching or compressing) it via isotopy to travel along the edges of whichever face of $\Gamma$ or $\bar{\Gamma}$ within which the defect is currently moving.  When we do so, parallel transport along $\gamma'(t)$ is the same as parallel transport along $\gamma(t)$, broken up into segments moving within a single face, and with the segments interspersed by loops $\eta_i(t)$ that move a single defect within a single face (including its edges and vertices) of $\Gamma$ or $\bar{\Gamma}$.

\begin{figure}
\begin{tikzpicture}[scale=3]

\draw [step=1cm,very thin,gray] (0,1) grid (5,5);

\coordinate (start) at (1,2);

\draw [red,ultra thick] (start) to[out=45,in=180] ++(0.7,0.4) -- ++(1.6,0) to[out=0,in=240] ++(0.7,0.6) to[out=60,in=270] ++(0.2,0.4) to[out=90,in=350] ++(-0.2,0.3) to[out=170,in=330] ++(-1,0.3);
\node at (3.55,2.7) {$\gamma(t)$};

\draw [blue,ultra thick] (start) -- ++(3,0) -- ++(0,2) -- ++(-1,0);
\node at (4.25,2.3) {$\gamma'(t)$};

\draw [green,ultra thick] (start) ++(0.15,0.05) to[out=45,in=180] ++(0.65,0.3) -- ++(0.15,0) -- ++(0,-0.3) -- cycle;
\draw [green,ultra thick] (start) ++(1.05,0.05) -- ++(0,0.3) -- ++(0.9,0) -- ++(0,-0.3) -- cycle;
\draw [green,ultra thick] (start) ++(2.05,0.05) -- ++(0,0.3) -- ++(0.3,0) to[out=0,in=240] ++(0.60,0.45) -- ++(0,-0.75) -- cycle;
\draw [green,ultra thick] (start) ++(3,1) ++(0.05,0.15) to[out=60,in=270] ++(0.1,0.25) to[out=90,in=350] ++(-0.1,0.23) -- cycle;
\draw [green,ultra thick] (start) ++(3,1) ++(-0.05,0.75) to[out=170,in=330] ++(-0.73,0.2) -- ++(0.73,0) -- cycle;

\draw (start) ++(0.28,0.12) to[out=45,in=180] ++(0.49,0.19)  -- ++(0.14,0) -- ++(0,-0.23) -- ++(-0.79,0) -- ++(0,-0.14) -- ++(0.97,0) -- ++(0,0.37) -- ++(0.82,0) -- ++(0,-0.22) -- ++(-0.78,0) -- ++(0,-0.15) -- ++(0.96,0) -- ++(0,0.37) -- ++(0.3,0) to[out=0,in=240] ++(0.52,0.37) -- ++(0,-0.59) -- ++(-0.78,0) -- ++(0,-0.15) -- ++(0.91,0) -- ++(0,1.07) -- ++(0,0.21) to[out=60,in=270] ++(0.07,0.19) to[out=90,in=300] ++(-0.03,0.16) -- ++(0,-0.43) -- ++(-0.17,0) -- ++(0,0.66) to[out=170,in=340] ++(-0.58,0.12) -- ++(0.58,0) -- ++(0,-0.08) -- ++(0.13,0) -- ++(0,0.20) -- ++(-1,0);
\draw [->] (start) ++(0.91,0.31) -- ++(0,-0.115);
\draw [->] (start) ++(0.91,0.08) -- ++(-0.395,0);
\draw [->] (start) ++(0.12,-0.06) -- ++(0.485,0);
\draw [->] (start) ++(1.09,0.31) -- ++(0.41,0);
\draw [->] (start) ++(1.91,0.09) -- ++(-0.41,0);
\draw [->] (start) ++(1.13,-0.06) -- ++(0.37,0);
\draw [->] (start) ++(2.09,0.31) -- ++(0.3,0);
\draw [->] (start) ++(2.91,0.09) -- ++(-0.3,0);
\draw [->] (start) ++(2.13,-0.06) -- ++(0.26,0);
\draw [->] (start) ++(3.04,-0.06) -- ++(0,0.7);
\draw [->] (start) ++(3.04,1.21) to[out=60,in=270] ++(0.07,0.19);
\draw [->] (start) ++(3.08,1.56) -- ++(0,-0.12);
\draw [->] (start) ++(2.91,1.14) -- ++(0,0.43);
\draw [->] (start) ++(2.33,1.92) -- ++(0.29,0);
\draw [->] (start) ++(3.04,2.04) -- ++(-0.5,0);

\node at (1.7,2.2) {$\eta_1(t)$};
\node at (2.5,2.2) {$\eta_2(t)$};
\node at (3.5,2.2) {$\eta_3(t)$};

\end{tikzpicture}
\caption{Distorting the path $\gamma(t)$ of a single particle (red) to a path $\gamma'(t)$ (blue) along edges of $\Gamma$. The difference between the two is a product of loops $\eta_i(t)$ (green), each contained in a single face.  A path (in black) consisting of alternating green loops and blue segments is equivalent (because of cancelling segments) to the original red path $\gamma(t)$.}
\label{fig:distortpath}
\end{figure}
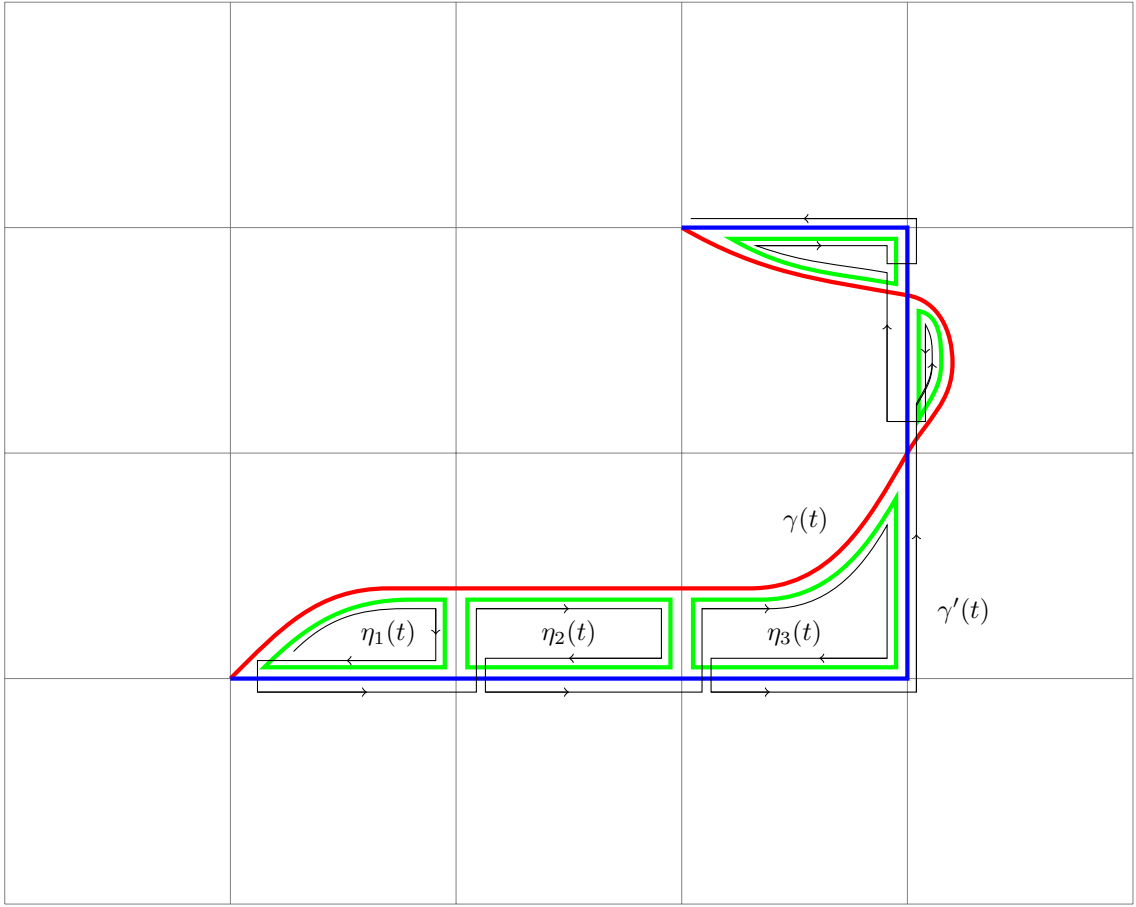

Now, we claim that the parallel transport along a loop $\eta_i(t)$ that lies within a single face of $\Gamma$ or $\bar{\Gamma}$ is trivial.  This is true because the unitary path $G_i(t)$ corresponding to $\eta_i(t)$ simply consists of varying $U_{A,(x,y)}$ over points $(x,y)$.  There is a unique $U_{A,(x,y)}$ for each point $(x,y)$ and $\eta_i(t)$ is a loop, so $G_i(0) = G_i(1)$ and parallel transport along $\eta_i(t)$ is the identity. Consequently, parallel transport along $\gamma(t)$ is the same as parallel transport just along the distorted path $\gamma'(t)$.

If $\gamma(t)$ is a homotopically trivial loop in $\M$, then the corresponding braid is trivial as well.  That is, homotopy in $\M$ is equivalent to isotopy in $\mathbb{R}^3$.  In a trivial braid, no defects circle any others and no defects travel around a non-trivial loop on $T_g$.  The $\gamma'(t)$ that we get by distorting such a $\gamma(t)$ will therefore be a product of disjoint loops on $\Mgraph$, and the unitary path corresponding to $\gamma'$ will be a product of disjoint discrete string loop operators that do not circle any defects or loop around $T_g$.  Therefore, due to standard properties of the toric code, parallel transport along $\gamma'(t)$, and thus along $\gamma(t)$, is trivial.

Now we need to consider loops based at arbitrary points in $\M$.  When we change the base point of a loop, we conjugate the holonomy with the parallel transport along a path between the base points.  However, conjugating the identity still gives us the identity.  Thus, the holonomy of any homotopically trivial (contractible) loop in $\M$ is the identity.  By Criterion~\ref{prop:flat_condition}, $\conngraphtoric$ is a well-defined connection and is flat on all of $\M$.
\end{proof}

\subsection{Fault-tolerant logical gates from string operators}
\label{sec:toricmonodromy}

By Theorem~\ref{thm:toricflat} and the observation of Section~\ref{sec:FTmonodromy}, the logical gates for the toric code are given by the monodromy representation of the fundamental group of  $\M$ induced by the bundle $\toricPbundgraph$ or $\toricvecbundgraph$.  The fundamental group of $\M$ is the braid group on $T_g$ with two colors of strands.  We have three types of generators of the colored braid group on $T_g$: those consisting of moving one defect of one color around another of a different color, those consisting of swapping two defects of the same type, and those consisting of moving a defect around a homotopically non-trivial loop of $T_g$.  

We can determine what logical gates the colored braid group generators correspond to using the standard approach to analyzing the toric code (as per \cite{Kitaev1} or \cite{Dennis}):  Swapping two defects of the same type (``half braiding'') does not change the encoded state unless the combination of the two paths forms a loop enclosing another defect.  Moving one defect around another (``full braiding'') produces a global phase of $-1$ iff the two defects are of different types (one primal and one dual).  Therefore full-braiding generators do not produce any non-trivial logical gate.  On the other hand, moving a defect around a homotopically non-trivial loop of $T_g$ performs a logical Pauli operator on the appropriate logical qubit of the code.  If the code $\Ctoricfix$ contains just one type of defect (e.g., primal), we can only perform logical Pauli operators of one type (e.g., only logical $X$'s).  If $\Ctoricfix$ contains both primal and dual defects, the monodromy group is isomorphic to the full Pauli group on the logical qubits.  Adding extra defects does not help us do more logical gates.

While in the case of the toric code, the monodromy group is quite simple, recall that the toric code is just an example and a model for more complicated topological codes.  In Section~\ref{sec:otheranyons}, we show how to apply the same construction to models with non-Abelian anyons, in which moving defects around each other will perform non-trivial logical gates, even potentially a universal set of gates.

\subsection{Freedom in choosing $\Mgraph$ and $\M$}
\label{sec:toricfreedom}

We made a number of arbitrary choices in the definitions of $\Mgraph$ and $\M$.  To what extent do our conclusions depend on these choices?  The set of logical gates cannot depend on these choices, so this is a question of how the topology of different possible choices of the topological space $\M$ relate to each other.

The freedom in choosing $\Mgraph$ is somewhat limited.  We need a way of performing continuous-time string operators with local gates that generate the usual discrete string operators at integer times.  There is certainly some freedom in precisely how to do that, for instance by reparameterizing time, but all of them are basically quite similar.

$\M$ is much less constrained.  The main thing we need is simply to find some way to extend $\Mgraph$ to a topological space $\M$ for which the extended connection $\connexttoric$ on $\M$ is projectively flat.  No doubt there are many different ways of doing this.  We have chosen one in this paper for which $\M$ is isomorphic to the configuration space of the defects, so that loops in $\M$ correspond to braids.  This allows us to make close contact with the usual topological picture of the toric code, so it is a natural choice.  Even then, there are a number of arbitrary details in our construction, but they don't seem to play any fundamentally important role.  We simply made a set of choices that works for the construction, but surely other choices are possible.

It is also not necessary to define an $\M$ that is isomorphic to the configuration space provided we get a flat projective connection.  However, not all choices are equally enlightening as to the structure of the fault-tolerant protocol.  For instance, suppose instead of the actual $\M$ we defined above, we were instead to make similar definitions, but to restrict defects to be within a small distance $\epsilon$ of an edge or a dual edge rather than allowing them to be anywhere in a face.  Topologically, this would correspond to a ``thickening'' of $\Mgraph$.  This alternate $\M'$ is a subset of the $\M$ we defined, so the connection on it is still well-defined and flat.  However, in $\M'$ there are many more homotopically non-trivial loops, since moving a defect around a face is now non-contractible.  These additional non-contractible loops always produce trivial logical gates, so are not very interesting, but they complicate matters.  In particular, they make the fundamental group of $\M'$ larger and therefore obscure the ``meaningful'' non-contractible loops that correspond to non-trivial logical gates.  The moral here is that we want to pick the ``largest'' $\M$ that gives us a flat projective connection.  Beyond that, it is unclear if there are additional criteria that $\M$ should satisfy.

\subsection{Other topological codes}
\label{sec:otheranyons}

The construction given for toric codes can be immediately applied to give a similar result for any anyon model with similar properties.  In particular, we assume that the anyon model has the following properties:
\begin{enumerate}
\item The model works within a larger Hilbert space which has $N$ qudits located on the edges of a square lattice $\Gamma$ on the torus.%
\footnote{The use of a square lattice is not essential, but it is convenient to use the same lattice as previously.}
The edges of $\Gamma$ are oriented as in Figure~\ref{fig:uniformtorus}.
\item The anyon model contains multiple species of quasiparticles (also called defects).  Some species (the ``primal species'') can be located on vertices of $\Gamma$, others (the ``dual species'') on the faces of $\Gamma$ (or equivalently, on the vertices of $\bar{\Gamma}$).  There is a vacuum species $0$, which is also the absence of a quasiparticle at a location.  Each configuration of defects corresponds to a subspace (the ``code subspaces'') whose dimension $K$ depends only on the number of each species of quasiparticles and not on the locations of the quasiparticles.
\item There exist projectors $\Pi^i_a$, with $i$ labelling a quasiparticle species and $a$ either a vertex (for a primal species) or dual vertex (for a dual species).  The projectors satisfy the following properties:
	\begin{enumerate}
	\item $\Pi^i_a$ acts only on qudits with lattice separation at most $\lfloor (\sep-1)/2 \rfloor$ from $a$.  (Recall that $\sep$ is the minimum separation allowed by the hard-core condition.)
	\item $\Pi^i_a \ket{\psi} = \ket{\psi}$ if $\ket{\psi}$ has a defect of type $i$ at $a$; otherwise $\Pi^i_a \ket{\psi} = 0$.
	\item $[\Pi^i_a, \Pi^i_b] = 0$.
	\end{enumerate}
\item There exist unitary operators $\omega^i_j$, with $j$ labelling a qudit and $i$ a quasiparticle species, which have the following properties:
	\begin{enumerate}
	\item $\omega^i_j$ acts only on qudits with lattice separation at most $\lfloor (\sep-1)/2 \rfloor$ from $j$.
	\item $(\omega^i_j)^2 = \idmatrix$
	\item $\omega^i_j$ moves a defect of species $i$ along the edge $j$: That is, if $C$ is a code subspace satisfying the hard-core condition with a defect of species $i$ at a vertex $v$ (or dual vertex if $i$ is a dual species) adjacent to qubit $j$, then $\omega^i_j (C)$ is a code subspace with the location of all defects the same as $C$ except that instead of a defect of type $i$ at $v$, there is a defect of type $i$ at the vertex (or dual vertex) $v'$, which is at the other end of the edge (or dual edge) associated with qubit $j$.
	\item $\omega^i_j$ acts as the identity if the two vertices (or dual vertices) adjacent to $j$ have no defects of type $i$ or if there are defects (of any species) at both vertices.
	\end{enumerate}
\item Let ABCD be a square (or dual square) in $\Gamma$.  Then $\omega^i_{CA} \omega^i_{CD} \omega^i_{BD} \omega^i_{AB} |_{C_A} = \idmatrix |_{C_A}$ when $C_A$ is a code subspace satisfying the hard-core condition with a defect of species $i$ at vertex (or dual vertex) $A$.
\end{enumerate}

The projectors $\Pi^i_a$ can detect the presence of a defect at $i$, so condition (iii) is essentially a refinement of condition (ii), and ensures that the quasiparticles are well-defined localized objects.  Note that if $C$ satisfies the hard-core condition, when there is a defect of species $i$ at $v$, there cannot be any defect at the adjacent vertex $v'$.  The interpretation of $\omega^i_j$ is that it moves a defect of type $i$ from $v$ to $v'$ or from $v'$ to $v$ (since $\omega^i_j$ squares to the identity).  We do not put a constraint on the behavior of $\omega^i_j$ when the defect configuration does not satisfy the hard-core condition or if there is not a particle of species $i$ on a vertex (or dual vertex) adjacent to qubit $j$.  Condition (v) basically encapsulates the topological order property of an anyon model: When the hard-core condition is satisfied, there are no other defects on or within the square ABCD, so moving an anyon around the square should be a trivial operation.

We believe that Kitaev's quantum double models \cite{Kitaev1} and the more general Levin-Wen string net models \cite{LW} can be formulated in this way.  The projectors $\Pi^i_a$ are the usual projectors in these models giving the location and type of a quasiparticle.  However, the standard string operators $F^i_j$ for these models are non-unitary, so they cannot serve as the operators $\omega^i_j$.  Instead, we use the combination
\begin{equation}
\omega^i_j = \Pi^0_{v} \Pi^i_{v'} F^i_j \Pi^i_v \Pi^0_{v'} + \Pi^i_{v} \Pi^0_{v'} F^i_{-j} \Pi^0_v \Pi^i_{v'}  + [(\idmatrix - \Pi^i_v) (\idmatrix - \Pi^i_{v'}) + (\idmatrix - \Pi^0_v)  \Pi^i_{v'} + \Pi^i_{v}(\idmatrix - \Pi^0_{v'}) - \Pi^i_v \Pi^i_{v'}] \idmatrix,
\end{equation}
where $v$ and $v'$ are the vertices (or dual vertices) at the ends of the edge associated with qubit $j$, and $F^i_{-j}$ is the string operator that moves a defect against the orientation of $j$.
This is analogous to equation~(\ref{eq:verifiedPauli}).  We believe that the properties of the string operators ensure that the $\omega^i_j$ defined this way satisfy condition (iv), but we have not been able to prove this rigorously.

Another minor complication is that these models are not usually presented on a square lattice.  We can always imbed the usual structure into a square lattice, which may require increasing $\sep$ somewhat since the distance between vertices of the imbedded graph could be greater than $1$.

Given a model with the desired properties, we can just follow the full construction for the toric codes presented in this section, substituting $\omega^i_j$ for $\ver{\sigma}^i_j$ throughout.  There are only a few small differences in the proofs.  First, the various configuration spaces require a specification of the number and locations of all types of defects, and there are generally more than two.  This produces no substantive difference in the construction.  A slightly larger change is that $\omega^i_j$ acts on qudits within a distance $\lfloor (\sep-1)/2 \rfloor$ of the edge instead of within distance $1$.  This means that the $U^i_j (t)$, $V^i_j(t)$ will also, and that the $U_{A,(x,y)}$ will act on qudits within a distance $\lfloor (\sep-1)/2 \rfloor$ of the square.  However, since the hard-core condition requires defects to be a distance at least $\sep$ apart, the operators associated with two different defects will not overlap.

The only place we used a property of a Pauli operator not part of condition (iv) is in the proof of  Proposition~\ref{prop:contunitary}.  Specifically, we need to calculate the determinant of $U^i_j(t) = e^{-i t \pi/2} [\cos (t \pi/2)\, \idmatrix + i \sin (t \pi/2)\, \omega^i_j]$ on the $4$-dimensional subspace $C_{\ket{\psi}}$.  $\omega^i_j$ acts as the identity on the two basis states which have no defects at the ends of $j$ and swaps the other two basis states.  Thus, on $C_{\ket{\psi}}$, $\omega^i_j = X \oplus \idmatrix_2$, and $\det U^i_j(t) = e^{-2it\pi/2}$ as for the toric code.

Thus, the same construction we used for the toric code also shows that a wide variety of models with non-Abelian anyons can also be understood in terms of our framework.  In these models, the monodromy group includes the full set of gates that can be produced via braiding.

\section{Towards full fault tolerance}
\label{sec:fullFT}

Sections \ref{sec:transversal} and \ref{sec:topologicalFT} illustrated how our picture works for two of the main classes of fault-tolerant gates, namely transversal gates and topological gates.  Here, we focus mostly on extending the transversal gate set to enable universal quantum computation; in particular, we comment on some aspects of the bundle construction involved.  Since transversal gates are not universal, fault-tolerant protocols based on them usually add in gates from a third major class of fault-tolerant gates, magic state constructions or equivalently magic state injection.  Topological constructions sometimes also make use of magic state injection.  This third class of fault-tolerant gates is not unitary.

Generally speaking, if we can perform the logical Clifford gates and one logical gate outside of the Clifford group, then the system is capable of universal computation. Magic state injection enables us to simplify the requirement of being able to perform one logical gate outside of the Clifford group to a requirement of being able to prepare a particular non-stabilizer state known as a ``magic state''. For instance, to perform the gate $R_{\pi/8} = \mathrm{diag}(e^{-i\pi/8}, e^{+i\pi/8})$, we just need the magic state $R_{\pi/8} \ket{+}$.  The magic state injection procedure, depicted in Figure~\ref{fig:magicstate} for $R_{\pi/8}$, typically uses some logical gates from the Clifford group, measurement, and classical control, plus some copies of the magic state.  In order to fulfill these requirements, the idea is that we can provide fault-tolerant measurement and logical Clifford group gates using transversal gates or another fault-tolerant construction, and somehow separately create encoded magic states, perhaps using state distillation techniques~\cite{BK}.  Magic state injection then allows us to complete the universal gate set.  See \cite{GC} for further discussion of how magic state constructions work.

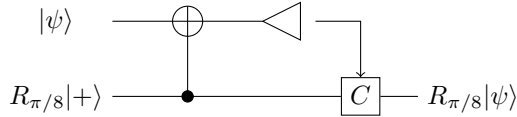
\begin{figure}
\begin{tikzpicture}[text height=1.5ex, text depth=.25ex]

\node (input) at (0,1) {$\ket{\psi}$};
\node (ancilla) at (0,0) {$R_{\pi/8} \ket{+}$};
\path (ancilla.east) coordinate (start);
\path (start) ++(0,1) coordinate (startup);
\path (start) ++(3.3,0) coordinate (conditional);

\draw (startup) -- ++(2,0) -- ++(0.5,0.25) -- ++(0,-0.5) -- ++(-0.5,0.25);
\draw [fill] (start) ++(1,0) circle (0.075cm);
\draw (start) ++(1,0) -- ++(0,1.2);
\draw (startup) ++(1,0) circle (0.2cm);

\node (gate) [shape=rectangle,draw] at (conditional) {$C$};
\draw (start) -- (gate.west);
\draw (gate.east) -- ++(0.5,0) node[anchor=west] {$R_{\pi/8} \ket{\psi}$};
\draw [->] (startup) ++(2.7,0) -| (gate.north);

\end{tikzpicture}
\caption{A magic state injection circuit to perform the $\pi/8$ phase rotation $R_{\pi/8} = \mathrm{diag}(e^{-i\pi/8}, e^{+i\pi/8})$ on an arbitrary state $\ket{\psi}$.  It involves an ancilla ``magic state'' $R_{\pi/8} \ket{+}$, a CNOT gate, a measurement, and a Clifford group gate $C = R_{\pi/8} X R_{\pi/8}^{-1}$, conditioned on the measurement outcome.  While $C$ is in the Clifford group, the gate $R_{\pi/8}$ itself is non-Clifford.  In a fault-tolerant scenario, all quantum states are encoded in a QECC and operations are performed with fault-tolerant gadgets.}
\label{fig:magicstate}
\end{figure}

We have not worked out our picture in detail for magic state constructions, but in this section, we will quickly sketch how we expect it to work.  We will focus on the case where the basic set of logical gates available is performed through transversal gates, extending Section~\ref{sec:transversal}.

\subsection{Ancillas}

First let us consider logical unitary gates on the encoded data which are performed via a physical unitary operation on the joint system of the data and ancillas encoded in the same QECC.  The ancilla serves as a scratch space to assist in the circuit and is discarded after the circuit is complete.  The resulting operation should still be a unitary operation on the logical state.  We do not have a concrete fault-tolerant protocol in mind for this section --- instead, the discussion here serves as a warm-up for the scenario we shall consider in Section \ref{sec:measurements}, where measurements are also involved.

A unitary construction using an ancilla will generally introduce the ancilla in a known state, perform gates interacting the computational data blocks with the ancilla blocks, and then discard the ancilla.  Since we are assuming for the moment that the final logical operation on the system is unitary and preserves the logical subspace, the ancilla will end up in a tensor product with the data blocks.  The final state of the ancilla cannot depend on the state of the data blocks, because then the data and ancilla blocks would in some cases be entangled and the logical operation would be non-unitary.
Therefore the final state of the ancilla depends \emph{only} on the gate being performed (which determines the initial state of the ancilla and the physical unitary done to it).

We have already noted in Section~\ref{sec:transversal} (see Figure~\ref{fig:transversalmultiple}) that transversal gates interacting multiple blocks of the QECC can be rephrased in the picture of a single data block by simply considering all the corresponding registers of the different blocks together as one qudit of a larger size.  For instance, for a transversal gate interacting two blocks of a qubit code, we can combine the first qubits of the two  blocks into a single dimension-$4$ qudit, and the second qubits of the two blocks into a second $4$-state qudit, and so on.  The transversal gate on the two blocks then becomes a single-block transversal gate, a tensor product of single-qudit operations acting on the individual $4$-state systems.  Therefore, we continue our discussion below assuming without loss of generality that we are dealing with a single code block of data and a single ancilla block and doing transversal gates between them.

Recall that in the version of our picture without ancillas, we describe a QECC, which is a subspace $C$ in the physical Hilbert space $\Hilbphy$, as an element of the Grassmannian $\Grass$.  When we add an ancilla into the picture, the Grassmannian changes from $\Grass$ to $\subGrass{K}{\hat{N}}$.  In particular, $C \in \Grass$ is replaced by $\hat{C} = C \otimes \ket{B} \in \subGrass{K}{\hat{N}}$.  Since $\ket{B}$ is a single state independent of the logical data, the subspace $\hat{C}$ has the same dimension $K$ as $C$.  However, we have increased the dimension of the physical Hilbert space from $N$ to $\hat{N} = NA$, where $A$ is the dimension of the Hilbert space containing the ancilla.

Including ancillas also requires us to modify $\M \subset \Grass$, the manifold of subspaces related to the QECC $C$ by transversal gates, to $\hat{\M} \subset \subGrass{K}{\hat{N}}$, the manifold related to $\hat{C}$ by transversal gates on larger qudits, including the ancilla system.  Let $\Hilb{N}$ be the Hilbert space of the data blocks and $\Hilb{A}$ be the Hilbert space of the ancilla.  Let $\hat{\F}$ be the set of transversal gates acting on data plus ancilla blocks, as discussed above, and let $\hat{C} = C \otimes \ket{A}$, where $C$ is the starting QECC code block and $\ket{A}$ is the starting state of the ancilla.  Then we define
\begin{equation}
\hat{\M} := \hat{\F}(\hat{C}) = \{F(\hat{C}) \;|\; F \in \hat{\F} \} \subset \subGrass{K}{\hat{N}}.
\label{eqn:M'}
\end{equation}

Note that $\M$ can be considered as a submanifold of $\hat{\M}$; in particular, $\M \otimes \ket{A} \subset \hat{\M}$ provides one such embedding.  Indeed, we can embed $\M$ into $\hat{\M}$ in many different ways, e.g., as $\M \otimes \ket{A'}$. It is possible that  $\M \otimes \ket{A'}$ is not a subset of $\hat{\M}$ for all $\ket{A'}$, but it will be if $C \otimes \ket{A'} = U(C \otimes \ket{A})$ for some $U \in \hat{\F}$.
We will consider $\M \otimes \ket{A}$ as the ``standard'' embedding of $\M$ in $\hat{\M}$.  See Figure \ref{fig:largermanifold}.

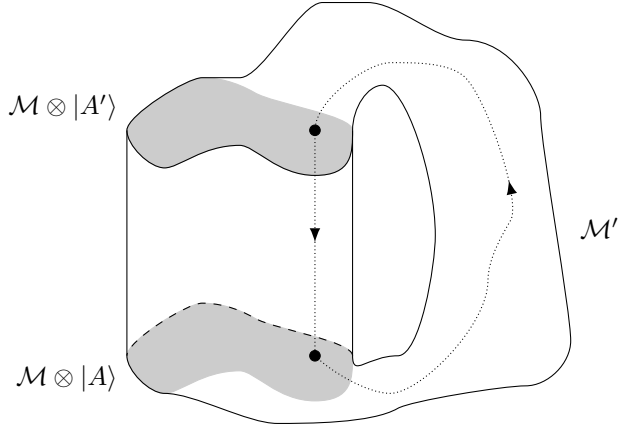
\begin{figure}
\begin{tikzpicture}[decoration={
	markings,
	mark=at position 0.4 with {\arrow[line width=0.3mm]{latex};},
	mark=at position 0.87 with {\arrow[line width=0.3mm]{latex};}}
	]

\path (2,-1) coordinate (Mleft);
\path (Mleft) ++(3,0) coordinate (Mright);
\path (Mleft) ++(0.5,-0.5) coordinate (Mbranch);
\path (Mleft) ++(0,3) coordinate (topleft);
\path (Mright) ++(0,3) coordinate (topright);
\path (topleft) ++(1.5,0.7) coordinate (topmid);
\path (topmid) ++(-0.5,0) coordinate (topmidleft);
\path (Mright) ++(2.9,1.7) coordinate (rightside);
\path (Mright) ++(-0.5,0) coordinate (pathstart);
\path (pathstart) ++(0,3) coordinate (pathend);

\draw [color=black!20, fill] (Mleft) .. controls +(0,-0.2) and +(-0.2,0) .. (Mbranch) .. controls +(0.2,0) and +(-0.5,0) .. ++(1,0.3) .. controls +(0.2,0) and +(-0.5,0) .. ++(1,-0.4) .. controls +(0.5,0) and +(0,-0.3) .. (Mright) .. controls +(0,0.2) and +(330:0.3) .. ++(-1.3,0.5) .. controls +(150:0.3) and +(0.2,0) .. ++(-0.7,0.2) .. controls +(-0.2,0) and +(0,0.2) .. (Mleft);

\draw [dashed] (Mright) .. controls +(0,0.2) and +(330:0.3) .. ++(-1.3,0.5) .. controls +(150:0.3) and +(0.2,0) .. ++(-0.7,0.2) .. controls +(-0.2,0) and +(0,0.2) .. (Mleft);

\draw [color=black!20, fill] (topleft) .. controls +(0,-0.2) and +(-0.2,0) .. ++(0.5,-0.5) .. controls +(0.2,0) and +(-0.5,0) .. ++(1,0.3) .. controls +(0.2,0) and +(-0.5,0) .. ++(1,-0.4) .. controls +(0.5,0) and +(0,-0.3) .. (topright) .. controls +(0,0.2) and +(330:0.3) .. ++(-1.3,0.5) .. controls +(150:0.3) and +(0.2,0) .. (topmidleft) .. controls +(-0.2,0) and +(0,0.2) .. (topleft);

\draw (topleft) .. controls +(0,-0.2) and +(-0.2,0) .. ++(0.5,-0.5) .. controls +(0.2,0) and +(-0.5,0) .. ++(1,0.3) .. controls +(0.2,0) and +(-0.5,0) .. ++(1,-0.4) .. controls +(0.5,0) and +(0,-0.3) .. (topright);

\draw (Mleft) -- (topleft);
\draw (Mright) -- (topright);

\draw (Mleft) .. controls +(0,-0.2) and +(-0.2,0) .. (Mbranch) .. controls +(0.4,0) and +(-0.4,0) .. ++(1.5,-0.4) .. controls +(0.4,0) and +(210:0.4) .. ++(1.6,0.2) .. controls +(30:0.2) and +(180:0.3) .. ++(1.7,0.3) .. controls +(0:0.3) and +(270:0.3) .. ++(0.6,0.6) .. controls +(90:0.4) and +(270:0.4) .. ++(-0.4,+3) .. controls +(90:0.3) and +(0:0.6) .. ++(-1,1) .. controls +(180:0.4) and +(0:0.4) .. ++(-1.3,0.5) -- ++(-0.6,0) .. controls +(180:0.3) and +(0:0.3) .. (topmid) -- (topmidleft) .. controls +(-0.2,0) and +(0,0.2) .. (topleft);

\draw (Mright) .. controls +(270:0.3) and +(180:0.4) .. ++(0.6,0) .. controls +(0:0.3) and +(270:0.4) .. ++(0.5,1.6) .. controls +(90:0.8) and +(0:0.3) .. ++ (-0.7,2) .. controls +(180:0.2) and +(90:0.3) .. (topright);

\draw [fill] (pathstart) circle (0.07cm);
\draw  [densely dotted, postaction={decorate}] (pathstart) .. controls +(315:0.3) and +(180:0.3) .. ++(1,-0.5) .. controls +(0:0.4) and +(270:0.6) .. ++(1.3,1.5) .. controls +(90:0.3) and +(240:0.2) .. ++(0.3,0.8) .. controls +(60:0.3) and +(315:0.4) .. ++(-0.6,1.9) .. controls +(135:0.3) and +(0:0.3) .. ++(-1,0.2) .. controls +(180:0.3) and +(90:0.4) .. (pathend) -- cycle;
\draw [fill] (pathend) circle (0.07cm);

\node [anchor=north east] at (Mleft) {$\M \otimes \ket{A}$};
\node [anchor=south east] at (topleft) {$\M \otimes \ket{A'}$};
\node [anchor=west] at (rightside) {$\M'$};

\end{tikzpicture}
\caption{Adding ancilla blocks embeds $\M$ into a larger manifold $\hat{\M}$, and gates that interact the computational blocks with the ancilla blocks produce a path (dotted line) in $\hat{\M}$ that leaves $\M$ and eventually returns to $\M$ when the ancilla blocks are reset to their initial value $\ket{A}$.  $\hat{\M}$ is defined via equation (\ref{eqn:M'}).}
\label{fig:largermanifold}
\end{figure}

We begin with the subspace $\hat{C} = C \otimes \ket{A} \in \M$, where $\M$ in this context means its standard embedding into $\hat{\M}$.  As we perform a unitary gate interacting $\Hilb{N}$ and $\Hilb{A}$, we follow a path which leaves $\M$ and eventually ends up at the subspace $C \otimes \ket{A'}$.  (Recall that, as discussed above, we cannot end up with any entanglement between the ancilla and code because we are doing a logical unitary operation.)  If $\ket{A'} = \ket{A}$, we have returned to $\M$ and performed a loop in $\hat{\M}$.  In this case, we can identify the logical gates implementable via transversal gates plus ancillas with the monodromy representation of $\pi_1(\hat{\M})$ as before.  However, note that the monodromy representation of $\pi_1 (\hat{\M})$ might be different than the monodromy representation of $\pi_1(\M)$, potentially giving us access to additional fault-tolerant logical gates.

If $\ket{A'} \neq \ket{A}$, however, we have not performed a loop even though there is a sensible way to equate the final subspace $C \otimes \ket{A'}$ with the initial subspace $C \otimes \ket{A}$.  If it is possible to reset the ancilla from $\ket{A'}$ to $\ket{A}$ using transversal gates acting only on the ancilla block, then we can use these gates to complete the loop.  However, it might not be possible to do so --- it may be that the only way to get from $\ket{A'}$ to $\ket{A}$ using transversal gates is to interact with the data block.  One potential approach to circumvent this problem is to work with an appropriate quotient space of $\hat{\M}$ which identifies subspaces $D \otimes \ket{A} \in \hat{\M}$ and $D \otimes \ket{A'} \in \hat{\M}$ for arbitrary $D \in \M$.  

Another straightforward way is to allow additional paths in $\hat{\M}$ to include non-transversal unitaries which change the ancilla block without touching the data block.  That is, we would add to $\hat{\M}$ all points of the form $D \otimes \ket{B}$, for any $D \in \M$, $\ket{B} \in \Hilb{A}$, and allow any unitary paths that map subspaces of the form $D \otimes \ket{B}$ only into other subspaces of the same form with the same $D$.  This procedure means extending the manifold $\hat{\M}$ and group $\hat{\F}$ further to $\hat{\M}'$ and $\hat{\F}'$.  Because travelling along one of the new paths does not alter the $D$ subspace, there is a natural way to trivialize the fibre along the path.
We can then create a loop by performing the transversal gate on $\Hilb{N} \otimes \Hilb{A}$ and then following one of the new paths from $C \otimes \ket{A}$ to $C \otimes \ket{A'}$, as in Figure~\ref{fig:largermanifold}.  The same argument as in Section~\ref{sec:transversal} should show that the connection is projectively flat and that the gate is non-trivial only if the loop is homotopically non-trivial in $\hat{\M}'$.

The additional paths represent ``resetting'' the ancilla register.  To follow such a path using physical operations might require many-qudit gates and would not necessarily be at all fault-tolerant, but we can still use them as a conceptual tool.

\subsection{Measurements}
\label{sec:measurements}

The strategy for gadgets which involve measurements is similar but more complex.  First, note that we can always replace a gadget involving measurement and classical control by one using unitary operations and ancillas: We purify the measurements by replacing each with a CNOT to an ancilla qudit that starts in the state $\ket{0}$.  We can maintain fault-tolerance while doing so~\cite{AB} by repeating the measurement multiple times and storing the results as a classical repetition code.  Since the classical repetition code allows transversal implementation of a universal set of \emph{classical} gates, we can perform the classical control of the quantum computer fault-tolerantly.  Classical error correction consists of non-transversal gates, but implemented in a way that limits the propagation of bit flip errors.  Note that the classical code does not protect against phase errors on the qudits which we have added to replace the classical measurement results, but because the original circuit assumes those bits are classical anyway, phase errors on them have no effect on the main computation qudits.

The complication for our picture arises because the set of allowed unitaries $\hat{\F}$ for data block plus ancillas consists of transversal quantum gates on the ``quantum'' registers in the system and a different set of possible gates on the ``classical'' registers in the computer.  A fault-tolerant classical computation by itself probably would not result in a flat projective connection, because generating classical gates continuously via a Hamiltonian requires leaving the set of allowed classical states and generally involves intermediate states which are sensitive to phase errors.  However, since we have restricted attention to magic state constructions where phase errors in the classical registers have no effect on the actual encoded states, we believe that applying our construction to magic state injection gadgets will lead to a flat projective connection.

Summarizing the points in this section, we believe the following procedure allows us to construct a bundle representing a magic state construction, and gives a flat projective connection on that bundle.

\begin{enumerate}
\item Purify measurements by adding additional ancillas.  Replace each measurement by a CNOT.
\item Store formerly classical bits in a classical repetition code and implement classical gates transversally on this code.
\item Define an enlarged set of allowed unitaries $\hat{\F}$ by transversal gates on quantum registers, continuous generation of classical gates on classical registers, and reset operations on ancillas that do not touch the original computational qubits.  
\item Enlarge the subset $\M \subset \Grass$ to $\hat{\M} \subset \subGrass{K}{N'}$ based on the ancilla Hilbert space and allowed unitaries $\hat{\F}$.
\item Consider loops that start at the original code in $\M$, leave $\M$ to explore $\hat{\M}$ via transversal gates, purified measurements, and classical gates, and then return to $\M$ by resetting ancillas.
\end{enumerate}


\section{Conclusion}
\label{sec:conclusion}

We have shown that fault-tolerance via transversal gates and from braiding defects in the toric code or similar topological codes can both be phrased in the same general mathematical framework.  In both cases, the fault tolerance of the protocol reveals itself when the connection on $\M \subset \Grass$ turns out to be projectively flat.  Intuitively, it makes sense that this would be a general feature of fault-tolerant protocols: When $U(t)$ represents a possible unitary evolution in the protocol, small errors in the gate implementation lead to $\tilde{U}(t)$, a path close to $U(t)$.  The protocol is fault-tolerant precisely when $U(t)$ performs the same logical gate as $\tilde{U}(t)$ --- i.e., both paths induce the same parallel transport in bundles over $\M$.  This, in turn, suggests that the connections on $\bigvecbund$ and $\bigPbund$ should be projectively flat.  This is the motivation for Conjecture~\ref{conj:flatconnection}, which states that projective flatness is a necessary condition for fault tolerance.

Indeed, this criterion seems to capture the heart of fault tolerance: faulty implementations of gate gadgets should have the same effect on encoded qudits as an ideal implementation would.
It is therefore tempting to go further and guess that projective flatness is a \emph{sufficient} condition as well:
\begin{Conjecture}
A protocol is fault-tolerant if there is a set of paths $\Fpath$ and a topological space $\Mgraph$ such that the natural pre-connection defines a flat projective partial connection on $\Mgraph$.
\label{conj:FTprotocol}
\end{Conjecture}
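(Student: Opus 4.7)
My plan is to first formalize the hypothesis and conclusion. We are given a fault-tolerant protocol in the sense of Section~\ref{sec:introFT}, with a reference code $C$ and a set of gadgets; the hypothesis of the conjecture asserts the existence of $\Fpath$ and $\Mgraph \ni C$ such that the natural pre-connection on $\Mgraph$ extends to a flat projective partial connection in the sense of Definition~\ref{def:FPP}. I would begin by identifying the largest error set compatible with the geometry: let $\Errors$ consist of those operators $E$ satisfying the Knill--Laflamme conditions of Theorem~\ref{thm:correction} uniformly for every $C' \in \Mgraph$. This is the natural ``shared'' error-correction hypothesis that the subspaces in $\Mgraph$ look like codes simultaneously, and it is implicitly present whenever the pre-connection construction makes sense, since the fibres of $\Mvecbund$ and $\MPbund$ are all codes of the same dimension.

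The core argument would then leverage projective flatness to absorb small errors into homotopy equivalences. Given an ideal $\Floop$-implementation $F(t)$ of a logical gate, consider a faulty implementation $\tilde F(t)$ obtained by composing $F(t)$ with errors drawn from $\Errors_U$ at the various intermediate times $t$. My plan is to show that, after inserting an error-correction gadget at sufficiently short intervals, the resulting effective path can be rewritten as $F'(t) \in \Fpath$ times a correctable residual error, with $F'(t)$ homotopic to $F(t)$ as a loop in $\Mgraph$ (possibly after extending to $\M$ and $\Fbigpaths$ as in Section~\ref{sec:geoFT}). Projective flatness of the extended connection then forces $\GP$-valued monodromies of $F'(t)$ and $F(t)$ to agree up to a global phase, hence the faulty and ideal implementations enact the same logical gate. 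Iterating this argument over the full fault-tolerant circuit of Figure~\ref{fig:exRec} would yield a threshold-type statement, which is the content of ``fault tolerance''.

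The main obstacle is the absorption step: showing that a faulty unitary path in a neighbourhood of $\Fpath \subset \U{N}$ is homotopic, after error correction, to an exact path in $\Fpath$. In the transversal-gate example this works because $\F$ is a Lie subgroup and the orbits $\M$ are rigid under the action of $\F$; in the toric-code example it works because string operators have a local structure that lets errors be absorbed into subsequent string segments. Abstracting these phenomena would require a tubular-neighbourhood-type hypothesis on $\Fsmall \subset \U{N}$, together with a matching geometric description of the error-correction gadget as a retraction onto $\Mgraph$. A further, more serious difficulty is that realistic fault-tolerant protocols are inherently non-unitary, involving measurement, reset, and magic-state injection as sketched in Section~\ref{sec:fullFT}; I expect that Conjecture~\ref{conj:FTprotocol} cannot be established in its full generality until the bundle framework is extended to handle such non-unitary gadgets, presumably by working over a stratified base space whose strata record measurement outcomes and by enlarging the fibres accordingly.
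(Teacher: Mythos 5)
The statement you are asked to prove is Conjecture~\ref{conj:FTprotocol}, which the paper leaves as an \emph{open conjecture}: there is no proof in the paper to compare against. Section~\ref{sec:conclusion} offers only the same heuristic you lead with --- projective flatness means small perturbations of $F(t)$ produce the same parallel transport --- and then explicitly identifies the blocker: ``To prove either Conjecture~\ref{conj:flatconnection} or Conjecture~\ref{conj:FTprotocol} would require a rigorous and general definition of what it means for a protocol to be fault-tolerant,'' a definition the authors state they do not have for general error models. Your sketch is therefore consistent with the paper's stated intuition, and you correctly locate the two hardest technical steps: (a) the ``absorption'' step, in which a faulty path in a neighbourhood of $\Fpath$ must, after error correction, be replaced by an $\Fpath$-path homotopic to the ideal one, and (b) the non-unitary character of real protocols (measurement, reset, magic states). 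Both are genuine gaps, and neither is resolved by the paper.

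Two more specific issues in the way you set things up. First, defining $\Errors$ as ``those operators $E$ satisfying the Knill--Laflamme conditions'' is not well-formed: Theorem~\ref{thm:correction} is a joint condition on \emph{pairs} $E_a, E_b$ in a candidate set, so you should instead posit a maximal set $\Errors$ such that equation~(\ref{eqn:correction}) holds jointly for all $E_a, E_b \in \Errors$ and all $C' \in \Mgraph$; even then you must argue this set is nonempty and nontrivial, since nothing in the bundle hypothesis guarantees that. Second, your remark that the shared error-correction property ``is implicitly present whenever the pre-connection construction makes sense'' is false: every $K$-dimensional subspace sits in $\Grass$, and the restriction bundles $\Mvecbund$ and $\MPbund$ and their pre-connections are defined purely from $\Fpath$ and $\Mgraph$ with no error-correction input at all. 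A flat projective partial connection is a statement about consistency of parallel transport under $\Fpath$, not about correctable errors, so the link from geometry to an error model still has to be \emph{postulated} (e.g.\ as an extra hypothesis of your tubular-neighbourhood/retraction type), not derived. Absent that link, and absent a rigorous target definition of ``fault-tolerant,'' the proposal is a plausible roadmap rather than a proof, and matches where the paper itself leaves the problem.
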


To prove either Conjecture~\ref{conj:flatconnection} or Conjecture~\ref{conj:FTprotocol} would require a rigorous and general definition of what it means for a protocol to be fault-tolerant.  Right now, we can state rigorous conditions for fault tolerance for specific error models, such as the $s$-qudit error model, and we have a general intuitive sense of what fault tolerance means for arbitrary error models, but we do not have a rigorous formulation of fault tolerance for general error models.  Having one would be valuable, not only for the program begun in this paper, but also for other potential methods of proving general theorems about fault tolerance.

Once we have a general definition of fault tolerance, it may be possible to complete our program and prove Conjectures~\ref{conj:flatconnection} and \ref{conj:FTprotocol}.  To do so, we will need a canonical way to choose $\M$ given $\Mgraph$.  The main element that is missing from our picture currently is the error model, so we suspect that the error model will go into the definition of $\M$.

Even without such a general framework, there are a variety of more specific open questions that are suggested by our work.  Can the ideas sketched out in Section~\ref{sec:fullFT} be fleshed out, or is there some subtle issue that arises?  What, precisely, is the relationship between the manifold $\M$ for single-block transversal gates and the manifold for multiple-block transversal gates --- is there a direct relationship between their fundamental groups?  Is there a way to calculate the monodromy representation of $\pi_1(\M)$ directly, and therefore learn something about fault-tolerance from the topology of $\M$?

%
%
%
%
%
%

\subsection{Acknowledgements}

We would like to thank various people for helpful conversations and feedback, particularly John Cortese, Steve Flammia, Mike Freedman, Joel Kamnitzer, Alexei Kitaev, John Preskill, and Spiros Michalakis.  Research at Perimeter Institute is supported by the Government of Canada through Industry Canada and by the Province of Ontario through the Ministry of Economic Development \& Innovation. D.~G.~is supported by the CIFAR Quantum Information Processing program.  L.~Z.~is supported by Joel Kamnitzer's ERA (Early Researcher Award) grant.  This paper was written in part while D.~G. was a guest at the Newton Institute for Mathematical Sciences.

\end{document}